\documentclass[11pt,fleqn]{article}

\usepackage{amsmath,amssymb,amsthm}
\usepackage[margin=1in]{geometry}
\usepackage{xcolor}
\usepackage{url}
\usepackage{array}
\usepackage{comment}
\usepackage{centernot}
\usepackage{graphicx}
\usepackage{makecell}
\usepackage{booktabs}
\usepackage{subcaption}
\usepackage{enumitem}
\usepackage[hypertexnames=false,bookmarksnumbered=true,final]{hyperref}
\usepackage{algorithm,algpseudocode}
\usepackage[capitalize,sort]{cleveref}

\def\colorschemesepia{sepia}
\def\colorschemedark{dark}
\def\colorschemelight{light}

\ifx\colorscheme\undefined
\let\colorscheme\colorschemelight
\fi

\ifx\colorscheme\colorschemelight
\colorlet{textColor}{black}
\colorlet{bgColor}{white}
\fi

\ifx\colorscheme\colorschemesepia
\definecolor{textColor}{HTML}{433423}
\definecolor{bgColor}{HTML}{fbf0da}
\fi

\ifx\colorscheme\colorschemedark
\definecolor{textColor}{HTML}{bdc1c6}
\definecolor{bgColor}{HTML}{202124}
\definecolor{textRed}{HTML}{ff968c}  %
\definecolor{textGreen}{HTML}{70cc70}  %
\definecolor{textBlue}{HTML}{8cbcff}  %
\definecolor{textCyan}{HTML}{70cccc}  %
\definecolor{textMagenta}{HTML}{d982d9}  %
\definecolor{textYellow}{HTML}{bfbf69}  %
\else
\colorlet{textRed}{red!50!black}
\colorlet{textGreen}{green!50!black}
\colorlet{textBlue}{blue!50!black}
\colorlet{textCyan}{cyan!80!black}
\colorlet{textMagenta}{magenta!80!black}
\colorlet{textYellow}{yellow!60!black}
\definecolor{textPurple}{HTML}{681da8}
\fi

\ifx\colorscheme\colorschemelight\else
\pagecolor{bgColor}
\color{textColor}
\fi

\hypersetup{colorlinks,linkcolor=textRed,citecolor=textRed,urlcolor=textBlue}
\let\eps\varepsilon
\newcommand*{\defeq}{:=}

\newcommand*{\WLoG}{Without loss of generality}
\newcommand*{\wLoG}{without loss of generality}

\newcommand*{\boolone}{\mathbf{1}}  %
\newcommand*{\bigfloor}[1]{\left\lfloor #1 \right\rfloor}
\newcommand*{\bigceil}[1]{\left\lceil #1 \right\rceil}
\newcommand*{\floor}[1]{\lfloor #1 \rfloor}
\newcommand*{\ceil}[1]{\lceil #1 \rceil}

\newcommand*{\Sum}{\operatorname{sum}}
\DeclareMathOperator*{\E}{\mathbb{E}}

\DeclareMathOperator*{\argmin}{argmin}
\DeclareMathOperator*{\argmax}{argmax}
\newcommand*{\fimplies}{\Longrightarrow}  %
\newcommand*{\nfimplies}{\centernot{\Longrightarrow}}  %
\newcommand*{\eqEnt}{-}

\newcommand*{\Acal}{\mathcal{A}}
\newcommand*{\Fcal}{\mathcal{F}}
\newcommand*{\Ical}{\mathcal{I}}
\newcommand*{\Icalhat}{\widehat{\Ical}}
\newcommand*{\Scal}{\mathcal{S}}

\newcommand*{\Ahat}{\widehat{A}}
\newcommand*{\Bhat}{\widehat{B}}
\newcommand*{\Chat}{\widehat{C}}
\newcommand*{\Ghat}{\widehat{G}}

\newcommand*{\Shat}{\widehat{S}}

\newcommand*{\chat}{\widehat{c}}
\newcommand*{\fhat}{\widehat{f}}
\newcommand*{\ghat}{\widehat{g}}
\newcommand*{\jhat}{\widehat{\jmath}}
\newcommand*{\phat}{\widehat{p}}

\newcommand*{\uhat}{\widehat{u}}
\newcommand*{\vhat}{\widehat{v}}
\newcommand*{\what}{\widehat{w}}

\newcommand*{\MMS}{\mathrm{MMS}}
\newcommand*{\MXS}{\mathrm{MXS}}
\newcommand*{\WMMS}{\mathrm{WMMS}}
\newcommand*{\pessShare}{\mathrm{pessShare}}
\newcommand*{\APS}{\mathrm{APS}}
\newcommand*{\pAPS}{\mathrm{pAPS}}
\newcommand*{\dAPS}{\mathrm{dAPS}}
\newcommand*{\MEFS}{\mathrm{MEFS}}

\newcommand*{\PROP}{\mathrm{PROP}}
\DeclareMathOperator{\minFS}{minFS}
\DeclareMathOperator{\restrict}{restrict}
\DeclareMathOperator{\improve}{improve}
\DeclareMathOperator{\optBP}{optBP}
\newcommand*{\EFXZero}{EFX$_0$}
\newcommand*{\MXSZero}{MXS$_0$}
\newcommand*{\sorted}{sorted}
\newcommand*{\loodM}{\ell\textrm{-out-of-}d\textrm{-share}}

\DeclareMathOperator{\PMRF}{PMRF}

\newcommand*{\fdInst}[4]{(#1,\allowbreak #2,\allowbreak #3,\allowbreak #4)}

\Urlmuskip=0mu plus 0.1mu
\makeatletter
\g@addto@macro{\UrlBreaks}{%
\do\/%
\do\a\do\b\do\c\do\d\do\e\do\f\do\g\do\h\do\i\do\j\do\k\do\l\do\m%
\do\n\do\o\do\p\do\q\do\r\do\s\do\t\do\u\do\v\do\w\do\x\do\y\do\z%
\do\A\do\B\do\C\do\D\do\E\do\F\do\G\do\H\do\I\do\J\do\K\do\L\do\M%
\do\N\do\O\do\P\do\Q\do\R\do\S\do\T\do\U\do\V\do\W\do\X\do\Y\do\Z%
\do\0\do\1\do\2\do\3\do\4\do\5\do\6\do\7\do\8\do\9%
}
\makeatother

\allowdisplaybreaks

\defaultaddspace=0.0em

\newif\ifColsOne
\newif\ifColsTwo
\newif\ifVerbose
\newif\ifAnonymous

\newcommand*{\fairDefLater}{\footnote{\label{foot:fairDefLater}%
We formally define EF1 and other fairness notions in \cref{sec:notions}.}}
\newcommand*{\fairDefAgain}{}

\usepackage{xparse}
\let\realItem\item  %
\makeatletter
\NewDocumentCommand\myItem{ o }{%
   \IfNoValueTF{#1}%
      {\realItem}%
      {\realItem[#1]\def\@currentlabel{#1}\phantomsection}%
}
\makeatother
\setlist[enumerate]{before=\let\item\myItem}%

\newcolumntype{H}{>{\setbox0=\hbox\bgroup}c<{\egroup}@{}}

\newcounter{tabSerial}
\newcommand{\tabSn}{\stepcounter{tabSerial}\textcolor{textColor!50!bgColor}{\thetabSerial}}

\let\shortcite\cite
\let\citet\cite
\let\citep\cite
\newenvironment*{tightenum}{\enumerate[noitemsep]}{\endenumerate}

\let\extRef\ref
\let\extCref\cref
\let\extCrefCof\cref

\let\bigTableSize\footnotesize
\let\tableHeadSize\tiny

\ColsOnetrue
\ColsTwofalse
\Verbosetrue
\ifColsTwo\newcommand{\wrapIfTwoCols}{\\ &}\else\let\wrapIfTwoCols\relax\fi

\newtheorem{theorem}{Theorem}
\newtheorem{definition}{Definition}
\newtheorem{example}[theorem]{Example}

\newtheorem{lemma}[theorem]{Lemma}
\newtheorem{observation}[theorem]{Observation}
\newtheorem{remark}[theorem]{Remark}

\newcommand*{\dagsRef}{Figs.~\ref{fig:additive-general-nay}--\ref{fig:general-any-nny}
(pages \pageref{fig:additive-general-nay}--\pageref{fig:general-any-nny})}

\newcommand*{\tabsRefNp}{Tables~\ref{table:impls1}--\ref{table:cex-nonadd}}
\newcommand*{\tabsRef}{\tabsRefNp{} (pages \pageref{table:impls1}--\pageref{table:cex-nonadd})}

\newcommand{\defaultFloatPlacement}{}
\pdfsuppresswarningpagegroup=1
\hfuzz=4pt

\title{Exploring Relations among Fairness Notions\texorpdfstring{\\}{ }in Discrete Fair Division%
\thanks{This work was supported by \href{https://www.nsf.gov/awardsearch/show-award/?AWD_ID=2334461}{NSF Grant CCF-2334461}.}}
\ifAnonymous
\author{\empty}
\else
\author{
Jugal Garg%
\thanks{Department of Industrial \& Enterprise Engineering, University of Illinois at Urbana-Champaign, USA}
\\ \texttt{\small jugal@illinois.edu}
\and
Eklavya Sharma\footnotemark[2]
\\ \texttt{\small eklavya2@illinois.edu}
}
\fi
\date{\empty}

\begin{document}

\maketitle

\begin{abstract}
Fair allocation of indivisible items among agents is a fundamental and extensively studied problem. However, \emph{fairness} does not have a single universally accepted definition, leading to many competing fairness notions. Some of these notions are considered stronger or more desirable, but they are also more difficult to guarantee.
In this work, we examine 22 different fairness notions and organize them into a hierarchy. Formally, we say that a notion $F_1$ \emph{implies} another notion $F_2$ if every $F_1$-fair allocation is also $F_2$-fair. We give a near-complete picture of implications among fairness notions: for almost every pair of notions, we either prove an implication or give a counterexample demonstrating that the implication does not hold. Although some of these results are already known, many are new.
We examine multiple settings, including the allocation of goods, chores, and mixed manna, and different valuation classes like additive, submodular, and subadditive. We believe this work clarifies the relative strengths and applicability of these notions, providing a foundation for future research in fair division.
Moreover, we develop an \emph{inference engine} to automate part of our work. It is available as a user-friendly web application and may have broader applications beyond fair division.
\end{abstract}

\setcounter{tocdepth}{2}
\tableofcontents

\section{Introduction}
\label{sec:intro}

The problem of fairly allocating items among multiple agents has garnered significant attention
in multi-agent systems and game theory.
It has many real-world applications like dividing inheritance,
distributing natural resources among countries or states, allocating public housing,
divorce settlements, and assigning research papers to reviewers.

Research in fair division began with the study of \emph{divisible} resources, such as land, water, and cake
\citep{steinhaus1940sur,stromquist1980how,varian1974equity}.
\emph{Fairness} was formally defined primarily in two ways:
\emph{envy-freeness} (EF) and \emph{proportionality} (PROP).
EF ensures that each agent believes they received the best bundle compared to others,
while PROP guarantees that each agent's value for their own bundle is at least
$1/n$ times their value for the entire set of items, where $n$ is the number of agents.
By the 1980s, EF and PROP allocations were shown to exist
and be efficiently computable
\citep{steinhaus1940sur,stromquist1980how,varian1974equity}.

These positive results no longer hold for indivisible items.
For instance, when 5 identical goods must be divided among two agents, an EF or PROP allocation is not possible.
However, one can still aim for \emph{approximate} fairness; e.g.,
an allocation where one agent receives 3 goods and the other receives 2 goods is intuitively as fair as possible.
However, when we move beyond such simple examples to the general setting where
items are heterogeneous and agents have different preferences,
formally defining (approximate) fairness becomes much more complex.
As a result, many fairness notions have been proposed for the indivisible setting.

The concept of EF was relaxed to a notion called \emph{EF1}\fairDefLater{} (envy-freeness up to one item),
and algorithms were designed to guarantee EF1 allocations \citep{budish2011combinatorial,lipton2004approximately}.
A stronger relaxation of EF, called \emph{EFX}\fairDefAgain{} (envy-free up to any item),
was introduced in \citet{caragiannis2019unreasonable}.
Despite significant efforts, the existence of EFX allocations remains an open problem.
Consequently, relaxations of EFX \citep{caragiannis2023new,amanatidis2020multiple,chaudhury2021little}
as well as the existence of EFX in special cases \citep{chaudhury2024efx,plaut2020almost,amanatidis2021maximum}
have been explored.

A similar story played out for relaxations of PROP:
The existence of PROP1\fairDefAgain{} (proportional up to one item) was easy to prove \citep{aziz2021fair},
but stronger notions like MMS\fairDefAgain{} (maximin share) \citep{kurokawa2018fair}
and PROPx\fairDefAgain{} (proportional up to any item) \citep{aziz2020polynomial}
were shown to be infeasible. Thus, many relaxations of these notions have been studied
\citep{kurokawa2018fair,akrami2024breaking,akrami2023randomized,baklanov2021propm,caragiannis2023new,kobayashi2025proportional}.
Additionally, enforcing polynomial-time computability,
compatibility with efficiency notions (such as Pareto optimality),
or other constraints \citep{gourves2014near,bouveret2017fair,biswas2018fair,equbal2024fair}
makes these problems more challenging, often necessitating the use of weaker fairness concepts.

Hence, for the problem of fairly allocating indivisible items, a variety of fairness notions have been proposed,
each offering a different level of perceived fairness.
We believe that a systematic comparison of these fairness notions is essential for
guiding both practical applications and future research in this area.
To contribute to this objective, we present a comprehensive analysis of 22 different fairness notions,
examined through the lens of \emph{implications}.

Formally, we say that a fairness notion $F_1$ \emph{implies} another notion $F_2$ if
every $F_1$-fair allocation is also $F_2$-fair.
Conversely, if we can identify an $F_1$-fair allocation that is not $F_2$-fair,
we have a \emph{counterexample}, demonstrating that $F_1$ does not imply $F_2$.
For many well-studied settings in fair allocation (e.g., additive valuations),
we give a near-complete picture of the implications among fairness notions.
For almost every pair of notions, we either prove that one notion implies the other, or we give a counterexample.
These results establish a hierarchy of fairness notions,
and the counterexamples highlight the strengths and weaknesses of each notion.
See \cref{fig:additive-nny} for the implications in the context of additive valuations over goods and over chores,
and \dagsRef{} for other fair division settings.

\let\oldFp\defaultFloatPlacement
\renewcommand{\defaultFloatPlacement}{tb}
\begin{figure*}[tb]
\centering
\begin{subfigure}{0.45\textwidth}
    \centering
    \includegraphics[scale=0.55]{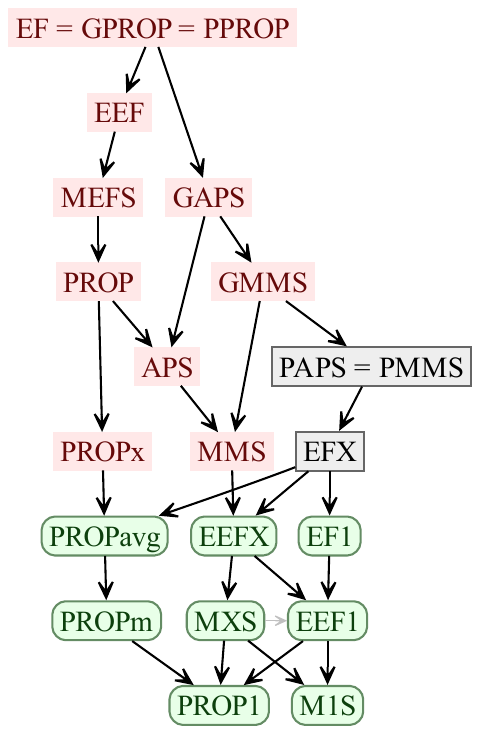}
    \caption{Goods}
    \label{fig:additive-nny:goods}
\end{subfigure}
\hfill
\begin{subfigure}{0.5\textwidth}
    \centering
    \includegraphics[scale=0.55]{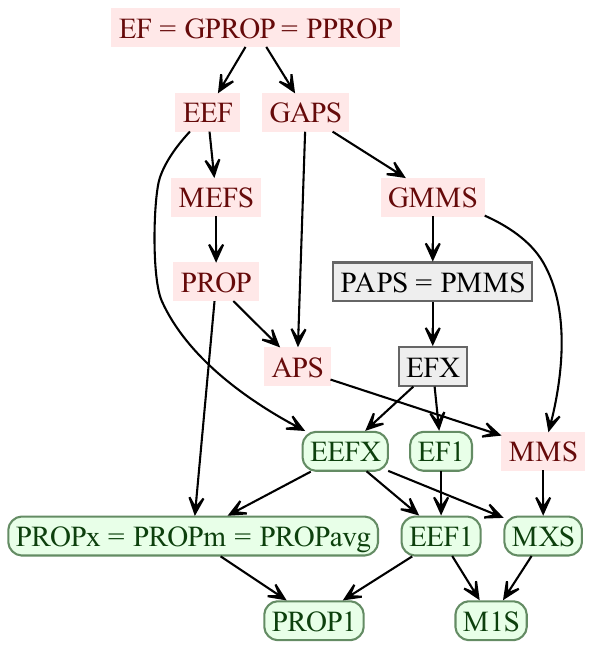}
    \caption{Chores}
    \label{fig:additive-nny:chores}
\end{subfigure}
\caption[Implications between fairness notions for additive goods and chores with equal entitlements]{%
Implications between fairness notions for additive valuations over goods and over chores
when agents have equal entitlements. There is a vertex for each fairness notion.
Notion $F_1$ implies notion $F_2$ iff there is a path from $F_1$ to $F_2$ in the graph
(except that it is not known whether MXS implies EEF1 for goods).
Borderless vertices (red) are infeasible notions,
vertices with rounded corners (green) are feasible notions,
and the feasibility of the remaining vertices (gray) are open problems.
Note that goods and chores have some key differences, e.g.,
for goods, PROP $\fimplies$ MMS $\fimplies$ EEF1 $\fimplies$ PROP1,
but for chores, MMS $\nfimplies$ PROP1, and PROP $\nfimplies$ EEF1.}
\label{fig:additive-nny}
\end{figure*}
\let\defaultFloatPlacement\oldFp

The literature on fair division covers a wide range of settings,
including distinctions between goods, chores, and mixed manna,
as well as varying entitlements (equal vs. unequal) and different classes of valuation functions.
Special cases, such as identical valuations or fair division among only two agents, have also been explored.
In this paper, we consider all combinations of these aspects of fair division.
At first, this leads to a combinatorial explosion of possible settings.
However, we address this challenge by encoding our results on
implications and non-implications in a machine-readable format,
and by implementing an \emph{inference engine} that uses these results to
automatically deduce new implications and non-implications.

For instance, if we query the inference engine with
``Does epistemic envy-freeness (EEF\fairDefAgain{}) imply maximin share (MMS) for goods with additive valuations?",
it would answer `yes' based on these three results we show in the paper:
\begin{tightenum}
\item EEF implies minimum-EF-share fairness (MEFS\fairDefAgain{}).
\item MEFS implies PROP under subadditive valuations
    (\extCref{thm:impl:mefs-to-prop} in \extCrefCof{sec:impls-extra}).
\item PROP implies MMS for superadditive valuations
    (\extCref{thm:impl:prop-to-wmms} in \extCrefCof{sec:impls-extra}).
\end{tightenum}
In \cref{sec:cpig}, we give two more examples of inference that are much more involved,
highlighting the engine's usefulness.

\subsection{Our Contributions}

We establish several implications and counterexamples between fairness notions,
leading to near-complete pictures of implications among fairness notions.
For example, in \cref{fig:additive-nny:goods} (additive valuations over goods),
there are $22 \times 21 = 462$ ordered pairs of notions,
but our work leaves only one problem (MXS $\fimplies$ EEF1) unresolved.
We obtain such near-complete pictures for 76 different fair division settings,
such as additive valuations over goods, chores, and mixed manna,
and submodular and general valuations over goods.
\Cref{sec:summary:auto} precisely enumerates these 76 settings.

\Cref{fig:additive-nny} and \dagsRef{} depict the near-complete pictures for 13 important settings.
Similar figures for other settings can be generated using the companion web application \citep{cpigjsEku}.

To achieve these tight results, we first curated a long list of implications and counterexamples, which we list in \tabsRef{}.
These tables list a total of 113 results, of which 46 are either trivial or already known,
and the remaining 67 are new results proved in this paper.
Moreover, the new counterexamples in this paper were carefully selected to be the simplest possible,
so as to clearly illustrate the distinctions between fairness notions.
Given the large number of results we prove, this was a challenging task.

Next, we developed a computer program, called the \emph{inference engine}.
It takes \tabsRefNp{} as input, and chains subsets of those results together to generate several additional results.
The hundreds of results unearthed this way give us near-complete pictures of implications
among fairness notions for many different settings.
Our inference engine is implemented as a web application in JavaScript \citep{cpigjsEku}.
The engine can be extended beyond fair division (see \cref{sec:cpig}),
and may have broader applications of independent interest.

We emphasize that the results in \tabsRefNp{} were proved manually,
without using the inference engine.

We also wrote a Python program to automate the verification of counterexamples \citep{fd-cex-checker}.
It contains functions to check whether an allocation satisfies a fairness notion,
and a representation of this paper's counterexamples (\cref{table:cex-add,table:cex-nonadd}) as Python objects.
Thus, to verify the correctness of our counterexamples,
instead of manually verifying each counterexample,
one can just run the code and verify the correctness of the fairness-checking functions.

Some fairness notions were originally defined for very specific settings;
for example, EFX was introduced in \citet{caragiannis2019unreasonable} only for additive goods.
We extend all fairness notions to the most general fair division setting we consider:
mixed manna with non-additive valuations and unequal entitlements.
In some cases, selecting an appropriate definition proved non-trivial,
and we explain the insights that motivated our choices.

\subsection{Related Work}
\label{sec:intro:related-work}

\citet{amanatidis2023fair} survey recent progress and open problems in fair division,
where many different fair division settings and fairness notions are considered.
\citet{suksompong2025weighted} gives a similar survey for unequal entitlements.

The most common setting in fair division is equally-entitled agents
having additive valuations over goods. For this setting, \citet{bouveret2016characterizing}
studied implications among 5 fairness notions: CEEI, EF, PROP, MMS, and min-max-share
(also called minimum EF share).
\citet{amanatidis2018comparing} study implications between
multiplicative approximations of fairness notions.
\citet{aziz2021fair} consider implications between EF, PROP, EF1, and PROP1 for mixed manna instead.
\citet{chakraborty2024weighted} study implications for the weighted setting
among EF1, PROP1, APS, MNW, and other notions.
Over time, as new fairness notions were proposed,
their connections with other well-established notions were studied
\citep{caragiannis2023new,babaioff2023fair,barman2018groupwise,aziz2018knowledge,aziz2024almost}.
However, the above works only consider a few fairness notions and fair division settings.
Our work, on the other hand, aims to be more exhaustive and thus have broader applicability.

\subsection{Structure of the Paper}

In \cref{sec:prelims}, we formally define the fair division problem
and describe different fair division settings.
In \cref{sec:notions}, we describe the fairness notions we consider in this paper.
In \cref{sec:summary}, we present a summary of our results.
In \cref{sec:cpig}, we describe our inference engine.
\Cref{sec:conclusion} contains concluding remarks and open problems.

\ifVerbose
Appendices \extRef{sec:settings-extra} and \extRef{sec:notions-extra}
contain details on fair division settings and fairness notions.
Appendices \extRef{sec:impls-extra}, \extRef{sec:cex-add-extra}, \extRef{sec:cex-nonadd-extra}
contain proofs of our implication and non-implication results.
\extCref{sec:feas} contains results on (in)feasibility of fairness notions.
\fi

\section{Preliminaries}
\label{sec:prelims}

In the fair division problem, there is a finite set $M$ of items
that must be distributed among a finite set $N$ of agents fairly.
Formally, we are given as input a \emph{fair division instance} $\Ical \defeq \fdInst{N}{M}{V}{w}$.
Here $w \defeq (w_i)_{i \in N}$ is a collection of positive numbers that sum to 1,
and $V \defeq (v_i)_{i \in N}$ is a collection of functions,
where $v_i: 2^M \to \mathbb{R}$ and $v_i(\emptyset) = 0$ for each $i \in N$.
$v_i$ is called agent $i$'s \emph{valuation function},
and $w_i$ is called agent $i$'s \emph{entitlement}.
Our task is to find a fair allocation. An \emph{allocation} $A \defeq (A_i)_{i \in N}$ is
a collection of pairwise-disjoint subsets of $M$ such that $\bigcup_{i \in N} A_i = M$.
The set $A_i$ is called agent $i$'s \emph{bundle} in $A$.

For any integer $k \ge 0$, define $[k] \defeq \{1, 2, \ldots, k\}$.
We generally assume \wLoG{} that $N = [n]$ and $M = [m]$.
For an agent $i$ and item $j$, we often write $v_i(j)$ instead of $v_i(\{j\})$ for notational convenience.
We denote a fair division instance by $\fdInst{N}{M}{V}{\eqEnt}$ when entitlements are equal.
For any function $u: 2^M \to \mathbb{R}$ and sets $S, T \subseteq M$, the \emph{marginal value}
of $S$ over $T$ is defined as $u(S \mid T) \defeq u(S \cup T) - u(T)$.

\subsection{Fairness Notions}
\label{sec:prelims:fairness-notions}

A \emph{fairness notion} $F$ is a function that takes as input a fair division instance $\Ical$,
an allocation $A$, and an agent $i$, and outputs either true or false.
When $F(\Ical, A, i)$ is true, we say that allocation $A$ is $F$\emph{-fair} to agent $i$,
or that agent $i$ is $F$\emph{-satisfied} by allocation $A$.
Allocation $A$ is said to be $F$-fair if it is $F$-fair to every agent.

A fairness notion $F$ is said to be \emph{feasible} if for every fair division instance,
there exists an $F$-fair allocation.
We say that a notion $F_1$ of fairness \emph{implies} another notion $F_2$ of fairness if
every $F_1$-fair allocation is also an $F_2$-fair allocation.
An allocation $A$ is $(F_1+F_2)$-fair to an agent $i$ if it is both $F_1$-fair
and $F_2$-fair to agent $i$.

\subsection{Fair Division Settings}
\label{sec:settings}

We study many fair division settings in this paper.
A setting is given by multiple \emph{features}.
By picking different values of these features, we get many different settings.
We consider 5 features in this paper:
(i) whether entitlements are equal,
(ii) whether there are only two agents,
(iii) whether agents have identical valuations,
(iv) valuation function type,
(v) marginal values.
The first three are self-explanatory. We give an overview of the last two,
and defer the details to \extCref{sec:settings-extra}.

\textbf{Valuation Function Type}:
This feature indicates how values of different sets of items are related to each other.
We consider many popular function types like additive, subadditive,
submodular, and general functions.

\textbf{Marginal values}:
This feature indicates the possible marginal values items can have.
Agent $i$'s marginal value for item $j$ over set $S$
is given by $v_i(j \mid S) \defeq v_i(S \cup \{j\}) - v_i(S)$.
We consider several marginal value types, e.g., non-negative (goods), non-positive (chores),
bivalued ($\{a, b\}$), binary ($\{0, 1\}$), negative binary ($\{0, -1\}$).

\section{Fairness Notions}
\label{sec:notions}

We now list all the fairness notions we consider in this paper.
Due to the large number of notions we consider, we describe them briefly;
details, motivation, and examples can be found in the papers we cite for each notion,
or in the survey papers cited in \cref{sec:intro:related-work}.

\subsection{Envy-Based Notions}

\begin{definition}[EF]
\label{defn:ef}
Let $\Ical \defeq \fdInst{[n]}{[m]}{(v_i)_{i=1}^n}{w}$ be a fair division instance.
In an allocation $A$, an agent $i \in [n]$ \emph{envies} another agent
$j \in [n] \setminus \{i\}$ if $v_i(A_i)/w_i < v_i(A_j)/w_j$.
Agent $i$ is \emph{envy-free} in $A$ (or $A$ is EF-fair to $i$) if
she does not envy any other agent in $A$.
\end{definition}

For unequal entitlements, most papers use the term WEF (weighted EF),
but we use the term EF instead to emphasize that unequal entitlements
is a property of the setting, not the notion.
It is easy to see that EF allocations may not exist, so several relaxations have been studied.
Two of the most popular relaxations of EF are
EF1 (envy-free up to one item) \citep{budish2011combinatorial,lipton2004approximately},
and EFX (envy-free up to any item) \citep{caragiannis2019unreasonable}.

\begin{definition}[EF1]
\label{defn:ef1}
Let $\Ical \defeq \fdInst{[n]}{[m]}{(v_i)_{i=1}^n}{w}$ be a fair division instance.
In $\Ical$, an allocation $A$ is EF1-fair to agent $i$ if for every other agent $j$,
either $i$ does not envy $j$,
or $v_i(A_i)/w_i \ge v_i(A_j \setminus \{g\})/w_j$ for some $g \in A_j$,
or $v_i(A_i \setminus \{c\})/w_i \ge v_i(A_j)/w_j$ for some $c \in A_i$.
Equivalently, $A$ is EF1-fair to $i$ if for every other agent $j$
and some $S \subseteq A_i \cup A_j$ such that $|S| \le 1$, we have
$v_i(A_i \setminus S)/w_i \ge v_i(A_j \setminus S)/w_j$.
\end{definition}

EFX was originally defined \citep{caragiannis2019unreasonable} for additive valuations over goods:
an agent $i$ \emph{strongly envies} an agent $j$ if,
even after removing a good of non-zero value from $j$'s bundle, $i$ still envies $j$.
An allocation is EFX-fair to agent $i$ if she does not strongly envy anyone.
We extend this definition to non-additive valuations over mixed manna.

\begin{definition}[EFX]
\label{defn:efx}
Let $\Ical \defeq \fdInst{[n]}{[m]}{(v_i)_{i=1}^n}{w}$ be a fair division instance.
In $\Ical$, an allocation $A$ is EFX-fair to agent $i$ if for each $j \in [n] \setminus \{i\}$,
either $i$ does not envy $j$, or both of the following hold:
\begin{enumerate}
\item Removing any positively-valued subset from $j$'s bundle eliminates $i$'s envy towards $j$:
\ifColsTwo
\[ \frac{v_i(A_i)}{w_i} \ge \frac{\max\left(\left\{
    \begin{array}{l}
    v_i(A_j \setminus S): S \subseteq A_j
    \\\quad \textrm{ and } v_i(S \mid A_i) > 0
    \end{array}\right\}\right)}{w_j}. \]
\else
\[ \frac{v_i(A_i)}{w_i} \ge \frac{\max(\{v_i(A_j \setminus S): S \subseteq A_j
    \textrm{ and } v_i(S \mid A_i) > 0\})}{w_j}. \]
\fi
\item Removing any negatively-valued subset from $i$'s bundle eliminates $i$'s envy towards $j$:
\ifColsTwo
\[ \frac{\min\left(\left\{
    \begin{array}{r}
    v_i(A_i \setminus S): S \subseteq A_i \textrm{ and }
    \\ v_i(S \mid A_i \setminus S) < 0
    \end{array}\right\}\right)}{w_i} \ge \frac{v_i(A_j)}{w_j}. \]
\else
\[ \frac{\min\left(\left\{v_i(A_i \setminus S): S \subseteq A_i
    \textrm{ and } v_i(S \mid A_i \setminus S) < 0 \right\}\right)}{w_i} \ge \frac{v_i(A_j)}{w_j}. \]
\fi
\end{enumerate}
\end{definition}

\Cref{defn:efx} looks different from the original definition by \citet{caragiannis2019unreasonable},
and also differs from \EFXZero, an alternative definition of EFX studied by many papers
\citep{plaut2020almost,chaudhury2021little,chaudhury2024efx}.
However, we show in \extCref{sec:notions:efx} that \cref{defn:efx} is equivalent to the original definition of EFX
for submodular valuations over goods and submodular valuations over chores,
and equivalent to \EFXZero{} when marginals are (strictly) positive or negative.
Thus, \cref{defn:efx} is a generalization of known definitions.
In \extCref{sec:notions:efx}, we give more details on our motivation for defining EFX this way.

\subsection{Proportionality-Based Notions}

\begin{definition}[PROP]
\label{defn:prop}
Let $\Ical \defeq \fdInst{[n]}{[m]}{(v_i)_{i=1}^n}{w}$ be a fair division instance.
For $\Ical$, agent $i$'s \emph{proportional share} is $w_i \cdot v_i([m])$.
Allocation $A$ is \emph{proportional} (PROP) to agent $i$ if $v_i(A_i) \ge w_i \cdot v_i([m])$.
\end{definition}

A popular relaxation of PROP is PROP1 (proportional up to one item),
where each agent believes her bundle is better than the proportional share
after taking some good or giving away some chore.

\begin{definition}[PROP1, \shortcite{conitzer2017fair}]
\label{defn:prop1}
Let $\Ical \defeq \fdInst{[n]}{[m]}{(v_i)_{i=1}^n}{w}$ be a fair division instance.
In $\Ical$, an allocation $A$ is PROP1-fair to agent $i$ if
either $v_i(A_i) \ge w_i \cdot v_i([m])$,
or $v_i(A_i \cup \{g\}) > w_i \cdot v_i([m])$ for some $g \in [m] \setminus A_i$,
or $v_i(A_i \setminus \{c\}) > w_i \cdot v_i([m])$ for some $c \in A_i$.
\end{definition}

Note that \cref{defn:prop1} uses strict inequalities, whereas most papers do not.
This definition is from \citet{feige2025low}, and it makes PROP1
a slightly stronger notion without altering its fundamental properties.
See \extCref{sec:notions:prop1} for details.

PROPx (PROP up to any item) \citep{aziz2020polynomial,li2022almost} is another relaxation of PROP.
It was originally defined for additive valuations over goods:
an agent $i$ is PROPx-satisfied if transferring any good to her bundle makes her proportionally-satisfied.
We extend this definition to non-additive valuations over mixed manna.

\begin{definition}[PROPx]
\label[definition]{defn:propx}
For a fair division instance $\Ical \defeq \fdInst{[n]}{[m]}{(v_i)_{i=1}^n}{w}$,
an allocation $A$ is said to be PROPx-fair to agent $i$ iff
either $v_i(A_i) \ge w_i \cdot v_i([m])$ or both of these conditions hold:
\begin{tightenum}
\item $v_i(A_i \cup S) > w_i \cdot v_i([m])$ for every $S \subseteq [m] \setminus A_i$
    such that $v_i(S \mid A_i) > 0$.
\item $v_i(A_i \setminus S) > w_i \cdot v_i([m])$ for every $S \subseteq A_i$
    such that $v_i(S \mid A_i \setminus S) < 0$.
\end{tightenum}
\end{definition}

\Cref{defn:propx} differs from \citet{aziz2020polynomial} and \citet{li2022almost} in two ways.
First, we use strict inequalities in \cref{defn:propx}, similar to our definition of PROP1.
Second, we replaced `transferring a good' by `transferring a positive-valued subset',
and `transferring out a chore' by `transferring out a negative-valued subset',
similar to our definition of EFX. We prove in \extCref{sec:notions:propx} that,
just like with EFX, our definition of PROPx simplifies for submodular valuations over goods,
for submodular valuations over chores, and for (strictly) positive or negative marginals.

PROPavg (PROP up to the min-avg item) \citep{kobayashi2025proportional},
and PROPm (PROP up to the minimax item) \citep{baklanov2021achieving}
are strengthenings of PROP1 and relaxations of PROPx.
We formally describe them in \extCref{sec:notions:propm-propavg}.
Just like PROPx, our formal definitions for PROPm and PROPavg differ from the original definitions,
and our definitions simplify for special cases. See \extCref{sec:notions:propm-propavg} for details.

\subsection{Maximin Share and AnyPrice Share}

For equal entitlements, an agent's maximin share (MMS) \citep{budish2011combinatorial}
is the maximum value she can obtain by partitioning the goods into $n$ bundles and picking the worst one.
An allocation is MMS-fair to her if her bundle's value is at least her maximin share.

\begin{definition}[MMS, \citet{budish2011combinatorial}]
\label[definition]{defn:mms-simple}
Let $\fdInst{[n]}{[m]}{(v_i)_{i=1}^n}{\eqEnt}$ be a fair division instance.
Let $\Pi_n([m])$ be the set of all $n$-partitions of $[m]$.
Define agent $i$'s \emph{maximin share} to be
\[ \MMS_i \defeq \max_{P \in \Pi_n([m])} \min_{j=1}^n v_i(P_j). \]
An allocation $A$ is \emph{MMS-fair} to agent $i$ if $v_i(A_i) \ge \MMS_i$.
\end{definition}

Weighted MMS (WMMS) \citep{farhadi2019fair}
\ifVerbose
and pessimistic share (pessShare) \citep{babaioff2023fair} are well-known extensions
\else
is a well-known extension
\fi
of MMS to the unequal entitlements setting.
See \extCref{sec:notions:mms} for details.

AnyPrice Share (APS) \citep{babaioff2023fair} is another fairness notion that is inspired by MMS.

\begin{definition}[APS, \citet{babaioff2023fair}]
\label[definition]{defn:aps-simple}
For a fair division instance $\Ical \defeq \fdInst{[n]}{[m]}{(v_i)_{i=1}^n}{w}$,
agent $i$'s AnyPrice Share (APS) is defined as
\[ \APS_i \defeq \min_{p \in \mathbb{R}^m}\;\max_{S \subseteq [m]: p(S) \le w_i \cdot p([m])} v_i(S). \]
Here $p$ is called the \emph{price vector}.
An allocation $A$ is APS-fair to agent $i$ if $v_i(A_i) \ge \APS_i$.
\end{definition}

\citet{babaioff2023fair} define APS for goods and for chores,
whereas \cref{defn:aps-simple} works for mixed manna.
They also give an alternative definition of APS, called the \emph{dual definition},
which does not involve prices at all.
See \extCref{sec:notions:aps} for a comparison of these definitions.

\subsection{Derived Notions}

New fairness notions can be obtained by systematically modifying existing notions.
We start with two related concepts, epistemic fairness \citep{aziz2018knowledge}
and minimum fair share \citep{caragiannis2023new}.

\begin{definition}[epistemic fairness]
\label{defn:epistemic}
Let $F$ be a fairness notion.
An allocation $A$ is \emph{epistemic-$F$-fair} to an agent $i$ if
there exists another allocation $B$ that is $F$-fair to agent $i$ and $B_i = A_i$.
$B$ is called agent $i$'s \emph{epistemic-$F$-certificate} for $A$.
\end{definition}

\begin{definition}[minimum fair share]
\label{defn:minfs}
For a fair division instance $\Ical \defeq \fdInst{[n]}{[m]}{(v_i)_{i=1}^n}{w}$
and fairness notion $F$, let $\Acal(\Ical, F, i)$ be the set of allocations
that are $F$-fair to agent $i$.
Then $A$ is minimum-$F$-share-fair to agent $i$ if there exists
an allocation $B \in \Acal(\Ical, F, i)$ such that $v_i(A_i) \ge v_i(B_i)$.
Here $B$ is called agent $i$'s \emph{minimum-$F$-share-certificate} for $A$.
Equivalently, an allocation $A$ is \emph{minimum-$F$-share-fair} to agent $i$ if
$v_i(A_i)$ is at least her minimum-$F$-share, defined as
\[ \minFS(\Ical, F, i) \defeq \min_{X \in \Acal(\Ical, F, i)} v_i(X_i). \]
\end{definition}

We now describe pairwise \citep{caragiannis2019unreasonable}
and groupwise \citep{barman2018groupwise} fairness.

\begin{definition}[restricting, pairwise and groupwise fairness]
\label{defn:restricting}
\label{defn:pairwise}
\label{defn:groupwise}
Let $\Ical \defeq \fdInst{N}{M}{(v_i)_{i \in N}}{w}$ be a fair division instance and $A$ be an allocation.
For a subset $S \subseteq N$ of agents, where $|S| \ge 2$,
let $\restrict(\Ical, A, S)$ be the pair $(\Ical^{(S)}, A^{(S)})$, where
allocation $A^{(S)} \defeq (A_j)_{j \in S}$,
instance $\Ical^{(S)} \defeq \fdInst{S}{\bigcup_{j \in S} A_j}{(v_j)_{j \in S}}{\what}$,
and weights $\what_j \defeq w_j / \sum_{k \in S} w_k$.

$A$ is \emph{pairwise-$F$-fair} to agent $i$ if for all $j \in N \setminus \{i\}$,
$A^{(\{i, j\})}$ is $F$-fair to $i$ in the instance $\Ical^{(\{i, j\})}$.
$A$ is \emph{groupwise-$F$-fair} to agent $i$ if for all non-empty $S \subseteq N \setminus \{i\}$,
$A^{(\{i\} \cup S)}$ is $F$-fair to $i$ in the instance $\Ical^{(\{i\} \cup S)}$.
\end{definition}

In this paper, we consider these derived notions:
Epistemic envy-freeness (EEF), epistemic EFX (EEFX), epistemic EF1 (EEF1),
minimum EF share (MEFS), minimum EFX share (MXS), minimum EF1 share (M1S),
pairwise proportionality (PPROP), pairwise MMS (PMMS), pairwise APS (PAPS),
groupwise proportionality (GPROP), groupwise MMS (GMMS), groupwise APS (GAPS).

\section{Summary of Results}
\label{sec:summary}

\subsection{Manually-Proved Results}
\label{sec:summary:manual}

We first prove several implications and non-implications among fairness notions manually,
i.e., without using the inference engine.
We summarize implications in \cref{table:impls1,table:impls-tribool}
(pages \pageref{table:impls1} and \pageref{table:impls-tribool}),
and defer the proofs to \extCref{sec:impls-extra}.
We state several counterexamples in \cref{table:cex-add,table:cex-nonadd}
(pages \pageref{table:cex-add} and \pageref{table:cex-nonadd}),
and defer the proofs to Appendices \extRef{sec:cex-add-extra} and \extRef{sec:cex-nonadd-extra}, respectively.
All of these counterexamples, except \cref{cex:paps-pprop-not-propm-binary-subadd,cex:gmms-not-aps-binary-subadd},
can also be found in the checker program \citep{fd-cex-checker}, where they are written as Python objects.
(\cref{cex:paps-pprop-not-propm-binary-subadd,cex:gmms-not-aps-binary-subadd}
are very large instances where agents have subadditive valuations,
so standard fairness-checking algorithms are computationally intractable for them.)

\ifVerbose
In \extCref{sec:impls-extra:unit-demand}, we prove additional implications for unit-demand valuations.
\fi
We list results regarding the feasibility and infeasibility of fairness notions in \extCref{sec:feas}.

\subsection{Automatically-Inferred Results}
\label{sec:summary:auto}

After proving implications and counterexamples manually,
we feed them into our inference engine, which uses them to infer many more results.
These results give us a near-complete picture of implications,
similar to \cref{fig:additive-nny}, for many different fair division settings.
In particular, for each of the following 76 settings, we get at most 3 unresolved implications,
i.e., there are at most three pairs $(F_1, F_2)$ where it is not known whether $F_1$ implies $F_2$
(whereas there are $22 \times 21 = 462$ ordered pairs of notions):
\begin{itemize}
\ifColsOne\raggedright\fi
\item $\overset{\text{\textcolor{gray}{entitlements}}}{\{\text{equal}, \text{unequal}\}}
    \times \overset{\text{\textcolor{gray}{no.\ of agents}}}{\{n \ge 2, n = 2\}}
    \times \overset{\text{\textcolor{gray}{identical valuations}}}{\{\text{ident}, \text{non-ident}\}}
    \times \overset{\text{\textcolor{gray}{valuation type}}}{\{\text{additive}\}}
    \times \overset{\text{\textcolor{gray}{marginal values}}}{\{\text{goods}, \text{chores}, \text{mixed}, \{0, 1\}, \{-1, 0\}\}}$
\item $\overset{\text{\textcolor{gray}{entitlements}}}{\{\text{equal}\}}
    \times \overset{\text{\textcolor{gray}{no.\ of agents}}}{\{n \ge 2, n = 2\}}
    \times \overset{\text{\textcolor{gray}{identical valuations}}}{\{\text{ident}, \text{non-ident}\}}
    \times \overset{\text{\textcolor{gray}{valuation type}}}{\{\text{submod}, \text{subadditive}, \text{general}\}}
    \times \overset{\text{\textcolor{gray}{marginal values}}}{\{\text{goods}, \mathbb{R}_{>0}, \{0, 1\}\}}$
\end{itemize}

The almost-complete resolution of the above settings can be verified by running
the \texttt{countOpenProblems.js} script included with the inference engine \citep{cpigjsEku}.

We present implication diagrams for some important settings in \dagsRef{}.
Dashed lines represent unresolved implications.
Diagrams for other settings can be viewed in the inference engine web app \citep{cpigjsEku}.

\section{Inference Engine}
\label{sec:cpig}

We wrote a computer program, called the \emph{inference engine}.
We initialize it with a list of implications and counterexamples (\cref{sec:summary:manual}),
and then we repeatedly query it with a fair division setting.
For each query, it infers additional implications and counterexamples using a method similar to transitive closure.
Although the inference procedure is simple, the engine can still make non-trivial inferences.
We present two such examples:
\begin{enumerate}
\item EEFX does not imply PROPavg (for equal entitlements over additive goods).
    This follows from a chain of (non-)implications:
    EEFX is implied by APS (Lemmas \extRef{thm:impl:aps-to-pess} and \extRef{thm:impl:mms-to-eefx}),
    APS does not imply PROPm (\extCref{cex:aps-not-propm}), and PROPm is implied by PROPavg.
    Finding such a chain can be difficult if one is not
    intimately aware of all implications and counterexamples.
    The engine helps uncover many such non-obvious insights.
\item APS does not imply EEF1 for unequal entitlements (over additive goods).
    To prove this, one cannot find a chain as in the previous example, since it does not exist.
    However, for the simpler case of binary valuations among two agents,
    APS is equivalent to PROP1 (Lemmas \extRef{thm:impl:tribool:aps} and \extRef{thm:impl:tribool:prop1}),
    EEF1 is equivalent to M1S (Lemmas \extRef{thm:impl:tribool:ef1-to-efx} and \extRef{thm:impl:mxs-to-ef1-n2}),
    and PROP1 does not imply M1S for this simpler case (\extCref{cex:prop1-not-m1s-n2}).
\end{enumerate}

For each fair division setting that we query the engine with,
it summarizes its inferred results as a directed acyclic graph (DAG).
See \cref{fig:additive-nny} and \dagsRef{} for examples of the program's output
for various important fair division settings.
The program is implemented as a web application;
see \cref{fig:cpigjs} for the program's screenshot.
Its source code is available on Github \citep{cpigjsEku}.

\begin{figure}[htb]
\centering
\ifColsOne
\includegraphics[width=0.8\textwidth]{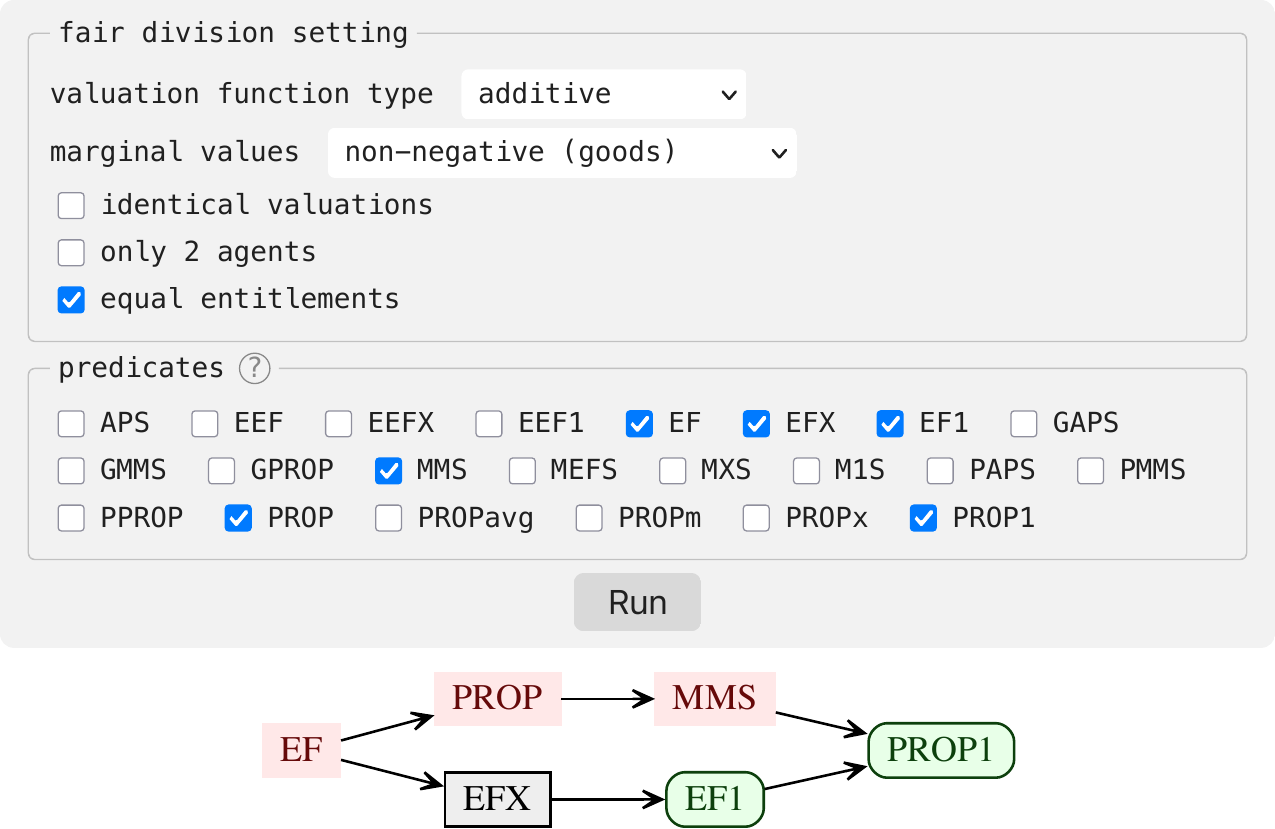}
\else
\includegraphics[width=0.9\columnwidth]{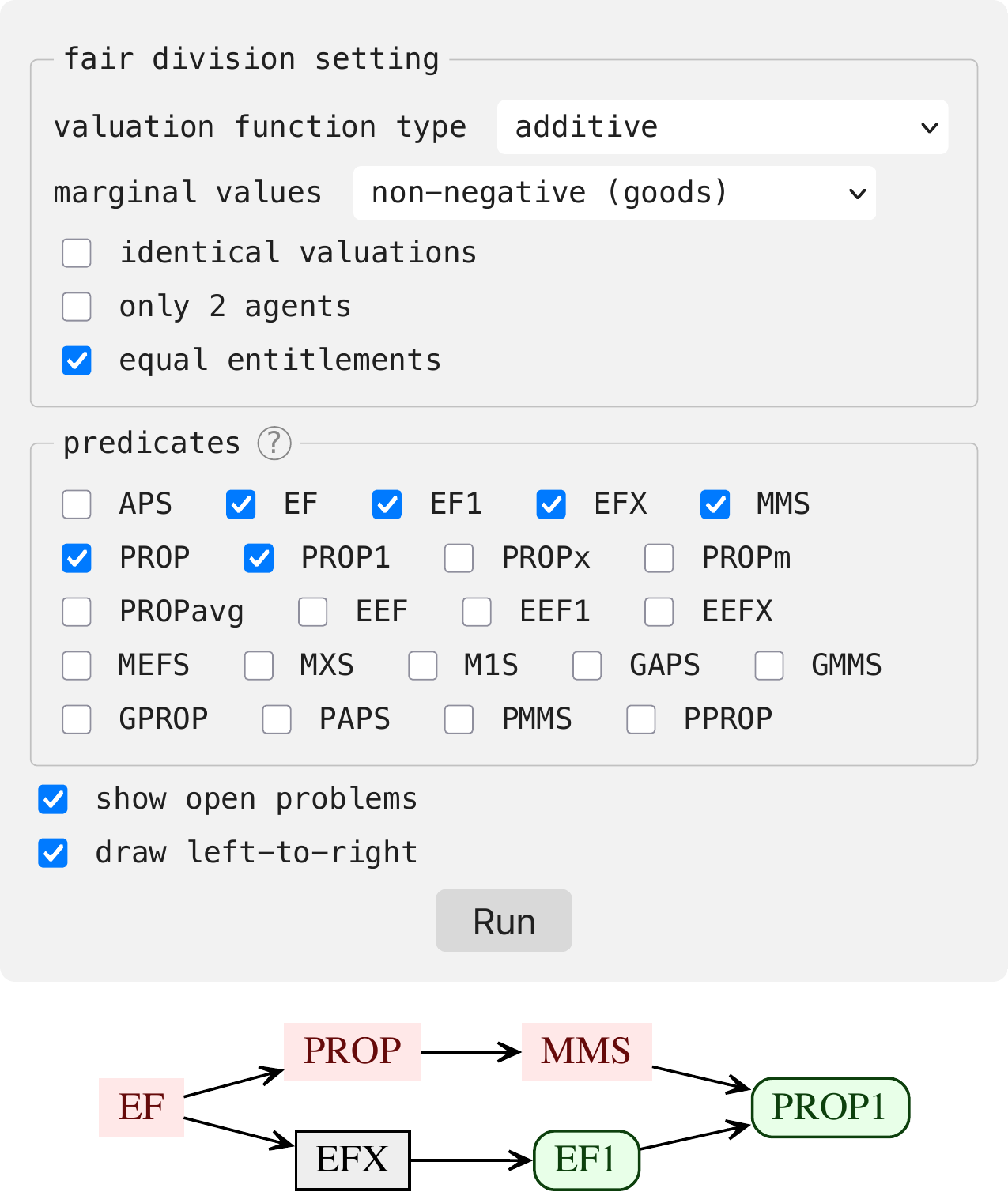}
\fi
\caption[Screenshot from cpigjs]{
Screenshot from the inference engine's web interface for fair division.}
\label{fig:cpigjs}
\end{figure}

Our program is not limited to just fair division.
It can be used more broadly for \emph{conditional predicate implications}.
A \emph{predicate} is a function whose co-domain is $\mathbb{B} \defeq \{\mathtt{true}, \mathtt{false}\}$.
Given two predicates $\phi_1, \phi_2: \Omega \to \mathbb{B}$,
we say that $\phi_1$ \emph{implies} $\phi_2$ conditioned on $S \subseteq \Omega$,
denoted as $\phi_1 \fimplies_S \phi_2$,
if $\phi_1(x) \fimplies \phi_2(x)$ for all $x \in S$.
In fair division, $\Omega$ is the set of all pairs $(\Ical, A)$,
where $\Ical$ is a fair division instance and $A$ is an allocation for $\Ical$.
A fair division setting is a subset of $\Omega$,
and fairness notions are predicates.

The inference engine's input is a tuple $(\Fcal, \Phi, I, C)$.
$\Fcal$ is a set family over a ground set $\Omega$.
    In fair division, each set in $\Fcal$ represents a setting.
    Since $\Omega$ can be uncountable, we represent sets in $\Fcal$ implicitly
    (see \extCref{sec:fd-set-family}).
    Moreover, given $S_1, S_2 \in \Fcal$, we should be able to efficiently tell whether $S_1 \subseteq S_2$.
$\Phi$ is a set of predicates over $\Omega$.
$I$ is a set of \emph{conditional implications}, i.e., a set of triples
    $(\phi_1, \phi_2, S) \in \Phi \times \Phi \times \Fcal$
    where $\phi_1 \fimplies_S \phi_2$.
$C$ is a set of \emph{conditional counterexamples}, i.e., a set of triples
    $(\phi_1, \phi_2, S) \in \Phi \times \Phi \times \Fcal$,
    where $\phi_1(x) \nfimplies \phi_2(x)$ for some $x \in S$.

We repeatedly query the engine with a set $S \in \Fcal$,
and it outputs all implications and counterexamples conditioned on $S$,
even those not explicitly present in $I$ and $C$.

The inference engine works in two steps.
In step 1, we find all implications conditioned on $S$.
To do this, we simply select implications from $I$ that are conditioned on
supersets of $S$, and compute their transitive closure.
In step 2, we find all counterexamples conditioned on $S$.
To do this, for each $(\phi_1, \phi_2, T) \in C$,
we first find all implications conditioned on $T$ like in step 1.
Next, if $\phi_1 \fimplies_T \phi'_1$ and $\phi'_2 \fimplies_T \phi_2$,
then we can infer that $\phi'_1 \nfimplies_{\!\!\!T\,\,\,} \phi'_2$, because otherwise,
by transitivity, we get $\phi_1 \fimplies_T \phi_2$.
Using this technique, we expand the set of all counterexamples.
Then we select counterexamples conditioned on subsets of $S$.

We can extend the inference engine to also make inferences about
feasibility and infeasibility of fairness notions using data from the tables in \cref{sec:summary:manual}.
Specifically, if $F_1 \fimplies_S F_2$ and $F_1$ is feasible for setting $S$,
then $F_2$ is also feasible for $S$.
Contrapositively, if $F_1 \fimplies_S F_2$ and $F_2$ is infeasible for setting $S$,
then $F_1$ is infeasible for setting $S$.

\section{Conclusion and Open Problems}
\label{sec:conclusion}

We prove several (non-)implications between fairness notions,
and for many settings, we give an almost complete picture of implications.
We believe our results would help inform further research in fair division.
This would be especially useful if one wants to extend a fair division result
to a stronger notion or a more general setting,
or study a weaker notion or a simpler setting for a hard problem.

Our framework can easily accommodate new fairness notions.
One just needs to prove a few key implications and counterexamples, and the rest can be inferred.
In fact, we originally started with only 18 fairness notions, and gradually expanded the list
to 22 notions as we found out about them.
When we added PROPavg \citep{kobayashi2025proportional}, for example, we only had to add the following 5 results:
PROPx $\fimplies$ PROPavg, PROPavg $\fimplies$ PROPm, EFX $\fimplies$ PROPavg,
PROPavg $\nfimplies$ PROPx, and PROPm $\nfimplies$ PROPavg.
The inference engine inferred PROPavg's relationship to the remaining notions.

In \cref{fig:additive-nny}, the only notions having unresolved feasibility are EFX and PMMS.
For mixed manna, even the existence of MXS allocations is open.
For equally-entitled agents having additive valuations over goods or over chores,
EF1 and Pareto optimal (PO) allocations are known to exist
\citep{caragiannis2019unreasonable,barman2018finding,mahara2025existence},
but their efficient computation remains open.
Relaxing the problem to EEF1+PO can be a helpful first step.

Here are four interesting open problems regarding implications that we could not resolve:
\begin{tightenum}
\item For additive goods (equal entitlements), does MXS imply EEF1?
    Note that the implication holds for the special case of two agents
    (\extCref{thm:impl:mxs-to-ef1-n2} in \extCref{sec:impls-extra}).
\item For additive goods (unequal entitlements), does APS imply PROP1?
    This is open even for two agents.
\item For submodular goods (equal entitlements), does MXS imply PROP1?
    This is true for binary marginals
    (\extCref{thm:impl:m1s-to-propx-binary-subadd} in \extCref{sec:impls-extra}).
\item For submodular goods with binary marginals (equal entitlements),
    does pairwise-MMS imply MMS?
\end{tightenum}
For less-studied settings, like non-additive valuations over chores,
many implications are still open.

Another interesting direction is to study implications of the form $F_1$+PO $\fimplies$ $F_2$+PO.
For additive valuations, we have already resolved most questions of this form.
This is because if $F_1 \fimplies F_2$, then $F_1$+PO $\fimplies$ $F_2$+PO.
On the other hand, most of our counterexamples use identical valuations,
where every allocation is trivially PO.

We omitted some fairness notions from our work because
they are fundamentally different from the notions we consider.
Equitability (EQ) \citep{brams1996fair}, and its relaxations like EQ1 and EQX
\citep{amanatidis2023fair,gourves2014near}, compare different utility functions with each other.
Notions like CEEI \citep{varian1974equity} and maximum Nash welfare \citep{caragiannis2019unreasonable}
include an aspect of efficiency in addition to fairness.
We also didn't study fair division of divisible items \citep{steinhaus1940sur,stromquist1980how,varian1974equity},
or a mix of divisible and indivisible items \citep{liu2024mixed}.
Nevertheless, we believe that these notions and settings offer an interesting line of research,
and our inference engine (\cref{sec:cpig}) can be readily adapted for them.

Some fairness notions and settings don't fit our model (\cref{sec:prelims}),
so it is unclear how to systematically represent results about them
and extend the inference engine (\cref{sec:cpig}) for them.
Examples of such settings include constrained fair division
\citep{gourves2014near,bouveret2017fair,biswas2018fair,equbal2024fair},
notions based on social graphs \citep{aziz2018knowledge},
parametrized notions like $\mathrm{WEF}(x, y)$ \citep{chakraborty2024weighted},
and multiplicative approximations of fairness notions \citep{amanatidis2018comparing}.
Specifically, studying multiplicative approximations presents the following challenges:
\begin{tightenum}
\item For all $\alpha \in (0,1]$, $\alpha$-EF1 implies $\alpha/(1+\alpha)$-PMMS,
    and $\alpha$-PMMS implies $\alpha/(2-\alpha)$-EF1
    (Propositions 3.8 and 4.6 in \citet{amanatidis2018comparing}).
    It's unclear how to properly visually depict such results, since we can have
    infinite chains like EF1 $\fimplies$ $1/2$-PMMS $\fimplies$ $1/3$-EF1 $\fimplies$ $1/4$-PMMS \ldots.
\item To infer new implications using existing ones, we need to be able to
    represent, compose, and evaluate arbitrary functions of approximation ratios.
    For some results, such representation is non-trivial.
    For example, MMS $\fimplies$ EEFX (\extCref{thm:impl:mms-to-eefx} in \extCref{sec:impls-extra:mms-vs-efx}),
    but $(1-\eps)$-MMS does not imply $\eps$-EEFX for $n=2$ and any $\eps > 0$
    (Proposition 4.8 in \citet{amanatidis2018comparing}).
\end{tightenum}
Extending our techniques to these settings and notions would be an interesting line of research.

\let\oldFp\defaultFloatPlacement
\renewcommand{\defaultFloatPlacement}{p}
\begin{table*}[p]
\centering
\caption[Implications among fairness notions.]{%
Implications among fairness notions.
For conciseness, we write ep-$F$ instead of epistemic-$F$,
min-$F$-sh instead of minimum-$F$-share,
g-$F$ instead of groupwise-$F$, and p-$F$ instead of pairwise-$F$.
For known results, citations can be found in the corresponding lemma statements.}
\label{table:impls1}
\bigTableSize
\setcounter{tabSerial}{0}
\begin{tabular}{ccccccccc}
\toprule & & \tableHeadSize valuation & \tableHeadSize marginals & \tableHeadSize identical & \tableHeadSize $n$ & \tableHeadSize entitlements & &
\\ \midrule \tabSn & $F$ $\Rightarrow$ ep-$F$ $\Rightarrow$ min-$F$-sh
    & -- & -- & -- & -- & -- & \extCref{thm:impl:epistemic} & trivial
\\[\defaultaddspace] \tabSn & g-$F$ $\fimplies$ $F$ + p-$F$
    & -- & -- & -- & -- & -- & \extCref{thm:impl:groupwise} & trivial
\\[\defaultaddspace] \tabSn & ep-$F$ $\fimplies$ $F$
    & -- & -- & -- & $n=2$ & -- & \extCref{thm:impl:epistemic} & trivial
\\[\defaultaddspace] \tabSn & ($F$ or p-$F$) $\fimplies$ g-$F$
    & -- & -- & -- & $n=2$ & -- & \extCref{thm:impl:groupwise} & trivial
\\ \midrule \tabSn & EF $\fimplies$ EFX+EF1
    & -- & -- & -- & -- & -- & \extCref{thm:impl:ef-to-efx+ef1} & trivial
\\[\defaultaddspace] \tabSn & EEF $\fimplies$ EEFX+EEF1
    & -- & -- & -- & -- & -- & \extCref{thm:impl:ef-to-efx+ef1} & trivial
\\[\defaultaddspace] \tabSn & MEFS $\fimplies$ MXS+M1S
    & -- & -- & -- & -- & -- & \extCref{thm:impl:ef-to-efx+ef1} & trivial
\\[\defaultaddspace] \tabSn & EFX $\fimplies$ EF1\textsuperscript{\ref{foot:efx-to-ef1}}
    & additive & -- & -- & -- & -- & \extCref{thm:impl:efx-to-ef1} & trivial
\\[\defaultaddspace] \tabSn & EEFX $\fimplies$ EEF1\textsuperscript{\ref{foot:efx-to-ef1}}
    & additive & -- & -- & -- & -- & \extCref{thm:impl:efx-to-ef1} & trivial
\\[\defaultaddspace] \tabSn & MXS $\fimplies$ M1S\textsuperscript{\ref{foot:efx-to-ef1}}
    & additive & -- & -- & -- & -- & \extCref{thm:impl:efx-to-ef1} & trivial
\\[\defaultaddspace] \tabSn & MXS $\fimplies$ M1S
    & -- & $\le 0$ & -- & -- & -- & \extCref{thm:impl:mxs-to-m1s} & \textbf{new}
\\[\defaultaddspace] \tabSn & MXS $\fimplies$ M1S
    & -- & dblMono\textsuperscript{\ref{foot:dbl-mono-1}}
    & -- & $n=2$ & -- & \extCref{thm:impl:mxs-to-m1s} & \textbf{new}
\\[\defaultaddspace] \tabSn & MXS $\fimplies$ EF1
    & additive & -- & -- & $n=2$ & -- & \extCref{thm:impl:mxs-to-ef1-n2} & \textbf{new}
\\\midrule \tabSn & PROP $\fimplies$ PROPx
    & -- & -- & -- & -- & -- & -- & trivial
\\[\defaultaddspace] \tabSn & PROP $\fimplies$ PROP1
    & -- & -- & -- & -- & -- & -- & trivial
\\[\defaultaddspace] \tabSn & PROPx $\fimplies$ PROPavg
    & -- & -- & -- & -- & -- & \extCref{thm:impl:propx-to-propavg} & trivial
\\[\defaultaddspace] \tabSn & PROPavg $\fimplies$ PROPm
    & -- & -- & -- & -- & -- & -- & trivial
\\[\defaultaddspace] \tabSn & PROPm $\fimplies$ PROPx
    & -- & -- & -- & $n=2$ & -- & -- & trivial
\\[\defaultaddspace] \tabSn & PROPm $\fimplies$ PROPx
    & -- & chores & -- & -- & -- & -- & trivial
\\[\defaultaddspace] \tabSn & PROPm $\fimplies$ PROP1
    & submodular & -- & -- & -- & -- & \extCref{thm:impl:propm-to-prop1} & folklore
\\[\defaultaddspace] \tabSn & PROPm $\fimplies$ PROP1
    & -- & $> 0$, $< 0$ & -- & -- & -- & \extCref{thm:impl:propm-to-prop1} & folklore
\\ \midrule \tabSn & MEFS $\fimplies$ PROP
    & subadditive & -- & -- & -- & -- & \extCref{thm:impl:mefs-to-prop} & known
\\[\defaultaddspace] \tabSn & EF $\fimplies$ GPROP
    & subadditive & -- & -- & -- & -- & \extCref{thm:impl:ef-to-gprop} & known
\\[\defaultaddspace] \tabSn & PROP $\fimplies$ EF
    & superadditive & -- & yes & -- & -- & \extCref{thm:impl:prop-to-ef-superadd-id} & folklore
\\[\defaultaddspace] \tabSn & PPROP $\fimplies$ EF
    & superadditive & -- & -- & -- & -- & \extCref{thm:impl:prop-to-ef-n2} & folklore
\\[\defaultaddspace] \tabSn & PPROP $\fimplies$ GPROP
    & submodular & -- & -- & -- & -- & \extCref{thm:impl:pprop-to-gprop} & \textbf{new}
\\\midrule \tabSn & EEF1 $\fimplies$ PROP1
    & submodular & -- & -- & -- & equal & \extCref{thm:impl:eef1-to-prop1-submod}
        & \textbf{new}\textsuperscript{\ref{foot:eef1-to-prop1}}
\\[\defaultaddspace] \tabSn & EEF1 $\fimplies$ PROP1
    & subadditive & chores & -- & -- & -- & \extCref{thm:impl:eef1-to-prop1-chores} & \textbf{new}
\\[\defaultaddspace] \tabSn & EF1 $\fimplies$ PROP1
    & subadditive & -- & -- & $n=2$ & -- & \extCref{thm:impl:ef1-to-prop1-n2}
    & \textbf{new}\textsuperscript{\ref{foot:ef1-to-prop1-n2}}
\\[\defaultaddspace] \tabSn & EEFX $\fimplies$ PROPx
    & subadditive & chores & -- & -- & -- & \extCref{thm:impl:eefx-to-propx} & known
\\[\defaultaddspace] \tabSn & EFX $\fimplies$ PROPavg
    & submodular & goods & -- & -- & equal & \extCref{thm:impl:efx-to-propavg} & \textbf{new}
\\[\defaultaddspace] \tabSn & EFX $\fimplies$ PROPx
    & subadditive & -- & -- & $n=2$ & -- & \extCref{thm:impl:efx-to-propx-n2} & \textbf{new}
\\[\defaultaddspace] \tabSn & MXS $\fimplies$ PROP1
    & additive & goods & -- & -- & equal & \extCref{thm:impl:mxs-to-prop1} & known
\\\midrule \tabSn & PMMS $\fimplies$ EFX
    & additive & -- & -- & -- & equal & \extCref{thm:impl:mms-to-efx-n2} & folklore
\\[\defaultaddspace] \tabSn & PWMMS $\fimplies$ EFX
    & -- & goods & -- & -- & -- & \extCref{thm:impl:mms-to-efx-n2} & folklore
\\[\defaultaddspace] \tabSn & WMMS $\fimplies$ EEFX
    & -- & goods & -- & -- & -- & \extCref{thm:impl:mms-to-eefx} & known
\\[\defaultaddspace] \tabSn & MMS $\fimplies$ MXS
    & additive & -- & -- & -- & equal & \extCref{thm:impl:mms-to-mxs} & \textbf{new}
\\[\defaultaddspace] \tabSn & MMS $\fimplies$ MXS+M1S
    & -- & goods & -- & -- & equal & \extCref{thm:impl:mms-to-mxs0} & \textbf{new}
\\\midrule \tabSn & PROP $\fimplies$ APS\textsuperscript{\ref{foot:pg}}
    & additive & -- & -- & -- & -- & \extCref{thm:impl:prop-to-aps} & known
\\[\defaultaddspace] \tabSn & PROP $\fimplies$ WMMS\textsuperscript{\ref{foot:pg}}
    & superadditive & -- & -- & -- & -- & \extCref{thm:impl:prop-to-wmms} & folklore
\\[\defaultaddspace] \tabSn & APS $\fimplies$ MMS\textsuperscript{\ref{foot:pg}}
    & -- & -- & -- & -- & equal & \extCref{thm:impl:aps-to-pess} & known
\\[\defaultaddspace] \tabSn & PWMMS $\fimplies$ PAPS
    & additive & -- & -- & -- & -- & \extCref{thm:impl:mms-to-aps-n2} & known
\\ \bottomrule
\end{tabular}

\begin{tightenum}
\item[*] \label{foot:efx-to-ef1}These results hold for additional settings.
    See \extCref{thm:impl:efx-to-ef1} in \extCref{sec:impls-extra:among-ef-efx-ef1} for details.
\item[\textdagger] \label{foot:dbl-mono-1}A function $v: 2^M \to \mathbb{R}$ is \emph{doubly monotone}
    if there is a partition $(G, C)$ of $M$ such that
    $v(g \mid \cdot) \ge 0$ $\forall g \in G$ and $v(c \mid \cdot) \le 0$ $\forall c \in C$.
\item[\textdaggerdbl] \label{foot:pg}In addition to $F_1 \fimplies F_2$,
    we also get p-$F_1$ $\fimplies$ p-$F_2$ and g-$F_1$ $\fimplies$ g-$F_2$.
\item[\S] \label{foot:eef1-to-prop1}\citet{aziz2021fair} proved this for additive valuations.
    Recently, \citet{andersen2026computing} proved this independently for submodular valuations.
\item[\P] \label{foot:ef1-to-prop1-n2}Recently, \citet{andersen2026computing} proved this independently.
\end{tightenum}
\end{table*}
\let\defaultFloatPlacement\oldFp

\let\oldFp\defaultFloatPlacement
\renewcommand{\defaultFloatPlacement}{p}
\begin{table*}[p]
\centering
\caption{Non-implications among fairness notions (additive valuations).}
\label{table:cex-add}
\bigTableSize
\setcounter{tabSerial}{0}
\begin{tabular}{cccccccccc}
\toprule & & \tableHeadSize valuation & \tableHeadSize marginals & \tableHeadSize identical & \tableHeadSize $n$ & \tableHeadSize entitlements & &
\\ \midrule \tabSn & APS+PROPx $\nfimplies$ PROP
    & $m=1$ & $1, -1$ & yes & any & equal & \extCref{cex:single-item} & trivial
\\[\defaultaddspace] \tabSn & APS+PROPx $\nfimplies$ EF1
    & additive & $1$ & yes & $n \ge 3$ & equal & \extCref{cex:share-vs-envy-goods} & folklore
\\[\defaultaddspace] \tabSn & APS+EEFX $\nfimplies$ EF1
    & additive & $-1$ & yes & $n \ge 3$ & equal & \extCref{cex:share-vs-envy-chores} & folklore
\\\midrule \tabSn & EEF $\nfimplies$ EF1
    & additive & bival\textsuperscript{\ref{foot:pos-neg-bival}}
    & no & $n=3$ & equal & \extCref{cex:eef-not-ef1} & \textbf{new}
\\[\defaultaddspace] \tabSn & PROP $\nfimplies$ MEFS
    & additive & $> 0$ & no & $n=3$ & equal & \extCref{cex:prop-not-mefs-goods} & \textbf{new}
\\[\defaultaddspace] \tabSn & PROP $\nfimplies$ MEFS
    & additive & $< 0$ & no & $n=3$ & equal & \extCref{cex:prop-not-mefs-chores} & \textbf{new}
\\[\defaultaddspace] \tabSn & MEFS $\nfimplies$ EEF
    & additive & $> 0$ & no & $n=3$ & equal & \extCref{cex:mefs-not-eef-goods} & \textbf{new}
\\[\defaultaddspace] \tabSn & MEFS $\nfimplies$ EEF1
    & additive & $< 0$ bival & no & $n=3$ & equal & \extCref{cex:mefs-not-eef1-chores} & \textbf{new}
\\\midrule \tabSn & EFX $\nfimplies$ MMS
    & additive & bival\textsuperscript{\ref{foot:pos-neg-bival}}
    & yes & $n=2$ & equal & \extCref{cex:efx-not-mms} & folklore
\\[\defaultaddspace] \tabSn & EF1 $\nfimplies$ MXS or PROPx
    & additive & bival\textsuperscript{\ref{foot:pos-neg-bival}}
    & yes & $n=2$ & equal & \extCref{cex:ef1-not-propx-mxs} & \textbf{new}
\\[\defaultaddspace] \tabSn & PROPx $\nfimplies$ M1S
    & additive & bival\textsuperscript{\ref{foot:pos-neg-bival}}
    & yes & $n=2$ & equal & \extCref{cex:propx-not-m1s} & \textbf{new}
\\[\defaultaddspace] \tabSn & MXS $\nfimplies$ PROPx
    & additive & bival\textsuperscript{\ref{foot:pos-neg-bival}}
    & yes & $n=2$ & equal & \extCref{cex:mxs-not-propx-n2} & known
\\[\defaultaddspace] \tabSn & M1S $\nfimplies$ PROP1
    & additive & bival\textsuperscript{\ref{foot:pos-neg-bival}}
    & yes & $n=2$ & equal & \extCref{cex:m1s-not-prop1} & \textbf{new}
\\\midrule \tabSn & GAPS $\nfimplies$ PROPx
    & additive & $> 0$ bival & yes & $n=3$ & equal & \extCref{cex:gaps-not-propx:additive} & \textbf{new}
\\[\defaultaddspace] \tabSn & GMMS $\nfimplies$ APS
    & additive & $> 0$, $< 0$ & yes & $n=3$ & equal & \extCref{cex:gmms-not-aps} & known
\\[\defaultaddspace] \tabSn & PMMS $\nfimplies$ MMS
    & additive & $> 0$, $< 0$ & yes & $n=3$ & equal & \extCref{cex:pmms-not-mms} & known
\\[\defaultaddspace] \tabSn & APS $\nfimplies$ PROPm
    & additive & $> 0$ & yes & $n=3$ & equal & \extCref{cex:aps-not-propm} & \textbf{new}
\\[\defaultaddspace] \tabSn & APS $\nfimplies$ PROP1
    & additive & $< 0$ bival & yes & $n=3$ & equal & \extCref{cex:aps-not-prop1-chores} & \textbf{new}
\\[\defaultaddspace] \tabSn & GAPS $\nfimplies$ PROPm
    & additive & mixed bival & yes & $n=3$ & equal & \extCref{cex:propm-mixed-manna} & \textbf{new}
\\[\defaultaddspace] \tabSn & PROPm $\nfimplies$ PROPavg
    & additive & $> 0$ bival & yes & $n=3$ & equal & \extCref{cex:propm-not-propavg} & \textbf{new}
\\\midrule \tabSn & PROP1 $\nfimplies$ M1S
    & additive & $-1$, $1$ & yes & $n=2$ & unequal & \extCref{cex:prop1-not-m1s-n2} & \textbf{new}
\\[\defaultaddspace] \tabSn & GAPS $\nfimplies$ PROPx
    & additive & bival\textsuperscript{\ref{foot:pos-neg-bival}}
    & yes & $n=2$ & unequal & \extCref{cex:gaps-not-propx-n2} & \textbf{new}
\\[\defaultaddspace] \tabSn & EF1 $\nfimplies$ EFX
    & additive & $\{-1, 1\}$ & yes & $n=2$ & unequal & \extCref{cex:ef1-not-efx-mixed-ue} & \textbf{new}
\\[\defaultaddspace] \tabSn & WMMS $\nfimplies$ M1S
    & additive & $-1$ & yes & $n=2$ & unequal & \extCref{cex:wmms-plus-m1s-chores} & \textbf{new}
\\[\defaultaddspace] \tabSn & EFX $\nfimplies$ WMMS
    & additive & $-1$ & yes & $n=2$ & unequal & \extCref{cex:wmms-plus-m1s-chores} & \textbf{new}
\\[\defaultaddspace] \tabSn & GWMMS $\nfimplies$ PROP1
    & additive & $1$ & yes & $n=3$ & unequal & \extCref{cex:prop1-plus-m1s-ue} & \textbf{new}
\\[\defaultaddspace] \tabSn & GWMMS $\nfimplies$ PROP1
    & additive & $-1$ & yes & $n=3$ & unequal & \extCref{cex:gwmms-nimpl-prop1-m1s-ue-chores} & \textbf{new}
\\[\defaultaddspace] \tabSn & PROP $\nfimplies$ M1S
    & additive & $\le 0$ bival & no & $n=3$ & unequal & \extCref{cex:prop-not-m1s-chores} & \textbf{new}
\\[\defaultaddspace] \tabSn & PROP $\nfimplies$ MEFS
    & additive & $\{0, 1\}$ & no & $n=3$ & unequal & \extCref{cex:prop-not-mefs:binary-goods} & \textbf{new}
\\\bottomrule
\end{tabular}

\begin{tightenum}
\item[*] \label{foot:pos-neg-bival}Result holds for both
    positive bivalued marginals and negative bivalued marginals.
\end{tightenum}
\end{table*}
\let\defaultFloatPlacement\oldFp

\let\oldFp\defaultFloatPlacement
\renewcommand{\defaultFloatPlacement}{p}
\begin{table*}[p]
\centering
\caption[Tribool implications]{Implications among fairness notions when
marginals belong to the set $\{-1, 0, 1\}$.}
\label{table:impls-tribool}
\bigTableSize
\setcounter{tabSerial}{0}
\begin{tabular}{ccccccccc}
\toprule & & \tableHeadSize valuations & \tableHeadSize marginals & \tableHeadSize identical & \tableHeadSize $n=2$ & \tableHeadSize entitlements & &
\\\midrule \tabSn & EF1 $\fimplies$ EFX\textsuperscript{\ref{foot:epistemic-also}}
    & additive & $\{-1, 0, 1\}$ & -- & -- & equal & \cref{thm:impl:tribool:ef1-to-efx} & trivial
\\[\defaultaddspace] \tabSn & EF1 $\fimplies$ EFX\textsuperscript{\ref{foot:epistemic-also}}
    & additive & $\{0, \pm 1\}$ & -- & -- & -- & \cref{thm:impl:tribool:ef1-to-efx} & trivial
\\[\defaultaddspace] \tabSn & PROP1 $\fimplies$ PROPx & -- & $\{-1, 0, 1\}$
    & -- & -- & -- & \cref{thm:impl:prop1-to-propx-tribool} & trivial
\\\midrule \tabSn & PROP $\fimplies$ EEF & additive & $\{-1, 0, 1\}$
    & -- & -- & equal & \cref{thm:impl:tribool:prop} & \textbf{new}
\\[\defaultaddspace] \tabSn & APS $\fimplies$ PROPx & additive & $\{-1, 0, 1\}$
    & -- & -- & -- & \cref{thm:impl:tribool:prop1,thm:impl:tribool:aps} & \textbf{new}
\\[\defaultaddspace] \tabSn & PROP1 $\fimplies$ APS & additive & $\{-1, 0, 1\}$
    & -- & -- & -- & \cref{thm:impl:tribool:prop1,thm:impl:tribool:aps} & \textbf{new}
\\[\defaultaddspace] \tabSn & M1S $\fimplies$ APS & additive & $\{-1, 0, 1\}$
    & -- & -- & equal & \cref{thm:impl:tribool:aps,thm:impl:tribool:m1s} & \textbf{new}
\\[\defaultaddspace] \tabSn & MMS $\fimplies$ EEFX & additive & $\{-1, 0, 1\}$
    & -- & -- & equal & \cref{thm:impl:tribool:mms-to-eefx} & \textbf{new}
\\\midrule \tabSn & EF1 $\fimplies$ GAPS & additive & $\{-1, 0, 1\}$
    & -- & -- & equal & \cref{thm:impl:tribool:ef1-gaps} & \textbf{new}
\\[\defaultaddspace] \tabSn & EF1 $\fimplies$ PAPS & additive & $\{-1, 0, 1\}$
    & -- & -- & -- & \cref{thm:impl:tribool:ef1-gaps} & \textbf{new}
\\[\defaultaddspace] \tabSn & EF1 $\fimplies$ GAPS & additive & $\{-1, 0\}$
    & -- & -- & -- & \cref{thm:impl:tribool:ef1-gaps} & \textbf{new}
\\[\defaultaddspace] \tabSn & EF1 $\fimplies$ GWMMS & additive & $\{0, 1\}$
    & -- & -- & -- & \cref{thm:impl:binary:ef1-to-gwmms} & \textbf{new}
\\[\defaultaddspace] \tabSn & M1S $\fimplies$ WMMS & additive & $\{0, 1\}$
    & -- & -- & -- & \cref{thm:impl:binary:m1s-to-wmms} & \textbf{new}
\\\midrule \tabSn & MEFS $\fimplies$ EEF & additive & $\{-1, 0, 1\}$
    & -- & -- & -- & \cref{thm:impl:tribool:minfs-to-epistemic} & \textbf{new}
\\[\defaultaddspace] \tabSn & MXS $\fimplies$ EEFX & additive & $\{0, \pm 1\}$
    & -- & -- & -- & \cref{thm:impl:tribool:minfs-to-epistemic} & \textbf{new}
\\[\defaultaddspace] \tabSn & M1S $\fimplies$ EEF1 & additive & $\{0, \pm 1\}$
    & -- & -- & -- & \cref{thm:impl:tribool:minfs-to-epistemic} & \textbf{new}
\\\midrule \tabSn & MMS $\fimplies$ APS\textsuperscript{\ref{foot:pg2}}
    & submodular & $\{0, 1\}$ & -- & -- & equal & \cref{thm:impl:mms-to-aps-matroid} & known
\\[\defaultaddspace] \tabSn & PROPm $\fimplies$ PROPavg
    & -- & $\{-1, 0, 1\}$ & -- & -- & -- & \cref{thm:impl:propm-to-propavg-propx-binary-subadd} & \textbf{new}
\\[\defaultaddspace] \tabSn & PROPm $\fimplies$ PROPx
    & subadditive & $\{-1, 0, 1\}$ & -- & -- & equal
    & \cref{thm:impl:propm-to-propavg-propx-binary-subadd} & \textbf{new}
\\[\defaultaddspace] \tabSn & M1S $\fimplies$ PROPx
    & subadditive & $\{0, 1\}$ & -- & -- & equal & \cref{thm:impl:m1s-to-propx-binary-subadd} & \textbf{new}
\\[\defaultaddspace] \tabSn & M1S $\fimplies$ PROPx
    & subadditive & $\{-1, 0, 1\}$ & -- & $n=2$ & -- & \cref{thm:impl:m1s-to-propx-binary-subadd} & \textbf{new}
\\[\defaultaddspace] \tabSn & M1S $\fimplies$ PROPx
    & subadditive & $\{-1, 0\}$ & -- & -- & -- & \cref{thm:impl:m1s-to-propx-binary-subadd} & \textbf{new}
\\ \bottomrule
\end{tabular}

\begin{tightenum}
\item[*] \label{foot:epistemic-also}EEF1 $\fimplies$ EEFX and M1S $\fimplies$ MXS
    also hold for the same conditions.
\item[\textdagger] \label{foot:pg2}PMMS $\fimplies$ PAPS and GMMS $\fimplies$ GAPS
    also hold for the same conditions.
\end{tightenum}
\end{table*}
\let\defaultFloatPlacement\oldFp

\let\oldFp\defaultFloatPlacement
\renewcommand{\defaultFloatPlacement}{p}
\begin{table*}[p]
\centering
\caption{Non-implications among fairness notions (non-additive valuations).}
\label{table:cex-nonadd}
\bigTableSize
\setcounter{tabSerial}{0}
\begin{tabular}{ccccccccc}
\toprule & & \tableHeadSize valuation & \tableHeadSize marginals & \tableHeadSize identical & \tableHeadSize $n$ & \tableHeadSize entitlements & &
\\\midrule \tabSn & \hfill\makecell[l]{EF+GAPS $\nfimplies$ \\ PROP1 or PROPm}\hfill
    & supermod & $\ge 0$ bival\textsuperscript{\ref{foot:nonneg-bival}}
    & yes & $n \ge 2$ & equal & \cref{cex:ef-not-prop-supmod} & trivial
\\[\defaultaddspace] \tabSn & PPROP $\nfimplies$ PROP1
    & supermod & $\ge 0$ bival\textsuperscript{\ref{foot:nonneg-bival}}
    & yes & $n \ge 3$ & equal & \cref{cex:ef-not-prop-supmod} & trivial
\\\midrule \tabSn & PROP $\nfimplies$ M1S\textsuperscript{\ref{foot:unit-dem}}
    & unit-dem & $\ge 0$ & yes & $n=2$ & equal
    & \cref{cex:prop-not-m1s-unit-demand} & \textbf{new}
\\[\defaultaddspace] \tabSn & PROP1 $\nfimplies$ PROPm\textsuperscript{\ref{foot:unit-dem}}
    & unit-dem & $\ge 0$ & yes & $n=2$ & equal
    & \cref{cex:ud:prop1-not-propm} & \textbf{new}
\\[\defaultaddspace] \tabSn & PROP $\nfimplies$ PPROP\textsuperscript{\ref{foot:unit-dem}}
    & unit-dem & $\ge 0$ & yes & $n=3$ & equal
    & \cref{cex:ud:prop-not-pprop} & \textbf{new}
\\[\defaultaddspace] \tabSn & PROPm $\nfimplies$ PROPavg\textsuperscript{\ref{foot:unit-dem}}
    & unit-dem & $\ge 0$ & yes & $n=3$ & equal
    & \cref{cex:ud:propm-not-propavg} & \textbf{new}
\\[\defaultaddspace] \tabSn & GAPS $\nfimplies$ PROPx
    & unit-dem & $\ge 0$ & yes & $n=3$ & equal & \cref{cex:gaps-not-propx:unit-demand} & \textbf{new}
\\[\defaultaddspace] \tabSn & MMS $\nfimplies$ EF1 or PROPm
    & unit-dem & $\ge 0$ & yes & $n=4$ & equal
    & \cref{cex:ud:mms-not-ef1-propm} & \textbf{new}
\\\midrule \tabSn & EF $\nfimplies$ MMS
    & submod & $\ge 0$ bival\textsuperscript{\ref{foot:nonneg-bival}}
    & yes & $n=2$ & equal & \cref{cex:ef-not-mms-pmrf} & folklore
\\[\defaultaddspace] \tabSn & EEF $\nfimplies$ EF1 or PPROP
    & submod & $\ge 0$ bival\textsuperscript{\ref{foot:nonneg-bival}}
    & yes & $n=3$ & equal & \cref{cex:eef-not-ef1-pprop-pmrf} & \textbf{new}
\\[\defaultaddspace] \tabSn & MEFS $\nfimplies$ EF1
    & submod & $\ge 0$ bival\textsuperscript{\ref{foot:nonneg-bival}}
    & yes & $n=2$ & equal & \cref{cex:mefs-not-ef1-pmrf} & \textbf{new}
\\[\defaultaddspace] \tabSn & PROP $\nfimplies$ M1S
    & submod & $\ge 0$ bival\textsuperscript{\ref{foot:nonneg-bival}}
    & yes & $n=2$ & equal & \cref{cex:prop-not-m1s-uniform-matroid} & \textbf{new}
\\[\defaultaddspace] \tabSn & MMS $\nfimplies$ APS
    & submod & $> 0$ & yes & $n=2$ & equal & \cref{cex:mms-not-aps-n2-submod} & known
\\[\defaultaddspace] \tabSn & GAPS+PPROP $\nfimplies$ PROP1
    & subadd & $\ge 0$ bival\textsuperscript{\ref{foot:nonneg-bival}}
    & yes & $n=3$ & equal & \cref{cex:gaps-not-prop1-subadd} & \textbf{new}
\\\midrule \tabSn & EF1 $\nfimplies$ MXS
    & submod & $\{0, 1\}$ & yes & $n=2$ & equal & \cref{cex:ef1-not-mxs-pmrf} & \textbf{new}
\\[\defaultaddspace] \tabSn & GAPS $\nfimplies$ PROP1 or EF1
    & subadd & $\{0, 1\}$ & yes & $n=2$ & equal & \cref{cex:gaps-not-ef1-prop1-subadd} & \textbf{new}
\\[\defaultaddspace] \tabSn & EF1 $\nfimplies$ PROP1
    & subadd & $\{0, 1\}$ & yes & $n=3$ & equal & \cref{cex:ef1-not-prop1-subadd} & \textbf{new}
\\[\defaultaddspace] \tabSn & \hfill\makecell[l]{PAPS+PPROP \\ $\nfimplies$ PROPm or M1S}\hfill
    & subadd & $\{0, 1\}$ & yes & $n=3$ & equal & \cref{cex:paps-pprop-not-propm-binary-subadd} & \textbf{new}
\\[\defaultaddspace] \tabSn & GMMS $\nfimplies$ APS
    & subadd & $\{0, 1\}$ & yes & $n=3$ & equal & \cref{cex:gmms-not-aps-binary-subadd} & \textbf{new}
\\[\defaultaddspace] \tabSn & PROPavg $\nfimplies$ PROPx
    & supermod & $\{0, 1\}$ & yes & $n \ge 3$ & equal & \cref{cex:ef-not-prop-supmod} & \textbf{new}
\\ \bottomrule
\end{tabular}

\begin{tightenum}
\item[*] \label{foot:unit-dem}Result also holds for
    submodular cancelable valuations with positive marginals.
\item[\textdagger] \label{foot:nonneg-bival}Result holds for both
    $\{0, 1\}$ marginals and positive bivalued marginals.
\end{tightenum}
\end{table*}
\let\defaultFloatPlacement\oldFp

\clearpage

\ifColsTwo\newcommand*{\picScale}{0.6}\else\newcommand*{\picScale}{0.5}\fi

\begin{figure*}[p]
\centering
\begin{subfigure}{0.39\linewidth}
    \centering
    \includegraphics[scale=\picScale]{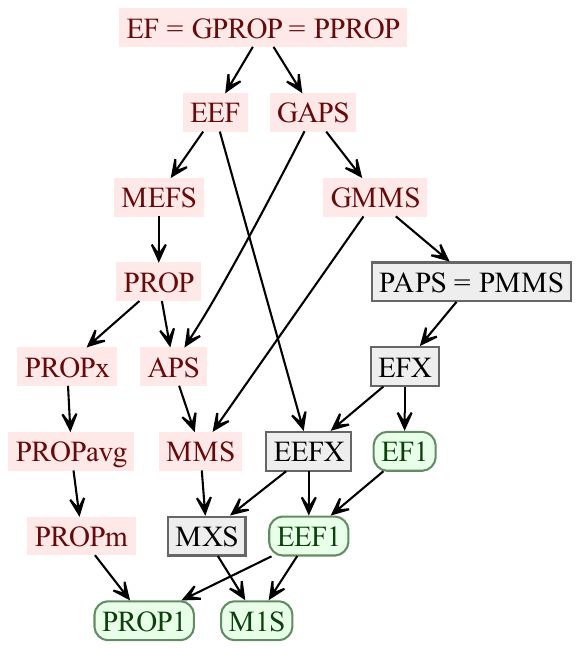}
    \caption{Any number of agents.}
    \label{fig:additive-general-nny}
\end{subfigure}
\hfill
\begin{subfigure}{0.60\linewidth}
    \centering
    \includegraphics[scale=\picScale]{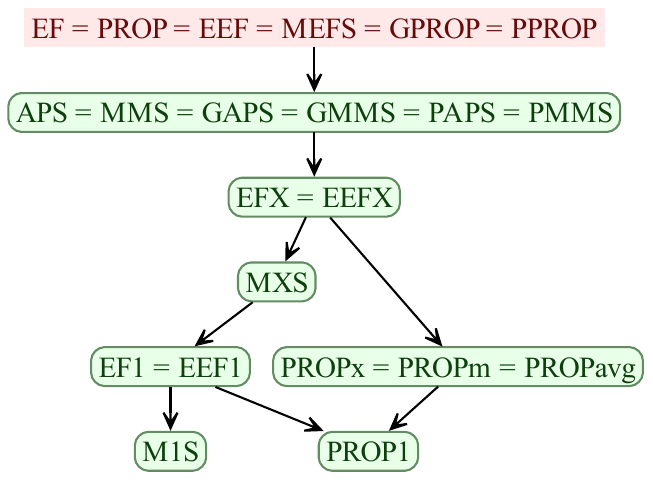}
    \caption{Two agents. We get the same DAG for goods and chores too.}
    \label{fig:additive-general-nyy}
\end{subfigure}
\caption{Additive valuations, mixed manna, equal entitlements.}
\label{fig:additive-general-nay}
\end{figure*}

\begin{figure*}[p]
\centering
\begin{subfigure}{0.49\linewidth}
    \centering
    \includegraphics[scale=\picScale]{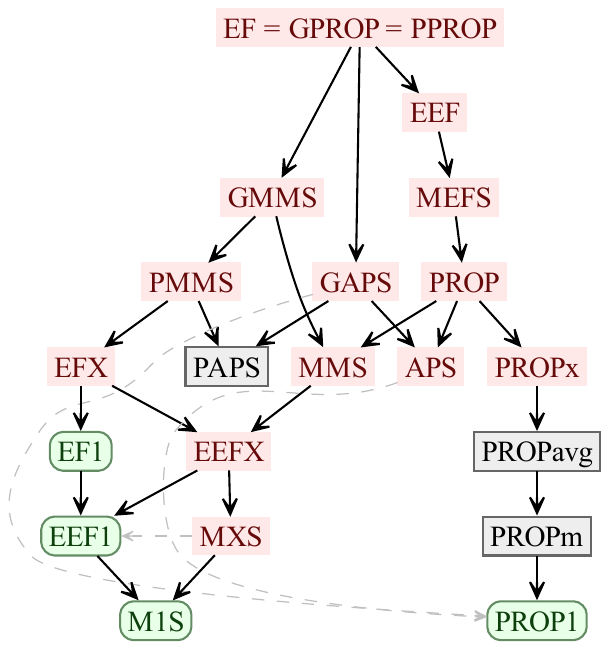}
    \caption{Goods}
    \label{fig:additive-nonneg-nnn}
\end{subfigure}
\hfill
\begin{subfigure}{0.49\linewidth}
    \centering
    \includegraphics[scale=\picScale]{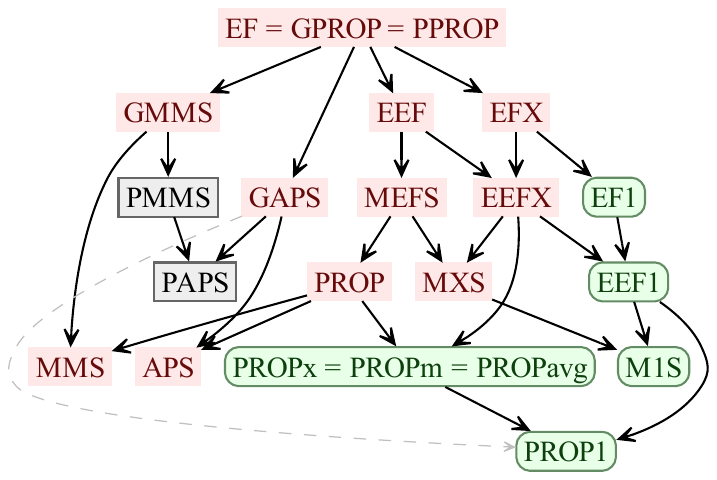}
    \caption{Chores}
    \label{fig:additive-nonpos-nnn}
\end{subfigure}
\caption{Additive valuations, unequal entitlements.}
\label{fig:additive-nnn}
\end{figure*}

\begin{figure*}[p]
\centering
\includegraphics[scale=\picScale]{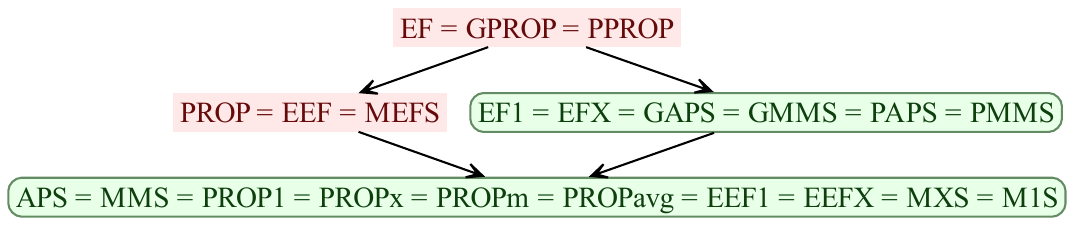}
\caption[Additive valuations, triboolean marginals]{%
Additive valuations, marginals in $\{-1, 0, 1\}$, equal entitlements.
We get the same DAG when marginals are in $\{0, -1\}$ or $\{0, 1\}$.}
\label{fig:additive-tribool-nny}
\end{figure*}

\begin{figure*}
\centering
\begin{subfigure}{0.49\linewidth}
    \centering
    \includegraphics[scale=\picScale]{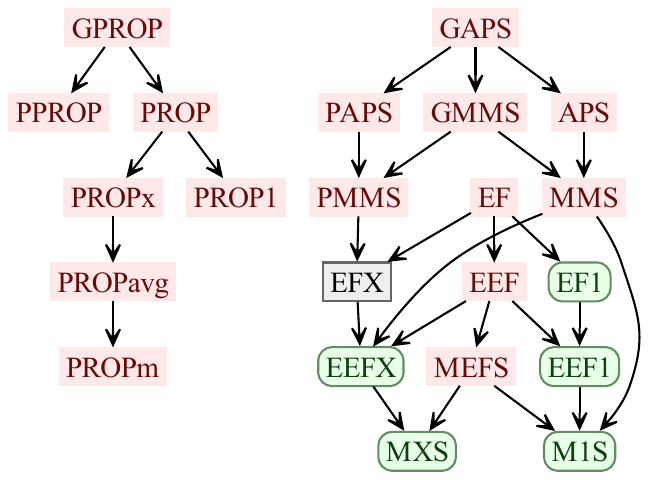}
    \caption{Non-negative marginals.}
    \label{fig:general-nonneg-nny}
\end{subfigure}
\hfill
\begin{subfigure}{0.49\linewidth}
    \centering
    \includegraphics[scale=\picScale]{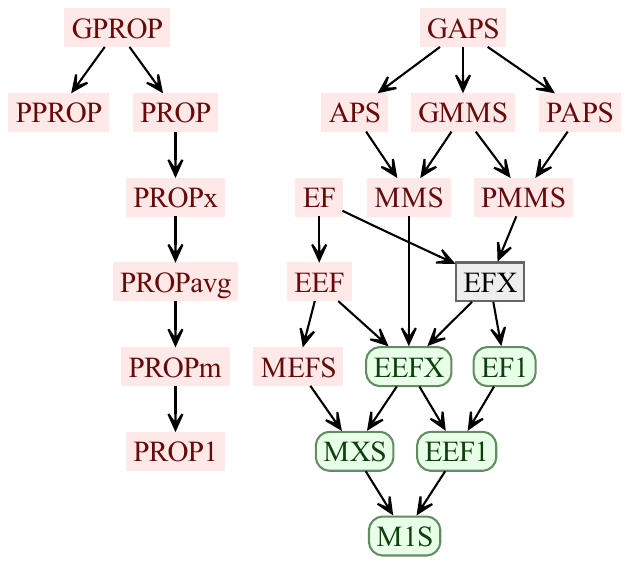}
    \caption{Positive marginals.}
    \label{fig:general-positive-nny}
\end{subfigure}
\caption{General valuations for goods with equal entitlements.}
\label{fig:general-any-nny}
\end{figure*}

\begin{figure*}
\centering
\begin{subfigure}{0.49\linewidth}
    \centering
    \includegraphics[scale=\picScale]{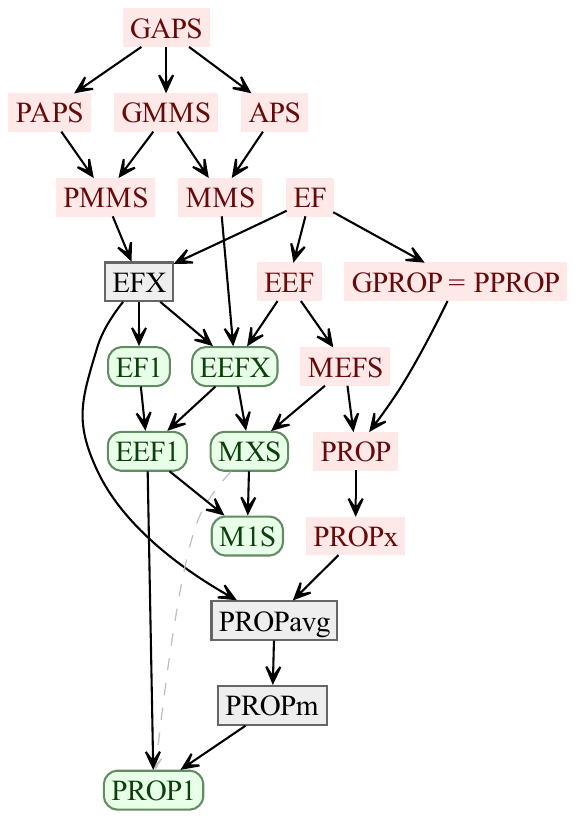}
    \caption{Submodular valuations.}
    \label{fig:submod-nonneg-nny}
\end{subfigure}
\hfill
\begin{subfigure}{0.49\linewidth}
    \centering
    \includegraphics[scale=\picScale]{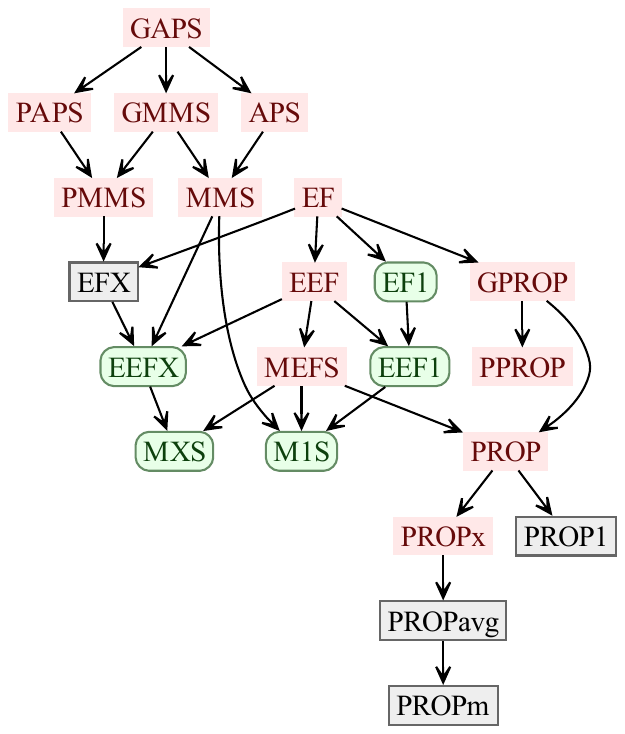}
    \caption{Subadditive valuations.}
    \label{fig:subadd-nonneg-nny}
\end{subfigure}
\caption{Submodular and subadditive valuations over goods with equal entitlements.}
\label{fig:subma-nonneg-nny}
\end{figure*}

\begin{figure*}[p]
\centering
\begin{subfigure}{0.44\linewidth}
    \centering
    \includegraphics[scale=\picScale]{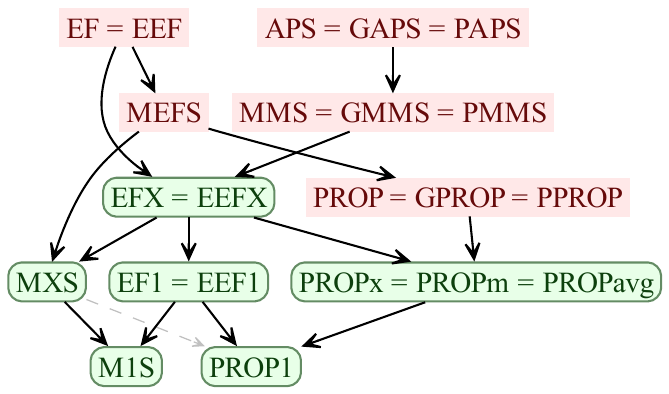}
    \caption{Two agents.}
    \label{fig:submod-nonneg-nyy}
\end{subfigure}
\hfill
\begin{subfigure}{0.55\linewidth}
\centering
\includegraphics[scale=\picScale]{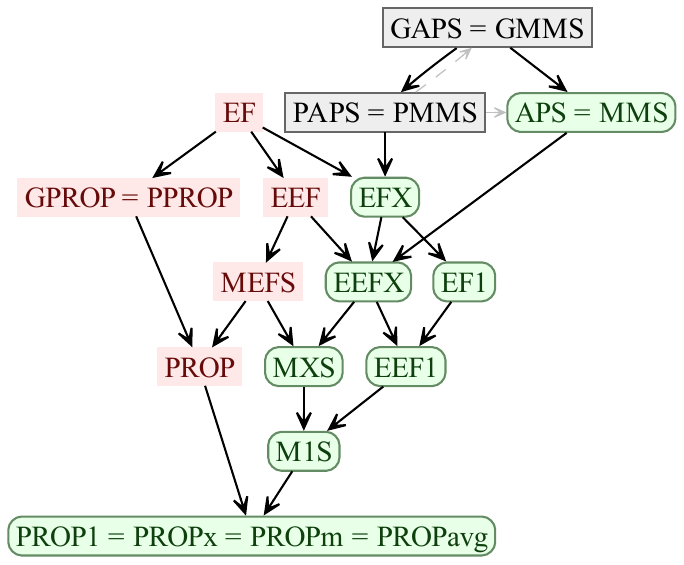}
\caption{Binary marginals (a.k.a.~matroid rank valuations).}
\label{fig:submod-binary-nny}
\end{subfigure}
\caption{Special cases of submodular valuations over goods with equal entitlements.}
\label{fig:submod-special}
\end{figure*}

\clearpage

\phantomsection
\addcontentsline{toc}{section}{References}

\newcommand{\etalchar}[1]{$^{#1}$}

\clearpage

\appendix
\crefalias{section}{appendix}
\crefalias{subsection}{appendix}
\crefalias{subsubsection}{appendix}

\section{Details on Fair Division Settings}
\label{sec:settings-extra}

\subsection{Valuation Function Type}

A function $u: 2^M \to \mathbb{R}$ is
\begin{tightenum}
\item \emph{additive} if for any two disjoint sets $S, T \subseteq M$, we have $u(S \cup T) = u(S) + u(T)$.
    \ifVerbose Equivalently, for every set $S \subseteq M$, we have $u(S) = \sum_{j \in S} u(\{j\})$.\fi
\item \emph{subadditive} if for any two disjoint sets $S, T \subseteq M$, we have $u(S \cup T) \le u(S) + u(T)$.
\item \emph{superadditive} if for any two disjoint sets $S, T \subseteq M$, we have $u(S \cup T) \ge u(S) + u(T)$.
\item \emph{submodular} if for any $S, T \subseteq M$, we have $u(S \cup T) + u(S \cap T) \le u(S) + u(T)$.
\item \emph{supermodular} if for any $S, T \subseteq M$, we have $u(S \cup T) + u(S \cap T) \ge u(S) + u(T)$.
\item \emph{cancelable} \citep{berger2021almost} if
    for any $T \subseteq M$ and $S_1, S_2 \subseteq M \setminus T$,
    we have $u(S_1 \cup T) > u(S_2 \cup T) \implies u(S_1) > u(S_2)$.
\item \emph{unit-demand} if $u(\emptyset) = 0$, and for any $\emptyset \neq S \subseteq M$,
    we have $u(S) \ge 0$ and $u(S) = \max_{j \in S} u(\{j\})$.
\item \emph{XOS} (aka \emph{fractionally subadditive}) \citep{nisan2000bidding,feige2009maximizing}
    if there exist additive functions $a_1$, $\ldots$, $a_k$, such that
    $u(S) = \max_{j=1}^k a_j(S)$ for all $S \subseteq M$.
\end{tightenum}

When $|M|=1$, $u$ belongs to all of these classes.
See \cref{fig:dag-posets:valuation} for relations between valuation function types.

\ifColsTwo
\begin{figure}[htb]
\centering
\includegraphics[scale=0.6]{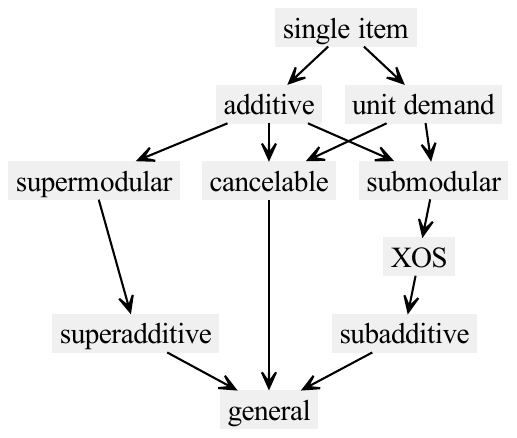}
\caption[Hasse diagram of valuation function types]{%
Valuation function types. There is a path from $u$ to $v$ iff $u$ is a subclass of $v$.}
\label{fig:dag-posets:valuation}
\end{figure}
\fi

\subsection{Marginal Values}

Marginal values for instance $\fdInst{N}{M}{(v_i)_{i \in N}}{w}$:

\begin{tightenum}
\item \emph{goods}: $v_i(j \mid S) \ge 0$ $\forall$ $S \subseteq M$, $j \in M \setminus S$, $i \in N$.
\item \emph{chores}: $v_i(j \mid S) \le 0$ $\forall$ $S \subseteq M$, $j \in M \setminus S$, $i \in N$.
\item \emph{positive}: $v_i(j \mid S) > 0$ $\forall$ $S \subseteq M$, $j \in M \setminus S$, $i \in N$.
\item \emph{negative}: $v_i(j \mid S) < 0$ $\forall$ $S \subseteq M$, $j \in M \setminus S$, $i \in N$.
\item \emph{bivalued}: $\exists a, b \in \mathbb{R}$ such that
    $v_i(j \mid S) \in \{a, b\}$ $\forall$ $S \subseteq M$, $j \in M \setminus S$, $i \in N$.
\item \emph{binary}: $v_i(j \mid S) \in \{0, 1\}$ $\forall$ $S \subseteq M$, $j \in M \setminus S$, $i \in N$.
\item \emph{negative binary}: $v_i(j \mid S) \in \{0, -1\}$ $\forall$ $S \subseteq M$, $j \in M \setminus S$, $i \in N$.
\item \emph{doubly monotone}: $\forall i \in N$, there is a partition $(G, C)$ of $M$ such that
    $\forall S \subseteq M$, we have $v_i(g \mid S) \ge 0$ $\forall g \in G \setminus S$ and
    $v_i(c \mid S) \le 0$ $\forall c \in C \setminus S$.
\item \emph{doubly strictly monotone}: $\forall i \in N$, there is a partition $(G, C)$ of $M$ such that
    $\forall S \subseteq M$, we have $v_i(g \mid S) > 0$ $\forall g \in G \setminus S$ and
    $v_i(c \mid S) < 0$ $\forall c \in C \setminus S$.
\end{tightenum}

We can break up the class of bivalued instances into positive bivalued, negative bivalued,
binary, negative binary, and mixed bivalued
(mixed means that exactly one of $a$ and $b$ is positive and the other is negative).
See \cref{fig:dag-posets:marginals} for relations between marginal values.

\ifColsTwo
\begin{figure}[htb]
\centering
\includegraphics[scale=0.6]{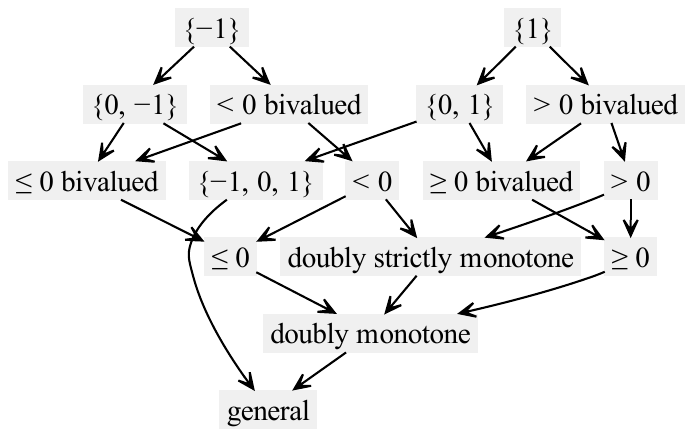}
\caption[Hasse diagram of marginal values]{%
Marginal values. There is a path from $u$ to $v$ iff $u$ is a subclass of $v$.}
\label{fig:dag-posets:marginals}
\end{figure}
\fi

\ifColsOne
\begin{figure*}[htb]
\centering
\begin{subfigure}{0.4\linewidth}
    \centering
    \includegraphics[scale=0.5]{figs/valuation.pdf}
    \caption{Valuation function type}
    \label{fig:dag-posets:valuation}
\end{subfigure}
\hfill
\begin{subfigure}{0.59\linewidth}
    \centering
    \includegraphics[scale=0.5]{figs/marginal.pdf}
    \caption{Marginal values}
    \label{fig:dag-posets:marginals}
\end{subfigure}
\caption[Hasse diagrams of valuation function type and marginal values]{%
Valuation function type and marginal values represented as digraphs.
There is a path from $u$ to $v$ iff $u$ is a subclass of $v$.}
\label{fig:dag-posets}
\end{figure*}
\fi

\section{Properties of Submodular Functions}
\label{sec:submod-funcs}

In this section, we describe some properties of submodular functions that would be
useful to understand results in subsequent sections.
We begin by stating multiple equivalent definitions of submodularity.
Recall that $u(X \mid Y) \defeq u(X \cup Y) - u(Y)$ is called
the \emph{marginal} value of $X$ given $Y$.

\begin{definition}[submodularity, \citet{schrijver2003submodular}]
The following are equivalent definitions for a function $u: 2^M \to \mathbb{R}$ being submodular:
\begin{tightenum}
\item for any $S, T \subseteq M$, we have $u(S \cup T) + u(S \cap T) \le u(S) + u(T)$.
\item for any $S \subseteq T \subseteq M$ and $g \in M \setminus T$,
    we have $u(g \mid S) \ge u(g \mid T)$.
\item for any $S \subseteq T \subseteq M$ and $X \subseteq M \setminus T$,
    we have $u(X \mid S) \ge u(X \mid T)$.
\end{tightenum}
\end{definition}

Next, we analyze properties of functions defined using marginal values of submodular functions.

\begin{lemma}[marginal gains are subadditive]
\label[lemma]{thm:submod-mg-subadd}
Let $u: 2^M \to \mathbb{R}$ be a submodular function and let $T \subseteq M$.
For any $X \subseteq M \setminus T$, define $f(X) \defeq u(X \mid T)$.
Then $f$ is subadditive, i.e., for any disjoint sets $X, Y \subseteq M \setminus T$,
we have $f(X \cup Y) \le f(X) + f(Y)$.
\end{lemma}
\begin{proof}
\begin{align*}
f(X \cup Y) &= u(X \cup Y \mid T)
    \wrapIfTwoCols= u(X \mid T) + u(Y \mid X \cup T)
\\ &\le u(X \mid T) + u(Y \mid T)
    \wrapIfTwoCols= f(X) + f(Y).
\qedhere
\end{align*}
\end{proof}

\ifColsTwo\providecommand{\nlIfTwoCols}{\\}\else\let\nlIfTwoCols\relax\fi
\begin{lemma}[marginal losses are superadditive]
\label[lemma]{thm:submod-ml-superadd}
Let $u: 2^M \to \mathbb{R}$ be a submodular function and let $T \subseteq M$.
For any $X \subseteq T$, define $f(X) \defeq u(X \mid T \setminus X)$.
Then $f$ is superadditive, i.e., for any disjoint sets $X, Y \subseteq T$,
we have $f(X \cup Y) \ge f(X) + f(Y)$.
\end{lemma}
\begin{proof}
\begin{align*}
f(X \cup Y) &= u(X \cup Y \mid T \setminus (X \cup Y))
\wrapIfTwoCols= u(X \mid T \setminus (X \cup Y)) + u(Y \mid T \setminus Y)
\\ &\ge u(X \mid T \setminus X) + u(Y \mid T \setminus Y)
\wrapIfTwoCols= f(X) + f(Y).
\qedhere
\end{align*}
\end{proof}

\section{Details on Fairness Notions}
\label{sec:notions-extra}

Although we do not define any new fairness notions in this paper,
we extend some of them to more general settings than they were originally defined for.
For some notions, this extension is not obvious and is based on careful deliberation.
Here we show how we arrived at these extensions and why they make sense.

\subsection{EFX}
\label{sec:notions:efx}

Defining EFX in the fully general setting (non-additive valuations, mixed manna) is tricky.
So, we start with the definition of EFX for additive goods,
and gradually build up to the general definition of EFX from there.
Some of these ideas also appear in \citet{caragiannis2022existence}.

There are actually two competing definitions of EFX for additive goods.
According to the original definition by \citet{caragiannis2019unreasonable},
an allocation $A$ is EFX-fair to agent $i$ if for every other agent $j$,
removing any positively-valued good from $j$'s bundle eliminates $i$'s envy. Formally,
\[ \frac{v_i(A_i)}{w_i} \ge \max_{g \in A_j: v_i(g) > 0} \frac{v_i(A_j \setminus \{g\})}{w_j}. \]

A different definition, often called \EFXZero{},
does not require the good $g$ to have a positive value to $i$ \citep{plaut2020almost}.
See \cref{defn:efx0-goods} for a formal definition,
and \cref{defn:efx0-chores} for the chores analogue.

\begin{definition}[\EFXZero{} for goods]
\label[definition]{defn:efx0-goods}
In a fair division instance $\fdInst{[n]}{[m]}{(v_i)_{i=1}^n}{w}$ over goods,
an allocation $A$ is \EFXZero-fair to agent $i$ if for every other agent $j$, and every $g \in A_j$,
\[ \frac{v_i(A_i)}{w_i} \ge \frac{v_i(A_j \setminus \{g\})}{w_j}. \]
\end{definition}

\begin{definition}[\EFXZero{} for chores]
\label[definition]{defn:efx0-chores}
In a fair division instance $\fdInst{[n]}{[m]}{(v_i)_{i=1}^n}{w}$ over chores,
an allocation $A$ is \EFXZero-fair to agent $i$ if for every other agent $j$, and every $c \in A_i$,
\[ \frac{-v_i(A_i \setminus \{c\})}{w_i} \le \frac{-v_i(A_j)}{w_j}. \]
\end{definition}

\EFXZero{} is known to be incompatible with Pareto optimality (PO) based on the following simple example \citep{plaut2020almost},
whereas it is not known whether EFX is compatible with PO for additive valuations over goods.

\begin{example}
\label[example]{ex:efx-po}
Consider the fair division instance $\fdInst{[2]}{\{g_1,\allowbreak g_2,\allowbreak g_3\}}{(v_i)_{i=1}^2}{\eqEnt}$,
where $v_1$ and $v_2$ are additive, and
\begin{align*}
   v_1(g_1) &= 1, & v_1(g_2) &= 0, & v_1(g_3) &= 10,
\\ v_2(g_1) &= 0, & v_2(g_2) &= 1, & v_2(g_3) &= 10.
\end{align*}
\end{example}

In \cref{ex:efx-po}, for any PO allocation $A$, we have $g_1 \in A_1$ and $g_2 \in A_2$
(otherwise we can transfer $g_1$ to agent 1 or $g_2$ to agent 2 to Pareto-dominate $A$).
The agent who did not get $g_3$ in $A$ is not \EFXZero-satisfied, although she is EFX-satisfied.

While \EFXZero{} is trivial to extend to non-additive valuations, EFX is not.
This is because there exist examples where every good in $j$'s bundle has zero value to agent $i$.
(Additionally, the good's marginal value over $A_j$ and $A_i$ may also be zero.)
Before we try to address this issue, let us instead jump to the setting of additive mixed manna.

One way to define EFX for mixed manna is this:
agent $i$ is EFX-satisfied by allocation $A$ if for every other agent $j$,
either agent $i$ does not envy $j$, or $i$'s envy towards $j$ vanishes after
either removing the least valuable positively-valued item from $j$
or after removing the most valuable negatively-valued item from $i$.
However, we argue that this is not sufficient.

\begin{example}
\label[example]{ex:efx-mixed-manna}
Consider a fair division instance $\Ical$ having 2 agents with equal entitlements,
identical additive valuations, two goods of values $10$ each,
and two chores of values $-9$ each.
Consider an allocation $A$ where all the 4 items are allocated to agent 2.
Then agent 1 would be EFX-satisfied by $A$, even though allocation $B$,
where each agent gets one good and one chore, is intuitively fairer.
\end{example}

For goods, EFX is considered one of the strongest fairness notions,
so we would like it to be a very strong notion for mixed manna too.
The key observation in \cref{ex:efx-mixed-manna} is that in allocation $A$,
we can transfer a set of items (containing one good and one chore) from agent 2 to agent 1
and get a fairer allocation.
This suggests that instead of (hypothetically) removing a single good from $j$
or a single chore from $i$ in the definition of EFX,
we should remove a positively-valued subset of $A_j$
or a negatively-valued subset of $A_i$.

Hence, for additive mixed manna, we say that an allocation $A$ is EFX-fair to agent $i$
if for every other agent $j$, $i$ does not envy $j$, or both of the following hold:
\begin{enumerate}
\ifColsTwo
\item $\displaystyle
\frac{v_i(A_i)}{w_i} \ge \frac{\max\left(\left\{
    \begin{array}{r}
    v_i(A_j \setminus S): S \subseteq A_j
    \\\quad \textrm{ and } v_i(S) > 0
    \end{array}\right\}\right)}{w_j}$.
\else
\item $\displaystyle \frac{v_i(A_i)}{w_i} \ge \frac{\max(\{v_i(A_j \setminus S): S \subseteq A_j
    \textrm{ and } v_i(S) > 0\})}{w_j}$.
\fi
\ifColsTwo
\item $\displaystyle
\frac{\min\left(\left\{
    \begin{array}{r}
    v_i(A_i \setminus S): S \subseteq A_i
    \\ \textrm{ and } v_i(S) < 0
    \end{array}\right\}\right)}{w_i} \ge \frac{v_i(A_j)}{w_j}$.
\else
\item $\displaystyle \frac{\min\left(\left\{v_i(A_i \setminus S): S \subseteq A_i
    \textrm{ and } v_i(S) < 0 \right\}\right)}{w_i} \ge \frac{v_i(A_j)}{w_j}$.
\fi
\end{enumerate}

This definition also hints towards how to handle goods with non-additive valuations.
Even if every good in $j$'s bundle has (marginal) value zero to agent $i$,
some subset of $j$'s bundle must have positive (marginal) value.
We replace $v(S) > 0$ by $v_i(S \mid A_i) > 0$ to ensure that
transferring $S$ from $j$ to $i$ leads to an increase in $i$'s valuation,
otherwise we lose compatibility with PO, even with identical valuations, as the following example demonstrates.

\begin{example}
\label[example]{ex:efx-two-colors}
Consider a fair division instance with two equally-entitled agents having identical valuations.
There are 2 red goods and 4 blue goods. The value of any bundle is $\max(k_r, k_b)$,
where $k_r$ and $k_b$ are the numbers of red and blue goods in the bundle, respectively.
\end{example}

For \cref{ex:efx-two-colors}, in any allocation, some agent gets at most 2 blue goods,
and that agent's value for her own bundle can be at most 2.
Also, the maximum value that any bundle can have is 4.
Hence, an allocation is PO iff one agent gets all the blue goods
and the other agent gets all the red goods.
To ensure the compatibility of EFX and PO, we must define EFX such that this allocation is EFX.

Hence, the definition of EFX we converge on is \cref{defn:efx}.
We now identify special cases where this definition
is equivalent to other well-known definitions of EFX.

\begin{lemma}
\label[lemma]{thm:efx-equiv-positive}
In the fair division instance $\Ical \defeq \fdInst{[n]}{[m]}{(v_i)_{i=1}^n}{w}$,
if all marginals are positive for agent $i$
(i.e., $v_i(g \mid R) > 0$ for all $R \subseteq [m]$ and $g \in [m] \setminus R$),
then $A$ is EFX-fair to agent $i$ iff $A$ is \EFXZero-fair to $i$.
\end{lemma}
\begin{proof}[Proof sketch]
$v_i(A_j \setminus S) = v_i(A_j) - v_i(S \mid A_j \setminus S)$.
Thus, in \cref{defn:efx}, we need to find $S \subseteq A_j$ such that
$v_i(S \mid A_i) > 0$ and $v_i(S \mid A_j \setminus S)$ is minimized.
Since marginals are positive, it suffices to only check sets of cardinality 1.
\end{proof}

\begin{lemma}
\label[lemma]{thm:efx-equiv-negative}
In the fair division instance $\Ical \defeq \fdInst{[n]}{[m]}{(v_i)_{i=1}^n}{w}$,
if all marginals are negative for agent $i$
(i.e., $v_i(c \mid R) < 0$ for all $R \subseteq [m]$ and $c \in [m] \setminus R$),
then $A$ is EFX-fair to agent $i$ iff $A$ is \EFXZero-fair to agent $i$.
\end{lemma}
\begin{proof}[Proof sketch]
$v_i(A_i \setminus S) = v_i(A_i) - v_i(S \mid A_i \setminus S)$.
Thus, in \cref{defn:efx}, we need to find $S \subseteq A_i$ such that
$-v_i(S \mid A_i \setminus S) > 0$ and $-v_i(S \mid A_i \setminus S)$ is minimized.
Since marginals are negative, it suffices to only check sets of cardinality 1.
\end{proof}

\begin{lemma}
\label[lemma]{thm:efx-equiv-submod-goods}
In the fair division instance $\Ical \defeq \fdInst{[n]}{[m]}{(v_i)_{i=1}^n}{w}$,
if $v_i$ is submodular and all marginals are non-negative for agent $i$
(i.e., $v_i(g \mid R) \ge 0$ for all $R \subseteq [m]$ and $g \in [m] \setminus R$),
then $A$ is EFX-fair to agent $i$ iff for all $j \in [n] \setminus \{i\}$, we have
\[ \frac{v_i(A_i)}{w_i} \ge \frac{\max
    \ifColsTwo
    \left(\left\{\begin{array}{r}
        v_i(A_j \setminus \{g\}): g \in A_j
        \\ \textrm{ and } v_i(g \mid A_i) > 0
    \end{array}\right\}\right)
    \else
    (\{v_i(A_j \setminus \{g\}): g \in A_j \textrm{ and } v_i(g \mid A_i) > 0\})
    \fi
    }{w_j}. \]
\end{lemma}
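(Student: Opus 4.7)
The plan is to reduce \cref{defn:efx} to a single condition involving only singletons by exploiting two features of the setting: monotonicity (from non-negative marginals) and the ``positive subset implies positive element'' property from \cref{thm:submod-positive-elem}. First, I would observe that because $v_i$ has non-negative marginals, $v_i(S \mid A_i \setminus S) \ge 0$ for every $S \subseteq A_i$, so the minimum in the second condition of \cref{defn:efx} is over an empty set and that condition is vacuous. Thus EFX-fairness of $A$ to $i$ reduces to requiring, for each $j \ne i$, that either $i$ does not envy $j$ or the first condition of \cref{defn:efx} holds.

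Next I would prove the forward direction. Non-negative marginals give monotonicity, so $v_i(A_j \setminus \{g\}) \le v_i(A_j)$ for every $g \in A_j$. Therefore if $i$ does not envy $j$, then $v_i(A_i)/w_i \ge v_i(A_j)/w_j \ge v_i(A_j \setminus \{g\})/w_j$ for every $g$ (and the claim is vacuous if no such $g$ has $v_i(g \mid A_i) > 0$). Otherwise, condition~1 of \cref{defn:efx} already bounds the max over all subsets, and restricting to singletons $\{g\}$ with $v_i(g \mid A_i) > 0$ yields the stated inequality immediately.

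For the reverse direction, assume the stated inequality holds for $j$. If $i$ does not envy $j$, we are done. Otherwise I need to verify condition~1 of \cref{defn:efx}: for every $S \subseteq A_j$ with $v_i(S \mid A_i) > 0$, show $v_i(A_j \setminus S)/w_j \le v_i(A_i)/w_i$. Here is where submodularity enters: \cref{thm:submod-positive-elem} guarantees some $g \in S$ with $v_i(g \mid A_i) > 0$. Monotonicity then yields $v_i(A_j \setminus S) \le v_i(A_j \setminus \{g\})$ (since removing fewer items only increases value), and the stated inequality applied to this $g$ finishes the chain.

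The only non-routine ingredient is the application of \cref{thm:submod-positive-elem}, which is precisely where submodularity is used; without it, a positive-marginal subset need not contain any positive-marginal singleton and the reduction to singletons would fail. Everything else is bookkeeping around monotonicity and the structure of the two EFX conditions.
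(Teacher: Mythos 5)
Your proposal is correct and follows essentially the same route as the paper's own (much terser) proof sketch: reduce the subset quantification in \cref{defn:efx} to singletons by combining monotonicity with \cref{thm:submod-positive-elem}, after noting that the chore-side condition is vacuous under non-negative marginals. Your write-up simply makes both directions and the vacuous second condition explicit, which the paper leaves implicit.
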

\begin{proof}[Proof sketch]
In \cref{defn:efx}, we need to find $S \subseteq A_j$ such that
$v_i(S \mid A_i) > 0$ and $v_i(S \mid A_j \setminus S)$ is minimized.
By the subadditivity of $v_i$'s marginal gains (\cref{thm:submod-mg-subadd}), we get
$v_i(S \mid A_i) \le \sum_{g \in S} v_i(g \mid A_i)$.
Since $v_i(S \mid A_i) > 0$, there exists $\ghat \in S$ such that $v_i(\ghat \mid A_i) > 0$.

Moreover, $v_i(S \mid A_j \setminus S) \ge v_i(\ghat \mid A_j \setminus \{\ghat\})$
since $v_i(S \mid A_j \setminus S) = v_i(S \setminus \{\ghat\} \mid A_j \setminus S) + v_i(\ghat \mid A_j \setminus \{\ghat\})$.
Thus, we can replace $S$ by $\{\ghat\}$,
so we can assume \wLoG{} that $|S| = 1$.
\end{proof}

\begin{lemma}
\label[lemma]{thm:efx-equiv-submod-chores}
In the fair division instance $\Ical \defeq \fdInst{[n]}{[m]}{(v_i)_{i=1}^n}{w}$,
if $v_i$ is submodular and all marginals are non-positive for agent $i$
(i.e., $v_i(c \mid R) \le 0$ for all $R \subseteq [m]$ and $c \in [m] \setminus R$),
then $A$ is EFX-fair to agent $i$ iff for all $j \in [n] \setminus \{i\}$, we have
\[ \frac{\max
    \ifColsTwo
    \left(\left\{\begin{array}{r}
        -v_i(A_i \setminus \{c\}): c \in A_i \textrm{ and }
        \\ v_i(c \mid A_i \setminus \{c\}) < 0
    \end{array}\right\}\right)
    \else
    (\{-v_i(A_i \setminus \{c\}): c \in A_i \textrm{ and } v_i(c \mid A_i \setminus \{c\}) < 0\})
    \fi
    }{w_i} \le \frac{-v_i(A_j)}{w_j}. \]
\end{lemma}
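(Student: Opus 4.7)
The plan is to mirror the structure of the argument used for Lemma \ref{thm:efx-equiv-submod-goods}, but applied to the second condition in Definition \ref{defn:efx} and using Lemma \ref{thm:submod-negative-elem} as the key structural tool instead of Lemma \ref{thm:submod-positive-elem}. First I would dispose of condition~1 of \cref{defn:efx}: since $v_i$ is submodular with non-positive singleton marginals, for any $S \subseteq A_j$ disjoint from $A_i$ we have $v_i(S \mid A_i) = \sum_{t=1}^{|S|} v_i(g_t \mid A_i \cup \{g_1,\ldots,g_{t-1}\}) \le \sum_{t=1}^{|S|} v_i(g_t \mid A_i) \le 0$ by submodularity. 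Hence $\{v_i(A_j \setminus S): S \subseteq A_j, v_i(S \mid A_i) > 0\} = \emptyset$ and condition~1 is vacuous, so EFX collapses to condition~2.

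Next I would show the two formulations of condition~2 are equivalent. The ``$\Rightarrow$'' direction is immediate: singletons $\{c\} \subseteq A_i$ with $v_i(c \mid A_i \setminus \{c\}) < 0$ form a subset of the sets $S \subseteq A_i$ with $v_i(S \mid A_i \setminus S) < 0$, so the singleton max is at most the full max, and the claimed inequality follows from condition~2.

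For ``$\Leftarrow$'', it suffices to prove: for every $S \subseteq A_i$ with $v_i(S \mid A_i \setminus S) < 0$, there exists $c \in A_i$ with $v_i(c \mid A_i \setminus \{c\}) < 0$ and $-v_i(A_i \setminus \{c\}) \ge -v_i(A_i \setminus S)$. Given such an $S$, Lemma \ref{thm:submod-negative-elem} (applied with $T = A_i$) furnishes some $c \in S$ with $v_i(c \mid A_i \setminus \{c\}) < 0$. It then remains to check $v_i(A_i \setminus \{c\}) \le v_i(A_i \setminus S)$. Writing $A_i \setminus \{c\} = (A_i \setminus S) \cup (S \setminus \{c\})$ and invoking non-positive marginals once more, $v_i(A_i \setminus \{c\}) - v_i(A_i \setminus S) = v_i(S \setminus \{c\} \mid A_i \setminus S) \le 0$, as required.

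The main obstacle is really just the bookkeeping in the ``$\Leftarrow$'' direction: one has to notice that Lemma \ref{thm:submod-negative-elem} provides not just \emph{some} singleton with a strictly negative marginal, but one lying inside the specific set $S$, which is precisely what lets the final value comparison go through by the monotonicity that non-positive marginals impose on $v_i$. Once this is in place, the equivalence with \cref{defn:efx} condition~2 follows, and combined with the vacuity of condition~1 we obtain the stated characterization.
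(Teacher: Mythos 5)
Your proposal is correct and takes essentially the same route as the paper's proof sketch: condition~1 of \cref{defn:efx} is vacuous under non-positive marginals, and \cref{thm:submod-negative-elem} reduces condition~2 to singletons via the comparison $v_i(A_i \setminus \{c\}) \le v_i(A_i \setminus S)$ for some $c \in S$ with $v_i(c \mid A_i \setminus \{c\}) < 0$. The only unstated (and trivial) point is the ``$i$ does not envy $j$'' escape clause in \cref{defn:efx}: in that case the singleton inequality also holds automatically, since removing a chore with strictly negative marginal only increases $v_i(A_i)$, so the clause does not affect the stated equivalence.
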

\begin{proof}[Proof sketch]
Let $d_i \defeq -v_i$. In \cref{defn:efx}, we need to find $S \subseteq A_i$ such that
$d_i(S \mid A_i \setminus S) > 0$ and $d_i(S \mid A_i \setminus S)$ is minimized.
By the superadditivity of $v_i$'s marginal losses (\cref{thm:submod-ml-superadd}), we get
$d_i(S \mid A_i \setminus S) \le \sum_{c \in S} d_i(c \mid A_i \setminus \{c\})$.
Since $d_i(S \mid A_i \setminus S) > 0$, there exists $\chat \in A_i$
such that $d_i(\chat \mid A_i \setminus \{\chat\}) > 0$.

Moreover, $d_i(S \mid A_i \setminus S) \ge d_i(\chat \mid A_i \setminus \{\chat\})$ since
$d_i(S \mid A_i \setminus S)
= d_i(S \setminus \{\chat\} \mid A_i \setminus S) + d_i(\chat \mid A_i \setminus \{\chat\})$.
Thus, we can replace $S$ by $\{\chat\}$,
so we can assume \wLoG{} that $|S| = 1$.
\end{proof}

\subsection{MMS}
\label{sec:notions:mms}

\begin{definition}[WMMS, \citet{farhadi2019fair}]
\label[definition]{defn:wmms}
Let $\Ical \defeq (N, M, (v_i)_{i \in N}, w)$ be a fair division instance.
Let $\Acal$ be the set of all allocations for $\Ical$.
Then agent $i$'s \emph{weighted maximin share} (WMMS) is
\[ \WMMS_i := w_i \max_{X \in \Acal} \min_{j \in N} \frac{v_i(X_j)}{w_j}. \]
An allocation $A$ is WMMS-fair to agent $i$ if $v_i(A_i) \ge \WMMS_i$.
An allocation $X$ that maximizes $w_i \min_{j \in N} \frac{v_i(X_j)}{w_j}$
is called agent $i$'s \emph{WMMS partition}.
\end{definition}

For equal entitlements, we get $\WMMS_i = \MMS_i$.

\ifVerbose
\let\pessIfVerbose\pessShare
\else
\let\pessIfVerbose\MMS
\fi

\ifVerbose
\begin{definition}[pessShare, \citet{babaioff2023fair}]
\label[definition]{defn:pessShare}
Let $1 \le \ell \le d$. Let $\Pi_d(M)$ be the set of all $d$-partitions of $M$.
Then agent $i$'s $\ell$-out-of-$d$ share is defined as
\[ \loodM_i := \!\!\!\!\!\!\max_{\substack{X \in \Pi_d(M):\\v_i(X_j) \le v_i(X_{j+1}) \forall j \in [d-1]}}
    \sum_{j=1}^{\ell} v_i(X_j). \]
Then agent $i$'s pessimistic share is defined as
\[ \pessShare_i := \sup_{1 \le \ell \le d:\;\ell / d \le w_i} \loodM_i. \]
Allocation $A$ is pessShare-fair to agent $i$ if $v_i(A_i) \ge \pessShare_i$.
\end{definition}

\begin{lemma}
\label[lemma]{thm:pess-is-mms}
For any fair division instance with equal entitlements,
the pessShare of any agent is the same as her maximin share.
\end{lemma}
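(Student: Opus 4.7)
The plan is to prove $\pessShare_i = \MMS_i$ via two inequalities, using that equal entitlements forces $w_i = 1/n$.

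For the easy direction $\pessShare_i \ge \MMS_i$, I would substitute $\ell = 1$ and $d = n$ into the supremum defining $\pessShare_i$; this is admissible because $\ell/d = 1/n = w_i$. With $\ell = 1$ the sum $\sum_{j=1}^{\ell} v_i(X_j)$ collapses to $v_i(X_1)$, and by the sort constraint $X_1$ is the smallest part of the partition. Hence $\loodM_i$ equals $\max_{X \in \Pi_n(M)} \min_{j \in [n]} v_i(X_j) = \MMS_i$, and so $\pessShare_i \ge \MMS_i$.

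For the reverse direction $\pessShare_i \le \MMS_i$, I would fix any admissible pair $(\ell, d)$ (so $\ell n \le d$) together with an optimal sorted $d$-partition $X = (X_1, \ldots, X_d)$ realising $\loodM_i = \sum_{j=1}^{\ell} v_i(X_j)$, and then construct an $n$-partition $Y$ of $M$ satisfying $v_i(Y_k) \ge \loodM_i$ for every $k \in [n]$; this immediately implies $\MMS_i \ge \min_k v_i(Y_k) \ge \loodM_i$. The natural construction is a round-robin coarsening: set $S_k \defeq \{k, k+n, k+2n, \ldots\} \cap [d]$ and $Y_k \defeq \bigcup_{j \in S_k} X_j$, so each $|S_k| \ge \ell$ because $d \ge \ell n$. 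The key arithmetic observation is that the first $\ell$ elements of $S_k$, namely $k, k+n, \ldots, k+(\ell-1)n$, dominate $1, 2, \ldots, \ell$ elementwise, so the ascending sort yields $\sum_{j=1}^{\ell} v_i(X_{k + (j-1)n}) \ge \sum_{j=1}^{\ell} v_i(X_j) = \loodM_i$.

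The main obstacle is converting this partial-sum lower bound into a genuine lower bound on $v_i(Y_k)$. For additive valuations over goods, $v_i(Y_k) = \sum_{j \in S_k} v_i(X_j)$ is at least the partial sum over the first $\ell$ indices of $S_k$ because the remaining summands are non-negative, and the argument closes. For superadditive valuations the value of a union dominates the sum of the parts, so the same conclusion follows. For other valuation classes---or for chores, where merging or truncating can only decrease values---the round-robin coarsening needs to be refined (for instance by carefully choosing which bundle absorbs the $d - \ell n$ extras, or by invoking an appropriate monotonicity property), and carrying this step out uniformly across the settings considered in the paper is where the real work lies.
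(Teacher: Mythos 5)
Your plan follows the paper's proof in both directions: the substitution $\ell=1$, $d=n$ for $\pessShare_i \ge \MMS_i$ is exactly the paper's first step, and for the converse the paper likewise coarsens an optimal sorted $d$-partition $X$ into an $n$-partition $Y$, using contiguous blocks of $\ell$ bundles (with the $d-\ell n$ leftover bundles added to $Y_n$) instead of your round-robin classes; that difference is cosmetic, since in both constructions every $Y_k$ absorbs at least $\ell$ bundles whose sorted indices dominate $1,\ldots,\ell$.

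The step you flag as unresolved --- converting the partial-sum bound into $v_i(Y_k)\ge\loodM_i$ --- is indeed the only substantive step, but be aware that the paper does not close it in any greater generality than you do: its proof simply asserts $v_i(Y_1)\le v_i(Y_2)\le\cdots\le v_i(Y_n)$ and that $v_i(Y_1)$ equals the $\loodM_i$, claims which are justified when $v_i$ is additive (or superadditive) with non-negative values, i.e.\ essentially the goods case your argument already covers, and not in general. Moreover, no refinement of the coarsening can rescue the remaining cases under \cref{defn:pessShare} as written: for two equally-entitled agents with four additive chores of value $-1$ each, the pair $(\ell,d)=(1,4)$ is admissible and the all-singleton partition gives a $1$-out-of-$4$ share of $-1$, while $\MMS_i=-2$; similarly, for the subadditive goods valuation $v(S)=\boolone(S\neq\emptyset)$ on four items with $n=2$, the pair $(\ell,d)=(2,4)$ gives a share of $2$ while $\MMS_i=1$. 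So the claimed equality fails outside the (super)additive-goods regime, and the honest scope of both your argument and the paper's is that setting; the ``real work'' you anticipate for chores and non-additive valuations is not a missing construction but a sign that the statement (or the definition of the pessimistic share for negative or non-additive values) would need to be restricted or adjusted.
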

\begin{proof}
Let $\Ical \defeq \fdInst{[n]}{[m]}{(v_i)_{i=1}^n}{w}$ be a fair division instance,
where $w_i = 1/n$ for all $i \in [n]$.
Any agent $i$'s $1$-out-of-$n$ share is the same as $\MMS_i$, so $\pessShare_i \ge \MMS_i$.

Let $\ell/d \le n$ and let $X \in \Pi_d(M)$ such that $v_i(X_j) \le v_i(X_{j+1})$ for all $j \in [d-1]$.
Now let $Y_1$ be the union of the first $\ell$ bundles of $X$,
let $Y_2$ be the union of the next $\ell$ bundles of $X$, and so on.
Add any remaining bundles of $X$ to $Y_n$.
Formally, $Y \in \Pi_n(M)$ where $Y_k \defeq \bigcup_{j=(k-1)\ell+1}^{k\ell} X_j$ for $k \in [n-1]$
and $Y_n \defeq M \setminus \bigcup_{j=1}^{n-1} Y_j$.
Then $Y_n$ contains at least $\ell$ bundles of $X$, since $d \ge \ell n$.
Hence, $v_i(Y_1) \le v_i(Y_2) \le \ldots \le v_i(Y_n)$,
and $v_i(Y_1)$ is agent $i$'s $\ell$-out-of-$d$ share.

By definition of MMS, $v_i(Y_1) \le \MMS_i$.
Hence, for any $\ell$ and $d$ such that $\ell/d \le n$,
agent $i$'s $\ell$-out-of-$d$ share is at most her MMS.
Hence, $\pessShare \le \MMS_i$.
\end{proof}
\fi

\subsection{APS}
\label{sec:notions:aps}

For any $x \in \mathbb{R}^m$ and $S \subseteq [m]$, define $x(S) \defeq \sum_{j \in S} x_j$.

\begin{definition}[APS, \citet{babaioff2023fair}]
\label[definition]{defn:aps}
For a fair division instance $\Ical \defeq \fdInst{[n]}{[m]}{(v_i)_{i=1}^n}{w}$,
agent $i$'s AnyPrice Share (APS) is defined as
\[ \APS_i \defeq \min_{p \in \mathbb{R}^m}\;\max_{S \subseteq [m]: p(S) \le w_ip([m])} v_i(S). \]
Here $p$ is called the \emph{price vector}.
A vector $p^* \in \mathbb{R}^m$ is called an \emph{optimal} price vector for agent $i$ if
\[ p^* \in \argmin_{p \in \mathbb{R}^m}\;\max_{S \subseteq [m]: p(S) \le w_ip([m])} v_i(S). \]
\end{definition}

\Cref{defn:aps} is slightly different from the original definition given in \citet{babaioff2023fair}.
However, they assume that all items are goods, and for that special case,
we can prove that their definition is equivalent to \cref{defn:aps}.
We begin by proving that, \wLoG, we can assume that goods have a non-negative price
and chores have a non-positive price.

\begin{lemma}
\label[lemma]{thm:aps-optimal-price}
Let $\Ical \defeq \fdInst{[n]}{[m]}{(v_i)_{i=1}^n}{w}$ be a fair division instance.
For agent $i$, let $G$ be the set of goods and $C$ be the set of chores,
i.e., $G \defeq \{g \in [m]: v_i(g \mid R) \ge 0 \; \forall R \subseteq [m] \setminus \{g\}\}$
and $C \defeq \{c \in [m]: v_i(c \mid R) \le 0 \; \forall R \subseteq [m] \setminus \{c\}\}$.
Then for some optimal price vector $\phat \in \mathbb{R}^m$, we have
$\phat_g \ge 0$ for all $g \in G$ and $\phat_c \le 0$ for all $c \in C$.
\end{lemma}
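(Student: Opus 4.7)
The plan is to take any optimal price vector $p^* \in \argmin_p \max_{S: p(S) \le w_i p([m])} v_i(S)$ and modify it coordinate-by-coordinate, showing each modification preserves optimality. Specifically, for each good $g \in G$ with $p^*_g < 0$, I raise $p^*_g$ up to $0$; for each chore $c \in C$ with $p^*_c > 0$, I lower $p^*_c$ down to $0$. Since there are only finitely many violating coordinates, iterating produces a vector $\phat$ satisfying $\phat_g \ge 0$ for all $g \in G$ and $\phat_c \le 0$ for all $c \in C$.

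The heart of the argument is the single-coordinate adjustment. Consider the good case: let $p'$ be obtained from $p^*$ by setting $p'_g \defeq 0$ for some $g \in G$ with $p^*_g < 0$. Writing $B(p) \defeq w_i p([m])$, we have $B(p') = B(p^*) - w_i p^*_g$. I claim $\max\{v_i(S) : p'(S) \le B(p')\} \le \APS_i$; together with the reverse inequality (which holds because $p^*$ is a minimizer), this forces $p'$ to also be optimal. To prove the claim, for any $p'$-affordable $S$ I exhibit a $p^*$-affordable $T$ with $v_i(T) \ge v_i(S)$: take $T \defeq S \cup \{g\}$ if $g \notin S$, and $T \defeq S$ otherwise. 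A direct computation gives $p^*(T) = p'(S) + p^*_g \le B(p') + p^*_g = B(p^*) + (1 - w_i) p^*_g < B(p^*)$, using $p^*_g < 0$ and $w_i \in (0,1)$. The inequality $v_i(T) \ge v_i(S)$ follows from $v_i(g \mid S) \ge 0$, which is guaranteed by $g \in G$. The chore case is dual: take $T \defeq S \setminus \{c\}$ if $c \in S$ and $T \defeq S$ otherwise, and use the symmetric bound $p^*(T) \le B(p^*) + (1-w_i)(-p^*_c) \cdot (-1)$, i.e., $p^*(T) \le B(p^*) - (1-w_i) p^*_c < B(p^*)$, together with $v_i(c \mid S \setminus \{c\}) \le 0$.

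The main obstacle, such as it is, is careful bookkeeping: both the cost $p(S)$ and the budget $w_i p([m])$ shift when a single price changes, so one must verify that the net effect is favorable. The slack $(1 - w_i)|p^*_g|$ (or $(1-w_i)|p^*_c|$) arises precisely because $w_i < 1$: the budget absorbs only a $w_i$ fraction of the price change, while the cost side absorbs the whole change whenever the adjusted item lies in $S$ (or in $T \setminus S$). Beyond this computation the argument is essentially mechanical, so the proof should be short once the key inequality and the definition of $T$ are in place.
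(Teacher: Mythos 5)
Your proof is correct and takes essentially the same approach as the paper: the paper zeroes out all wrongly-signed prices at once and shows optimality is preserved by adjusting the maximizing bundle (adding $\Ghat$, removing $\Chat$) to a $p^*$-affordable set, while you do the same exchange one coordinate at a time, both arguments resting on the fact that the budget $w_i p([m])$ absorbs only a $w_i$ fraction of each price change. One minor slip: in the chore case the computation actually gives $p^*(T) = p'(S) \le B(p') = B(p^*) - w_i p^*_c$ (coefficient $w_i$, not $1-w_i$, matching the paper's identity), but since $p^*_c > 0$ and $w_i > 0$ this still yields $p^*(T) \le B(p^*)$, so the conclusion is unaffected.
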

\begin{proof}
Let $p^* \in \mathbb{R}^m$ be an optimal price vector.
Let $\Ghat \defeq \{g \in G: p^*_g < 0\}$ and $\Chat \defeq \{c \in C: p^*_c > 0\}$.
The high-level idea is that if we change the price of $\Ghat \cup \Chat$ to 0, we get potentially better prices.
Let $\phat_j \defeq 0$ if $j \in \Ghat \cup \Chat$ and let $\phat_j \defeq p^*_j$ otherwise. Let
\[ \Shat \in \argmax_{S \subseteq [m]: \phat(S) \le w_i\phat([m])} v_i(S) .\]
Since $\phat(\Shat \cup \Ghat \setminus \Chat) = \phat(\Shat)$
and $v_i(\Shat \cup \Ghat \setminus \Chat) \ge v_i(\Shat)$,
we assume without loss of generality that
$\Ghat \subseteq \Shat$ and $\Chat \cap \Shat = \emptyset$.
Moreover, $p^*(\Shat) - w_ip^*([m])
= (\phat(\Shat) - w_i\phat([m])) - w_ip^*(\Chat) - (1-w_i)(-p^*(\Ghat))
\le 0$.
Hence,
\[ \max_{\substack{S \subseteq [m]:\\p^*(S) \le w_ip^*([m])}} \!\!\!\!v_i(S)
    \,\ge\, v_i(\Shat)
    \,= \!\!\!\!\max_{\substack{S \subseteq [m]:\\\phat(S) \le w_i\phat([m])}} \!\!\!\!v_i(S), \]
so $\phat$ is also an optimal price vector.
\end{proof}

For any positive integer $m$, define
$\Delta_m \defeq \{x \in \mathbb{R}^m_{\ge 0}: \sum_{j=1}^m x_j = 1\}$.
When all items are goods, by \cref{thm:aps-optimal-price}, we can assume \wLoG{} that
$p$ is non-negative and $p([m]) = 1$. Hence,
\[ \APS_i = \min_{p \in \Delta_m}\;\max_{S \subseteq [m]: p(S) \le w_i} v_i(S). \]
When all items are chores, by \cref{thm:aps-optimal-price}, we can assume that
$p$ is non-positive and $p([m]) = -1$. Hence,
\[ -\APS_i = \max_{q \in \Delta_m}\;\min_{S \subseteq [m]: q(S) \ge w_i} (-v_i(S)). \]

\citet{babaioff2023fair} give another equivalent definition of APS,
called the \emph{dual} definition.

\begin{definition}[APS (dual)]
\label[definition]{defn:aps-dual}
Let $\Ical \defeq \fdInst{[n]}{[m]}{(v_i)_{i=1}^n}{w}$ be a fair division instance.
For an agent $i$ and any $z \in \mathbb{R}$, let $\Scal_z \defeq \{S \subseteq [m]: v_i(S) \ge z\}$.
Then agent $i$'s AnyPrice share, denoted by $\APS_i$, is the largest value $z$ such that
\[ \exists x \in \mathbb{R}_{\ge 0}^{\Scal_z}, \sum_{S \in \Scal_z} \!x_S = 1
    \textrm{ and } \bigg(\forall j \in [m], \!\!\!\!\sum_{S \in \Scal_z: j \in S} \!\!\!\!x_S = w_i\bigg). \]
\end{definition}

\Cref{defn:aps-dual} can be interpreted as a linear programming relaxation of MMS.
Formally, when entitlements are equal, adding the integrality constraints
$nx_S \in \mathbb{Z}_{\ge 0}$ for all $S \in \Scal_z$ gives us an alternate definition of MMS.

\citet{babaioff2023fair} show that the primal and dual definitions of APS are equivalent.
They prove this only for goods, but their proof can be easily adapted to the case of mixed manna.

\ifVerbose
\begin{lemma}
\label[lemma]{thm:aps-primal-dual-equiv}
\Cref{defn:aps,defn:aps-dual} are equivalent.
\end{lemma}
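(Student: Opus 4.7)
The plan is to fix an arbitrary level $z \in \mathbb{R}$ and show that the LP in \cref{defn:aps-dual} is feasible at level $z$ if and only if the value given by \cref{defn:aps} is at least $z$. Since $\APS_i$ as given by \cref{defn:aps-dual} is the largest such $z$, the equivalence of the two definitions follows immediately.

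For the (easier) forward direction, I would argue directly. Given a feasible $x \ge 0$ for the dual system at level $z$ and any $p \in \mathbb{R}^m$, swapping the order of summation gives
\[
\sum_{S \in \Scal_z} x_S \, p(S) \;=\; \sum_{j=1}^m p_j \sum_{S \ni j} x_S \;=\; w_i \, p([m]).
\]
If every $S \in \Scal_z$ were to satisfy $p(S) > w_i p([m])$, then multiplying by $x_S \ge 0$ and summing with $\sum_S x_S = 1$ would yield the contradiction $w_i p([m]) > w_i p([m])$. Hence some $S \in \Scal_z$ satisfies $p(S) \le w_i p([m])$, so $\max_{S: p(S) \le w_i p([m])} v_i(S) \ge z$ for every $p$, i.e., the primal value is at least $z$.

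For the (harder) reverse direction, the plan is to apply Farkas' lemma to the finite-dimensional system $Ax = b$, $x \in \mathbb{R}_{\ge 0}^{\Scal_z}$, where the column indexed by $S \in \Scal_z$ is $(1, \mathbf{1}_S^\top)^\top$ (with $\mathbf{1}_S \in \{0,1\}^m$ the indicator of $S$) and $b \defeq (1, w_i, \ldots, w_i)^\top$. Farkas gives infeasibility iff there exist $y_0 \in \mathbb{R}$ and $y \in \mathbb{R}^m$ with $y_0 + y(S) \ge 0$ for all $S \in \Scal_z$ and $y_0 + w_i y([m]) < 0$. Substituting $p \defeq -y$ and eliminating $y_0$ yields the condition $\max_{S \in \Scal_z} p(S) < w_i p([m])$. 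Applying the sign flip $p \mapsto -p$ (a bijection of $\mathbb{R}^m$) converts this to $p(S) > w_i p([m])$ for all $S \in \Scal_z$, which is precisely the statement that no set in $\Scal_z$ is primal-feasible, i.e., $\max_{S: p(S) \le w_i p([m])} v_i(S) < z$. Contrapositively, if the primal value is $\ge z$, then the dual system must be feasible.

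The main subtlety is the careful book-keeping around strict versus weak inequalities together with the sign flip on $p$; once these are aligned, the equivalence follows from one application of finite-dimensional LP duality on a system with $m+1$ equality constraints and at most $2^m$ non-negative variables.
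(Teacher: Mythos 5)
Your proof is correct and takes essentially the same route as the paper: both fix a threshold $z$ and characterize feasibility of the system in \cref{defn:aps-dual} by a theorem of the alternative, which the paper packages as strong LP duality applied to the feasibility LP and you package as Farkas' lemma (with the easy direction done by a direct summation swap, i.e., weak duality). After your sign flip $p \mapsto -p$, your alternative ``$\exists p$ with $p(S) > w_i p([m])$ for all $S \in \Scal_z$'' is exactly the negation of the paper's boundedness condition ``$\min_{S \in \Scal_z} p(S) \le w_i p([m])$ for all $p$,'' so the two arguments coincide.
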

\begin{proof}
Let $\pAPS_i$ and $\dAPS_i$ be agent $i$'s AnyPrice shares given by
the primal and dual definitions, respectively.
We will show that for any $z \in \mathbb{R}$, $\pAPS_i \ge z$ iff $\dAPS_i \ge z$.
This would prove that $\pAPS_i = \dAPS_i$.

$\dAPS_i \ge z$ iff the following LP has a feasible solution:
\[ \min_{x \in \mathbb{R}_{\ge 0}^{\Scal_z}} 0
\textrm{ where} \!\sum_{S \in \Scal_z} \!x_S = 1
    \textrm{ and } \bigg(\forall j \in [m], \!\!\!\!\sum_{S \in \Scal_z: j \in S} \!\!\!\!x_S = w_i\bigg). \]
Its dual is
\[ \max_{p \in \mathbb{R}^m, r \in \mathbb{R}} r - w_ip([m])
\textrm{ where } p(S) \ge r \textrm{ for all } S \in \Scal_z. \]
The dual LP is feasible since $(0, 0)$ is a solution.
Furthermore, if $(p, r)$ is feasible for the dual LP,
then $(\alpha p, \alpha r)$ is also feasible, for any $\alpha \ge 0$.
Hence, by strong duality of LPs, the primal LP is feasible iff
all feasible solutions to the dual have objective value at most 0.

For a given $p$, the optimal $r$ is $\min_{S \in \Scal_z} p(S)$.
Hence, the dual LP is bounded iff for all $p \in \mathbb{R}^m$,
\[ \min_{S \in \Scal_z} p(S) \le w_ip([m]). \]
Furthermore,
\ifColsTwo\providecommand{\tempWrap}{\\ &\qquad\qquad}\else\let\tempWrap\relax\fi
\begin{align*}
& \forall p \in \mathbb{R}^m, \min_{S \in \Scal_z} p(S) \le w_ip([m])
\\ &\iff \forall p \in \mathbb{R}^m, \exists S \subseteq [m] \textrm{ such that }
    \tempWrap p(S) \le w_ip([m]) \textrm{ and } v_i(S) \ge z
\\ &\iff \left(\min_{p \in \mathbb{R}^m} \,\max_{S \subseteq [m]: p(S) \le w_ip([m])} v_i(S)\right) \ge z
\\ &\iff \pAPS_i \ge z.
\end{align*}
Hence, $\dAPS_i \ge z \iff \pAPS_i \ge z$.
\end{proof}
\fi

\subsection{PROP1}
\label{sec:notions:prop1}

We use the definition of PROP1 from \citet{feige2025low}.
This definition (c.f.~\cref{defn:prop1}) uses strict inequalities, i.e.,
$v_i(A_i \cup \{g\}) > w_iv_i([m])$ instead of $v_i(A_i \cup \{g\}) \ge w_iv_i([m])$,
and $v_i(A_i \setminus \{c\}) > w_iv_i([m])$ instead of $v_i(A_i \setminus \{c\}) \ge w_iv_i([m])$.
This is different from the convention followed by most fair division papers.

We define it this way to make it a slightly stronger notion.
In particular, when the items are identical, the stricter notion aligns with the intuitive idea
that each agent must get at least $\floor{m/n}$ goods or at most $\ceil{m/n}$ chores.
However, outside of the identical items case, this nuance does not affect our results much.
Almost all implications, counterexamples, and (in)feasibility results
hold for both the stronger and weaker definitions,
and we explicitly mention when they don't.
We alter PROPx, PROPm, and PROPavg to also use strict inequalities.

\subsection{PROPx}
\label{sec:notions:propx}

We identify special cases where our definition of PROPx (\cref{defn:propx}) is equivalent to
well-known definitions of PROPx \citep{aziz2020polynomial,li2022almost}.

\begin{lemma}
\label[lemma]{thm:propx-equiv-positive}
In the fair division instance $\Ical \defeq \fdInst{[n]}{[m]}{(v_i)_{i=1}^n}{w}$,
if all marginals are positive for agent $i$
(i.e., $v_i(g \mid R) > 0$ for all $R \subseteq [m]$ and $g \in [m] \setminus R$),
then $A$ is PROPx-fair to agent $i$ iff $v_i(A_i) \ge w_iv_i([m])$ or
$v_i(A_i \cup \{g\}) > w_iv_i([m])$ for some $g \in [m] \setminus A_i$.
\end{lemma}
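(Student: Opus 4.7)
The plan is to prove both directions of the iff by reducing the PROPx condition (which quantifies over arbitrary subsets) to a singleton-only check, leveraging two consequences of strict positivity of marginals for agent $i$.

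First, I would dispatch the second clause of the PROPx definition (\cref{defn:propx}). It asks about subsets $S \subseteq A_i$ with $v_i(S \mid A_i \setminus S) < 0$. But strict positivity of marginals implies, by telescoping $v_i(S \mid A_i \setminus S) = \sum_{c \in S} v_i(c \mid (A_i \setminus S) \cup \{\text{earlier elements}\})$, that $v_i(S \mid A_i \setminus S) > 0$ for every non-empty $S \subseteq A_i$. Hence the second clause is vacuously satisfied and plays no role. Similarly, for the first clause, the proviso $v_i(S \mid A_i) > 0$ is automatic for every non-empty $S \subseteq [m] \setminus A_i$, so the first clause of PROPx simplifies to: $v_i(A_i \cup S) > w_i v_i([m])$ for every non-empty $S \subseteq [m] \setminus A_i$.

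Next, I would establish the singleton reduction. Strict positivity of marginals is equivalent to strict monotonicity of $v_i$: for any disjoint $T$ and $S$ with $S \ne \emptyset$, $v_i(T \cup S) > v_i(T)$. In particular, for any non-empty $S \subseteq [m] \setminus A_i$ and any $g \in S$, $v_i(A_i \cup S) \ge v_i(A_i \cup \{g\})$. This means the minimum of $v_i(A_i \cup S)$ over non-empty $S \subseteq [m] \setminus A_i$ is attained on singletons, namely at the $g$ minimizing $v_i(A_i \cup \{g\})$. Consequently, the first clause of PROPx is equivalent to requiring $v_i(A_i \cup \{g\}) > w_i v_i([m])$ for every $g \in [m] \setminus A_i$, which (since the worst singleton governs the rest) is the same as requiring it for the specific minimizing $g$.

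Combining both steps gives the iff. For the forward direction, if $v_i(A_i) < w_i v_i([m])$, specializing the simplified first clause of PROPx to $S = \{g\}$ immediately yields $v_i(A_i \cup \{g\}) > w_i v_i([m])$. For the reverse direction, given the singleton condition, monotonicity lifts it to $v_i(A_i \cup S) \ge v_i(A_i \cup \{g\}) > w_i v_i([m])$ for any non-empty $S$ containing such a $g$, recovering PROPx. The main (and really only) obstacle is the bookkeeping of the quantifier structure: one must match the singleton clause with every relevant $S$ via monotonicity, and take care to include the case $v_i(A_i) \ge w_i v_i([m])$ in the disjunction so that both clauses of PROPx are handled uniformly.
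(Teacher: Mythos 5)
Your reduction is exactly the paper's: the paper's own proof is a three-line sketch observing that with positive marginals every non-empty $S$ satisfies the proviso $v_i(S \mid A_i) > 0$, so $v_i(A_i \cup S)$ is minimized at a singleton and one may take $|S| = 1$ without loss of generality. Your write-up fills in the details the sketch omits (clause 2 of \cref{defn:propx} is vacuous under strict positivity, the proviso in clause 1 is automatic, and monotonicity collapses sets to singletons), and all of that is sound.

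The one point you must confront is the quantifier. What your argument establishes --- and all it can establish --- is that PROPx is equivalent to ``$v_i(A_i) \ge w_i v_i([m])$ or $v_i(A_i \cup \{g\}) > w_i v_i([m])$ for \emph{every} $g \in [m] \setminus A_i$'' (equivalently, for the $g$ minimizing $v_i(A_i \cup \{g\})$, as you yourself note). The statement as printed says ``for \emph{some} $g$'', which is strictly weaker and makes the claim false as literally written: take two equally entitled agents, additive $v_i$ with item values $10$, $1$, $1$, and let $A_i$ contain a single item of value $1$. Then $w_i v_i([m]) = 6$ and $v_i(A_i \cup \{g\}) = 11 > 6$ when $g$ is the large item, so the ``for some'' condition holds, yet adding the other small item gives value $2 < 6$, so $A$ is not PROPx-fair to $i$. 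Your reverse direction is precisely where this surfaces: you lift the singleton bound only to sets ``containing such a $g$'', but PROPx quantifies over all non-empty $S$, including those avoiding the witness. Under the ``for every'' reading the step closes (every non-empty $S$ contains some $g$, and each such $g$ works); under the literal ``for some'' reading it does not, and no argument can close it. So either prove the ``for every'' version and say so explicitly (this is what the analogous \cref{thm:efx-equiv-positive} amounts to), or flag the discrepancy in the statement --- but do not let the reverse direction silently restrict to sets containing the witness.
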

\begin{proof}[Proof sketch]
In \cref{defn:propx}, we need to find $S \subseteq [m] \setminus A_i$ such that
$v_i(S \mid A_i) > 0$ and $v_i(S \mid A_i)$ is minimized.
Since marginals are positive, we require $|S| = 1$.
\end{proof}

\begin{lemma}
\label[lemma]{thm:propx-equiv-negative}
In the fair division instance $\Ical \defeq \fdInst{[n]}{[m]}{(v_i)_{i=1}^n}{w}$,
if all marginals are negative for agent $i$
(i.e., $v_i(c \mid R) < 0$ for all $R \subseteq [m]$ and $c \in [m] \setminus R$),
then $A$ is PROPx-fair to agent $i$ iff $-v_i(A_i) \le w_i\cdot(-v_i([m]))$ or
$-v_i(A_i \setminus \{c\}) < w_i \cdot (-v_i([m]))$ for some $c \in A_i$.
\end{lemma}
\begin{proof}[Proof sketch]
In \cref{defn:propx}, we need to find $S \subseteq A_i$ such that
$-v_i(S \mid A_i \setminus S) > 0$ and $-v_i(S \mid A_i \setminus S)$ is minimized.
Since marginals are negative, we require $|S| = 1$.
\end{proof}

\begin{lemma}
\label[lemma]{thm:propx-equiv-submod-goods}
In the fair division instance $\Ical \defeq \fdInst{[n]}{[m]}{(v_i)_{i=1}^n}{w}$,
if $v_i$ is submodular and all marginals are non-negative for agent $i$
(i.e., $v_i(g \mid R) \ge 0$ for all $R \subseteq [m]$ and $g \in [m] \setminus R$),
then $A$ is PROPx-fair to agent $i$ iff $v_i(A_i) \ge w_iv_i([m])$ or
$v_i(A_i \cup \{g\}) > w_iv_i([m])$ for some $g \in [m] \setminus A_i$ such that $v_i(g \mid A_i) > 0$.
\end{lemma}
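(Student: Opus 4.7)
The plan is to follow the same two-step pattern as in \cref{thm:propx-equiv-positive,thm:propx-equiv-negative}, using submodularity to bridge the gap created by allowing individual marginals to be zero. First I would observe that non-negative marginals make $v_i$ monotone non-decreasing, so no nonempty $S \subseteq A_i$ satisfies $v_i(S \mid A_i \setminus S) < 0$. Hence the second condition of \cref{defn:propx} is vacuous, and PROPx-fairness of $A$ for $i$ collapses to the disjunction of $v_i(A_i) \ge w_i v_i([m])$ and the statement that $v_i(A_i \cup S) > w_i v_i([m])$ for every $S \subseteq [m] \setminus A_i$ with $v_i(S \mid A_i) > 0$.

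Next I would reduce this universal statement over subsets to one over singletons. The forward direction (specializing $S = \{g\}$ with $v_i(g \mid A_i) > 0$) is immediate. For the converse, suppose the singleton version holds and let $S \subseteq [m] \setminus A_i$ satisfy $v_i(S \mid A_i) > 0$. By \cref{thm:submod-positive-elem} applied to $v_i$ with disjoint sets $S$ and $A_i$, there is some $g \in S$ with $v_i(g \mid A_i) > 0$; monotonicity of $v_i$ then yields $v_i(A_i \cup S) \ge v_i(A_i \cup \{g\}) > w_i v_i([m])$, closing the argument.

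The key ingredient is \cref{thm:submod-positive-elem}, which has already been established and precisely supplies a singleton witness of strictly positive marginal value inside any $S$ of strictly positive marginal value. The rest is routine bookkeeping, so I do not anticipate any substantive obstacle; the proof structurally parallels \cref{thm:propx-equiv-positive}, with submodularity doing the work that strict positivity of marginals did there.
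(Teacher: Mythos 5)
Your argument is correct and is essentially the paper's own proof, fleshed out: drop the (vacuous) chore-side condition and use \cref{thm:submod-positive-elem} together with monotonicity to reduce the good-side condition of \cref{defn:propx} from arbitrary positive-marginal sets to singletons, exactly as the paper's sketch does. One remark: what your converse step actually uses, and what you prove, is the quantification ``for \emph{every} $g \in [m] \setminus A_i$ with $v_i(g \mid A_i) > 0$'' rather than the ``for some $g$'' written in the lemma statement; this universal reading is the right one (under the literal existential reading the equivalence would fail, e.g.\ for additive values $10,1,1$ with $A_i = \emptyset$ and $w_i = 1/2$), and the paper's own sketch establishes the same universal-over-singletons form.
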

\begin{proof}[Proof sketch]
In \cref{defn:propx}, we need to find $S \subseteq [m] \setminus A_i$ such that
$v_i(S \mid A_i) > 0$ and $v_i(S \mid A_i)$ is minimized.
By the subadditivity of $v_i$'s marginal gains (\cref{thm:submod-mg-subadd}),
$v_i(S \mid A_i) > 0$ tells us that $\exists \ghat \in S$ such that $v_i(\ghat \mid A_i) > 0$.
Since $v_i(S \mid A_i) \ge v_i(\ghat \mid A_i)$, we can replace $S$ by $\{\ghat\}$,
so we can assume \wLoG{} that $|S| = 1$.
\end{proof}

\begin{lemma}
\label[lemma]{thm:propx-equiv-submod-chores}
In the fair division instance $\Ical \defeq \fdInst{[n]}{[m]}{(v_i)_{i=1}^n}{w}$,
if $v_i$ is submodular and all marginals are non-positive for agent $i$
(i.e., $v_i(c \mid R) \le 0$ for all $R \subseteq [m]$ and $c \in [m] \setminus R$),
then $A$ is PROPx-fair to agent $i$ iff $-v_i(A_i) \le w_i \cdot (-v_i([m]))$ or
$-v_i(A_i \setminus \{c\}) < w_i \cdot (-v_i([m]))$ for some $c \in A_i$
such that $v_i(c \mid A_i \setminus \{c\}) < 0$.
\end{lemma}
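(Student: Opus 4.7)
The plan is to follow the template set by \cref{thm:efx-equiv-submod-chores} and \cref{thm:propx-equiv-submod-goods}, specializing \cref{defn:propx} to submodular non-positive marginals by reducing the set-valued quantifier over $S$ to a singleton.

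First I would observe that the first sub-condition of \cref{defn:propx} is vacuous here. For any $S \subseteq [m] \setminus A_i$, ordering $S$ arbitrarily as $\{s_1,\ldots,s_k\}$ and writing $v_i(S \mid A_i)$ as the telescoping sum $\sum_{t=1}^{k} v_i(s_t \mid A_i \cup \{s_1,\ldots,s_{t-1}\})$ shows $v_i(S \mid A_i) \le 0$, since each marginal is non-positive. Hence no such $S$ satisfies $v_i(S \mid A_i) > 0$, and only the second sub-condition of PROPx remains.

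Next I would reduce the universal in the second sub-condition (over $S \subseteq A_i$ with $v_i(S \mid A_i \setminus S) < 0$) to the singleton version claimed in the lemma. The forward direction is immediate since singletons are a special case. For the reverse direction, suppose the singleton condition holds and take any $S \subseteq A_i$ with $v_i(S \mid A_i \setminus S) < 0$. By \cref{thm:submod-negative-elem}, submodularity produces some $c \in S$ with $v_i(c \mid A_i \setminus \{c\}) < 0$, so by hypothesis $v_i(A_i \setminus \{c\}) > w_i v_i([m])$. Because $c \in S$ gives $A_i \setminus S \subseteq A_i \setminus \{c\}$, and since non-positive marginals make value non-decreasing under deletions, $v_i(A_i \setminus S) \ge v_i(A_i \setminus \{c\}) > w_i v_i([m])$, which is what PROPx requires for $S$.

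The only non-trivial step is the appeal to \cref{thm:submod-negative-elem}, which exploits submodularity to extract a singleton witnessing the strict negative marginal. This is exactly the chores analogue of how \cref{thm:submod-positive-elem} is used in \cref{thm:propx-equiv-submod-goods}, so no new tools are needed. Rewriting the resulting equivalence in absolute-value form, which is valid since all relevant quantities are non-positive (so $v_i(X) > w_i v_i([m])$ iff $|v_i(X)| < w_i |v_i([m])|$), yields the statement of the lemma.
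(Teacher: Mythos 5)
Your approach is the paper's approach: show the goods clause of \cref{defn:propx} is vacuous under non-positive marginals, then collapse the chores clause to singletons using \cref{thm:submod-negative-elem} plus the fact that $v_i$ is non-increasing under non-positive marginals (your inequality $v_i(A_i \setminus S) \ge v_i(A_i \setminus \{c\})$ for $c \in S$ is exactly what makes the paper's ``minimal $S$'' remark precise), and finally translate to absolute values. So the core argument is correct and matches the paper's proof sketch, just written out in more detail.

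One point needs attention: what your reverse direction actually uses -- and what this reduction genuinely yields -- is the \emph{universal} singleton condition, i.e.\ $|v_i(A_i \setminus \{c\})| < w_i|v_i([m])|$ for \emph{every} $c \in A_i$ with $v_i(c \mid A_i \setminus \{c\}) < 0$, not the ``for some $c$'' appearing in the lemma as stated. If your ``singleton condition'' were the existential one, the step ``by hypothesis $v_i(A_i \setminus \{c\}) > w_i v_i([m])$'' for the particular $c$ extracted from $S$ would not follow, since that $c$ need not be the witnessing chore. And the existential reading is genuinely false under the lemma's hypotheses: with two equally entitled agents, additive disutilities, $A_i$ consisting of chores of disutility $2$ and $10$, and the remaining chores totalling disutility $6$ (proportional share $-9$), dropping the $10$-chore witnesses the ``some'' condition, yet dropping the $2$-chore leaves value $-10 < -9$, so condition 2 of \cref{defn:propx} fails and $A$ is not PROPx-fair. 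So your proof is right, but it proves (and should be stated as) the universal formulation; make that explicit rather than identifying it with the ``for some'' phrasing.
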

\begin{proof}[Proof sketch]
In \cref{defn:propx}, we need to find $S \subseteq A_i$ such that
$-v_i(S \mid A_i \setminus S) > 0$ and $-v_i(S \mid A_i \setminus S)$ is minimized.
By the superadditivity of $v_i$'s marginal losses (\cref{thm:submod-ml-superadd}),
$-v_i(S \mid A_i \setminus S) > 0$ tells us that $\exists \chat \in S$ such that
$-v_i(\chat \mid A_i \setminus \{\chat\}) > 0$.
Since $-v_i(S \mid A_i \setminus S) \ge -v_i(\chat \mid A_i \setminus \{\chat\})$,
we can replace $S$ by $\{\chat\}$, so we can assume \wLoG{} that $|S| = 1$.
\end{proof}

\subsection{PROPm and PROPavg}
\label{sec:notions:propm-propavg}

\begin{definition}[PROPm and PROPavg]
\label[definition]{defn:propm}
\label[definition]{defn:propavg}
Let $A$ be an allocation for the fair division instance $\fdInst{[n]}{[m]}{(v_i)_{i=1}^n}{w}$.
Fix an agent $i$. For every $j \in [n] \setminus \{i\}$,
let $K_j \defeq \{v_i(S \mid A_i): S \subseteq A_j \text{ and } v_i(S \mid A_i) > 0\}$.
Let $\tau_j \defeq 0$ if $K_j = \emptyset$ and $\tau_j \defeq \min(K_j)$ otherwise.
Let $T \defeq \{\tau_j: j \in [n] \setminus \{i\} \textrm{ and } \tau_j > 0\}$.
Define conditions C1, C2, C3, and C4 as follows:
\begin{tightenum}
\item[(C1)]\label{item:propm:prop}$v_i(A_i) \ge w_i \cdot v_i([m])$.
\item[(C2)]\label{item:propm:chores}
    $v_i(A_i \setminus S) > w_i \cdot v_i([m])$ for every $S \subseteq A_i$
    such that $v_i(S \mid A_i \setminus S) < 0$.
\item[(C3)]\label{item:propm:goods}
    $T = \emptyset$, or $v_i(A_i) + \max(T) > w_i \cdot v_i([m])$.
\item[(C4)]\label{item:propavg}
    $T = \emptyset$, or $v_i(A_i) + \frac{1}{|T|}\sum_{\tau_j \in T}\tau_j > w_i \cdot v_i([m])$.
\end{tightenum}
$A$ is PROPm-fair to $i$ if either C1 holds or both C2 and C3 hold.
$A$ is PROPavg-fair to $i$ if either C1 holds or both C2 and C4 hold.
\end{definition}

PROPm was originally defined by \citet{baklanov2021achieving},
and PROPavg was originally defined in \citet{kobayashi2025proportional},
both for the additive goods setting.
Note that PROPm, PROPavg, and PROPx are all equivalent to each other for chores.

The definitions of PROPm and PROPavg simplify under submodular valuations over goods and over chores.
The proof is similar to that of PROPx (\cref{thm:propx-equiv-submod-goods,thm:propx-equiv-submod-chores})
and is based on \cref{thm:submod-mg-subadd,thm:submod-ml-superadd}, so we omit the details here.

The original definitions of PROPm and PROPavg have minor errors, which our definitions fix.
We now explain these errors and motivate why we decided to define them the way have done.

\subsubsection{Defining PROPm}
\label{sec:notions:propm}

PROPm was first defined in \citet{baklanov2021achieving}
for equal entitlements and goods with additive valuations.
Moreover, they claimed that PROPx implies PROPm and PROPm implies PROP1.

According to \citet{baklanov2021achieving}, when dividing a set $[m]$ of goods among $n$ agents,
$A$ is PROPm-fair to agent $i$ if $v_i(A_i) + \max_{j \neq i} m_i(A_j) \ge v_i([m])/n$,
where $m_i(S) \defeq \min_{g \in S} v_i(g)$.
However, \citet{baklanov2021achieving} do not explicitly state what $m_i(\emptyset)$ is.
The well-known convention of $\min(\emptyset) = \infty$ leads to the strange phenomenon
where every agent is PROPm-satisfied if two agents receive no goods
(whereas PROP1 is not guaranteed).
One way to fix this is to only consider agents with non-empty bundles,
i.e., we say that $i$ is PROPm-satisfied by $A$ if
$v_i(A_i) + \max_{j \in [n] \setminus \{i\}: A_j \neq \emptyset} m_i(A_j) \ge v_i([m])/n$.
This does not give a satisfactory definition when $A_j = \emptyset$
for all $j \in N \setminus \{i\}$ (assuming $\max(\emptyset) = -\infty$),
so we define $A$ to be PROPm-fair to $i$ for that edge case.

The above idea, along with extending the definition to mixed manna
in the same way as PROPx (\cref{defn:propx}),
gives us our definition of PROPm (\cref{defn:propm}).
Moreover, the following two example instances
(having 3 equally-entitled agents with identical additive valuations)
guided our definition of PROPm.

\begin{enumerate}
\item Consider three goods of values 100, 10, and 1.
    Intuitively, each agent should get 1 good each, and that should be considered fair.
\item Consider 5 items of values $-100$, $-100$, $-100$, $10$, $1$.
    Intuitively, allocation ($\{-100, 10, 1\}$, $\{-100\}$, $\{-100\}$) should not be fair,
    and allocation ($\{-100, 10\}$, $\{-100, 1\}$, $\{-100\}$) should be fair.
    In both allocations, removing a chore makes an agent PROP-satisfied, so just
    satisfying this condition is not enough. We also need to look at the goods.
\end{enumerate}

For mixed manna, \citet{livanos2022almost} define a notion called PropMX,
but that definition is weak: when all items are goods, every allocation is trivially PropMX.

\citet{baklanov2021propm} showed that for equal entitlements and goods with additive valuations,
a PROPm allocation always exists and can be computed in polynomial time.
It can be verified that their result also works for our definition of PROPm (\cref{defn:propm}).

\subsubsection{Defining PROPavg}
\label{sec:notions:propavg}

PROPavg was first defined by \citet{kobayashi2025proportional} for additive goods.
Moreover, they claimed that PROPx implies PROPavg and PROPavg implies PROPm.
However, their former claim is incorrect for their definition.

Their definition of PROPavg is equivalent to ours for equal entitlements and additive goods,
except that in condition C4, they use $\Sum(T)/(n-1)$ instead of $\Sum(T)/|T|$.
Consider a fair division instance with three equally-entitled agents and two identical goods.
Let $A$ be an allocation where agent 1 gets all goods and the other agents get nothing.
This allocation is PROPx and PROPm, but is not PROPavg by \citet{kobayashi2025proportional}'s definition.

We believe that finding a feasible fairness notion in between PROPx and PROPm is well-motivated,
so we slightly change \citet{kobayashi2025proportional}'s definition to ensure that PROPx implies PROPavg.
Also, for additive goods, when no agent gets an empty bundle in an allocation,
then $|T| = n-1$, so both definitions are the same.

\section{Proofs of Implications}
\label{sec:impls-extra}

\subsection{Among Derived Notions}
\label{sec:impls-extra:among-derived}

\begin{remark}
\label[remark]{thm:impl:epistemic}
For any fairness notion $F$, if an allocation is $F$-fair to an agent $i$,
then it is also epistemic-$F$-fair to agent $i$.
If an allocation is epistemic-$F$-fair to an agent $i$,
then it is also minimum-$F$-share-fair to agent $i$.
If there are only two agents, then an allocation is epistemic-$F$-fair to an agent $i$
iff it is $F$-fair to agent $i$.
\end{remark}

\begin{remark}
\label[remark]{thm:impl:groupwise}
For any fairness notion $F$, if an allocation is groupwise-$F$-fair to an agent $i$,
then it is also pairwise-$F$-fair to agent $i$ and $F$-fair to agent $i$.
When there are only two agents, all three of these notions are equivalent.
\end{remark}

\begin{lemma}
\label[lemma]{thm:impl:ext-to-epistemic}
Let $\Omega$ be a set containing pairs of the form $(\Ical, A)$,
where $\Ical$ is a fair division instance and $A$ is an allocation for $\Ical$.
For any two fairness notions $F_1$ and $F_2$, let $\phi(F_1, F_2)$ be the proposition
``\,$\forall (\Ical, A) \in \Omega$, for every agent $i$ in $\Ical$,
$A$ is $F_2$-fair to $i$ whenever $A$ is $F_1$-fair to $i$". Then
$\phi(F_1, F_2)$ implies $\phi(\textrm{epistemic-}F_1, \textrm{epistemic-}F_2)$
and $\phi(\textrm{min-}F_1\textrm{-share}, \textrm{min-}F_2\textrm{-share})$.
\end{lemma}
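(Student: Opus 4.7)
The plan is to unpack the two epistemic/min-share definitions and invoke the hypothesis $\phi(F_1, F_2)$ on the certificate allocation. Both implications have parallel structure: the certificate $B$ witnesses $F_1$-fairness at agent $i$; applying $\phi(F_1, F_2)$ upgrades this to $F_2$-fairness at $i$; and then the same $B$, unchanged, serves as the new $F_2$-certificate.

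For the epistemic half, I would fix an arbitrary pair $(\Ical, A) \in \Omega$ and an agent $i$ for which $A$ is epistemic-$F_1$-fair. By \cref{defn:epistemic}, there is an allocation $B$ with $B_i = A_i$ that is $F_1$-fair to $i$ in $\Ical$. Invoking $\phi(F_1, F_2)$ at $(\Ical, B)$ yields that $B$ is also $F_2$-fair to $i$. Since $B_i = A_i$ is preserved, $B$ is an epistemic-$F_2$-certificate for $A$, so $A$ is epistemic-$F_2$-fair to $i$, as required.

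For the min-share half, the argument is a direct analogue. Fix $(\Ical, A) \in \Omega$ and $i$ such that $A$ is min-$F_1$-share-fair to $i$. By \cref{defn:minfs}, there is some $B \in \Acal(\Ical, F_1, i)$ with $v_i(A_i) \ge v_i(B_i)$. Applying $\phi(F_1, F_2)$ at $(\Ical, B)$ gives $B \in \Acal(\Ical, F_2, i)$, while the value inequality $v_i(A_i) \ge v_i(B_i)$ is untouched. Hence $\minFS(\Ical, F_2, i) \le v_i(B_i) \le v_i(A_i)$, so $A$ is min-$F_2$-share-fair to $i$.

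The only real subtlety --- and the main obstacle worth flagging --- is that invoking $\phi(F_1, F_2)$ on the certificate requires $(\Ical, B) \in \Omega$, whereas a priori we only know $(\Ical, A) \in \Omega$. In the fair division settings of this paper, $\Omega$ is determined by instance-level features (valuation type, marginal signs, entitlements, number of agents, identicality), so membership depends only on $\Ical$ and not on the allocation; in particular, $(\Ical, A) \in \Omega$ implies $(\Ical, B) \in \Omega$ for every allocation $B$ of $\Ical$. I would either state this closure property of $\Omega$ explicitly at the outset of the proof, or equivalently interpret $\phi(F_1, F_2)$ as a statement that quantifies over all allocations of any instance whose some allocation is in $\Omega$; with that in hand, the two implications follow immediately from the arguments above.
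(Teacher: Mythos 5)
Your proof is correct and follows essentially the same route as the paper's: take the epistemic (resp.\ min-share) certificate $B$, apply $\phi(F_1,F_2)$ to the pair $(\Ical,B)$ to upgrade $B$ from $F_1$-fair to $F_2$-fair to agent $i$, and note that the same $B$ then serves as the $F_2$-certificate since $B_i = A_i$ (resp.\ $v_i(A_i)\ge v_i(B_i)$) is unchanged. The closure caveat you flag --- that $(\Ical,B)\in\Omega$ is needed, not just $(\Ical,A)\in\Omega$ --- is left implicit in the paper's proof but does hold in its framework, since the sets conditioned on are determined by instance-level features only (cf.\ \cref{sec:fd-set-family}), so making it explicit is a harmless strengthening rather than a deviation.
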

\begin{proof}[Proof sketch]
If $\phi(F_1, F_2)$, then an epistemic-$F_1$-certificate is also an epistemic-$F_2$-certificate,
and a min-$F_1$-share-certificate is also a min-$F_2$-share-certificate.
\end{proof}

\subsection{Among EF, EFX, EF1}
\label{sec:impls-extra:among-ef-efx-ef1}

Here we look at implications among EF, EFX, EF1, and their epistemic variants.

\begin{remark}[EF $\fimplies$ EFX+EF1]
\label[remark]{thm:impl:ef-to-efx+ef1}
If an allocation is EF-fair to agent $i$, then it is also EFX-fair and EF1-fair to $i$.
\end{remark}

Because of how we define EFX (\cref{defn:efx}), EFX does not always imply EF1
(see \cref{cex:gaps-not-ef1-prop1-subadd} for a counterexample with subadditive valuations).
However, EFX $\fimplies$ EF1 for many common settings, as the following lemma shows.

\begin{lemma}[EFX $\fimplies$ EF1]
\label[lemma]{thm:impl:efx-to-ef1}
For the fair division instance $\fdInst{[n]}{[m]}{(v_i)_{i=1}^n}{w}$,
let allocation $A$ be EFX-fair to agent $i$.
Then agent $i$ is EF1-satisfied in these scenarios:
\begin{tightenum}
\item $v_i$ is additive.
\item $v_i$ is doubly strictly monotone ($\exists$ a partition $(G, C)$ of $[m]$ s.t.
    $v_i(g|\cdot) > 0$ $\forall g \in G$, and $v_i(c|\cdot) < 0$ $\forall c \in C$).
\item Agents have equal entitlements, $v_i$ is submodular, and all items are goods for agent $i$.
\item $v_i$ is submodular and all items are chores for agent $i$.
\end{tightenum}
\end{lemma}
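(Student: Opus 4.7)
The plan is to fix $j \in [n] \setminus \{i\}$, assume that $i$ envies $j$ (otherwise EF1 holds trivially for this pair), and exhibit either a single $g \in A_j$ whose removal eliminates envy or a single $c \in A_i$ whose removal eliminates envy. In each case, the work reduces to collapsing the subset $S$ appearing in the EFX definition (\cref{defn:efx}) to a singleton. For case 1 (additive valuations) I proceed by dichotomy: if some $g \in A_j$ satisfies $v_i(g) > 0$, then $S = \{g\}$ witnesses $v_i(S \mid A_i) > 0$, so the EFX bound immediately gives $v_i(A_i)/w_i \ge v_i(A_j \setminus \{g\})/w_j$; otherwise $v_i(g) \le 0$ for all $g \in A_j$, hence $v_i(A_j) \le 0$ by additivity, envy forces $v_i(A_i) < 0$, and additivity then yields some $c \in A_i$ with $v_i(c) < 0$, making $S = \{c\}$ witness the chore side of EFX.

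Case 2 (doubly strictly monotone valuations) runs through the same dichotomy, using that any $g \in A_j \cap G$ automatically satisfies $v_i(g \mid A_i) > 0$ and any $c \in A_i \cap C$ satisfies $v_i(c \mid A_i \setminus \{c\}) < 0$; the split is $A_j \cap G \neq \emptyset$ versus $A_j \subseteq C$, with the latter implying $v_i(A_j) \le 0$ and being handled as in case 1. For cases 3 and 4 I invoke the simplified EFX characterizations of \cref{thm:efx-equiv-submod-goods,thm:efx-equiv-submod-chores}. In case 3, if some $g \in A_j$ has $v_i(g \mid A_i) > 0$ then the simplified inequality gives EF1 directly for that $g$; otherwise $v_i(g \mid A_i) = 0$ for every $g \in A_j$, so submodularity yields $v_i(A_j \mid A_i) \le \sum_{g \in A_j} v_i(g \mid A_i) = 0$, which together with non-negative marginals forces $v_i(A_i \cup A_j) = v_i(A_i)$, and monotonicity then gives $v_i(A_j) \le v_i(A_i)$; equal entitlements turn this into the absence of envy, contradicting our assumption. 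Case 4 is parallel: if some $c \in A_i$ has a strict negative marginal over $A_i \setminus \{c\}$, simplified EFX gives EF1; otherwise \cref{thm:submod-negative-elem} applied to $S = A_i$ forces $v_i(A_i) \ge 0$, which combined with non-positive marginals yields $v_i(A_i) = 0$, and envy would then require $v_i(A_j) > 0$, contradicting that all items are chores.

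I expect the zero-marginal subcase of case 3 to be the main obstacle, because it is the one subcase where the conclusion $v_i(A_j) \le v_i(A_i)$ is insufficient on its own to preclude envy. Only the equal-entitlements hypothesis converts this value comparison into the required per-entitlement comparison, and this hypothesis is genuinely needed: with $w_j < w_i$, one can construct submodular-goods instances in which $v_i(g \mid A_i) = 0$ for every $g \in A_j$ yet $v_i(A_j)/w_j > v_i(A_i)/w_i$ and no single good's removal eliminates envy. The chore analogue in case 4 escapes this difficulty because non-positive marginals combined with envy yield $v_i(A_j) > 0$, an immediate contradiction that does not depend on the entitlements.
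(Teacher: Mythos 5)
Your proposal is correct and is essentially the paper's argument in contrapositive-free form: both hinge on the fact that singletons are admissible EFX witnesses, and when no positive-marginal good in $A_j$ (resp.\ negative-marginal chore in $A_i$) exists, the same telescoping/subadditivity facts (\cref{thm:submod-positive-elem,thm:submod-negative-elem,thm:submod-marginal-is-submod}, or additivity/strict double monotonicity) force $v_i(A_i) \ge v_i(A_j)$ or $v_i(A_i) \ge 0 \ge v_i(A_j)$, contradicting envy exactly as in the paper's proof by contradiction. Your closing observation that equal entitlements are genuinely needed in the submodular-goods case (and not in the chores case) is also accurate and consistent with the paper's hypotheses.
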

\begin{proof}
Suppose $i$ is EFX-satisfied but EF1-envies agent $j$.
Since $i$ EF1-envies $j$, for all $t \in A_j$, we have
\[ \frac{v_i(A_i)}{w_i} < \frac{v_i(A_j \setminus \{t\})}{w_j}. \]
If $v_i(t \mid A_i) > 0$ for some $t \in A_j$, then $i$ would EFX-envy $j$,
so we get $v_i(t \mid A_i) \le 0$ for all $t \in A_j$.

Since $i$ EF1-envies $j$, for all $t \in A_i$, we have
\[ \frac{v_i(A_i \setminus \{t\})}{w_i} < \frac{v_i(A_j)}{w_j}. \]
If $v_i(t \mid A_i \setminus \{t\}) < 0$ for some $t \in A_i$, then $i$ would EFX-envy $j$,
so we get $v_i(t \mid A_i \setminus \{t\}) \ge 0$ for all $t \in A_i$.

If $v_i$ is additive, we get $v_i(A_i) \ge 0 \ge v_i(A_j)$, which is a contradiction.
If $v_i$ is doubly strictly monotone, then
all items in $A_j$ are chores and all items in $A_i$ are goods.
Hence, $v_i(A_i) \ge 0 \ge v_i(A_j)$, which is a contradiction.

Suppose agents have equal entitlements, all items are goods for $i$, and $v_i$ is submodular.
By the subadditivity of marginal gains (\cref{thm:submod-mg-subadd}), we get
\[ v_i(A_j \mid A_i) \le \sum_{g \in A_j} v_i(g \mid A_i) \le 0. \]
Hence, $v_i(A_j) \le v_i(A_i \cup A_j) = v_i(A_i) + v_i(A_j \mid A_i) \le v_i(A_i)$,
which is a contradiction.

Suppose all items are chores for $i$ and $v_i$ is submodular.
By the superadditivity of marginal losses (\cref{thm:submod-ml-superadd}),
\[ v_i(A_i) \ge \sum_{c \in A_i} v_i(c \mid A_i \setminus \{c\}) \ge 0. \]
Hence, $v_i(A_i) \ge 0 \ge v_i(A_j)$, which is a contradiction.

A contradiction implies that it is impossible for agent $i$ to be
EFX-satisfied but not EF1-satisfied.
\end{proof}

\begin{lemma}[MXS $\fimplies$ EF1 for $n=2$]
\label[lemma]{thm:impl:mxs-to-ef1-n2}
Let $\fdInst{[2]}{[m]}{(v_1, v_2)}{w}$ be a fair division instance with indivisible items.
If $v_1$ is additive and agent 1 is MXS-satisfied by allocation $A$,
then agent 1 is also EF1-satisfied by $A$.
\end{lemma}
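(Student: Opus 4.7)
The plan is to prove the contrapositive: assuming agent 1 is not EF1-satisfied by $A$, I will show that no EFX allocation $B$ (for agent 1) can satisfy $v_1(B_1) \le v_1(A_1)$, contradicting MXS. Throughout, let $V = v_1([m])$, and for any allocation $X$, recall that additivity gives $v_1(X_1) + v_1(X_2) = V$.

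The first step is to reduce to the envy case: if $v_1(A_1)/w_1 \ge v_1(A_2)/w_2$ then EF1 holds trivially, so assume envy in $A$. Given any EFX allocation $B$ for agent 1 with $v_1(B_1) \le v_1(A_1)$ (and consequently $v_1(A_2) \le v_1(B_2)$), I first show envy also holds in $B$ via the chain $w_2 v_1(B_1) \le w_2 v_1(A_1) < w_1 v_1(A_2) \le w_1 v_1(B_2)$. Since $v_1$ is additive, $B$ being EFX is equivalent (by Lemma~\ref{thm:efx-equiv-positive}-style reasoning for the respective item classes) to the singleton conditions: for every $g \in B_2$ with $v_1(g) > 0$, $v_1(B_1)/w_1 \ge (v_1(B_2) - v_1(g))/w_2$, and symmetrically for every $c \in B_1$ with $v_1(c) < 0$, $(v_1(B_1) - v_1(c))/w_1 \ge v_1(B_2)/w_2$.

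The easy main case is when $g^* := \arg\max_{g \in A_2} v_1(g)$ satisfies $g^* \in B_2$ and $v_1(g^*) > 0$: the EFX inequality on $B$ combined with $v_1(A_1) \ge v_1(B_1)$ and $v_1(B_2) \ge v_1(A_2)$ chains into $v_1(A_1)/w_1 \ge v_1(B_1)/w_1 \ge (v_1(B_2)-v_1(g^*))/w_2 \ge (v_1(A_2)-v_1(g^*))/w_2$, establishing EF1 via removing $g^*$ from $A_2$. A symmetric argument handles the case where the minimum-valued item $c^* \in A_1$ satisfies $c^* \in B_1$ and $v_1(c^*) < 0$.

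The delicate remaining case is when $g^* \in B_1$ (or $v_1(g^*) \le 0$) and $c^* \in B_2$ (or $v_1(c^*) \ge 0$)---this is the main obstacle. Here I plan to use the full EFX condition with non-singleton subsets. Let $\gamma = v_1(A_2) - (w_2/w_1) v_1(A_1) > 0$. Failure of EF1 gives $v_1(g) < \gamma$ for all $g \in A_2$ and $v_1(c) > -\gamma w_1/w_2$ for all $c \in A_1$. On the other hand, EFX on $B$ (applied to singletons) forces every positively-valued item of $B_2$ to have value $\ge \gamma_B := v_1(B_2) - (w_2/w_1)v_1(B_1) \ge \gamma$, and every negatively-valued item of $B_1$ to have value $\le -\gamma_B w_1/w_2 \le -\gamma w_1/w_2$. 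Consequently every positively-valued item of $B_2$ lies in $A_1$ and every negatively-valued item of $B_1$ lies in $A_2$. Writing $P = A_1 \cap B_1$, $Q = A_1 \cap B_2$, $R = A_2 \cap B_1$, $S = A_2 \cap B_2$, this means every element of $P$ is non-negative and every element of $S$ is non-positive. Combined with the identity $v_1(A_1) - v_1(B_1) = v_1(Q) - v_1(R) \ge 0$ from MXS, and applying EFX on $B$ with the subset $T = Q$ (when $v_1(Q) > 0$) or the chore-symmetric subset $T = R_{-}$ (the negative part of $R$), I expect to derive $v_1(B_1) > v_1(A_1)$, contradicting MXS. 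The main obstacle is pinning down this final inequality cleanly across all sign configurations of $v_1(Q), v_1(R), v_1(S)$; the contradiction should ultimately exploit that the ``big'' positive items pushed out of $A_2$ into $Q$ in $B$ cannot be balanced by the ``small'' items remaining in $R$ without violating either EFX on $B$ or $v_1(B_1) \le v_1(A_1)$.
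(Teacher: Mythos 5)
Your overall strategy is the same as the paper's: pass to the MXS-certificate $B$ (EFX-fair to agent 1 with $v_1(B_1)\le v_1(A_1)$, hence $v_1(B_2)\ge v_1(A_2)$ by additivity), transfer agent 1's envy from $A$ to $B$, and play the EF1-failure thresholds in $A$ against the singleton EFX inequalities in $B$ to conclude that every positively-valued item of $B_2$ lies in $A_1$ and every negatively-valued item of $B_1$ lies in $A_2$; the paper states exactly these facts as $A_2\cap G\subseteq B_1$ and $B_1\cap C\subseteq A_2$. Up to that point your derivation is sound (two nits: for additive mixed manna EFX is not \emph{equivalent} to the singleton conditions, since a subset mixing a good and a chore can have small positive value, but you only use the direction EFX $\Rightarrow$ singleton conditions, which is all that is needed; and your easy/delicate case split is redundant, because the containment argument never uses the case hypothesis).

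The gap is that you stop exactly where the proof ends and mislabel the remaining step as the ``main obstacle'' needing non-singleton EFX applications and a sign analysis of $Q,R,S$. From precisely the facts you already have, the contradiction is immediate: every element of $P=A_1\cap B_1$ is non-negative and every element of $S=A_2\cap B_2$ is non-positive, so
$v_1(A_2)=v_1(R)+v_1(S)\le v_1(R)\le v_1(P)+v_1(R)=v_1(B_1)\le v_1(A_1)$,
while envy gives $v_1(A_1)<v_1(A_2)$ for equal entitlements --- done. This is exactly the paper's closing computation, phrased via the containments. One caveat partly in your favor: the strict inequality $v_1(A_1)<v_1(A_2)$ follows from envy only when $w_1=w_2$; the paper's written proof asserts it from weighted envy, which is not valid in general, and indeed the claim appears to fail for unequal entitlements (e.g., $w=(0.9,0.1)$, goods of values $2,1,1$, $A=(\{2\},\{1,1\})$: agent 1 EF1-envies agent 2, yet $B=(\{1,1\},\{2\})$ is EFX-fair to agent 1 with $v_1(B_1)=v_1(A_1)$, so $A$ is MXS-fair to her). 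So your sense that the weighted sign configurations are delicate is not misplaced, but no continuation of your plan can close that case either; for equal entitlements the two-line finish above completes your argument, and no further EFX machinery is needed.
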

\begin{proof}
Suppose $A$ is MXS-fair to agent 1 but not EF1-fair to her.
Then agent 1 envies agent 2 in $A$, so $v_1(A_1) < v_1(A_2)$.
Let $B$ be agent 1's MXS-certificate for $A$. Then $v_1(B_1) \le v_1(A_1)$.
Moreover, $v_1(A_2) = v_1([m]) - v_1(A_1) \le v_1([m]) - v_1(B_1) = v_1(B_2)$.
Hence, we get $v_1(B_1) \le v_1(A_1) < v_1(A_2) \le v_1(B_2)$.

Let $G \defeq \{g \in [m]: v_1(g) > 0\}$ and $C \defeq \{c \in [m]: v_1(c) < 0\}$.
Let $\max(\emptyset) \defeq -\infty$ and $\min(\emptyset) \defeq \infty$.

Since agent 1 is EFX-satisfied by $B$ and not EF1-satisfied by $A$,
for every $\ghat \in A_2$, we get
\begin{align*}
& \frac{v_1(A_2) - v_1(\ghat)}{w_2} > \frac{v_1(A_1)}{w_1}
\ge \frac{v_1(B_1)}{w_1}
\wrapIfTwoCols \ge \frac{1}{w_2}\left(v_1(B_2) - \min_{g \in B_2 \cap G} v_1(g)\right)
\\ &\ge \frac{1}{w_2}\left(v_1(A_2) - \min_{g \in B_2 \cap G} v_1(g)\right).
\end{align*}
Hence, for every $\ghat \in A_2$, we get $v_1(\ghat) < \min_{g \in B_2 \cap G} v_1(g)$.
Hence, $A_2 \cap G$ and $B_2 \cap G$ are disjoint, so $A_2 \cap G \subseteq B_1 \cap G$.
Let $d_i \defeq -v_i$ for all $i$. Then for every $\chat \in A_1$, we get
\begin{align*}
& \frac{d_1(A_1) - d_1(\chat)}{w_1} > \frac{d_1(A_2)}{w_2}
\ge \frac{d_1(B_2)}{w_2}
\wrapIfTwoCols \ge \frac{1}{w_1}\left(d_1(B_1) - \min_{c \in B_1 \cap C} d_1(c)\right)
\\ &\ge \frac{1}{w_1}\left(d_1(A_1) - \min_{c \in B_1 \cap C} d_1(c)\right).
\end{align*}
Hence, for every $\chat \in A_1$, we have $d_1(\chat) < \min_{c \in B_1 \cap C} d_1(c)$.
Hence, $A_1 \cap C$ and $B_1 \cap C$ are disjoint, so $B_1 \cap C \subseteq A_2 \cap C$.
Hence,
\begin{align*}
v_1(A_2) &= v_1(A_2 \cap G) - d_1(A_2 \cap C)
\wrapIfTwoCols \le v_1(B_1 \cap G) - d_1(B_1 \cap C) = v_1(B_1),
\end{align*}
which is a contradiction.
Hence, it's not possible for $A$ to be MXS-fair to agent 1 but not EF1-fair to her.
\end{proof}

We will now show that every MXS allocation is also M1S
when marginal values are \emph{doubly monotone}.
We first state a simple observation on how EFX and EF1 are related.
Then we use that as the basis for an iterative algorithm
for computing an agent's M1S certificate using their MXS certificate.

\begin{observation}
\label[observation]{thm:special-efx-is-ef1}
Let $\fdInst{[n]}{[m]}{(v_i)_{i=1}^n}{w}$ be a fair division instance
where $v_i$ is $(G, C)$-doubly-monotone for some agent $i$ (i.e., $[m] = G \cup C$,
$v_i(g \mid \cdot) \ge 0$ $\forall g \in G$, and $v_i(c \mid \cdot) \le 0$ $\forall c \in C$).
If all of the following hold for an allocation $A$, then $A$ is EF1-fair to agent $i$.
\begin{tightenum}
\item $A$ is EFX-fair to agent $i$.
\item $v_i(g \mid A_i) > 0$ for all $g \in G \setminus A_i$.
\item $v_i(c \mid A_i \setminus \{c\}) < 0$ for all $c \in A_i \cap C$.
\end{tightenum}
\end{observation}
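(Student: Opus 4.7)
The plan is to fix an agent $j \in [n] \setminus \{i\}$, assume agent $i$ envies $j$ in $A$ (otherwise EF1 towards $j$ is automatic), and then show that hypotheses (b) and (c) force the subset-witnesses appearing in the two EFX conditions of \cref{defn:efx} to be achievable by singletons. Applied with a single good taken from $A_j$ or a single chore taken from $A_i$, this yields exactly the inequality required by EF1.

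The case split is on whether $A_j$ contains a good or $A_i$ contains a chore. First, suppose $A_j \cap G \neq \emptyset$ and pick any $g \in A_j \cap G$. Since $A_i \cap A_j = \emptyset$, we have $g \in G \setminus A_i$, so by hypothesis (b), $v_i(g \mid A_i) > 0$. Then $S = \{g\}$ is feasible in the set in condition~1 of \cref{defn:efx}, giving $v_i(A_i)/w_i \ge v_i(A_j \setminus \{g\})/w_j$, which is an EF1 witness. Otherwise $A_j \cap G = \emptyset$; if $A_i \cap C \neq \emptyset$, pick any $c \in A_i \cap C$, use hypothesis (c) to see that $S = \{c\}$ is feasible in condition~2 of \cref{defn:efx}, and obtain $v_i(A_i \setminus \{c\})/w_i \ge v_i(A_j)/w_j$, another EF1 witness.

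The only remaining case is $A_j \cap G = \emptyset$ and $A_i \cap C = \emptyset$, i.e., $A_j \subseteq C$ and $A_i \subseteq G$. A telescoping sum using the doubly-monotone property of $v_i$ shows $v_i(A_j) \le 0$ and $v_i(A_i) \ge 0$, so $v_i(A_i)/w_i \ge 0 \ge v_i(A_j)/w_j$, contradicting the assumption that $i$ envies $j$; hence this case cannot occur. There is no real obstacle here: conditions (b) and (c) were designed precisely to let the subset-maxima in EFX collapse to singletons whenever $A_j$ holds any good or $A_i$ holds any chore, and double monotonicity kills envy in the residual case.
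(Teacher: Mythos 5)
Your proof is correct, and it is exactly the intended reasoning behind this statement, which the paper leaves unproved as an Observation: hypotheses (b) and (c) make singletons feasible witnesses in the two conditions of \cref{defn:efx}, yielding the EF1 inequalities directly, while double monotonicity rules out envy in the residual case $A_j \subseteq C$, $A_i \subseteq G$. Nothing further is needed.
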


\begin{algorithm}[htb]
\caption{$\improve_i(Z)$: Here $Z$ is an allocation for the fair division instance
$([n], G \cup C, (v_j)_{j=1}^n, w)$.}
\label{algo:mxs-to-m1s}
\begin{algorithmic}[1]
\While{\texttt{true}}
    \If{$\exists j \in [n] \setminus \{i\}$ and $\exists g \in G \cap Z_j$ s.t. $v_i(g \mid Z_i) = 0$}
        \State Set $Z_i = Z_i \cup \{g\}$ and $Z_j = Z_j \setminus \{g\}$.
    \ElsIf{$\exists c \in Z_i \cap C$ s.t. $v_i(c \mid Z_i \setminus \{c\}) = 0$}
        \State Set $Z_i = Z_i \setminus \{c\}$.
        \State Pick any $j \in [n] \setminus \{i\}$. Set $Z_j = Z_j \cup \{c\}$.
    \Else\label{alg-line:mxs-to-m1s:else}
        \State \label{alg-line:mxs-to-m1s:return}\Return $Z$.
    \EndIf
\EndWhile
\end{algorithmic}
\end{algorithm}

\begin{lemma}
\label[lemma]{thm:impl:mxs-to-m1s}
Let $\fdInst{[n]}{[m]}{(v_i)_{i=1}^n}{w}$ be a fair division instance
where $v_i$ is $(G, C)$-doubly-monotone for some agent $i$ (i.e., $[m] = G \cup C$,
$v_i(g \mid \cdot) \ge 0$ $\forall g \in G$, and $v_i(c \mid \cdot) \le 0$ $\forall c \in C$).
If an allocation $A$ is MXS-fair to agent $i$,
then it is also M1S-fair to agent $i$ if $n=2$ or $G = \emptyset$.
\end{lemma}
\begin{proof}
\WLoG{}, let $i=1$. Let $B$ be agent 1's MXS-certificate for $A$.
Then $B$ is EFX-fair to agent 1 and $v_1(B_1) \le v_1(A_1)$.
Define $\Bhat \defeq \improve_1(B)$, where $\improve_1$ is defined in \cref{algo:mxs-to-m1s}.

In any iteration of the while loop in \cref{algo:mxs-to-m1s},
let $X$ be the allocation at the beginning of the iteration.
Either $X$ is returned at line \ref{alg-line:mxs-to-m1s:return},
or the allocation gets modified to $Y$, and (as we will prove soon) the following invariants hold:
\begin{tightenum}
\item If $X$ is EFX-fair to agent 1, then $Y$ is also EFX-fair to agent 1.
\item $|Y_1 \cap G| - |Y_1 \cap C| > |X_1 \cap G| - |X_1 \cap C|$.
\item $v_1(Y_1) = v_1(X_1)$.
\end{tightenum}
The first invariant ensures that $\Bhat$ is also EFX-fair to agent 1.
The second invariant ensures that $\improve_1$ terminates.
Then using \cref{thm:special-efx-is-ef1}, we get that $\Bhat$ is EF1-fair to agent 1.
The third invariant ensures that $v_1(\Bhat_1) = v_1(B_1) \le v_1(A_1)$,
which makes $\Bhat$ an M1S-certificate for allocation $A$ and agent 1.
The second and third invariants are trivial to show.
We now prove the first invariant if $n = 2$ or $G = \emptyset$.

Suppose $Y_1 = X_1 \cup \{g\}$ and $Y_j = X_j \setminus \{g\}$,
where $g \in G \cap X_j$ such that $v_1(g \mid X_1) = 0$.
Since $G \neq \emptyset$, we have $n = j = 2$.
Pick any $S \subseteq Y_2$ such that $v_1(S \mid Y_1) > 0$. Then
\begin{align*}
& v_1(S \cup \{g\} \mid X_1)
\wrapIfTwoCols = v_1(S \cup \{g\} \cup X_1) - v_1(X_1)
\\ &= (v_1(S \cup Y_1) - v_1(Y_1)) + (v_1(Y_1) - v_1(X_1))
\wrapIfTwoCols = v_1(S \mid Y_1) > 0.
\end{align*}
Since $X$ is EFX-fair to agent 1, we get
\[ \frac{v_1(Y_1)}{w_1} = \frac{v_1(X_1)}{w_1}
    \ge \frac{v_1(X_2 \setminus (S \cup \{g\}))}{w_2}
    = \frac{v_1(Y_2 \setminus S)}{w_j}, \]
so 1 doesn't EFX-envy agent $j$. Hence, $Y$ is EFX-fair to 1.

Suppose $Y_1 = X_1 \setminus \{c\}$ and $Y_j = X_j \cup \{c\}$,
where $c \in X_1 \cap C$ such that $v_1(c \mid X_1 \setminus \{c\}) = 0$.
Pick any $S \subseteq Y_1$ such that $v_1(S \mid Y_1 \setminus S) < 0$. Then
\begin{align*}
& v_1(S \cup \{c\} \mid X_1 \setminus (S \cup \{c\}))
\wrapIfTwoCols = v_1(X_1) - v_1(X_1 \setminus (S \cup \{c\}))
\\ &= (v_1(X_1) - v_1(Y_1)) + (v_1(Y_1) - v_1(Y_1 \setminus S))
\wrapIfTwoCols = v_1(S \mid Y_1 \setminus S) < 0.
\end{align*}
Since $X$ is EFX-fair to agent 1, we get that
for any other agent $k \in [n] \setminus \{1\}$,
\[ \frac{v_1(Y_1 \setminus S)}{w_1} = \frac{v_1(X_1 \setminus (S \cup \{c\}))}{w_1}
    \ge \frac{v_1(X_k)}{w_k} \ge \frac{v_1(Y_k)}{w_k}. \]
Hence, agent 1 is EFX-satisfied with $Y$.

Hence, invariant 1 holds, so $\Bhat$ is agent 1's M1S-certificate for allocation $A$.
\end{proof}

\subsection{Among PROP-Based Notions}
\label{sec:impls-extra:among-prop-based}

\begin{lemma}[PROPx $\fimplies$ PROPavg]
\label[lemma]{thm:impl:propx-to-propavg}
In a fair division instance $\fdInst{[n]}{[m]}{(v_i)_{i=1}^n}{w}$,
if an allocation is PROPx-fair to agent $i$, then it is also PROPavg-fair to agent $i$.
\end{lemma}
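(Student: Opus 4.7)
The plan is to unpack both definitions and observe that the PROPx conditions give us everything PROPavg asks for, with a simple averaging argument taking care of the one condition that differs.

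Assume $A$ is PROPx-fair to agent $i$. If $v_i(A_i) \ge w_i v_i([m])$ (condition C1), then $A$ is immediately PROPavg-fair to $i$, so assume this fails. Then by \cref{defn:propx}, both PROPx conditions (i) and (ii) hold. Condition (ii) is literally condition C2 of \cref{defn:propavg}, so it remains only to establish C4.

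If $T = \emptyset$, condition C4 holds vacuously. Otherwise, pick any $\tau_j \in T$. By definition of $\tau_j$, there exists $S_j \subseteq A_j \subseteq [m] \setminus A_i$ with $v_i(S_j \mid A_i) = \tau_j > 0$. Applying PROPx condition (i) to $S_j$ gives
\[ v_i(A_i) + \tau_j = v_i(A_i \cup S_j) > w_i v_i([m]). \]
Hence every term of $T$ satisfies $\tau_j > w_i v_i([m]) - v_i(A_i)$. Averaging these strict inequalities over $\tau_j \in T$ yields
\[ v_i(A_i) + \frac{1}{|T|} \sum_{\tau_j \in T} \tau_j > w_i v_i([m]), \]
which is precisely C4. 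Therefore $A$ is PROPavg-fair to $i$.

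There is no real obstacle here: the only thing to be careful about is that condition C4 averages the $\tau_j$'s rather than summing them, so one cannot directly compare a single PROPx inequality to C4 — but since each individual $\tau_j$ already dominates the shortfall $w_i v_i([m]) - v_i(A_i)$, the average does too. (This is exactly why the ``$/|T|$'' normalization in our \cref{defn:propavg} works, whereas the original ``$/(n-1)$'' normalization of \cite{kobayashi2025proportional} would fail, as discussed in \cref{sec:notions:propavg}.)
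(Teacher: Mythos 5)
Your proof is correct and is essentially the same argument as the paper's: both obtain, for each $\tau_j \in T$, a witness $S_j \subseteq A_j$ with $v_i(S_j \mid A_i) = \tau_j > 0$, apply the PROPx condition to get $v_i(A_i) + \tau_j > w_i v_i([m])$, and average over $T$ to conclude C4 (you argue directly, the paper phrases it as a contradiction, which is an immaterial difference).
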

\begin{proof}
Assume (for the sake of contradiction) that there is an allocation $A$ where
agent $i$ is PROPx-satisfied but not PROPavg-satisfied.
Since $i$ is not PROPavg-satisfied, we get $v_i(A_i) \le w_iv_i([m])$.
Since $i$ is PROPx-satisfied, we get

\begin{itemize}
\item $v_i(A_i \setminus S) > w_iv_i([m])$ for all $S \subseteq A_i$ such that
    $v_i(S \mid A_i \setminus S) < 0$.
\item $v_i(A_i \cup S) > w_iv_i([m])$ for all $S \subseteq [m] \setminus A_i$
    such that $v_i(S \mid A_i) > 0$.
\end{itemize}

Since $i$ is not PROPavg-satisfied, we get that $T \neq \emptyset$
and $v_i(A_i) + \Sum(T)/|T| \le w_iv_i([m])$ ($T$ is defined in \cref{defn:propavg}).
For each $\tau_j \in T$, we have $0 < \tau_j = v_i(S_j \mid A_i)$ for some $S_j \subseteq A_j$.
Since $A$ is PROPx-fair to $i$, we have $v_i(A_i) + \tau_j > w_iv_i([m])$ for all $\tau_j \in T$.
On averaging these inequalities, we get $v_i(A_i) + \Sum(T)/|T| > w_iv_i([m])$,
which implies that $A$ is PROPavg-satisfied, which is a contradiction.
Hence, if $i$ is PROPx-satisfied by $A$, then she is also PROPavg-satisfied by $A$.
\end{proof}

\begin{lemma}[PROPm $\fimplies$ PROP1]
\label[lemma]{thm:impl:propm-to-prop1}
For a fair division instance $\fdInst{[n]}{[m]}{(v_i)_{i=1}^n}{w}$,
if an allocation $A$ is PROPm-fair to agent $i$, then it is also PROP1-fair to agent $i$
if at least one of these conditions holds:
\begin{tightenum}
\item $v_i$ is submodular.
\item $v_i$ is doubly strictly monotone ($\exists$ a partition $(G, C)$ of $[m]$ s.t.
    $v_i(g|\cdot) > 0$ $\forall g \in G$, and $v_i(c|\cdot) < 0$ $\forall c \in C$).
\end{tightenum}
\end{lemma}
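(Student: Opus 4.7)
I will argue by contradiction: suppose $A$ is PROPm-fair to $i$ but not PROP1-fair to $i$. Since condition C1 of \cref{defn:propm} already implies PROP1, C1 must fail, so $v_i(A_i) < w_i v_i([m])$; hence both C2 and C3 hold. The overall plan is to split on whether there exists an $S \subseteq A_i$ with $v_i(S \mid A_i \setminus S) < 0$, and in each case extract either a singleton chore to remove (using C2) or a singleton good to add (using C3).

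Case A (such an $S$ exists): I aim to extract a single chore $c \in A_i$ satisfying $v_i(c \mid A_i \setminus \{c\}) < 0$, since then C2 applied to $\{c\}$ yields $v_i(A_i \setminus \{c\}) > w_i v_i([m])$, the chore-removal branch of PROP1. For submodular $v_i$ this extraction is exactly \cref{thm:submod-negative-elem}. For doubly strictly monotone $v_i$, any $c \in S \cap C$ works, and $S \cap C$ must be nonempty because otherwise $S \subseteq G$ would give $v_i(S \mid A_i \setminus S) \ge 0$.

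Case B (no such $S$ exists): Taking $S = A_i$ gives $v_i(A_i) \ge 0$. I then split on whether the set $T$ in \cref{defn:propm} is empty. If $T \ne \emptyset$, let $\tau_{j^*} = \max(T) = v_i(S^* \mid A_i)$ with $S^* \subseteq A_{j^*}$, and produce a singleton $g \in S^*$ with $v_i(g \mid A_i) > 0$: via \cref{thm:submod-positive-elem} in the submodular case, and by noting that $S^* \not\subseteq C$ (else $v_i(S^* \mid A_i) < 0$) and picking $g \in S^* \cap G$ in the DSM case. Then $\{g\}$ lies in the set defining $K_{j^*}$, so $\tau_{j^*} \le v_i(g \mid A_i)$, giving $v_i(A_i \cup \{g\}) \ge v_i(A_i) + \tau_{j^*} > w_i v_i([m])$, which is PROP1's good-addition branch.

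The main obstacle is the remaining sub-case $T = \emptyset$, where no singleton on either side is available and I must instead contradict the failure of C1. Emptiness of $T$ gives $v_i(g \mid A_i) \le 0$ for every $g \in [m] \setminus A_i$ (for submodular $v_i$ this is the contrapositive of \cref{thm:submod-positive-elem} applied to any subset; for DSM it forces $[m] \setminus A_i \subseteq C$ since any outside good would violate the singleton constraint). Telescoping the marginals along $[m] \setminus A_i$, diminishing returns in the submodular case, and repeated nonpositive chore marginals in the DSM case, both give $v_i([m]) - v_i(A_i) \le 0$. Combined with $v_i(A_i) \ge 0$ and $w_i \in (0,1]$, a brief case split on the sign of $v_i([m])$ shows $w_i v_i([m]) \le v_i(A_i)$ unconditionally, contradicting the failure of C1. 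This last bridge from singleton-level information to the set-level inequality $v_i([m]) \le v_i(A_i)$ is where the structural strength of submodularity or DSM is essential and is the delicate part of the argument.
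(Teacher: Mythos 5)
Your proof is correct and follows essentially the same route as the paper's: contradiction from the failure of PROP1, establishing $v_i(A_i) \ge 0$, then splitting on whether $T$ is empty, using \cref{thm:submod-positive-elem} (or the doubly-strictly-monotone sign argument) to extract a positively-valued singleton when $T \neq \emptyset$, and subadditivity of $v_i(\cdot \mid A_i)$ (telescoping) when $T = \emptyset$. The only cosmetic difference is your Case A, which invokes \cref{thm:submod-negative-elem}; the paper instead observes that C2 together with the failure of PROP1's chore-removal branch directly forces $v_i(c \mid A_i \setminus \{c\}) \ge 0$ for every $c \in A_i$, so that lemma is not needed.
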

\begin{proof}
Suppose allocation $A$ is PROPm-fair to $i$ but not PROP1-fair to $i$. Then
\begin{tightenum}
\item\label{item:impl:propm-to-prop1:unprop}$v_i(A_i) < w_iv_i([m])$ (not PROP1).
\item\label{item:impl:propm-to-prop1:unprop1-chores}$v_i(A_i \setminus \{c\}) \le w_iv_i([m])$
    for all $c \in A_i$ (not PROP1).
\item\label{item:impl:propm-to-prop1:unprop1-goods}$v_i(A_i \cup \{g\}) \le w_iv_i([m])$
    for all $g \not\in A_i$ (not PROP1).
\item\label{item:impl:propm-to-prop1:propm-chores}$v_i(A_i \setminus \{c\}) > w_iv_i([m])$
    for all $c \in A_i$ such that $v_i(c \mid A_i \setminus \{c\}) < 0$ (by PROPm fairness).
\item\label{item:impl:propm-to-prop1:propm-goods}$T = \emptyset$ or $v_i(A_i) + \max(T) > w_iv_i([m])$
    (by PROPm fairness; c.f.~\cref{defn:propm} for the definition of $T$).
\end{tightenum}

By \ref{item:impl:propm-to-prop1:unprop1-chores} and \ref{item:impl:propm-to-prop1:propm-chores},
we get $v_i(c \mid A_i \setminus \{c\}) \ge 0$ for all $c \in A_i$.
If $v_i$ is doubly strictly monotone, then $A_i$ only contains goods, so $v_i(A_i) \ge 0$.
If $v_i$ is submodular, then by superadditivity of $v_i$'s marginal loss (\cref{thm:submod-ml-superadd}), we get
\[ v_i(A_i) \ge \sum_{c \in A_i} v_i(c \mid A_i \setminus \{c\}) \ge 0. \]

Suppose $T = \emptyset$. Then $\tau_j = 0$ for all $j \in [n] \setminus \{i\}$.
Hence, for all $j \in [n] \setminus \{i\}$, we have $v_i(A_j \mid A_i) \le 0$.
If $v_i$ is doubly strictly monotone, then $[m] \setminus A_i$ contains only chores,
so $v_i([m] \setminus A_i \mid A_i) \le 0$. If $v_i$ is submodular, then
$v_i(\cdot \mid A_i)$ is subadditive by \cref{thm:submod-mg-subadd}, so
\[ v_i([m] \setminus A_i \mid A_i)
    \le \sum_{j \in [n] \setminus \{i\}} v_i(A_j \mid A_i) \le 0. \]
Hence, $v_i(A_i) \ge v_i([m])$.
If $v_i([m]) \le 0$, then $v_i(A_i) \ge 0 \ge w_iv_i([m])$,
and if $v_i([m]) \ge 0$, then $v_i(A_i) \ge v_i([m]) \ge w_iv_i([m])$.
This contradicts \ref{item:impl:propm-to-prop1:unprop}, so $T \neq \emptyset$.

Let $\max(T) = \tau_{\jhat} > 0$. By definition of $\tau_{\jhat}$, we get
\begin{align*}
0 < \tau_{\jhat} &= \min(\{v_i(S \mid A_i) \mid S \subseteq A_{\jhat} \textrm{ and } v_i(S \mid A_i) > 0\})
\\ &\le \min(\{v_i(g \mid A_i) \mid g \in A_{\jhat} \textrm{ and } v_i(g \mid A_i) > 0\}).
\end{align*}

\textbf{Case 1}: $v_i(g \mid A_i) \le 0$ for all $g \in A_{\jhat}$.
\\ If $v_i$ is doubly strictly monotone, then $A_{\jhat}$ only has chores,
so $v_i(S \mid A_i) \le 0$ for all $S \subseteq A_{\jhat}$.
This contradicts $\tau_{\jhat} > 0$.
Now let $v_i$ be submodular.
Since $v_i(\cdot \mid A_i)$ is subadditive by \cref{thm:submod-mg-subadd},
for any $S \subseteq A_{\jhat}$, we get
$v_i(S \mid A_i) \le \sum_{c \in S} v_i(c \mid A_i) \le 0$.
This contradicts $\tau_{\jhat} > 0$.

\textbf{Case 2}: $v_i(\ghat \mid A_i) > 0$ for some $\ghat \in A_{\jhat}$.
\\ Then $\max(T) = \tau_{\jhat} \le v_i(\ghat \mid A_i)$.
By \ref{item:impl:propm-to-prop1:propm-goods}, we get
$w_iv_i([m]) < v_i(A_i) + \max(T) \le v_i(A_i \cup \{\ghat\})$.
But this contradicts \ref{item:impl:propm-to-prop1:unprop1-goods}.

Hence, it cannot happen that $i$ is PROPm-satisfied by $A$ but not PROP1-satisfied.
\end{proof}

\subsection{EF vs PROP}
\label{sec:impls-extra:ef-vs-prop}

Here we look at implications between EF (and its epistemic variants) and PROP.

\begin{lemma}[MEFS $\fimplies$ PROP, \citet{bouveret2016characterizing}]
\label[lemma]{thm:impl:mefs-to-prop}
For a fair division instance $\fdInst{[n]}{[m]}{(v_i)_{i=1}^n}{w}$,
if $v_i$ is subadditive and an allocation $A$ is MEFS-fair to $i$, then $A$ is also PROP-fair to $i$.
\end{lemma}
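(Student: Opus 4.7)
The plan is to factor the statement through the MEFS-certificate and then reduce it to the folklore fact that (weighted) envy-freeness implies proportionality for subadditive valuations. Concretely, let $B$ be agent $i$'s minimum-EF-share-certificate for $A$, which exists by hypothesis. Then $B$ is EF-fair to $i$ and $v_i(A_i) \ge v_i(B_i)$. So it suffices to prove that $v_i(B_i) \ge w_i v_i([m])$, since that, combined with the above inequality, immediately yields the desired PROP-fairness of $A$ to agent $i$.

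To prove $v_i(B_i) \ge w_i v_i([m])$, I would use the EF condition on $B$: for every $j \in [n]$, $v_i(B_j)/w_j \le v_i(B_i)/w_i$ (this is trivially an equality for $j = i$), so $v_i(B_j) \le (w_j/w_i)\, v_i(B_i)$. Summing over $j \in [n]$ and using $\sum_j w_j = 1$,
\[
\sum_{j=1}^n v_i(B_j) \;\le\; \frac{v_i(B_i)}{w_i} \sum_{j=1}^n w_j \;=\; \frac{v_i(B_i)}{w_i}.
\]
Since $(B_j)_{j=1}^n$ partitions $[m]$ and $v_i$ is subadditive, an easy induction on $n$ (splitting off one bundle at a time) gives $v_i([m]) \le \sum_{j=1}^n v_i(B_j)$. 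Chaining these two inequalities yields $v_i([m]) \le v_i(B_i)/w_i$, i.e., $v_i(B_i) \ge w_i v_i([m])$, which is what we needed.

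There is no real obstacle here: both steps are routine. The only mildly non-obvious point is that the two-set subadditivity in the definition must be iterated to an $n$-set version, which is a one-line induction and uses $v_i(\emptyset) = 0$ implicitly only in the base case. Everything else is a straightforward rearrangement of the EF inequalities with the weights $w_j$.
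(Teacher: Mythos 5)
Your proposal is correct and follows essentially the same route as the paper's proof: take the MEFS-certificate $B$, sum the weighted EF inequalities to get $v_i(B_i)/w_i \ge \sum_{j=1}^n v_i(B_j)$, apply subadditivity to bound $v_i([m])$ by that sum, and chain with $v_i(A_i) \ge v_i(B_i)$. The only difference is that you spell out the iterated ($n$-set) subadditivity step, which the paper treats as immediate.
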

\begin{proof}
Let $B$ be agent $i$'s MEFS-certificate for $A$.
Then for all $j \in [n]$, we have $v_i(B_i)/w_i \ge v_i(B_j)/w_j$.
Sum these inequalities over all $j \in [n]$, weighting each by $w_j$,
to get $v_i(B_i)/w_i \ge \sum_{j=1}^n v_i(B_j)$.
Since $v_i$ is subadditive, we get $v_i([m]) \le \sum_{j=1}^n v_i(B_j)$.
Hence,
\[ \frac{v_i(A_i)}{w_i} \ge \frac{v_i(B_i)}{w_i} \ge \sum_{j=1}^n v_i(B_j) \ge v_i([m]).
\qedhere \]
\end{proof}

\begin{lemma}[EF $\fimplies$ GPROP, \citet{bouveret2016characterizing}]
\label[lemma]{thm:impl:ef-to-gprop}
For a fair division instance $\fdInst{[n]}{[m]}{(v_i)_{i=1}^n}{w}$,
if $v_i$ is subadditive and agent $i$ is envy-free in $A$,
then $A$ is groupwise-PROP-fair to $i$.
\end{lemma}
\begin{proof}
Let $S$ be a subset of agents containing $i$.
Let $\Ahat$ be the allocation obtained by restricting $A$ to $S$ (c.f.~\cref{defn:restricting}).
Then $i$ is also envy-free in $\Ahat$.
$\Ahat$ is also MEFS-fair to $i$, since $\Ahat$ is its own MEFS-certificate.
By \cref{thm:impl:mefs-to-prop}, agent $i$ is PROP-satisfied by $\Ahat$.
Since this is true for all $S$ containing $i$,
we get that $A$ is groupwise-PROP-fair to agent $i$.
\end{proof}

\begin{lemma}[PROP $\fimplies$ EF for idval]
\label[lemma]{thm:impl:prop-to-ef-superadd-id}
In a fair division instance $\fdInst{[n]}{[m]}{(v_i)_{i=1}^n}{w}$ with identical superadditive valuations,
a PROP allocation is also an EF allocation.
\end{lemma}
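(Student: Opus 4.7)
The plan is to exploit the fact that superadditivity forces equality in the proportionality inequality when valuations are identical. Let $v \defeq v_1 = v_2 = \ldots = v_n$ be the common valuation, and let $A$ be a PROP allocation, so $v(A_i) \ge w_i \cdot v([m])$ for every $i \in [n]$.

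First I would apply superadditivity to the pairwise disjoint bundles $A_1, \ldots, A_n$ (whose union is $[m]$) to obtain, by a straightforward induction, $v([m]) \ge \sum_{i=1}^n v(A_i)$. Next I would sum the PROP inequalities over all agents, using $\sum_i w_i = 1$, to get $\sum_{i=1}^n v(A_i) \ge \sum_{i=1}^n w_i v([m]) = v([m])$. Sandwiching these two inequalities forces $\sum_i v(A_i) = v([m])$, which in turn forces equality $v(A_i) = w_i \cdot v([m])$ in each individual PROP inequality (since the sum of nonnegative slacks $v(A_i) - w_i v([m])$ is zero, every slack vanishes).

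Finally, since valuations are identical, $v_i(A_j) = v(A_j) = w_j \cdot v([m])$ for every pair $i, j$, so
\[ \frac{v_i(A_i)}{w_i} = v([m]) = \frac{v_i(A_j)}{w_j}, \]
which is envy-freeness of agent $i$ toward agent $j$. Since $i$ and $j$ were arbitrary, $A$ is EF.

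There is no real obstacle here; the only subtlety is recognizing that superadditivity, together with the constraint that the bundles tile $[m]$, pins down the total value from above while PROP pins it down from below, leaving no slack. One minor thing to keep in mind is that the induction step for superadditivity over $n$ disjoint sets uses $v(\emptyset) = 0$ (from the preliminaries) only implicitly through the pairwise definition; otherwise the argument is self-contained.
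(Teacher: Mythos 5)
Your proof is correct and follows essentially the same route as the paper: superadditivity bounds $\sum_i v(A_i)$ above by $v([m])$ while the summed PROP inequalities bound it below, forcing every PROP inequality to be tight and hence $v(A_i)/w_i = v([m])$ for all $i$, which gives EF. The paper merely phrases the same tightness argument as a proof by contradiction rather than your direct sandwiching.
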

\begin{proof}
Let $v$ be the common valuation function. Let $A$ be a PROP allocation.
Then $v(A_i) \ge w_iv([m])$ for each agent $i$.
Suppose $v(A_k) > w_kv([m])$ for some agent $k$.
Sum these inequalities to get $\sum_{i=1}^n v(A_i) > v([m])$.
This contradicts superadditivity of $v$, so $v(A_i) = w_iv([m])$ for each agent $i$.
Hence, $v(A_i)/w_i = v([m])$ for all $i$, so $A$ is EF.
\end{proof}

\begin{lemma}[PROP $\fimplies$ EF for $n=2$]
\label[lemma]{thm:impl:prop-to-ef-n2}
In the fair division instance $\fdInst{[2]}{[m]}{(v_1, v_2)}{w}$, for some agent $i$,
if $v_i$ is superadditive and agent $i$ is PROP-satisfied by allocation $A$,
then she is also envy-free in $A$.
\end{lemma}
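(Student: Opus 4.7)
The plan is to combine superadditivity of $v_i$ with the PROP hypothesis through a short two-step chain, exploiting the fact that with only two agents the bundles $A_i$ and $A_j$ partition $[m]$ (where $j$ denotes the other agent). The PROP hypothesis says $v_i(A_i) \ge w_i \cdot v_i([m])$, and since $w_j > 0$ one can rewrite this as $v_i(A_i)/w_i \ge v_i([m])$. So the target inequality $v_i(A_i)/w_i \ge v_i(A_j)/w_j$ will follow if I can show $v_i(A_j)/w_j \le v_i([m])$.

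First, I would invoke superadditivity on the disjoint pair $(A_i, A_j)$ whose union is $[m]$ to get $v_i([m]) \ge v_i(A_i) + v_i(A_j)$, which rearranges to $v_i(A_j) \le v_i([m]) - v_i(A_i)$. Then I would apply PROP (in the form $-v_i(A_i) \le -w_i v_i([m])$) to bound the right-hand side by $v_i([m]) - w_i v_i([m]) = (1-w_i) v_i([m]) = w_j v_i([m])$. Dividing by the positive weight $w_j$ gives $v_i(A_j)/w_j \le v_i([m])$, and chaining with $v_i([m]) \le v_i(A_i)/w_i$ yields exactly the EF condition for agent $i$.

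I do not expect any obstacle: the argument is essentially a one-line manipulation. The only minor point to be careful about is that $v_i([m])$ may be negative (e.g., for chores), but since both $w_i$ and $w_j$ are strictly positive and $w_i + w_j = 1$, the multiplications and divisions preserve the direction of every inequality, so the proof goes through uniformly for goods, chores, and mixed manna.
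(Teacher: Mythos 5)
Your proposal is correct and is essentially the paper's own argument: both combine the PROP bound $v_i(A_i)/w_i \ge v_i([m])$ with superadditivity $v_i([m]) \ge v_i(A_i) + v_i(A_j)$ in a one-line chain to conclude $v_i(A_j)/w_j \le v_i(A_i)/w_i$. The only cosmetic difference is that the paper bounds $v_i(A_j)$ by $w_j\,v_i(A_i)/w_i$ directly, while you pass through the intermediate bound $v_i(A_j) \le w_j\,v_i([m])$; your remark about sign-safety is also accurate.
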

\begin{proof}
Assume $i=1$ \wLoG. Then
\begin{align*}
& \frac{v_1(A_1)}{w_1} \ge v_1([m]) \ge v_1(A_1) + v_1(A_2)
\wrapIfTwoCols \implies v_1(A_2) \le v_1(A_1)\left(\frac{1}{w_1} - 1\right) = w_2 \frac{v_1(A_1)}{w_1}.
\end{align*}
Hence, agent 1 does not envy agent 2.
\end{proof}

\begin{lemma}[PPROP $\fimplies$ GPROP]
\label[lemma]{thm:impl:pprop-to-gprop}
In instance $\fdInst{[n]}{[m]}{(v_i)_{i=1}^n}{w}$,
if $v_i$ is submodular and allocation $A$ is PPROP-fair to agent $i$,
then $A$ is also GPROP-fair to $i$.
\end{lemma}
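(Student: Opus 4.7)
The plan is to fix any subset $S \subseteq [n] \setminus \{i\}$ and show that the restriction of $A$ to $\{i\} \cup S$ is PROP-fair to $i$, by combining the pairwise-PROP inequalities with submodularity of $v_i$.

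First, I would unpack the definitions. Pairwise-PROP to $i$ says that for every $j \in [n] \setminus \{i\}$, in the restricted instance on $\{i,j\}$ agent $i$ is PROP-satisfied, i.e.
\[
\frac{v_i(A_i)}{w_i + w_j} \cdot (w_i + w_j) \ge \frac{w_i}{w_i+w_j} \, v_i(A_i \cup A_j),
\]
which simplifies to $v_i(A_i)\, w_j \ge w_i \, v_i(A_j \mid A_i)$. Summing this inequality over all $j \in S$ gives
\[
v_i(A_i) \sum_{j \in S} w_j \;\ge\; w_i \sum_{j \in S} v_i(A_j \mid A_i).
\]

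Next, I would invoke submodularity. By \cref{thm:submod-marginal-is-submod}, the function $f(\cdot) \defeq v_i(\cdot \mid A_i)$ is submodular on $2^{[m] \setminus A_i}$, and $f(\emptyset) = 0$, so $f$ is subadditive on disjoint sets. Since the bundles $(A_j)_{j \in S}$ are pairwise disjoint and contained in $[m] \setminus A_i$, this yields
\[
\sum_{j \in S} v_i(A_j \mid A_i) \;\ge\; v_i\!\left(\bigcup_{j \in S} A_j \,\Big|\, A_i\right)
\;=\; v_i\!\left(A_i \cup \bigcup_{j \in S} A_j\right) - v_i(A_i).
\]
Chaining this with the previous inequality and rearranging gives
\[
v_i(A_i) \cdot \Big(w_i + \sum_{j \in S} w_j\Big) \;\ge\; w_i \cdot v_i\!\left(A_i \cup \bigcup_{j \in S} A_j\right),
\]
which is exactly the PROP condition for agent $i$ in the restricted instance $\Ical^{(\{i\} \cup S)}$, with her normalized weight $\what_i = w_i/(w_i + \sum_{j \in S} w_j)$. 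Since $S$ was arbitrary, $A$ is groupwise-PROP-fair to $i$.

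There is no real obstacle here; the only subtlety is making sure the direction of the subadditivity inequality is the right one (we need a lower bound on the sum of marginals), which works out because submodular marginals are subadditive over disjoint unions. The argument uses submodularity in a very mild way — only subadditivity of $v_i(\cdot \mid A_i)$ over disjoint sets — so one might expect a slightly weaker hypothesis (e.g.\ $v_i(\cdot \mid A_i)$ subadditive) to suffice, but stating it for submodular $v_i$ keeps the hypothesis aligned with the standard function classes of the paper.
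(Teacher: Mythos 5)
Your proof is correct and takes essentially the same route as the paper's: both reduce the pairwise-PROP condition to $w_j\,v_i(A_i) \ge w_i\,v_i(A_j \mid A_i)$, sum over the relevant agents, and apply subadditivity of the marginal function $v_i(\cdot \mid A_i)$ (via \cref{thm:submod-marginal-is-submod}) to conclude the restricted PROP inequality. The only cosmetic difference is that you carry out the computation directly for an arbitrary subgroup $S$, whereas the paper writes it for the full group and remarks that the same argument applies to any restriction.
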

\begin{proof}
Since $A$ is PPROP-fair to $i$, for all $j \in [n] \setminus \{i\}$,
\begin{align*}
& v_i(A_i) \ge \frac{w_i}{w_i + w_j}v_i(A_i \cup A_j)
\wrapIfTwoCols \implies (w_i+w_j)v_i(A_i) \ge w_i(v_i(A_j \mid A_i) + v_i(A_i))
\\ &\implies w_j v_i(A_i) \ge w_i v_i(A_j \mid A_i).
\end{align*}
Sum these inequalities over all $j \in [n] \setminus \{i\}$ to get
\[ (1-w_i) v_i(A_i) \ge w_i \sum_{j \neq i} v_i(A_j \mid A_i). \]
By \cref{thm:submod-mg-subadd}, $v_i(\cdot \mid A_i)$ is subadditive. Hence,
\[ (1-w_i) v_i(A_i) \ge w_i \sum_{j \neq i} v_i(A_j \mid A_i) \ge w_i v_i([m] \setminus A_i \mid A_i)
\ifColsTwo\]\[\fi
\implies v_i(A_i) \ge w_i v_i([m]). \]
So, $A$ is PROP to $i$.
This also works if we restrict the allocation to a subset of agents,
so $A$ is also GPROP to $i$.
\end{proof}

\subsection{EFX, EF1 vs PROPx, PROPm, PROP1}
\label{sec:impls-extra:efx-ef1-vs-propx-propm-prop1}

\begin{lemma}[MXS $\fimplies$ PROP1, Theorem 3 of \citet{caragiannis2023new}]
\label[lemma]{thm:impl:mxs-to-prop1}
In a fair division instance with equal entitlements,
if $v_i$ is additive for some agent $i$, $v_i(g) \ge 0$ for every item $g$,
and an allocation $A$ is MXS-fair to agent $i$, then $A$ is also PROP1-fair to agent $i$.
\end{lemma}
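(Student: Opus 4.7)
The plan is to exploit the EFX property of an MXS-certificate $B$ via averaging, and then transfer the resulting PROP1 witness from $B$ to $A$. Let $B$ be an EFX allocation with $v_i(B_i)\le v_i(A_i)$. For each $j\ne i$, EFX gives $v_i(B_j)\le v_i(B_i)+\tau_j$, where $\tau_j$ is the value of the smallest positively-valued item in $B_j$ (zero if none). Summing over $j\ne i$ yields $v_i([m])\le n\,v_i(B_i)+\sum_{j\ne i}\tau_j$, and picking $g^*\in B_{j^*}$ with $v_i(g^*)=\max_{j\ne i}\tau_j$ together with $\sum_{j\ne i}\tau_j\le (n-1)\,v_i(g^*)$ rearranges to
\[ v_i(B_i)+v_i(g^*)\ \ge\ \frac{v_i([m])+v_i(g^*)}{n}, \]
which strictly exceeds $v_i([m])/n$ whenever $v_i(g^*)>0$. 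If $v_i(g^*)=0$, then every $\tau_j=0$, hence $v_i(B_j)=0$ for all $j\ne i$ by non-negativity, so $v_i(A_i)\ge v_i(B_i)=v_i([m])$ and PROP holds trivially.

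The easy case is $g^*\notin A_i$: using $v_i(A_i)\ge v_i(B_i)$, the item $g^*$ certifies PROP1 for $A$ via $v_i(A_i)+v_i(g^*)>v_i([m])/n$. The main obstacle is when $g^*\in A_i$: the witness suggested by the averaging lies inside agent $i$'s $A$-bundle, so it cannot directly certify PROP1 for $A$.

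To resolve this obstacle, I would exploit the freedom in choosing the MXS certificate by picking $B$ to maximize $v_i(B_i\cap A_i)$ among EFX allocations with $v_i(B_i)\le v_i(A_i)$. For such $B$, any positively-valued $g\in A_i\setminus B_i$ with $v_i(g)\le v_i(A_i)-v_i(B_i)$ could be transferred from its current $B_k$ into $B_i$: this preserves EFX (agent $i$'s bundle only grows, making her less envious) and preserves $v_i(B_i)\le v_i(A_i)$, while strictly increasing $v_i(B_i\cap A_i)$---contradicting maximality. Hence every positively-valued $g\in A_i\setminus B_i$ satisfies $v_i(g)>\delta\defeq v_i(A_i)-v_i(B_i)$. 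Assuming toward contradiction that $A$ is not PROP1 gives $v_i(g)\le\alpha\defeq v_i([m])/n-v_i(A_i)$ for every $g\notin A_i$. I then split $\sum_{j\ne i}\tau_j$ according to whether the $\tau_j$-attaining item lies inside or outside $A_i$: items inside are distinct elements of $A_i\setminus B_i$, each of value exceeding $\delta$, while items outside have value at most $\alpha$. Combining these refined bounds with $v_i([m])\le n\,v_i(B_i)+\sum_{j\ne i}\tau_j$, the identity $v_i([m])=n\,v_i(A_i)+n\alpha$, and the decomposition $v_i(B_i)=v_i(B_i\cap A_i)+v_i(B_i\setminus A_i)$ is the step I expect to yield the required contradiction, completing the proof.
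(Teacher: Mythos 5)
Your first half is correct: for an MXS-certificate $B$ (EFX-fair to $i$ with $v_i(B_i)\le v_i(A_i)$), the bound $v_i(B_j)\le v_i(B_i)+\tau_j$, the averaging step, and the easy case $g^*\notin A_i$ all go through (and the $v_i(g^*)=0$ case is handled correctly). The genuine gap is in the hard case $g^*\in A_i$: the final combination you ``expect to yield the required contradiction'' does not follow from the ingredients you list. Your maximality trick only provides a \emph{lower} bound $v_i(g)>\delta$ on positively-valued items of $A_i\setminus B_i$, whereas the averaging argument needs \emph{upper} control on the slacks; in particular when $v_i(B_i)=v_i(A_i)$ (i.e.\ $\delta=0$), the trick yields nothing at all. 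Concretely, set $\alpha\defeq v_i([m])/n-v_i(A_i)>0$ and combine exactly what you propose: $v_i([m])\le n v_i(B_i)+\sum_{j\ne i}\tau_j$, $\tau_j\le\alpha$ when the attaining item is outside $A_i$, the inside attaining items being distinct elements of $A_i\setminus B_i$ (so their $\tau_j$'s sum to at most $v_i(A_i\setminus B_i)$), $v_i(g)>\delta$ for such items, and $v_i(B_i)=v_i(B_i\cap A_i)+v_i(B_i\setminus A_i)\le v_i(A_i)$. All this yields is $(s+1)\alpha+n\delta\le v_i(A_i)$ (with $s$ the number of inside attaining items), which is not contradictory: e.g.\ $n=3$, $\delta=0$, $\alpha=1$, $v_i(A_i)=v_i(B_i)=10$, $\tau$-values $2.5$ and $1$, $v_i(B_i\setminus A_i)=2.5$ made of items of value at most $1$ satisfies every listed inequality. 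These numbers need not be realizable by an actual EFX certificate (the theorem is true), but they show your planned inequalities are mutually consistent, so no contradiction can be extracted from them alone.

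What is missing is a stronger use of EFX on the heavily-envied bundles, bounding their \emph{entire} value rather than only the smallest positive item. Let $S\defeq\{j\ne i:\tau_j>\alpha\}$. For $j\in S$, \emph{every} positively-valued item of $B_j$ has value at least $\tau_j>\alpha$, hence lies in $A_i$ (items outside $A_i$ have value at most $\alpha$ by the PROP1-failure assumption); so $v_i(B_j)=v_i(B_j\cap A_i)$, and since the sets $B_j\cap A_i$, $j\in S$, are pairwise disjoint subsets of $A_i$, we get $\sum_{j\in S}v_i(B_j)\le v_i(A_i)$. Writing $P\defeq v_i([m])/n$, and using $v_i(B_j)\le v_i(B_i)+\tau_j\le v_i(A_i)+\alpha=P$ for $j\notin S\cup\{i\}$ together with $v_i(B_i)\le v_i(A_i)<P$, we obtain $nP=v_i([m])\le v_i(A_i)+v_i(A_i)+(n-1-|S|)P<(n+1-|S|)P$, forcing $S=\emptyset$; but your own averaging gives $\sum_{j\ne i}\tau_j\ge n\alpha>(n-1)\alpha$, so some $\tau_j>\alpha$, a contradiction. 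Once bundle values are controlled this way, the certificate-exchange (maximize $v_i(B_i\cap A_i)$) is unnecessary. Note also that the paper itself does not reprove this lemma; it simply invokes Theorem 3 of \cite{caragiannis2023new}.
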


\begin{lemma}[EEF1 $\fimplies$ PROP1 for chores]
\label[lemma]{thm:impl:eef1-to-prop1-chores}
For a fair division instance $\fdInst{[n]}{[m]}{(v_i)_{i=1}^n}{w}$
where all items are chores to agent $i$ and $v_i$ is subaditive,
if an allocation $X$ is EEF1-fair to agent $i$,
then $X$ is also PROP1-fair to $i$.
\end{lemma}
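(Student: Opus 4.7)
The plan is to unfold the EEF1 hypothesis to obtain an auxiliary allocation $Y$ with $Y_i = X_i$ that is EF1-fair to agent $i$, and then show $Y$ itself is PROP1-fair to $i$. Since all items are chores for $i$, PROP1 fairness to $i$ depends only on $Y_i$ (the "add a good to $A_i$" branch of \cref{defn:prop1} cannot help: adding an item can only weakly decrease $v_i$), so this conclusion transfers to $X$. Throughout, I would work with costs $d_i \defeq -v_i$, which are non-negative and superadditive, the latter because $v_i$ is subadditive.

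First I would simplify EF1 in the chores setting. Since marginals are non-positive, $v_i$ is monotone non-increasing, so removing an item $g$ from another agent's bundle $A_j$ weakly decreases $d_i(A_j)$; hence the ``remove $g \in A_j$'' branch of \cref{defn:ef1} already implies the ``no envy'' branch. Therefore EF1 fairness of $Y$ to agent $i$ reduces to: for every $j \neq i$,
\[ \frac{M}{w_i} \le \frac{d_i(Y_j)}{w_j}, \quad\textrm{where } M \defeq \min\bigl(\{d_i(Y_i)\} \cup \{d_i(Y_i \setminus \{c\}) : c \in Y_i\}\bigr). \]

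Next, I would assume for contradiction that $Y$ is not PROP1-fair to $i$. For chores this forces both $d_i(Y_i) > w_i d_i([m])$ (from the negation of PROP) and $d_i(Y_i \setminus \{c\}) \ge w_i d_i([m])$ for every $c \in Y_i$ (from the negation of the chore-removal clause, which uses strict inequality as defined in \cref{defn:prop1}). Hence $M \ge w_i d_i([m])$. Plugging this into the reduced EF1 inequality gives $d_i(Y_j) \ge w_j d_i([m])$ for every $j \neq i$, and summing over $j \neq i$ yields $\sum_{j \neq i} d_i(Y_j) \ge (1-w_i) d_i([m])$.

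The final step invokes superadditivity of $d_i$ on the partition $Y_1, \ldots, Y_n$ of $[m]$:
\[ d_i([m]) \;\ge\; d_i(Y_i) + \sum_{j \neq i} d_i(Y_j) \;\ge\; d_i(Y_i) + (1-w_i) d_i([m]), \]
which rearranges to $d_i(Y_i) \le w_i d_i([m])$, contradicting the strict inequality $d_i(Y_i) > w_i d_i([m])$. The only mildly subtle point is the reduction of EF1 by absorbing the ``take a good from $A_j$'' branch into the no-envy branch; after that, the argument is a routine counting/superadditivity computation, so I do not anticipate any real obstacle.
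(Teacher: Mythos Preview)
Your proof is correct and follows essentially the same approach as the paper: pass to the EEF1 certificate, assume PROP1 fails, and combine the EF1 inequalities with subadditivity of $v_i$ (equivalently, superadditivity of $d_i$) to force $d_i(Y_i) \le w_i d_i([m])$, a contradiction. The only cosmetic difference is that the paper isolates a single agent $j$ with $v_i(A_j) > w_j v_i([m])$ and derives the contradiction from that one EF1 inequality, whereas you sum the EF1 inequalities over all $j \neq i$; both variants are equally short.
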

\begin{proof}
Let $A$ be agent $i$'s EEF1-certificate for $X$.
Suppose $A$ is not PROP1-fair to $i$. We get $v_i(A_i) < w_iv_i([m])$
and $v_i(A_i \setminus \{c\}) \le w_iv_i([m])$ for all $c \in A_i$.

Since $v_i$ is subadditive, there exists $j \in [n] \setminus \{i\}$
such that $v_i(A_j) > w_jv_i([m])$
(otherwise $v_i([m]) \le \sum_{j=1}^n v_i(A_j) < \sum_{j=1}^n w_jv_j([m]) = v_i([m])$). Hence,
\[ \frac{v_i(A_i)}{w_i} < v_i([m]) < \frac{v_i(A_j)}{w_j}. \]
Thus, $i$ envies $j$. But $A$ is EF1-fair to $i$,
so $\exists c \in A_i$ such that
\[ \frac{v_i(A_i \setminus \{c\})}{w_i} \ge \frac{v_i(A_j)}{w_j}. \]
Since $i$ is PROP1-unsatisfied and $v_i(A_j) > w_jv_i([m])$,
\[ \frac{v_i(A_j)}{w_j} \le \frac{v_i(A_i \setminus \{c\})}{w_i} \le v_i([m]) < \frac{v_i(A_j)}{w_j}, \]
which is a contradiction.
Hence, $A$ is PROP1-fair to $i$, and so, $X$ is PROP1-fair to agent $i$.
\end{proof}

\begin{lemma}[EEF1 $\fimplies$ PROP1]
\label[lemma]{thm:impl:eef1-to-prop1-submod}
Consider a fair division instance $\fdInst{[n]}{[m]}{(v_i)_{i=1}^n}{w}$
where $v_i$ is submodular for some agent $i$
and $w_i \le w_j$ for all $j \in [n]$.
If an allocation $X$ is EEF1-fair to agent $i$,
then $X$ is also PROP1-fair to $i$.
\end{lemma}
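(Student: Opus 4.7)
The plan is to reduce to the certificate allocation and then derive a contradiction via a weighted averaging step. Let $A$ denote agent $i$'s EEF1-certificate for $X$, so $A$ is EF1-fair to $i$ and $A_i = X_i$. Since PROP1-fairness to $i$ depends only on $v_i$ and the bundle $A_i = X_i$, it suffices to show $A$ itself is PROP1-fair to $i$. I will assume for contradiction that $A$ is not PROP1-fair to $i$ and set $\delta \defeq w_iv_i([m]) - v_i(A_i) > 0$. Then PROP1 violation supplies two bounds that I will use throughout: $v_i(g \mid A_i) \le \delta$ for every $g \in [m] \setminus A_i$, and $v_i(A_i \setminus \{c\}) \le v_i(A_i) + \delta$ for every $c \in A_i$.

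By \cref{thm:submod-marginal-is-submod}, $v_i(\cdot \mid A_i)$ is submodular and thus subadditive, so
\[ v_i([m]) - v_i(A_i) \;=\; v_i([m] \setminus A_i \mid A_i) \;\le\; \sum_{j \neq i} v_i(A_j \mid A_i). \]
For each $j \neq i$, EF1 yields one of three cases, and I would partition $[n] \setminus \{i\}$ into $J_a, J_b, J_c$ accordingly. In case (a), subadditivity plus EF1 will give $v_i(A_j \mid A_i) \le v_i(A_j) \le (w_j/w_i) v_i(A_i)$. In case (b) with witness $g_j \in A_j$, the submodular telescope $v_i(A_j \mid A_i) \le v_i(g_j \mid A_i) + v_i(A_j \setminus \{g_j\})$ combined with the PROP1 bound on $g_j$ and the EF1 bound on $A_j \setminus \{g_j\}$ yields $v_i(A_j \mid A_i) \le \delta + (w_j/w_i) v_i(A_i)$. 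In case (c) with witness $c_j \in A_i$, subadditivity and EF1 give $v_i(A_j \mid A_i) \le v_i(A_j) \le (w_j/w_i) v_i(A_i \setminus \{c_j\})$, and the PROP1 bound on $v_i(A_i \setminus \{c_j\})$ tightens this to $v_i(A_j \mid A_i) \le (w_j/w_i)(v_i(A_i) + \delta)$. Summing these three bounds over $j \neq i$ and simplifying cancels the $v_i(A_i)/w_i$ term and yields $1 \le w_i|J_b| + W_c$, where $W_c \defeq \sum_{j \in J_c} w_j$.

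To close the argument I will invoke the entitlement hypothesis $w_i \le w_j$ for all $j \neq i$: this gives $\sum_{j \in J_a \cup J_b} w_j \ge (n - 1 - |J_c|) w_i$, hence $W_c \le 1 - w_i - (n - 1 - |J_c|) w_i$. Combined with $|J_b| \le n - 1 - |J_c|$, this gives $w_i|J_b| + W_c \le 1 - w_i < 1$, contradicting $1 \le w_i|J_b| + W_c$. The main obstacle will be case (c): when $v_i$ is not monotone, the EF1 witness $c_j$ may be chore-like and removing it from $A_i$ can \emph{increase} $v_i$, which at first appears to break the accounting. The two ingredients that save the argument are the PROP1 violation itself (which caps this increase by $\delta$) and the entitlement hypothesis (which makes the resulting bound $1 - w_i$ strictly less than $1$).
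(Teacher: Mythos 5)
Your proposal is correct, but it takes a genuinely different route from the paper's. The paper's argument is local and constructive: assuming $v_i(A_i) < w_iv_i([m])$, it uses subadditivity of $v_i(\cdot \mid A_i)$ (\cref{thm:submod-marginal-is-submod}) to locate a \emph{single} agent $j$ with $v_i(A_j \mid A_i) > w_jv_i([m])$, observes that $i$ must envy this $j$, and then converts that one agent's EF1 witness directly into a PROP1 witness ($v_i(A_i \setminus \{c\}) > w_iv_i([m])$ in the chore case, and $v_i(A_i \cup \{g\}) > w_iv_i([m])$ in the good case, using $w_i \le w_j$ and submodularity). You instead argue globally by contradiction: with deficit $\delta \defeq w_iv_i([m]) - v_i(A_i) > 0$, you bound $v_i(A_j \mid A_i)$ for every $j$ according to which EF1 clause applies, sum, and reach the numerical contradiction $1 \le w_i|J_b| + W_c \le 1 - w_i < 1$. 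I verified the individual steps: the case~(a) and~(c) bounds via subadditivity, the case~(b) telescope $v_i(A_j \mid A_i) \le v_i(g_j \mid A_i) + v_i(A_j \setminus \{g_j\})$ (valid by submodularity) combined with $v_i(g_j \mid A_i) \le \delta$, the algebra yielding $1 \le w_i|J_b| + W_c$, and the counting step using $w_j \ge w_i$ for $j \in J_a \cup J_b$; all are sound, and your reduction to the certificate (PROP1-fairness depends only on $v_i$ and $A_i = X_i$) matches the paper's. What each approach buys: the paper's proof is shorter and exhibits the PROP1 witness explicitly, while yours never needs to reason about the sign of the marginal $v_i(g_j \mid A_i)$ of an EF1 witness good (it is simply capped by $\delta$), so it treats non-monotone submodular valuations completely uniformly; it also isolates the role of the entitlement hypothesis, which enters only in the final weight-counting step. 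The price is a longer bookkeeping argument and a non-constructive contradiction rather than a direct witness.
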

\begin{proof}
Let $A$ be agent $i$'s EEF1-certificate for $X$.
Then $A$ is EF1-fair to $i$.
We will show that $A$ is PROP1-fair to $i$
(which would prove that $X$ is also PROP1-fair to $i$).
This is trivial if $A$ is PROP-fair to $i$,
so assume $v_i(A_i) < w_iv_i([m])$.

By \cref{thm:submod-mg-subadd}, we get that
$v_i(\cdot \mid A_i)$ is subadditive. Hence,
\[ v_i([m]) = v_i(A_i) + v_i([m] \setminus A_i \mid A_i)
    \le v_i(A_i) + \sum_{j \neq i} v_i(A_j \mid A_i). \]
Hence, for some $j \in [n] \setminus \{i\}$,
we get $v_i(A_j \mid A_i) > w_jv_i([m])$. Hence,
\[ \frac{v_i(A_i)}{w_i} < v_i([m]) < \frac{v_i(A_j \mid A_i)}{w_j} \le \frac{v_i(A_j)}{w_j}. \]
So, $i$ envies $j$ in $A$. Since $A$ is EF1-fair to $i$, we get two cases:

\textbf{Case 1}: $\exists c \in A_i$ such that
    $\displaystyle \frac{v_i(A_i \setminus \{c\})}{w_i} \ge \frac{v_i(A_j)}{w_j}$.
\\ Then $v_i(A_i \setminus \{c\}) > w_iv_i([m])$, so $A$ is PROP1-fair.

\textbf{Case 2}: $\exists g \in A_j$ such that
    $\displaystyle \frac{v_i(A_i)}{w_i} \ge \frac{v_i(A_j \setminus \{g\})}{w_j}$.
\begin{align*}
& \frac{v_i(A_i \cup \{g\})}{w_i}
    = \frac{v_i(A_i)}{w_i} + \frac{v_i(g \mid A_i)}{w_i}
\\ &\ge \frac{v_i(A_j \setminus \{g\})}{w_j} + \frac{v_i(g \mid A_i)}{w_j}
    \tag{$w_i \le w_j$}
\\ &\ge \frac{v_i(A_j \setminus \{g\} \mid A_i) + v_i(g \mid A_i)}{w_j}
    \tag{$v_i$ is submodular}
\\ &\ge \frac{v_i(A_j \mid A_i)}{w_j}
    \tag{$v_i(\cdot \mid A_i)$ is subadditive}
\\ &> v_i([m]).
\end{align*}
In both cases, we get that $A$ is PROP1-fair to $i$.
Hence, $X$ is also PROP1-fair to $i$.
\end{proof}

\begin{lemma}[EF1 $\fimplies$ PROP1 for $n=2$]
\label[lemma]{thm:impl:ef1-to-prop1-n2}
Consider a fair division instance $\fdInst{[2]}{[m]}{(v_1, v_2)}{w}$ where $v_1$ is subadditive.
If an allocation $A$ is EF1-fair to agent 1, then it is also PROP1-fair to agent 1.
\end{lemma}
\begin{proof}
If $A$ is PROP-fair to agent 1, then we are done, so assume otherwise.
By subadditivity of $v_1$, we get
$v_1([m]) \le v_1(A_1) + v_1(A_2)$.
Since $v_1(A_1) < w_1v_1([m])$, we get $v_1(A_2) > w_2v_1([m])$.
Hence, $v_1(A_1)/w_1 < v_1([m]) < v_1(A_2)/w_2$, so agent 1 envies agent 2.
Since $A$ is EF1-fair to agent 1, there are two cases:

\textbf{Case 1}: $\exists c \in A_1$ such that
    $\displaystyle \frac{v_1(A_1 \setminus \{c\})}{w_1} \ge \frac{v_1(A_2)}{w_2}$.
\\ Then $v_1(A_1 \setminus \{c\}) > w_1v_1([m])$, so $A$ is PROP1-fair to 1.

\textbf{Case 2}: $\exists \ghat \in A_2$ such that
    $\displaystyle \frac{v_1(A_1)}{w_1} \ge \frac{v_1(A_2 \setminus \{\ghat\})}{w_2}$.
\\ By subadditivity of $v_1$ and PROP-unfairness of $A$, we get
\providecommand{\tempRhs}{v_1(A_1 \cup \{\ghat\}) + v_1(A_2 \setminus \{\ghat\})
    \wrapIfTwoCols\le v_1(A_1 \cup \{\ghat\}) + \frac{w_2}{w_1}v_1(A_1)
    \wrapIfTwoCols< v_1(A_1 \cup \{\ghat\}) + w_2v_1([m])}
\ifColsTwo
\begin{align*}v_1([m]) &\le \tempRhs\end{align*}
\else
\[ v_1([m]) \le \tempRhs \]
\fi
\[ \implies w_1 v_1([m]) < v_1(A_1 \cup \{\ghat\}), \]
so $A$ is PROP1-fair to agent 1.
\end{proof}

\begin{lemma}[EEFX $\fimplies$ PROPx, Lemma 2.1 of \citet{li2022almost}]
\label[lemma]{thm:impl:eefx-to-propx}
Consider a fair division instance $\fdInst{[n]}{[m]}{(v_i)_{i=1}^n}{w}$ where
the items are chores to agent $i$ and $v_i$ is subadditive.
If an allocation $A$ is EEFX-fair to agent $i$,
then it is also PROPx-fair to agent $i$.
\end{lemma}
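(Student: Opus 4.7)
The plan is to find, inside the EEFX-certificate $B$ of $A$ for agent $i$, an agent $j^*$ whom $i$ strictly envies, and then convert the EFX guarantee of $B$ against $j^*$ into the PROPx condition for $i$. Let $B$ denote agent $i$'s EEFX-certificate for $A$, so $B_i = A_i$ and $B$ is EFX-fair to $i$.

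First I would trivialize part of the PROPx definition. Since all items are chores to $i$, for every $S \subseteq [m] \setminus A_i$, the quantity $v_i(S \mid A_i)$ is a telescoping sum of non-positive marginals $v_i(c \mid R)$, hence $v_i(S \mid A_i) \le 0$. Therefore the ``goods side'' of \cref{defn:propx} (the clause about $v_i(A_i \cup S) > w_i v_i([m])$) is vacuously satisfied. I may also dispose of the trivial case $v_i(A_i) \ge w_i v_i([m])$ (which already makes $A$ PROP-fair, hence PROPx-fair) and assume henceforth that $v_i(A_i) < w_i v_i([m])$; note in particular that $v_i([m]) \le 0$ by subadditivity.

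Next, subadditivity gives $v_i([m]) \le v_i(B_i) + \sum_{j \neq i} v_i(B_j) = v_i(A_i) + \sum_{j \neq i} v_i(B_j)$, so together with $v_i(A_i) < w_i v_i([m])$ this yields $\sum_{j \neq i} v_i(B_j) > (1 - w_i) v_i([m])$. A pigeonhole argument on the weights $\{w_j\}_{j \neq i}$ then produces some $j^*$ with $v_i(B_{j^*})/w_{j^*} > v_i([m])$. Combined with $v_i(B_i)/w_i = v_i(A_i)/w_i < v_i([m])$, this shows $i$ \emph{strictly} envies $j^*$ in $B$.

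Finally, for any $S \subseteq A_i$ with $v_i(S \mid A_i \setminus S) < 0$, EFX-fairness of $B$ to $i$ (applied against $j^*$ with the witness $S$) gives $v_i(A_i \setminus S)/w_i \ge v_i(B_{j^*})/w_{j^*} > v_i([m])$, i.e., $v_i(A_i \setminus S) > w_i v_i([m])$, which is exactly the ``chores side'' of PROPx. Hence $A$ is PROPx-fair to $i$. The main subtlety I foresee is obtaining a \emph{strict} PROPx inequality from the merely weak EFX inequality; the pigeonhole step is designed precisely to supply a strictly ``easier-than-proportional'' bundle $B_{j^*}$ so that strictness survives the chain of inequalities.
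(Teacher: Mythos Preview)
Your proof is correct and follows essentially the same strategy as the paper's: pass to the EEFX-certificate $B$, use subadditivity of $v_i$ to relate $\sum_j v_i(B_j)$ to $v_i([m])$, and apply the chores-side EFX inequality to obtain the PROPx bound. The only organizational difference is that the paper sums the EFX inequalities $\frac{v_i(B_i\setminus S)}{w_i}\ge\frac{v_i(B_j)}{w_j}$ over all $j$ (weighted by $w_j$) and extracts strictness from $v_i(B_i\setminus S)>v_i(B_i)$, whereas you pigeonhole to a single $j^*$ with $\frac{v_i(B_{j^*})}{w_{j^*}}>v_i([m])$ and extract strictness from the standing assumption $v_i(A_i)<w_iv_i([m])$; these are two sides of the same averaging argument.
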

\begin{proof}
Let $B$ be agent $i$'s EEFX-certificate for $A$.

Since $i$ is EFX-satisfied with $B$, for all $j \in [n] \setminus \{i\}$,
\[ \min_{S \in \Scal} \frac{v_i(B_i \setminus S)}{w_i} \ge \frac{v_i(B_j)}{w_j}, \]
where $\Scal \defeq \{S \subseteq B_i: v_i(S \mid B_i \setminus S) < 0\}$.
Add these inequalities for all $j$, weighting each by $w_j$, to get
\[ \min_{S \in \Scal} \frac{v_i(B_i \setminus S)}{w_i} > \sum_{j=1}^n v_i(B_j) \ge v_i([m]). \]
Thus, $B$ is PROPx-fair to $i$, so $A$ is also PROPx-fair to $i$.
\end{proof}

\begin{lemma}[EFX $\fimplies$ PROPavg]
\label[lemma]{thm:impl:efx-to-propavg}
Let $\fdInst{[n]}{[m]}{(v_i)_{i=1}^n}{\eqEnt}$ be a fair division instance
where $v_i$ is submodular for some agent $i$ and $v_i(A_i) \ge 0$.
If an allocation $A$ is EFX-fair to agent $i$,
then it is also PROPavg-fair to agent $i$.
\end{lemma}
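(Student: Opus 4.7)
The plan is to reduce to the case where PROP fails, locate an envied agent $j^*$ via subadditivity, derive (C2) immediately from EFX at $j^*$, and then lower-bound $\sum_{j \in T}\tau_j$ using a global submodular accounting argument.

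If $v_i(A_i) \ge v_i([m])/n$ then (C1) holds and we are done, so assume $v_i(A_i) < v_i([m])/n$. Submodularity with $v_i(\emptyset) = 0$ implies subadditivity, so $v_i([m]) \le \sum_j v_i(A_j)$; consequently some $j^* \ne i$ satisfies $v_i(A_{j^*}) > v_i([m])/n > v_i(A_i)$, i.e., $i$ envies $j^*$. For any $S \subseteq A_i$ with $v_i(S \mid A_i \setminus S) < 0$, the chore-side EFX inequality at $j^*$ gives $v_i(A_i \setminus S) \ge v_i(A_{j^*}) > v_i([m])/n$, establishing (C2).

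For (C4), first rule out $T = \emptyset$: if $T$ were empty, then $v_i(S \mid A_i) \le 0$ for every $S \subseteq A_j$ and every $j \ne i$, and subadditivity of the submodular marginal $v_i(\cdot \mid A_i)$ (by \cref{thm:submod-marginal-is-submod}) yields $v_i([m]) - v_i(A_i) = v_i([m] \setminus A_i \mid A_i) \le \sum_{j \ne i} v_i(A_j \mid A_i) \le 0$, so $v_i([m]) \le v_i(A_i)$; this is inconsistent with $0 \le v_i(A_i) < v_i([m])/n$ by a short case analysis on the sign of $v_i([m])$. Next, for each $j \in T$ let $S_j \subseteq A_j$ achieve $\tau_j = v_i(S_j \mid A_i)$, set $S_j = \emptyset$ for $j \notin T$, and define $S = \bigcup_j S_j$. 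Subadditivity of $v_i(\cdot \mid A_i)$ gives $v_i(S \mid A_i) \le \sum_{j \in T}\tau_j$, and subadditivity of $v_i$ gives $v_i([m]) \le v_i(A_i \cup S) + \sum_{j \ne i} v_i(A_j \setminus S_j)$. The pointwise bound $v_i(A_j \setminus S_j) \le v_i(A_i)$ holds for each $j \ne i$: for $j \in T$ with envy this is EFX applied to $j$ (since $v_i(S_j \mid A_i) > 0$), and for the remaining $j$ it follows from $v_i(A_j) \le v_i(A_i)$ together with monotonicity of $v_i$ on $A_j$ (the items are goods). Plugging in, $v_i([m]) \le nv_i(A_i) + \sum_{j \in T}\tau_j$, so $\sum_{j \in T}\tau_j \ge v_i([m]) - nv_i(A_i)$. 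Since $|T| \le n-1$, dividing the inequality $v_i([m])(n-|T|)/n > v_i(A_i)(n-|T|)$ (which is equivalent to $v_i(A_i) < v_i([m])/n$) by $n-|T| > 0$ gives $\sum_{j \in T}\tau_j > |T|\bigl(v_i([m])/n - v_i(A_i)\bigr)$, i.e., $v_i(A_i) + \mathrm{avg}(T) > v_i([m])/n$, which is (C4).

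The main obstacle is establishing the pointwise inequality $v_i(A_j \setminus S_j) \le v_i(A_i)$ uniformly: the envied case is an immediate EFX application, but the non-envied and $K_j = \emptyset$ cases require monotonicity of $v_i$ on subsets of $A_j$ (recorded in the ``goods'' entry for this lemma in \cref{table:impls1}) to rule out non-monotone spikes that would otherwise prevent removing $S_j$ from decreasing a bundle's value.
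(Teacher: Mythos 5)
Your skeleton matches the paper's up to the C4 accounting: reduce to the case where PROP fails, use subadditivity (from submodularity and $v_i(\emptyset)=0$) to locate an envied agent, and read off C2 from the chore-side EFX inequality at that agent --- this part is exactly the paper's argument and is fine, as is your final averaging step using $|T|\le n-1$. The divergence, and the genuine gap, is in how you certify C4. You bound absolute bundle values, $v_i(A_j\setminus S_j)\le v_i(A_i)$ for \emph{every} $j\ne i$, and for the agents $j$ whom $i$ does not envy (and those with $K_j=\emptyset$) you must invoke monotonicity of $v_i$ (``the items are goods''). But the statement you were asked to prove does not assume goods: its only hypotheses are that $v_i$ is submodular and $v_i(A_i)\ge 0$, and $v_i(A_i)\ge 0$ is strictly weaker than nonnegativity of all marginals --- under the stated hypotheses, removing $S_j$ from $A_j$ may \emph{increase} its value, so the chain $v_i(A_j\setminus S_j)\le v_i(A_j)\le v_i(A_i)$ is simply not available. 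Appealing to the ``goods'' entry in \cref{table:impls1} does not close this, since the lemma is stated (and usable) with the weaker hypothesis. A smaller issue: even granting goods, for $j\notin T$ you assert $v_i(A_j)\le v_i(A_i)$ without argument; it does hold ($\tau_j=0$ plus nonnegative marginals force $v_i(A_j\mid A_i)=0$, whence $v_i(A_j)\le v_i(A_i\cup A_j)=v_i(A_i)$), but that line is missing.

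The paper avoids monotonicity altogether by doing the accounting in marginals over $A_i$: it writes $v_i([m])-v_i(A_i)=v_i([m]\setminus A_i\mid A_i)\le\sum_{j\ne i}v_i(A_j\mid A_i)$ (via \cref{thm:submod-marginal-is-submod}) and proves the pointwise bound $v_i(A_j\mid A_i)\le v_i(A_i)+\tau_j$ for every $j\ne i$: when $\tau_j=0$ this is just $v_i(A_j\mid A_i)\le 0\le v_i(A_i)$ (the only place $v_i(A_i)\ge 0$ is needed); when $\tau_j>0$ and $i$ envies $j$, EFX gives $v_i(A_i)\ge v_i(A_j\setminus S_j)\ge v_i(A_j\setminus S_j\mid A_i)$ and subadditivity of $v_i(\cdot\mid A_i)$ finishes; when $\tau_j>0$ and $i$ does not envy $j$, one uses $v_i(A_i)\ge v_i(A_j)\ge v_i(A_j\mid A_i)$ directly. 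Summing yields $nv_i(A_i)+\Sum(T)\ge v_i([m])$, after which your concluding inequality is identical. So your route is salvageable with a local change: replace the pointwise bound $v_i(A_j\setminus S_j)\le v_i(A_i)$ by the marginal bound $v_i(A_j\mid A_i)\le v_i(A_i)+\tau_j$, and the reliance on goods disappears, giving the lemma under its stated hypotheses.
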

\begin{proof}
If $A$ is PROP-fair to agent $i$, we are done, so assume otherwise.
By \cref{thm:submod-mg-subadd}, $v_i(\cdot \mid A_i)$ is subadditive.
Hence, $v_i([m]) = v_i(A_i) + v_i([m] \mid A_i) \le v_i(A_i) + \sum_{j \neq i} v_i(A_j \mid A_i)$.
Then for some agent $\jhat \in [n] \setminus \{i\}$, we have
\[ v_i(A_i) < \frac{v_i([m])}{n} < v_i(A_{\jhat} \mid A_i) \le v_i(A_{\jhat}). \]

Since $i$ doesn't EFX-envy $\jhat$, for all $S \subseteq A_i$ such that
$v_i(S \mid A_i \setminus S) < 0$, we get
\[ v_i(A_i \setminus S) \ge v_i(A_{\jhat}) > \frac{v_i([m])}{n}. \]
Thus, condition C2 of PROPavg holds (c.f.~\cref{defn:propavg}).

For any $j \in [n] \setminus \{i\}$, recall that
$\tau_j \defeq 0$ if $v_i(S \mid A_i) \le 0$ $\forall S \subseteq A_j$, and
$\tau_j \defeq \min(\{v_i(S \mid A_i): S \subseteq A_j \text{ and } v_i(S \mid A_i) > 0\})$ otherwise.
If $\tau_j = 0$, then $v_i(A_j \mid A_i) \le 0 \le v_i(A_i) + \tau_j$.
Now let $\tau_j > 0$. Then $\tau_j = v_i(S \mid A_i) > 0$ for some $S \subseteq A_j$.
Since $A$ is EFX-fair to $i$,
\begin{align*}
& v_i(A_i) + \tau_j
    = v_i(A_i) + v_i(S \mid A_i)
\\ &\ge v_i(A_j \setminus S) + v_i(S \mid A_i)
\\ &\ge v_i(A_j \setminus S \mid A_i) + v_i(S \mid A_i)
    \tag{$v_i$ is submodular}
\\ &\ge v_i(A_j \mid A_i).
    \tag{$v_i(\cdot \mid A_i)$ is subadditive}
\end{align*}
Let $T \defeq \{\tau_j: j \in [n] \setminus \{i\} \text{ and } \tau_j > 0\}$.
On adding the above inequalities for all $j \in [n] \setminus \{i\}$,
and using the subadditivity of $v_i(\cdot \mid A_i)$, we get
\begin{align*}
& (n-1)v_i(A_i) + \sum_{j \neq i}\tau_j
    \wrapIfTwoCols\ge \sum_{j \neq i} v_i(A_j \mid A_i)
    \ge v_i([m] \setminus A_i \mid A_i)
\\ &\implies nv_i(A_i) + \Sum(T) \ge v_i([m]).
\end{align*}
If $T = \emptyset$, condition C4 of PROPavg trivially holds.
Else, $v_i(A_i) + \Sum(T)/(n-1) > v_i(A_i) + \Sum(T)/n \ge v_i([m])/n$.
\end{proof}

\begin{lemma}[EFX $\fimplies$ PROPx for $n=2$]
\label[lemma]{thm:impl:efx-to-propx-n2}
Let $\fdInst{[2]}{[m]}{(v_1, v_2)}{w}$ be a fair division instance
where $v_i$ is subadditive for some agent $i$.
If an allocation $A$ is EFX-fair to agent $i$,
then it is also PROPx-fair to agent $i$.
\end{lemma}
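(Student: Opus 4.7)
The plan is to reduce to the case where $A$ is not PROP-fair to agent $i$, use subadditivity to show that this failure forces agent $i$ to envy the unique other agent $j$, and then convert each of the two EFX inequalities into the matching PROPx inequality.

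First, I would observe that if $v_i(A_i) \ge w_i v_i([m])$, then $A$ is PROP-fair, hence trivially PROPx-fair, to $i$. So assume $v_i(A_i) < w_i v_i([m])$. Let $j$ be the other agent. By subadditivity of $v_i$, $v_i([m]) \le v_i(A_i) + v_i(A_j)$, and combined with the PROP failure this yields $v_i(A_j) > w_j v_i([m])$. Hence $v_i(A_i)/w_i < v_i([m]) < v_i(A_j)/w_j$, i.e., agent $i$ envies $j$, so both (non-vacuous) EFX inequalities of \cref{defn:efx} hold for the pair $(i, j)$.

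For the first PROPx condition, I would take any $S \subseteq A_j = [m] \setminus A_i$ with $v_i(S \mid A_i) > 0$. The first EFX inequality gives $v_i(A_j \setminus S) \le (w_j/w_i)\, v_i(A_i)$, while subadditivity applied to $A_i \cup S$ and $A_j \setminus S$ gives $v_i(A_i \cup S) \ge v_i([m]) - v_i(A_j \setminus S)$. Multiplying the strict inequality $v_i(A_i) < w_i v_i([m])$ by $w_j/w_i > 0$ and chaining yields $v_i(A_i \cup S) > v_i([m]) - w_j v_i([m]) = w_i v_i([m])$, as required.

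For the second PROPx condition, I would take any $S \subseteq A_i$ with $v_i(S \mid A_i \setminus S) < 0$. The second EFX inequality gives $v_i(A_i \setminus S) \ge (w_i/w_j)\, v_i(A_j)$, and combining this with $v_i(A_j) > w_j v_i([m])$ from the envy step immediately yields $v_i(A_i \setminus S) > w_i v_i([m])$. The only conceptual maneuver is the envy reduction at the start; after that each PROPx inequality falls out in one line of arithmetic, so I do not anticipate a real obstacle. A minor detail worth checking is that the argument is sign-agnostic: the step multiplying $v_i(A_i) < w_i v_i([m])$ by $w_j/w_i$ is valid regardless of the sign of $v_i([m])$ because $w_j/w_i$ is strictly positive, so the proof works uniformly for goods, chores, and mixed manna.
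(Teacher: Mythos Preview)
Your proposal is correct and follows essentially the same approach as the paper: reduce to the non-PROP case, use subadditivity to force envy, and then derive each PROPx inequality from the corresponding EFX inequality plus subadditivity. The only cosmetic difference is the source of the strict inequality in the goods case: the paper uses $v_i(A_i) < v_i(A_i \cup S)$ (from $v_i(S \mid A_i) > 0$), whereas you use $v_i(A_i) < w_i v_i([m])$ (from PROP failure)---both work equally well.
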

\begin{proof}
Let the two agents be $i$ and $j$.
If $A$ is PROP-fair to agent $i$, we are done, so assume $v_i(A_i) < w_iv_i([m])$.
By subadditivity of $v_i$, we get $v_i([m]) \le v_i(A_i) + v_i(A_j)$. Hence,
\[ \frac{v_i(A_i)}{w_i} < v_i([m]) < \frac{v_i(A_j)}{w_j}. \]
Hence, $i$ envies $j$.
Let $S \subseteq A_i$ such that $v_i(S \mid A_i \setminus S) < 0$.
Since $A$ is EFX-fair to $i$, we get
\[ \frac{v_i(A_i \setminus S)}{w_i} \ge \frac{v_i(A_j)}{w_j} > v_i([m]). \]

Let $S \subseteq A_j$ such that $v_i(S \mid A_i) > 0$. Then
\begin{align*}
& v_i([m]) \le v_i(A_i \cup S) + v_i(A_j \setminus S)
    \tag{$v_i$ is subadditive}
\\ &\le v_i(A_i \cup S) + \frac{w_j}{w_i}v_i(A_i)
    \tag{$A$ is EFX-fair to $i$}
\\ &< v_i(A_i \cup S) + \frac{w_j}{w_i}v_i(A_i \cup S)
    = \frac{v_i(A_i \cup S)}{w_i}.
\end{align*}

Hence, $A$ is PROPx-fair to $i$.
\end{proof}

\subsection{MMS vs EFX}
\label{sec:impls-extra:mms-vs-efx}

We prove some results connecting MMS, EFX, and related notions,
using techniques from \citet{plaut2020almost,caragiannis2023new}.

\begin{lemma}
\label[lemma]{thm:mms-and-all-envy}
For a fair division instance $\fdInst{[n]}{[m]}{(v_i)_{i=1}^n}{w}$,
suppose an allocation $A$ is WMMS-fair to agent $i$
and $i$ envies every other agent.
Then $A$ is also EFX-fair to agent $i$ if at least one of these conditions hold:
\begin{tightenum}
\item The items are goods to agent $i$.
\item $v_i$ is additive and $w_i \le w_j$ for all $j \in [n] \setminus \{i\}$.
\end{tightenum}
\end{lemma}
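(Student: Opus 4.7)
The plan is to argue by contradiction using the definition of WMMS directly: if $A$ is WMMS-fair to $i$ and $A$ is not EFX-fair to $i$, I will construct a new allocation $A'$ whose $i$-ratios $v_i(A'_k)/w_k$ are all strictly greater than $v_i(A_i)/w_i$. Since $v_i(A_i)/w_i \ge \WMMS_i/w_i = \max_X \min_k v_i(X_k)/w_k$, this immediately contradicts the definition of $\WMMS_i$. The new allocation $A'$ will be obtained by a single local transfer between the envied agent $j$ identified by the EFX violation and agent $i$, leaving every other bundle unchanged; for those unchanged bundles, the hypothesis that $i$ envies every other agent gives $v_i(A'_k)/w_k = v_i(A_k)/w_k > v_i(A_i)/w_i$ for free.

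By \cref{defn:efx}, a violation of EFX at $i$ takes one of two forms: (i) there exist $j \neq i$ and $S \subseteq A_j$ with $v_i(S \mid A_i) > 0$ and $v_i(A_j \setminus S)/w_j > v_i(A_i)/w_i$; or (ii) there exist $j \neq i$ and $S \subseteq A_i$ with $v_i(S \mid A_i \setminus S) < 0$ and $v_i(A_i \setminus S)/w_i < v_i(A_j)/w_j$. When items are goods for $i$, case (ii) is vacuous since marginals are non-negative. In case (i), I set $A'_i = A_i \cup S$ and $A'_j = A_j \setminus S$. Then $v_i(A'_i) = v_i(A_i) + v_i(S \mid A_i) > v_i(A_i)$, so $v_i(A'_i)/w_i > v_i(A_i)/w_i$, and $v_i(A'_j)/w_j > v_i(A_i)/w_i$ holds directly by the EFX-violation assumption. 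This handles condition (1).

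For condition (2), additivity gives $v_i(S \mid A_i) = v_i(S \mid A_i \setminus S) = v_i(S)$, so I must also handle case (ii). Here I transfer $S$ in the opposite direction, setting $A'_i = A_i \setminus S$ and $A'_j = A_j \cup S$. Since $v_i(S) < 0$, $v_i(A'_i)/w_i = v_i(A_i)/w_i - v_i(S)/w_i > v_i(A_i)/w_i$. The delicate step is to verify $v_i(A'_j)/w_j > v_i(A_i)/w_i$: rearranging the case-(ii) hypothesis gives $v_i(A_j)/w_j > v_i(A_i)/w_i - v_i(S)/w_i$, so
\[
\frac{v_i(A'_j)}{w_j} = \frac{v_i(A_j)}{w_j} + \frac{v_i(S)}{w_j} > \frac{v_i(A_i)}{w_i} + v_i(S)\!\left(\frac{1}{w_j} - \frac{1}{w_i}\right),
\]
and the parenthesized quantity is non-positive while $v_i(S) < 0$, giving a non-negative correction. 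This is precisely where the weight hypothesis $w_i \le w_j$ enters, and I expect this sign-bookkeeping to be the only real subtlety in the proof; the rest is a direct application of the definitions.
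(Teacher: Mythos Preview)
Your proposal is correct and follows essentially the same approach as the paper's proof: argue by contradiction, identify the two EFX-violation cases, perform a local transfer of $S$ between $i$ and $j$ (into $A_i$ in case~(i), out of $A_i$ in case~(ii)), and use the universal-envy hypothesis for the unchanged bundles to obtain $\min_k v_i(A'_k)/w_k > v_i(A_i)/w_i \ge \WMMS_i/w_i$. Your case-(ii) inequality is just an algebraic rearrangement of the paper's; in particular, both arguments pinpoint $w_i \le w_j$ (equivalently $1/w_j - 1/w_i \le 0$) as the sole place the weight hypothesis is used.
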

\begin{proof}
Suppose agent $i$ is not EFX-satisfied by $A$, i.e., she EFX-envies some agent $j$.
Then $\exists S \subseteq A_i \cup A_j$ where either

\begin{tightenum}
\item $S \subseteq A_j$, $v_i(S \mid A_i) > 0$, and
    $\displaystyle \frac{v_i(A_i)}{w_i} < \frac{v_i(A_j \setminus S)}{w_j}$.
\item $S \subseteq A_i$, $v_i(S \mid A_i \setminus S) < 0$, and
    $\displaystyle \frac{v_i(A_i \setminus S)}{w_i} < \frac{v_i(A_j)}{w_j}$.
\end{tightenum}

If all items are goods, case 2 doesn't occur.

\textbf{Case 1}: $S \subseteq A_j$

Let $B$ be the allocation obtained by transferring $S$ from $A_j$ to $A_i$.
Formally, let $B_i \defeq A_i \cup S$, $B_j \defeq A_j \setminus S$,
and $B_k \defeq A_k$ for all $k \in [n] \setminus \{i, j\}$. Then
\ifColsTwo\providecommand\tempWrap{\\}\else\providecommand{\tempWrap}{&}\fi
\begin{align*}
\frac{v_i(B_i)}{w_i} &= \frac{v_i(A_i) + v_i(S \mid A_i)}{w_i} > \frac{v_i(A_i)}{w_i},
\tempWrap \frac{v_i(B_j)}{w_j} &= \frac{v_i(A_j \setminus S)}{w_j} > \frac{v_i(A_i)}{w_i},
\end{align*}
and for any $k \in [n] \setminus \{i, j\}$, we get
\[ \frac{v_i(B_k)}{w_k} = \frac{v_i(A_k)}{w_k} > \frac{v_i(A_i)}{w_i}. \]
Hence,
\[ \min_{k=1}^n \frac{v_i(B_k)}{w_k} > \frac{v_i(A_i)}{w_i} \ge \frac{\WMMS_i}{w_i}, \]
which is a contradiction.

\textbf{Case 2}: $S \subseteq A_i$

Let $v_i$ be additive and $w_i \le w_j$.
Let $B$ be the allocation obtained by transferring $S$ from $A_i$ to $A_j$. Formally,
let $B_i \defeq A_i \setminus S$, $B_j \defeq A_j \cup S$,
and $B_k \defeq A_k$ for all $k \in [n] \setminus \{i, j\}$. Then
\[ \frac{v_i(B_i)}{w_i} = \frac{v_i(A_i) - v_i(S)}{w_i} > \frac{v_i(A_i)}{w_i}, \]
\[ \frac{v_i(B_j)}{w_j} = \frac{v_i(A_j) + v_i(S)}{w_j} > \frac{v_i(A_i \setminus S)}{w_i} - \frac{(-v_i(S))}{w_j} \ge \frac{v_i(A_i)}{w_i}, \]
and for any $k \in [n] \setminus \{i, j\}$, we get
\[ \frac{v_i(B_k)}{w_k} = \frac{v_i(A_k)}{w_k} > \frac{v_i(A_i)}{w_i}. \]
Hence,
\[ \min_{k=1}^n \frac{v_i(B_k)}{w_k} > \frac{v_i(A_i)}{w_i} \ge \frac{\WMMS_i}{w_i}, \]
which is a contradiction.
\end{proof}

\begin{lemma}[MMS $\fimplies$ EFX for $n=2$]
\label[lemma]{thm:impl:mms-to-efx-n2}
For a fair division instance $\fdInst{[2]}{[m]}{(v_i)_{i=1}^2}{w}$,
suppose an allocation $A$ is WMMS-fair to agent 1.
Then $A$ is also EFX-fair to agent 1 if at least one of these conditions hold:
\begin{tightenum}
\item The items are goods to agent 1.
\item $v_1$ is additive and $w_1 \le w_2$.
\end{tightenum}
\end{lemma}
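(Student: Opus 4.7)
The plan is to reduce this directly to \cref{thm:mms-and-all-envy} by a trivial case split on whether agent 1 envies agent 2. Since there are only two agents, the quantifier ``$i$ envies every other agent'' in \cref{thm:mms-and-all-envy} collapses to the single condition ``agent 1 envies agent 2'', which is exactly the case in which EFX-fairness imposes a nonvacuous constraint. This collapse is the reason the two hypotheses of \cref{thm:mms-and-all-envy} transfer cleanly to the $n=2$ setting.

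Concretely, I would proceed as follows. First, observe that by \cref{defn:efx}, $A$ is EFX-fair to agent 1 as long as for agent 2 either agent 1 does not envy agent 2 or the two EFX inequalities hold. So split on whether agent 1 envies agent 2 in $A$. In the non-envy case, $A$ is immediately EFX-fair to agent 1 with nothing further to verify. In the envy case, agent 1 envies every other agent (there is only one other agent), so the hypothesis of \cref{thm:mms-and-all-envy} is met. Then under either of the two listed conditions---all items being goods to agent 1, or $v_1$ additive with $w_1 \le w_2$---\cref{thm:mms-and-all-envy} applies and yields that $A$ is EFX-fair to agent 1.

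There is essentially no main obstacle here: this lemma is a direct specialization of \cref{thm:mms-and-all-envy} to $n=2$, and its only purpose is to record that the ``envies every other agent'' hypothesis of that lemma is automatically obtained for free when there are only two agents (modulo the trivial non-envy case). The only thing to be careful about is matching the conditions in the present statement to those of \cref{thm:mms-and-all-envy}: in particular, ``items are goods to agent 1'' corresponds to the non-negative marginals hypothesis, and ``$v_1$ additive and $w_1 \le w_2$'' corresponds to the additive weighted hypothesis there. Both correspondences are verbatim.
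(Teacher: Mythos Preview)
Your proposal is correct and matches the paper's proof exactly: the paper also does the trivial case split on whether agent 1 envies agent 2, invoking \cref{thm:mms-and-all-envy} in the envy case and noting EFX is vacuous otherwise.
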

\begin{proof}
If agent 1 doesn't envy agent 2, she is EFX-satisfied.
Otherwise, she is EFX-satisfied because of \cref{thm:mms-and-all-envy}.
\end{proof}

Theorem 2 in \citet{caragiannis2023new} proves that MMS $\fimplies$ EEFX
for additive valuations over goods and equal entitlements.
The proof can be easily adapted to non-additive valuations over goods and unequal entitlements.
\ifVerbose
For the sake of completeness, we give a proof below.
\fi

\begin{lemma}[MMS $\fimplies$ EEFX, \citet{caragiannis2023new}]
\label[lemma]{thm:impl:mms-to-eefx}
For a fair division instance $\fdInst{[n]}{[m]}{(v_i)_{i=1}^n}{w}$,
if all items are goods to agent $i$ and allocation $A$ is WMMS-fair to her,
then $A$ is also EEFX-fair to $i$.
\end{lemma}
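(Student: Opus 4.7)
The plan is to construct an allocation $B$ with $B_i = A_i$ that is EFX-fair to $i$, which by definition serves as agent $i$'s epistemic-EFX certificate for $A$. I will extend the argument of \cite{caragiannis2023new}, originally for additive valuations over goods with equal entitlements, to general monotone valuations over goods with arbitrary entitlements.

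I would first take $P = (P_1, \ldots, P_n)$ to be a WMMS partition of $i$, so that $v_i(P_k)/w_k \ge \WMMS_i/w_i$ for every $k$; together with the hypothesis $v_i(A_i) \ge \WMMS_i$, this yields $v_i(A_i)/w_i \ge \WMMS_i/w_i$. The central structural fact is that $P$ itself is already essentially EFX-fair to $i$ when we view agent $j$ as holding $P_j$: for every $k$ and every subset $S \subseteq P_k$ with $v_i(S \mid P_\ell) > 0$ for some bundle $P_\ell$ attaining the WMMS minimum, transferring $S$ from $P_k$ to $P_\ell$ would yield $v_i(P_\ell \cup S)/w_\ell > v_i(P_\ell)/w_\ell = \WMMS_i/w_i$, and since all other bundles remain at value-per-weight $\ge \WMMS_i/w_i$, WMMS optimality of $P$ forces $v_i(P_k \setminus S)/w_k \le \WMMS_i/w_i \le v_i(A_i)/w_i$.

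To turn $P$ into the desired $B$, I would perform a sequence of swaps: start with $B_j := P_j$, and while $B_i \ne A_i$, pick $g \in B_i \setminus A_i$ and $g' \in A_i \setminus B_i$ (which lies in some $B_\ell$, $\ell \ne i$), and swap $g$ and $g'$ between $B_i$ and $B_\ell$. Throughout, I would maintain the invariant that $B$ is EFX-fair to $i$. The invariant should carry through because monotonicity of $v_i$ over goods and the structural fact above let us express each $B_\ell$ as $P_\ell$ with some items removed (those migrated into $B_i$) and some items added (those migrated out of $B_i$), and the EFX bound $v_i(A_i)/w_i$ can be propagated through these modifications.

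The main obstacle will be preserving this invariant in the regime where $v_i(P_i) > v_i(A_i)$, which can occur under unequal entitlements: in that case, substantial mass must flow out of $B_i$ into the other bundles, threatening EFX. I would handle this by choosing an extremal WMMS partition (e.g., one maximizing $|P_i \cap A_i|$, or minimizing $v_i(P_i)$ among WMMS partitions), and by ordering the swaps so that each item moved out of $B_i$ into $B_\ell$ is paired with a simultaneously removed item from $B_\ell$, ensuring $B_\ell$'s value stays controlled so that the structural EFX bound inherited from $P$ continues to dominate.
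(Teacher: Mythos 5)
Your central ``structural fact'' is where the argument breaks, for two reasons. First, the deduction is a non sequitur: after transferring $S$ from $P_k$ into a minimum bundle $P_\ell$, WMMS-optimality only says that the \emph{minimum} of the new allocation is at most $\WMMS_i/w_i$, and that minimum can be attained by an untouched bundle sitting exactly at value $\WMMS_i/w_i$; nothing forces $v_i(P_k\setminus S)/w_k\le \WMMS_i/w_i$. You would need every other bundle to be \emph{strictly} above $\WMMS_i/w_i$, which a WMMS partition does not provide. Second, the fact is simply false: take three equally-entitled agents with identical additive valuations over goods of values $3,3,2,2,2$, so $\MMS_i=3$ with partition $P=(\{3\},\{3\},\{2,2,2\})$. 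Moving one item of value $2$ into the minimum bundle leaves the other $\{3\}$ at the minimum, yet $v_i(P_3\setminus S)=4>3$; indeed $P$ itself is not EFX-fair to the agent holding $\{3\}$, so your starting point for the swap procedure already violates the invariant you intend to maintain. There is also a mismatch in the marginal condition: EFX for the certificate $B$ (with $B_i=A_i$) quantifies over $S$ with $v_i(S\mid A_i)>0$ and compares against $v_i(A_i)/w_i$, not over $S$ with positive marginal relative to a minimum bundle of $P$; and the proposed pairwise swaps cannot even equalize $B_i$ with $A_i$ when $|P_i|\neq|A_i|$. The concluding paragraph acknowledges trouble but offers only heuristics (extremal WMMS partition, swap ordering) without an argument that EFX-fairness to $i$ survives any single exchange.

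The paper's proof avoids all of this by never touching agent $i$'s bundle: it starts from $A$ itself and repeatedly moves an EFX-violating subset $S$ from an EFX-envied agent $j$ to some agent $k$ whom $i$ does not envy, tracking the lexicographic potential $\phi(X)=(-|E_X|,W_X)$ (number of envied agents, then total items held by EFX-envied agents) to guarantee termination. The process ends either with no EFX-envy, or with $i$ envying all $n-1$ other agents; in the latter case \cref{thm:mms-and-all-envy} applies, and this is exactly where WMMS-fairness is used --- the strict inequalities needed to contradict the max-min definition of $\WMMS_i$ come from $i$'s strict envy toward \emph{every} other agent, which is the strictness your transfer-into-the-minimum-bundle argument lacks. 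If you want to salvage your route, you would have to first manufacture that ``envies everyone or no EFX-envy'' dichotomy, at which point you have essentially reconstructed the paper's argument.
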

\ifVerbose
\begin{proof}
In any allocation $X$, let
$E_X$ be the set of agents envied by $i$,
$S_X$ be the set of agents EFX-envied by $i$,
$W_X$ be the total number of items among the agents in $S_X$.

$E_X \defeq \{t \in [n] \setminus \{i\}: i \textrm{ envies } t \textrm{ in } X\}$,
$S_X \defeq \{t \in [n] \setminus \{i\}: i \textrm{ EFX-envies } t \textrm{ in } X\}$,
$W_X \defeq \sum_{t \in S_X} |X_t|$, and $\phi(X) \defeq (-|E_X|, W_X)$.

First, we show that for any allocation $X$ where $|E_X| \le n-2$ and $S_X \neq \emptyset$,
there exists a \emph{better} allocation $Y$, i.e, $Y_i = X_i$ and $\phi(Y) < \phi(X)$
(tuples are compared lexicographically).

Let $j \in S_X$ and $k \in [n] \setminus \{i\} \setminus E_X$.
Since $i$ EFX-envies $j$, $\exists S \subseteq X_j$ such that $v_i(S \mid X_i) > 0$ and
\[ \frac{v_i(X_i)}{w_i} < \frac{v_i(X_j \setminus S)}{w_j}. \]
Let $Y_k \defeq X_k \cup S$, $Y_j \defeq X_j \setminus S$,
and $Y_t \defeq X_t$ for all $t \in [n] \setminus \{j, k\}$.

Then $i$ envies $j$ in $Y$. Hence, $E_X \subseteq E_Y$.
If $k \in E_Y$, then $|E_Y| > |E_X|$, so $\phi(Y) < \phi(X)$.
If $k \not\in E_Y$, then $W_Y < W_X$, so $\phi(Y) < \phi(X)$.
Hence, $Y$ is better than $X$.

Set $X = A$. As long as $|E_X| < n-1$ and $S_X \neq \emptyset$,
keep modifying $X$ as per Lemma 1.
This process will eventually end, since $\phi(X)$ keeps reducing,
and there are a finite number of different values $\phi(X)$ can take.
Let $B$ be the final allocation thus obtained.
Then $B_i = A_i$, and $|E_B| = n-1$ or $S_B = \emptyset$.

By \cref{thm:mms-and-all-envy},
$|E_B| = n-1$ implies $S_B = \emptyset$.
Hence, $B$ is agent $i$'s EEFX-certificate for $A$.
\end{proof}
\fi

\begin{definition}[leximin partition]
\label[definition]{defn:leximin}
Let $R$ be a totally-ordered set.
For any sequence $X = (x_i)_{i=1}^n$ from $R$, define $\sorted(X)$ to be a permutation of $X$
where entries occur in non-decreasing order.
For any two sequences $X = (x_i)_{i=1}^n$ and $Y = (y_i)_{i=1}^n$ from $R$, we say that $X \le Y$ if
$\exists i \in [n]$ such that $x_i \le y_i$ and $x_j = y_j$ for all $j \in [i-1]$.
(Note that this relation $\le$ over sequences is a total ordering.)

For any set $S \subseteq M$, let $\Pi_n(S)$ denote the set of all $n$-partitions of $S$.
We say that $P \in \Pi_n([m])$ is a leximin $n$-partition of a function $f: 2^{[m]} \to R$ if
\[ P \in \argmax_{X \in \Pi_n([m])} \sorted\left((f(X_j))_{j=1}^n\right). \]
\end{definition}

\begin{lemma}[MMS $\fimplies$ MXS]
\label[lemma]{thm:impl:mms-to-mxs}
Let $\Ical \defeq \fdInst{[n]}{[m]}{(v_i)_{i=1}^n}{\eqEnt}$ be a fair division instance.
Let allocation $A$ be MMS-fair to agent $i$.
Then $A$ is also MXS-fair to $i$ if either all items are goods to agent $i$
or $v_i$ is additive.
\end{lemma}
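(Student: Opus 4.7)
The plan is to produce, for any MMS-fair allocation $A$, an EFX-certificate $B$ with $v_i(B_i) \le v_i(A_i)$. The construction is canonical: take a leximin $n$-partition $P$ of $v_i$ (\cref{defn:leximin}), let $P_1$ be a bundle of minimum value (so $v_i(P_1) = \MMS_i \le v_i(A_i)$), give $P_1$ to agent $i$, and distribute the remaining bundles to the other agents arbitrarily. The only work is to verify that this allocation is EFX-fair to $i$.

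I would argue by contradiction: suppose agent $i$ EFX-envies the agent receiving some bundle $P_j$, with $j \neq 1$. By \cref{defn:efx}, one of two cases arises: (a) there is $S \subseteq P_j$ with $v_i(S \mid P_1) > 0$ and $v_i(P_j \setminus S) > v_i(P_1)$, or (b) there is $S \subseteq P_1$ with $v_i(S \mid P_1 \setminus S) < 0$ and $v_i(P_1 \setminus S) < v_i(P_j)$. The goods hypothesis makes all marginals non-negative, so case (b) cannot occur; thus (b) is relevant only under the additive hypothesis.

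In each case I would modify $P$ into a partition $P'$ that leximin-dominates $P$. In case (a), transfer $S$ from $P_j$ to $P_1$, setting $P'_1 \defeq P_1 \cup S$ and $P'_j \defeq P_j \setminus S$. In case (b), transfer $S$ in the opposite direction, setting $P'_1 \defeq P_1 \setminus S$ and $P'_j \defeq P_j \cup S$. A short calculation from the EFX-envy hypothesis, using non-negative marginals for goods or additivity for mixed manna, shows that in both cases $v_i(P'_1) > v_i(P_1)$, $v_i(P'_j) > v_i(P_1)$, and $v_i(P_j) > v_i(P_1)$. Hence $P_j$ is not among the tied minima of $P$, so the partition $P'$ has exactly one fewer bundle at value $v_i(P_1)$ than $P$, and every other bundle of $P'$ has value at least $v_i(P_1)$. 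Comparing sorted sequences position-by-position then shows $P' > P$ in the leximin order, contradicting the leximin-maximality of $P$.

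The main obstacle is case (b) under additive mixed manna: naively transferring a negatively-valued $S$ from $P_1$ to $P_j$ threatens to push $v_i(P'_j)$ below $v_i(P_1)$. The saving fact is that the EFX-envy inequality $v_i(P_1 \setminus S) < v_i(P_j)$ rearranges under additivity to $v_i(P_j) + v_i(S) > v_i(P_1)$, which is exactly the bound needed on $v_i(P'_j)$. This rearrangement fails for non-additive mixed manna, which is consistent with the theorem restricting the non-goods case to additive valuations.
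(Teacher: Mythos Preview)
Your proposal is correct and uses essentially the same idea as the paper: take a leximin $n$-partition $P$ of $v_i$, give agent $i$ the minimum bundle, and show via a swap argument that this is EFX-fair to $i$ (hence an MXS-certificate). The only organizational difference is that the paper factors through the $n=2$ case---observing that each pair $(P_1,P_j)$ is leximin in the restricted 2-agent instance and then invoking \cref{thm:impl:mms-to-efx-n2}---whereas you carry out the swap-to-improve-leximin argument directly on the $n$-agent partition; the underlying transfer argument is the same in both.
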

\begin{proof}
We will show that agent $i$'s leximin $n$-partition is her MXS-certificate for $A$.
Without loss of generality, let $i = 1$.
Let $P$ be a leximin $n$-partition of $v_1$
such that $v_1(P_1) \le \ldots \le v_1(P_n)$.
Then $v_1(P_1) = \MMS_1 \le v_1(A_1)$.

For any agent $j \ge 2$, the allocation $(P_1, P_j)$ is leximin for
the instance $\Icalhat \defeq \fdInst{\{i, j\}}{P_1 \cup P_j}{(v_1, v_j)}{\eqEnt})$.
Hence, agent 1 is MMS-satisfied by $(P_1, P_j)$ in $\Icalhat$.
Since either all items are goods to agent 1 or $v_1$ is additive,
by \cref{thm:impl:mms-to-efx-n2}, agent 1 is EFX-satisfied by $(P_1, P_j)$ in $\Icalhat$.
Thus, $P$ is EFX-fair to agent 1 in $\Ical$,
so $P$ is agent 1's MXS-certificate for $A$.
\end{proof}

\begin{lemma}[MMS $\fimplies$ \MXSZero]
\label[lemma]{thm:impl:mms-to-mxs0}
Let $\Ical \defeq \fdInst{[n]}{[m]}{(v_i)_{i=1}^n}{\eqEnt}$ be a fair division instance over goods.
If an allocation $A$ is MMS-fair to agent $i$, then it is also \MXSZero-fair to $i$.
(\EFXZero{} is defined in \cref{defn:efx0-goods}.
\MXSZero{} is the minimum-\EFXZero{}-share.
\EFXZero{} implies EF1, so \MXSZero{} implies M1S.)
\end{lemma}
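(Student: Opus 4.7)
The strategy is to construct an \EFXZero-fair partition $P$ of $[m]$ with $v_i(P_1) \le v_i(A_i)$, and assign agent $i$ to $P_1$; this witnesses \MXSZero-fairness of $A$ via \cref{defn:minfs}. Set $i = 1$ for simplicity. Start with a leximin $n$-partition $P^{(0)}$ of $v_1$ (\cref{defn:leximin}), sorted so $v_1(P^{(0)}_1) \le \cdots \le v_1(P^{(0)}_n)$, so that $v_1(P^{(0)}_1) = \MMS_1$. In general $P^{(0)}$ need not be \EFXZero-fair (e.g., for $n = 2$ and four identical-valuation items of values $1, 1, 1, 0$, the leximin partition $(\{a\}, \{b, c, d\})$ violates \EFXZero{} via removing the zero-valued item), so we repair it iteratively: while some $j \ne 1$ and $g \in P_j$ satisfy $v_1(P_j \setminus \{g\}) > v_1(P_1)$, transfer $g$ from $P_j$ to $P_1$. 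Each step strictly increases $|P_1|$ by one, so the loop terminates in at most $m$ rounds, and the terminal $P$ is \EFXZero-fair by construction (\cref{defn:efx0-goods}).

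The central task is to prove the invariant $v_1(P_1) = \MMS_1$ across all iterations. Suppose for contradiction that some move produces $Q$ with $v_1(Q_1) > \MMS_1$. Let $r$ be the number of bundles in the fixed original $P^{(0)}$ of value exactly $\MMS_1$. The key observation is that any such $P^{(0)}_k$ can never serve as a transfer source, because by monotonicity (goods), removing any item from a bundle of value $\MMS_1$ yields value $\le \MMS_1$, which fails the transfer condition $> v_1(P_1) = \MMS_1$. Hence those bundles remain intact throughout the procedure, and exactly $r - 1$ of them survive in $Q$ (the one originally equal to $P^{(0)}_1$ has grown to value $v_1(Q_1) > \MMS_1$). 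Combined with $v_1(Q_j) > \MMS_1$, the partition $Q$ has at most $r - 1$ bundles of value $\MMS_1$. If $r = 1$, then $\min_k v_1(Q_k) > \MMS_1$, directly contradicting the definition of $\MMS_1$. If $r \ge 2$, then $Q$ is still an MMS partition (at least one min bundle remains), and its sorted-ascending value sequence ties with that of $P^{(0)}$ on positions $1, \ldots, r - 1$ (all $\MMS_1$) but strictly exceeds $P^{(0)}$'s at position $r$, contradicting leximin-maximality of $P^{(0)}$.

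The main obstacle is to notice that this lexicographic comparison must be anchored to the \emph{fixed} original $P^{(0)}$, since intermediate partitions may no longer be leximin after earlier transfers. The fact that none of $P^{(0)}$'s min-value bundles is ever modified by the procedure is precisely what keeps the comparison with $P^{(0)}$ valid at every step, so the invariant persists uniformly. Upon termination, $P$ is \EFXZero-fair and $v_1(P_1) = \MMS_1 \le v_1(A_1)$, which is the desired \MXSZero-certificate for $A$.
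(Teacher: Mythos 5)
Your proof is correct, but it takes a genuinely different route from the paper's. The paper's proof of \cref{thm:impl:mms-to-mxs0} avoids any repair process by working with the \emph{leximin++} partition of $\vhat(X) \defeq (v_1(X), |X|)$ (value with cardinality as tiebreak, \`a la Plaut--Roughgarden): for that partition, a single swap argument (moving one good $g$ from $P_j$ to $P_1$ strictly increases both $\vhat(P_1)$ and leaves $\vhat(P_j\setminus\{g\}) > \vhat(P_1)$) immediately contradicts leximin++-optimality of the restricted pair $(P_1, P_j)$, so the partition is \EFXZero-fair to agent $1$ outright and has $v_1(P_1) = \MMS_1$. You instead start from a plain leximin partition of $v_1$ (correctly observing it need not be \EFXZero), run an explicit item-transfer repair loop, and carry the invariant $v_1(P_1) = \MMS_1$ via a counting argument: at the first violating move, the original minimum-value bundles (other than $P_1$) are provably intact, every bundle of the resulting partition $Q$ has value at least $\MMS_1$, and $Q$ has strictly fewer bundles at value exactly $\MMS_1$ than $P^{(0)}$, contradicting either the definition of $\MMS_1$ (if $r=1$) or leximin-maximality of $P^{(0)}$ (if $r\ge 2$). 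This is sound; the only step you leave implicit is that bundles which have previously served as transfer sources still have value strictly above $\MMS_1$ (needed both for ``$\min_k v_1(Q_k) > \MMS_1$'' when $r=1$ and for ``at most $r-1$ bundles at $\MMS_1$''), but that follows at once from the strict transfer condition together with the invariant holding before the first violation, so the gap is cosmetic. Note also that your invariant, combined with non-negative marginals, says exactly that only items of zero marginal value to the current $P_1$ are ever absorbed --- which is precisely the role the cardinality tiebreak plays in the paper's leximin++ argument; the paper's device buys a one-shot certificate with no loop or global invariant, while your argument is more hands-on but equally valid and stays within the plain leximin framework of \cref{defn:leximin}.
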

\begin{proof}
Without loss of generality, let $i = 1$.
Let $\vhat(X) \defeq (v_1(X), |X|)$.
Then $\vhat$'s co-domain is totally-ordered, assuming pairs are compared lexicographically.
Let $P$ be a leximin $n$-partition of $\vhat$ (c.f.~\cref{defn:leximin})
such that $\vhat(P_1) \le \ldots \le \vhat(P_n)$.
($P$ is also called the leximin++ partition of $v_1$ \citep{plaut2020almost}).
Then $v_1(P_1) = \min_{j=1}^n v_1(P_j)$.
Hence, $v_1(P_1) = \MMS_1 \le v_1(A_1)$.

For any $j \ge 2$, we get that $(P_1, P_j)$ is also a leximin 2-partition of $\vhat$
(when $\vhat$'s domain is restricted to $P_1 \cup P_j$).
We will now show that agent 1 doesn't \EFXZero{}-envy agent $j$, i.e.,
$v_1(P_1) \ge \max_{g \in P_j} v_1(P_j \setminus \{g\})$.
(If $P_j = \emptyset$, this is trivially true.)
Suppose $v_1(P_1) < v_1(P_j \setminus \{g\})$ for some $g \in P_j$.
Let $Q_1 \defeq P_1 \cup \{g\}$ and $Q_j \defeq P_j \setminus \{g\}$.
Then $\vhat(Q_1) > \vhat(P_1)$ and $\vhat(Q_j) > \vhat(P_1)$.
Hence, $(P_1, P_j)$ is not a leximin 2-partition of $\vhat$, which is a contradiction.
Hence, agent 1 doesn't \EFXZero{}-envy agent $j$.

Thus, $P$ is \EFXZero-fair to agent 1 in $\Ical$,
so $P$ is agent 1's \MXSZero{}-certificate for $A$.
\end{proof}

\subsection{Among PROP, APS, MMS}

\begin{lemma}[PROP $\fimplies$ APS, Proposition 4 of \citet{babaioff2023fair}]
\label[lemma]{thm:impl:prop-to-aps}
For any fair division instance $\fdInst{[n]}{[m]}{(v_i)_{i=1}^n}{w}$,
$\APS_i \le w_iv_i([m])$ for agent $i$ if $v_i$ is additive.
\end{lemma}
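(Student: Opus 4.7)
The plan is to use the primal definition of APS (Definition~\ref{defn:aps}) and exhibit a single price vector that witnesses the bound. Specifically, I would pick the price vector $p$ that makes the price of every item equal its value to agent $i$, i.e., $p_j \defeq v_i(j)$ for each $j \in [m]$. Under additivity of $v_i$, this gives $p(S) = v_i(S)$ for every $S \subseteq [m]$, and in particular $p([m]) = v_i([m])$.

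With this choice, the inner maximization in the primal APS becomes
\[
\max_{S \subseteq [m]:\; v_i(S) \le w_i v_i([m])} v_i(S).
\]
By the constraint itself, every feasible $S$ satisfies $v_i(S) \le w_i v_i([m])$, so the max is at most $w_i v_i([m])$. Since the outer operator in Definition~\ref{defn:aps} is a minimum over price vectors, this one choice already certifies $\APS_i \le w_i v_i([m])$.

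The only routine check is that the feasible set over which the inner maximum is taken is nonempty, so that the maximum is well-defined. If $w_i v_i([m]) \ge 0$, then $S = \emptyset$ is feasible because $v_i(\emptyset) = 0$. If $w_i v_i([m]) < 0$ (equivalently $v_i([m]) < 0$, since $w_i > 0$), then $S = [m]$ is feasible because $v_i([m]) \le w_i v_i([m])$ reduces to $(1-w_i)v_i([m]) \le 0$, which holds since $w_i < 1$ and $v_i([m]) < 0$. Either way, the inner problem is well-posed, completing the argument.

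There is no real obstacle here: additivity is exactly what allows us to identify the value function with a linear price function, and once that identification is made the bound is immediate from the feasibility constraint. The only subtlety worth flagging in the write-up is the nonemptiness check above, which matters for mixed manna where $v_i([m])$ can be negative.
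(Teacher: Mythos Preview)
Your proposal is correct and follows exactly the same approach as the paper: set $p_j = v_i(j)$, use additivity to identify $p(S)$ with $v_i(S)$, and read off the bound from the feasibility constraint. Your nonemptiness check for the inner maximum is a small extra detail the paper leaves implicit.
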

\ifVerbose
\begin{proof}
Set the price $p(g)$ of each item $g$ to $v_i(g)$. Then
\[ \APS_i \le \!\!\!\!\!\!\max_{\substack{S \subseteq [m]:\\p(S) \le w_ip([m])}} \!\!\!\!\!\!v_i(S)
= \!\!\!\!\!\!\max_{\substack{S \subseteq [m]:\\v_i(S) \le w_iv_i([m])}} \!\!\!\!\!\!v_i(S) \le w_iv_i([m]).
\qedhere \]
\end{proof}
\fi

\begin{lemma}[PROP $\fimplies$ WMMS]
\label[lemma]{thm:impl:prop-to-wmms}
In any instance $\fdInst{[n]}{[m]}{(v_i)_{i=1}^n}{w}$,
if $v_i$ is superadditive, then $\WMMS_i \le w_iv_i([m])$.
\end{lemma}
\begin{proof}
Let $P$ be agent $i$'s WMMS partition. Then
\begin{align*}
v_i([m]) &\ge \sum_{j=1}^n v_i(P_j) = \sum_{j=1}^n w_j\left(\frac{v_i(P_j)}{w_j}\right)
    \wrapIfTwoCols\ge \sum_{j=1}^n w_j\frac{\WMMS_i}{w_i} = \frac{\WMMS_i}{w_i}.
\qedhere
\end{align*}
\end{proof}

\begin{lemma}
\label[lemma]{thm:prefix-sum-bound}
Let $a_1 \le a_2 \le \ldots \le a_n$ be $n$ real numbers.
Let $s_k \defeq \sum_{i=1}^k a_i$ for any $0 \le k \le n$.
Then $s_k \le (k/n)s_n$.
\end{lemma}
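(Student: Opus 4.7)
The statement is the elementary fact that the average of the smallest $k$ entries of a sorted sequence is bounded by the overall average. So the plan is to prove the equivalent inequality $ns_k \le ks_n$ and extract the claim by dividing by $n$ (the case $n = 0$ is vacuous since then $k = 0$ too, and $s_k = s_n = 0$).

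First I would dispose of the trivial edge cases $k = 0$ and $k = n$, where both sides equal $0$ or $s_n$ respectively. For $1 \le k \le n-1$, I would write
\[ ns_k - ks_n \;=\; n\sum_{i=1}^k a_i \;-\; k\sum_{i=1}^n a_i \;=\; (n-k)\sum_{i=1}^k a_i \;-\; k\sum_{i=k+1}^n a_i. \]
Now sortedness gives $a_i \le a_k$ for $i \le k$ and $a_i \ge a_k$ for $i \ge k+1$, so $\sum_{i=1}^k a_i \le ka_k$ and $\sum_{i=k+1}^n a_i \ge (n-k)a_k$. Substituting these bounds,
\[ ns_k - ks_n \;\le\; (n-k)(ka_k) - k((n-k)a_k) \;=\; 0, \]
which yields $s_k \le (k/n)s_n$ after dividing by $n$.

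There is essentially no obstacle here: the whole argument is one pivot comparison around $a_k$. The only thing to be slightly careful about is not to divide by zero in edge cases, which is why I would handle $k \in \{0,n\}$ separately (though both still fit the same algebra, with the empty sums vanishing).
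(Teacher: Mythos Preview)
Your proof is correct and takes essentially the same approach as the paper: both arguments pivot around $a_k$, using $\sum_{i\le k} a_i \le ka_k$ and $\sum_{i>k} a_i \ge (n-k)a_k$. The paper chains these as $s_n \ge s_k + (n-k)a_k \ge s_k + (n-k)s_k/k = (n/k)s_k$, while you rearrange to $ns_k - ks_n$ first; the content is the same, and your explicit handling of the $k=0$ edge case is a small bonus.
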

\begin{proof}
\[ s_n = s_k + \!\!\sum_{i=k+1}^n \!\!a_i \ge s_k + (n-k)a_k \ge s_k + (n-k)\frac{s_k}{k}
= \frac{n}{k}s_k. \qedhere \]
\end{proof}

\ifVerbose
\begin{lemma}[PROP $\fimplies$ pessShare]
\label[lemma]{thm:impl:prop-to-pessShare}
For any fair division instance $\fdInst{[n]}{[m]}{(v_i)_{i=1}^n}{w}$,
if $v_i$ is superadditive, then $\pessShare_i \le w_iv_i([m])$.
\end{lemma}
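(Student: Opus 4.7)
The plan is to bound the $\ell$-out-of-$d$-share directly for each admissible pair $(\ell, d)$ and then take the supremum. Fix $\ell, d$ with $\ell/d \le w_i$, and let $X \in \Pi_d([m])$ be any $d$-partition with $v_i(X_1) \le v_i(X_2) \le \ldots \le v_i(X_d)$, so $\sum_{j=1}^{\ell} v_i(X_j)$ is a candidate value for $\loodM_i$. Set $a_j \defeq v_i(X_j)$; these are real numbers in non-decreasing order, exactly the setup of \cref{thm:prefix-sum-bound}.

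Applying \cref{thm:prefix-sum-bound} with $s_k = \sum_{j=1}^k a_j$ gives
\[ \sum_{j=1}^{\ell} v_i(X_j) \le \frac{\ell}{d} \sum_{j=1}^{d} v_i(X_j). \]
Next, because $v_i$ is superadditive and $X$ partitions $[m]$, induction on $d$ yields
\[ \sum_{j=1}^{d} v_i(X_j) \le v_i([m]). \]
Combining the two displayed inequalities with $\ell/d \le w_i$, and using that $v_i([m]) \ge 0$ (which follows from superadditivity applied to $v_i([m]) \ge v_i([m]) + v_i(\emptyset)$, hence $v_i(\emptyset) \le 0$, combined with $v_i(\emptyset) = 0$; more directly, if $v_i([m]) < 0$ then the bound $(\ell/d) v_i([m]) \le w_i v_i([m])$ still holds because $\ell/d \le w_i$ flips direction on a negative number---so I need to handle this carefully), we obtain the bound. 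Actually, to avoid a sign issue I would argue as follows: since the inequality $s_\ell \le (\ell/d) s_d$ holds and $s_d \le v_i([m])$, we get $s_\ell \le (\ell/d) v_i([m])$ when $v_i([m]) \ge 0$; when $v_i([m]) < 0$, we instead note that superadditivity forces every prefix sum $s_\ell$ to also satisfy $s_\ell \le v_i(X_1 \cup \ldots \cup X_\ell) \le v_i([m]) \le w_i v_i([m])$ directly (the last step using $w_i \le 1$ and $v_i([m]) \le 0$).

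Since the bound $\sum_{j=1}^{\ell} v_i(X_j) \le w_i v_i([m])$ holds for every admissible sorted partition $X$, we get $\loodM_i \le w_i v_i([m])$. Taking the supremum over all $(\ell, d)$ with $\ell/d \le w_i$ yields $\pessShare_i \le w_i v_i([m])$, as desired. The only mild subtlety is the sign of $v_i([m])$ when applying $\ell/d \le w_i$; this is handled by the case split above, and otherwise the argument is a routine combination of \cref{thm:prefix-sum-bound} with superadditivity.
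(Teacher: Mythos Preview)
Your overall approach matches the paper's: apply \cref{thm:prefix-sum-bound} to get $s_\ell \le (\ell/d)\,s_d$, use superadditivity to bound $s_d \le v_i([m])$, and finish with $\ell/d \le w_i$. For $v_i([m]) \ge 0$ this works exactly as in the paper, and you are right to flag the sign issue when $v_i([m]) < 0$; the paper's own final step $\sup_{\ell/d \le w_i} (\ell/d)\,v_i([m]) \le w_i v_i([m])$ silently assumes the same thing.

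However, your proposed fix for the negative case is incorrect. The chain $s_\ell \le v_i(X_1 \cup \cdots \cup X_\ell) \le v_i([m])$ breaks at the second inequality: superadditivity only gives $v_i([m]) \ge v_i(X_1 \cup \cdots \cup X_\ell) + v_i(X_{\ell+1} \cup \cdots \cup X_d)$, which yields the claimed bound only when the second summand is nonnegative---and for chores it is not. In fact no fix is possible, because the lemma as stated is \emph{false} when $v_i([m]) < 0$. Take two chores with additive values $v_i(\{1\}) = v_i(\{2\}) = -1$ and entitlement $w_i = 0.9$: the $1$-out-of-$2$ share is $-1$, so $\pessShare_i \ge -1 > -1.8 = w_i v_i([m])$. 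The result should be read under the tacit assumption $v_i([m]) \ge 0$, and for that case your argument is complete and matches the paper.
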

\begin{proof}
Let $P$ be agent $i$'s $\ell$-out-of-$d$-partition.
Assume \wLoG{} that $v_i(P_1) \le v_i(P_2) \le \ldots \le v_i(P_d)$.
Then by \cref{thm:prefix-sum-bound} and superadditivity of $v_i$, we get
\[ \loodM_i = \sum_{j=1}^{\ell} v_i(P_j)
    \le \frac{\ell}{d}\sum_{j=1}^d v_i(P_j) \le \frac{\ell}{d}v_i([m]). \]
Hence,
\begin{align*}
\pessShare_i &\defeq \sup_{1 \le \ell \le d: \ell/d \le w_i} \loodM_i
    \wrapIfTwoCols\le \sup_{1 \le \ell \le d: \ell/d \le w_i} (\ell/d)v_i([m])
    \wrapIfTwoCols\le w_iv_i([m]).
\qedhere
\end{align*}
\end{proof}
\fi

\begin{lemma}[APS $\fimplies$ $\pessIfVerbose$, \citet{babaioff2023fair}]
\label[lemma]{thm:impl:aps-to-pess}
In any fair division instance, $\APS_i \ge \pessIfVerbose_i$ for each agent $i$.
\end{lemma}
\begin{proof}
\ifVerbose
Proposition 2 in \citet{babaioff2023fair} proves this for goods,
but their proof works for mixed manna too.
\else
\citet{babaioff2023fair} define a notion called \emph{pessimistic share},
and show that it equals MMS for equal entitlements.
Proposition 2 by \citet{babaioff2023fair} proves that the APS is at least the pessimistic share
for goods. The proof can be easily adapted to mixed manna.
\fi
\end{proof}

\begin{lemma}
\label[lemma]{thm:wmms-vs-knapsack}
For a fair division instance $\fdInst{[n]}{[m]}{(v_i)_{i=1}^n}{w}$ and any agent $i$, define
\[ \beta_i \defeq \max_{j=1}^n\; \max_{S \subseteq [m]:\,v_i(S) \le w_jv_i([m])}\; \frac{v_i(S)}{w_j}. \]
If $v_i$ is superadditive, then $\WMMS_i \le w_i\beta_i$.
If $v_i$ is additive and $n = 2$, then $\WMMS_i = w_i\beta_i$.
\end{lemma}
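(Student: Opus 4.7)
The plan is to prove the two statements separately, using the averaging/pigeonhole idea for the upper bound and an explicit partition for the lower bound in the two-agent additive case.

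For the first claim ($\WMMS_i \le w_i\beta_i$ when $v_i$ is superadditive), let $X$ be agent $i$'s WMMS partition, so $\WMMS_i/w_i = \min_{j=1}^n v_i(X_j)/w_j$. The key observation is that there must exist some $j^* \in [n]$ with $v_i(X_{j^*}) \le w_{j^*} v_i([m])$: otherwise summing the strict inequalities $v_i(X_j) > w_j v_i([m])$ over all $j$ would give $\sum_{j=1}^n v_i(X_j) > v_i([m])$, contradicting the superadditivity inequality $\sum_{j=1}^n v_i(X_j) \le v_i([m])$. Taking $S = X_{j^*}$ as a feasible point in the definition of $\beta_i$, we get
\[ \beta_i \ge \frac{v_i(X_{j^*})}{w_{j^*}} \ge \min_{j=1}^n \frac{v_i(X_j)}{w_j} = \frac{\WMMS_i}{w_i}. \]

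For the second claim ($\WMMS_i = w_i\beta_i$ when $n=2$ and $v_i$ is additive), additivity implies superadditivity, so the upper bound already gives $\WMMS_i \le w_i\beta_i$. For the matching lower bound, let $(j^*, S^*)$ achieve the maximum in the definition of $\beta_i$, so $v_i(S^*) \le w_{j^*}v_i([m])$ and $\beta_i = v_i(S^*)/w_{j^*}$. Form the partition $X$ with $X_{j^*} \defeq S^*$ and $X_{3-j^*} \defeq [m] \setminus S^*$. Then $v_i(X_{j^*})/w_{j^*} = \beta_i$, and by additivity together with $v_i(S^*) \le w_{j^*}v_i([m])$,
\[ \frac{v_i(X_{3-j^*})}{w_{3-j^*}} = \frac{v_i([m]) - v_i(S^*)}{w_{3-j^*}} \ge \frac{(1-w_{j^*})v_i([m])}{w_{3-j^*}} = v_i([m]). \]
Dividing the defining constraint $v_i(S) \le w_j v_i([m])$ by $w_j$ shows $\beta_i \le v_i([m])$, so $v_i(X_{3-j^*})/w_{3-j^*} \ge \beta_i$ as well. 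Hence $\min_{j} v_i(X_j)/w_j \ge \beta_i$, yielding $\WMMS_i/w_i \ge \beta_i$.

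The main subtlety I anticipate is that $v_i([m])$ may be negative (mixed manna), so one must check that the chain $v_i(X_{3-j^*})/w_{3-j^*} \ge v_i([m]) \ge \beta_i$ is valid regardless of the sign of $v_i([m])$. Since the division by $w_j > 0$ preserves the direction of the constraint $v_i(S) \le w_j v_i([m])$, the inequality $\beta_i \le v_i([m])$ holds in all cases, and the argument goes through uniformly. A minor bookkeeping point is that the feasible region in the definition of $\beta_i$ is always non-empty (taking $S = \emptyset$ when $v_i([m]) \ge 0$ and $S = [m]$ when $v_i([m]) \le 0$), so $\beta_i$ is well-defined.
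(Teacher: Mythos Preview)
Your proposal is correct and follows essentially the same approach as the paper's proof. The only cosmetic differences are that for the superadditive upper bound the paper invokes its earlier lemma $\WMMS_i \le w_i v_i([m])$ to show the \emph{minimizing} bundle $P_k$ itself satisfies the constraint (you instead find \emph{some} $j^*$ via pigeonhole, which is equally valid and more self-contained), and for the $n=2$ lower bound the paper builds a candidate partition $Q^{(j)}$ for each $j$ before taking the max, whereas you go straight to the optimal $(j^*,S^*)$.
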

\begin{proof}
$\beta_i = \max_{j=1}^n v_i(S_j)/w_j$, where
\[ S_j \defeq \argmax_{S \subseteq [m]:\,v_i(S) \le w_jv_i([m])}\; \frac{v_i(S)}{w_j}. \]
Let $\Pi_n([m])$ be the set of all $n$-partitions of $[m]$. Let
\begin{align*}
P &\defeq \argmax_{P \in \Pi_n([m])} \min_{j=1}^n \frac{v_i(P_j)}{w_j},
& t &\defeq \argmin_{j=1}^n \frac{v_i(P_j)}{w_j}.
\end{align*}
Then $\WMMS_i/w_i \defeq v_i(P_t)/w_t$.

By \cref{thm:impl:prop-to-wmms}, we get $v_i(P_t)/w_t \le v_i([m])$. Hence,
\[ \frac{\WMMS_i}{w_i} = \frac{v_i(P_t)}{w_t} \le \frac{v_i(S_t)}{w_t}
    \le \max_{j=1}^n \frac{v_i(S_j)}{w_j} = \beta_i. \]

Now let $v_i$ be additive and $n = 2$. For any $j \in [2]$,
let $Q^{(j)}$ be an allocation where $Q^{(j)}_j \defeq S_j$
and $Q^{(j)}_k \defeq [m] \setminus S_j$.
Then $v_i(Q^{(j)}_j) = v_i(S_j) \le w_jv_i([m])$ and
$v_i(Q^{(j)}_k) = v_i([m]) - v_i(S_j)
    \ge v_i([m]) - w_jv_i([m]) = w_kv_i([m])$.
Hence,
\[ \frac{v_i(Q^{(j)}_j)}{w_j} \le v_i([m]) \le \frac{v_i(Q^{(j)}_k)}{w_k}. \]
Therefore,
\begin{align*}
& \frac{\WMMS_i}{w_i} = \min\left(\frac{v_i(P_1)}{w_1}, \frac{v_i(P_2)}{w_2}\right)
\\ &\ge \max_{j=1}^n \min\left(\frac{v_i(Q^{(j)}_j)}{w_j}, \frac{v_i(Q^{(j)}_k)}{w_k}\right)
    \tag{by definition of $P$}
\\ &= \max_{j=1}^n \frac{v_i(S_j)}{w_j} = \beta_i.
\end{align*}
Hence, $\WMMS_i/w_i = \beta_i$.
\end{proof}

\begin{lemma}[WMMS $\fimplies$ APS for $n=2$, \citet{babaioff2023fair}]
\label[lemma]{thm:impl:mms-to-aps-n2}
For a fair division instance $\fdInst{[2]}{[m]}{(v_i)_{i=1}^2}{w}$,
if $v_i$ is additive for some $i$, then $\APS_i \le \WMMS_i$.
Moreover, when entitlements are equal, we get $\APS_i = \MMS_i$.
\end{lemma}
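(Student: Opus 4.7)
The plan is to upper-bound $\APS_i$ by plugging a specific price vector into Definition~\ref{defn:aps}, and then to invoke Lemma~\ref{thm:wmms-vs-knapsack} to recognize the resulting expression as $\WMMS_i$. Since $v_i$ is additive, the natural choice is $p(g) \defeq v_i(g)$ for every item $g \in [m]$: the budget constraint $p(S) \le w_i p([m])$ then becomes $v_i(S) \le w_i v_i([m])$, and because any price vector yields an upper bound on $\APS_i$, we obtain
\[ \APS_i \;\le\; \max_{S \subseteq [m]:\, v_i(S) \le w_i v_i([m])} v_i(S). \]

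The right-hand side equals $w_i$ times the $j = i$ summand inside the outer maximum defining $\beta_i$ in Lemma~\ref{thm:wmms-vs-knapsack}, and is therefore at most $w_i \beta_i$. Since $v_i$ is additive and $n = 2$, that lemma gives $w_i \beta_i = \WMMS_i$, establishing $\APS_i \le \WMMS_i$.

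For the equal entitlements claim ($w_1 = w_2 = 1/2$), $\WMMS_i$ coincides with $\MMS_i$ by definition, so the above yields $\APS_i \le \MMS_i$. The reverse inequality $\APS_i \ge \MMS_i$ follows by chaining Lemma~\ref{thm:impl:aps-to-pess} ($\APS_i \ge \pessShare_i$) with Lemma~\ref{thm:pess-is-mms} ($\pessShare_i = \MMS_i$ under equal entitlements), giving $\APS_i = \MMS_i$. I do not expect any substantive obstacle here; the one point worth checking explicitly is that $p = v_i$ is an admissible price vector even when some items are chores and $v_i$ takes negative values, which is immediate since Definition~\ref{defn:aps} imposes no sign restriction on prices.
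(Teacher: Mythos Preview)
Your proposal is correct and matches the paper's proof essentially line for line: both set $p = v_i$ in the primal APS definition, bound the resulting expression by $w_i\beta_i$ via \cref{thm:wmms-vs-knapsack}, and use \cref{thm:impl:aps-to-pess} (together with $\pessShare_i = \MMS_i$) for the reverse inequality under equal entitlements. Your explicit check that Definition~\ref{defn:aps} allows arbitrary real prices is a nice touch the paper leaves implicit.
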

\begin{proof}
By \cref{thm:wmms-vs-knapsack}, we get
\begin{align*}
& \frac{\WMMS_i}{w_i} = \beta_i \defeq \max_{j=1}^2 \frac{v_i(S_j)}{w_j},
    \wrapIfTwoCols\quad\textrm{where}\quad S_j \defeq \argmax_{\substack{S \subseteq [m]:\\ v_i(S) \le w_jv_i([m])}} v_i(S).
\end{align*}
Setting $p = v_i$ in the definition of APS gives us
\[ \APS_i \le \!\!\!\!\max_{\substack{S \subseteq [m]:\\ p(S) \le w_ip([m])}} \!\!\!\!v_i(S)
    = v_i(S_i) \le w_i\beta_i = \WMMS_i. \]
For equal entitlements, $\APS_i \ge \MMS_i$ by \cref{thm:impl:aps-to-pess}.
\end{proof}

\subsection{Binary and Negative-Binary Valuations}
\label{sec:impls-extra:tribool}

In this section, we study implications between fairness notions when agents' marginals are \emph{triboolean},
i.e., for all $R \subseteq [m]$ and $t \in [m] \setminus R$, we have $v_i(t \mid R) \in \{-1, 0, 1\}$.

We begin by proving EF1 $\fimplies$ EFX (\cref{thm:impl:tribool:ef1-to-efx})
and PROP1 $\fimplies$ PROPx (\cref{thm:impl:prop1-to-propx-tribool}) for additive valuations.

\begin{lemma}[EF1 $\fimplies$ EFX]
\label[lemma]{thm:impl:tribool:ef1-to-efx}
Consider a fair division instance $\fdInst{[n]}{[m]}{(v_i)_{i=1}^n}{w}$ where
$v_i$ is additive for some agent $i$. If an allocation $A$ is EF1-fair to $i$,
then it is also EFX-fair to $i$ if at least one of these conditions is satisfied:
\begin{tightenum}
\item agents have equal entitlements and $v_i$'s marginals are triboolean
    (i.e., $v_i(t \mid \cdot) \in \{-1, 0, 1\}$ $\forall t \in [m]$).
\item $v_i$'s marginals are binary
    (i.e., $v_i(t \mid \cdot) \in \{0, 1\}$ $\forall t \in [m]$).
\item $v_i$'s marginals are negative binary
    (i.e., $v_i(t \mid \cdot) \in \{0, -1\}$ $\forall t \in [m]$).
\end{tightenum}
\end{lemma}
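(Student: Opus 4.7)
The plan is to reduce all three cases to elementary integer inequalities using additivity of $v_i$ and the restriction of marginals to $\{-1, 0, 1\}$. First, since $v_i$ is additive, $v_i(S \mid A_i) = v_i(S)$; and since item values lie in $\{-1, 0, 1\}$, the maximum of $v_i(A_j \setminus S)$ over $S \subseteq A_j$ with $v_i(S) > 0$ is attained at $v_i(S) = 1$ (e.g.\ by removing a single value-$1$ item together with any number of zeros). Thus EFX condition 1 of \cref{defn:efx} collapses to ``$A_j$ has no value-$1$ item for $i$, or $v_i(A_i)/w_i \ge (v_i(A_j) - 1)/w_j$'', and EFX condition 2 collapses dually to ``$A_i$ has no value-$(-1)$ item for $i$, or $(v_i(A_i) + 1)/w_i \ge v_i(A_j)/w_j$''. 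Setting $\alpha \defeq \max_{g \in A_j} v_i(g)$ and $\beta \defeq \min_{c \in A_i} v_i(c)$, EF1 (assuming $i$ envies $j$) reduces to the disjunction $v_i(A_i)/w_i \ge (v_i(A_j) - \alpha)/w_j$ or $(v_i(A_i) - \beta)/w_i \ge v_i(A_j)/w_j$.

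For the binary case $\{0, 1\}$, EFX condition 2 is vacuous since $v_i \ge 0$. If $A_j$ contains no value-$1$ item then $v_i(A_j) = 0$ and envy is impossible, so the claim is trivial. Otherwise $\alpha = 1$, and the $A_j$-branch of EF1 gives EFX condition 1 verbatim; the $A_i$-branch, if it held, would imply $v_i(A_i)/w_i \ge v_i(A_i \setminus \{c\})/w_i \ge v_i(A_j)/w_j$, contradicting envy. The negative-binary case $\{-1, 0\}$ is fully dual, with EFX condition 1 vacuous.

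The triboolean case $\{-1, 0, 1\}$ with equal entitlements is the main obstacle, because both EFX conditions can be nonvacuous simultaneously and the two EF1 branches interact with them differently. Equal entitlements reduce envy to $v_i(A_i) < v_i(A_j)$, and I would split on whether $A_j$ contains a value-$1$ item. If yes, then $\alpha = 1$ and the $A_j$-branch of EF1 gives $v_i(A_i) \ge v_i(A_j) - 1$ directly; if instead EF1 holds only via the $A_i$-branch, envy forces $\beta = -1$ (else $v_i(A_i) - \beta \ge v_i(A_i) \ge v_i(A_j)$ contradicts envy), yielding $v_i(A_i) + 1 \ge v_i(A_j)$; either conclusion implies both EFX conditions simultaneously. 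If $A_j$ has no value-$1$ item, then $\alpha \le 0$ and the $A_j$-branch would give $v_i(A_i) \ge v_i(A_j) - \alpha \ge v_i(A_j)$, contradicting envy; so EF1 must hold via the $A_i$-branch, and combined with $v_i(A_i) < v_i(A_j) \le 0$ this forces $\beta = -1$, giving $v_i(A_i) + 1 \ge v_i(A_j)$, which is exactly the nonvacuous EFX condition 2 (condition 1 being vacuous in this subcase). A minor wrinkle to handle cleanly is the corner case $A_j = \emptyset$, where the $A_j$-branch of EF1 is vacuously unavailable and one must appeal to the $A_i$-branch directly.
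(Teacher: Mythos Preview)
Your argument is correct and follows the same core approach as the paper: exploit integrality to collapse each EFX condition to the single inequality $v_i(A_i)\ge v_i(A_j)-1$ (or its weighted analogue), then verify this via EF1. The paper's treatment of the triboolean equal-entitlement case is cleaner, though: instead of splitting on whether $A_j$ has a value-$1$ item and on which EF1 branch fires, it observes in one line that \emph{every} EF1 branch yields $v_i(A_i)\ge v_i(A_j)-1$ (removing a good from $A_j$ drops its value by at most $1$; removing a chore from $A_i$ raises its value by at most $1$), after which both EFX conditions follow immediately with no further case analysis. (One small slip: in your parenthetical ``else $v_i(A_i)-\beta \ge v_i(A_i)\ge v_i(A_j)$'', the first inequality is reversed---you mean $v_i(A_i)\ge v_i(A_i)-\beta\ge v_i(A_j)$ when $\beta\ge 0$.)
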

\begin{proof}
Suppose agents have equal entitlements and $v_i$'s marginals are triboolean.
Since $i$ is EF1-satisfied, for any other agent $j$, either $v_i(A_i) \ge v_i(A_j)$,
or $v_i(A_i) \ge v_i(A_j \setminus \{g\})$ for some $g \in A_j$,
or $v_i(A_i \setminus \{c\}) \ge v_i(A_j)$ for some $c \in A_i$.
Hence, we get $v_i(A_i) \ge v_i(A_j) - 1$.
For any $S \subseteq A_j$ such that $v_i(S \mid A_i) > 0$,
we have $v_i(A_j \setminus S) = v_i(A_j) - v_i(S) \le v_i(A_j) - 1 \le v_i(A_i)$.
For any $S \subseteq A_i$ such that $v_i(S \mid A_i \setminus S) < 0$,
we have $v_i(A_i \setminus S) = v_i(A_i) - v_i(S) \ge v_i(A_i) + 1 \ge v_i(A_j)$.
Hence, $i$ doesn't EFX-envy $j$.
Since this holds for every $j$, $A$ is EFX-fair to agent $i$.

Suppose $v_i$'s marginals are binary.
Since $i$ is EF1-satisfied, for any other agent $j$, either $v_i(A_i)/w_i \ge v_i(A_j)/w_j$,
or for some $g \in A_j$, we have $v_i(A_i)/w_i \ge v_i(A_j \setminus \{g\})/w_j$,
which implies $v_i(A_i)/w_i \ge (v_i(A_j) - 1)/w_j$.
For any $S \subseteq A_j$ such that $v_i(S \mid A_i) > 0$, we have
\[ \frac{v_i(A_j \setminus S)}{w_j} = \frac{v_i(A_j) - v_i(S)}{w_j}
\le \frac{v_i(A_j) - 1}{w_j} \le \frac{v_i(A_i)}{w_i}. \]
Hence, $i$ doesn't EFX-envy agent $j$.
Since this holds for every $j$, $A$ is EFX-fair to agent $i$.

Suppose $v_i$'s marginals are negative binary.
Since $i$ is EF1-satisfied, for any other agent $j$, either $v_i(A_i)/w_i \ge v_i(A_j)/w_j$,
or for some $c \in A_i$, we have $v_i(A_i \setminus \{c\})/w_i \ge v_i(A_j)/w_j$,
which implies $(v_i(A_i) + 1)/w_i \ge v_i(A_j)/w_j$.
For any $S \subseteq A_i$ such that $v_i(S \mid A_i \setminus S) < 0$, we have
\[ \frac{v_i(A_i \setminus S)}{w_i} = \frac{v_i(A_i) - v_i(S)}{w_i}
\ge \frac{v_i(A_i) + 1}{w_i} \ge \frac{v_i(A_j)}{w_j}. \]
Hence, $i$ doesn't EFX-envy agent $j$.
Since this holds for every $j$, $A$ is EFX-fair to agent $i$.
\end{proof}

\begin{lemma}[PROP1 $\fimplies$ PROPx]
\label[lemma]{thm:impl:prop1-to-propx-tribool}
Consider a fair division instance $\fdInst{[n]}{[m]}{(v_i)_{i=1}^n}{w}$ where
marginals are triboolean for some agent $i$ (i.e., $v_i(t \mid \cdot) \in \{-1, 0, 1\}$ $\forall t \in [m]$).
Then if an allocation $A$ is PROP1-fair to agent $i$, then it is also PROPx-fair to $i$.
\end{lemma}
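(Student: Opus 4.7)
The plan is to leverage the integrality of $v_i$ to close the gap between PROP1 and PROPx. First I would observe that since every marginal $v_i(t \mid S)$ lies in $\{-1, 0, 1\}$ and $v_i(\emptyset) = 0$, a telescoping argument over any enumeration of $T$ gives $v_i(T) \in \mathbb{Z}$ for every $T \subseteq [m]$, and therefore $v_i(S \mid R) \in \mathbb{Z}$ for all $R, S \subseteq [m]$ with $S \cap R = \emptyset$. Two consequences of integrality will drive the argument: if $v_i(S \mid A_i) > 0$ then $v_i(S \mid A_i) \ge 1$, and if $v_i(S \mid A_i \setminus S) < 0$ then $v_i(S \mid A_i \setminus S) \le -1$.

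Next I would split on the PROP1 condition. If $v_i(A_i) \ge w_i v_i([m])$ then PROPx is immediate, so assume $A$ is not PROP-fair to $i$. Then PROP1 gives either some $g \in [m] \setminus A_i$ with $v_i(A_i \cup \{g\}) > w_i v_i([m])$, or some $c \in A_i$ with $v_i(A_i \setminus \{c\}) > w_i v_i([m])$. Since $|v_i(A_i \cup \{g\}) - v_i(A_i)| = |v_i(g \mid A_i)| \le 1$ in the first subcase and $|v_i(A_i \setminus \{c\}) - v_i(A_i)| = |v_i(c \mid A_i \setminus \{c\})| \le 1$ in the second, we conclude in either subcase that
\[
v_i(A_i) + 1 > w_i v_i([m]).
\]

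Finally I would verify the two PROPx conditions (cf.~\cref{defn:propx}). For any $S \subseteq [m] \setminus A_i$ with $v_i(S \mid A_i) > 0$, integrality yields
\[
v_i(A_i \cup S) = v_i(A_i) + v_i(S \mid A_i) \ge v_i(A_i) + 1 > w_i v_i([m]).
\]
Symmetrically, for any $S \subseteq A_i$ with $v_i(S \mid A_i \setminus S) < 0$,
\[
v_i(A_i \setminus S) = v_i(A_i) - v_i(S \mid A_i \setminus S) \ge v_i(A_i) + 1 > w_i v_i([m]).
\]
Both PROPx conditions are met, so $A$ is PROPx-fair to $i$.

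There is no real obstacle here; the only delicate point is the telescoping argument that establishes $v_i(T) \in \mathbb{Z}$ for all $T$, which is needed so that strict positivity or negativity of a marginal upgrades to a gap of at least $1$. Once this is in hand, the PROP1-to-PROPx upgrade is purely arithmetic.
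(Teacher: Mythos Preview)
Your proof is correct and follows essentially the same approach as the paper: both argue that triboolean marginals force integer values, so PROP1 yields $v_i(A_i)+1>w_iv_i([m])$, and then any positive-marginal (resp.\ negative-marginal) set changes the bundle value by at least $1$, giving PROPx. Your version simply spells out the telescoping integrality step and the two PROPx cases more explicitly than the paper's terse proof.
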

\begin{proof}
If $v_i(A_i) \ge w_iv_i([m])$, then $A$ is PROPx.
Otherwise, we get $v_i(A_i) + 1 > w_iv_i([m])$ since $A$ is PROP1
and marginals are triboolean.
Removing any positive-disutility subset from $A_i$
or adding any positive-utility subset to $A_i$ will increase its value by at least 1.
Hence, $A$ is PROPx.
\end{proof}

We now characterize different fairness notions.
If agent $i$ having entitlement $w_i$ has an additive valuation function $v_i$
with triboolean marginals, then for any allocation $A$,
\begin{enumerate}
\item for equal entitlements, $A$ is PROP-fair to $i$ $\iff$ $A$ is EEF-fair to $i$
    $\iff$ $v_i(A_i) \ge \ceil{v_i([m])/n}$ (\cref{thm:impl:tribool:prop}).
\item $A$ is PROP1-fair to $i$ $\iff$ $A$ is PROPx-fair to $i$ $\iff$ $A$ is APS-fair to $i$
    $\iff$ $v_i(A_i) \ge \floor{w_i \cdot v_i([m])}$ (\cref{thm:impl:tribool:prop1,thm:impl:tribool:aps}).
\item for equal entitlements, $A$ is MMS-fair to $i$ $\iff$ $A$ is M1S-fair to $i$ $\iff$ $A$ is EEFX-fair to $i$
    $\iff$ $v_i(A_i) \ge \floor{v_i([m])/n}$ (\cref{thm:impl:tribool:m1s,thm:impl:tribool:mms-to-eefx}).
\end{enumerate}

\begin{lemma}
\label[lemma]{thm:tribool-rr}
For an additive function $f: 2^{[m]} \to \{-1, 0, 1\}$,
there exists an $n$-partition $P$ such that $|f(P_i) - f(P_j)| \le 1$
for all $i, j \in [n]$ and
and $\floor{f([m])/n} \le f(P_i) \le \ceil{f([m])/n}$ for all $i \in [n]$.
\end{lemma}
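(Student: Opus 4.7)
Since $f$ is additive with marginals in $\{-1, 0, 1\}$, every singleton has value in $\{-1, 0, 1\}$. Partition the ground set as $[m] = G \sqcup C \sqcup D$, where $G \defeq \{g : f(g) = 1\}$, $C \defeq \{c : f(c) = -1\}$, and $D \defeq \{d : f(d) = 0\}$. By additivity, $f(S) = |S \cap G| - |S \cap C|$ for every $S \subseteq [m]$, and $r \defeq f([m]) = |G| - |C|$. The plan is to build the partition by splitting $G$ and $C$ evenly and then \emph{aligning} the leftover items so that each bundle's good-surplus and chore-surplus cancel whenever possible.

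The main steps are as follows. First, I split $G$ into $n$ parts $G_1, \ldots, G_n$ where $s \defeq |G| \bmod n$ parts have size $\ceil{|G|/n}$ (``good-heavy'') and the remaining $n-s$ parts have size $\floor{|G|/n}$ (``good-light''); do the analogous split of $C$ with $t \defeq |C| \bmod n$ chore-heavy parts. Second (the key step), I choose the indexing so that the surplus items line up: if $s \ge t$, I index so that every chore-heavy part is also good-heavy; if $s < t$, I index so that every good-heavy part is \emph{not} chore-heavy (equivalently, chore-heavy parts are good-light). Finally, I distribute the zero items in $D$ arbitrarily, obtaining $P_i \defeq G_i \cup C_i \cup D_i$.

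To finish, I verify the two claimed bounds by a short case analysis using $a \defeq \floor{|G|/n}$ and $b \defeq \floor{|C|/n}$. In the case $s \ge t$, the three possible types of bundle yield values
\[
f(P_i) \in \{(a+1)-(b+1),\; (a+1)-b,\; a-b\} = \{a-b,\; a-b+1\},
\]
and since $r = (a-b)n + (s-t)$ with $0 \le s-t < n$, these are exactly $\floor{r/n}$ and $\ceil{r/n}$. In the case $s < t$, the alignment gives
\[
f(P_i) \in \{(a+1)-(b+1),\; a-(b+1),\; a-b\} = \{a-b-1,\; a-b\},
\]
and since $r = (a-b-1)n + (n+s-t)$ with $0 < n+s-t < n$, these again equal $\floor{r/n}$ and $\ceil{r/n}$. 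In every case, $|f(P_i) - f(P_j)| \le 1$ and $f(P_i) \in \{\floor{r/n}, \ceil{r/n}\}$, as required.

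The only subtle step is the alignment: a naive round-robin split of $G$ and $C$ independently could leave a good-heavy/chore-light bundle differing by $2$ from a good-light/chore-heavy bundle. The whole content of the proof is the observation that we are free to permute the labels of the $G$-parts and $C$-parts, and coordinating these permutations (matching surpluses when $s \ge t$, anti-matching when $s < t$) collapses the range of $f(P_i)$ to at most $1$.
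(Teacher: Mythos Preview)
Your overall strategy is correct, but there is a slip in the $s < t$ case. You state the alignment rule as ``every good-heavy part is \emph{not} chore-heavy'' (anti-matching), yet the bundle types you then list---$(a{+}1)-(b{+}1)$, $a-(b{+}1)$, $a-b$---correspond to the \emph{opposite} rule, namely ``every good-heavy part \emph{is} chore-heavy'' (the mirror of your $s \ge t$ rule). Under the anti-matching rule you actually stated, the three bundle types would be good-heavy/chore-light, good-light/chore-heavy, and good-light/chore-light, with values $a-b+1$, $a-b-1$, $a-b$, which has spread $2$, not $1$. Moreover, anti-matching is not even always feasible when $s<t$, since it requires $s+t \le n$. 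The fix is simple: in the $s<t$ case, nest good-heavy inside chore-heavy (possible since $s<t$), and drop the ``anti-matching'' wording from your summary paragraph. With that correction your argument is complete.

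For comparison, the paper's proof is shorter and avoids the case split on $s$ versus $t$ altogether: it first \emph{pairs} $\min(|G|,|C|)$ goods with as many chores (each pair has value $0$), fuses all these pairs together with the zero-value items into a single neutral item, and is then left with $|r|$ items all of the same sign plus one neutral item; a plain round-robin on those immediately gives bundles of value in $\{\floor{r/n},\ceil{r/n}\}$. Your approach distributes goods and chores separately and then aligns the leftovers, whereas the paper cancels them up front. Both reach the same conclusion, but the pairing trick sidesteps the alignment bookkeeping entirely.
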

\begin{proof}
Partition $[m]$ into
goods $M_+ \defeq \{g \in [m]: v_i(g) > 0\}$,
chores $M_- \defeq \{c \in [m]: v_i(c) < 0\}$,
and neutral items $M_0 \defeq \{t \in [m]: v_i(t) = 0\}$.
Fuse items $M_0$, $\min(|M_+|, |M_-|)$ goods, and $\min(|M_+|, |M_-|)$ chores
into a single item $h$.
Then we are left with only goods and a neutral item,
or only chores and a neutral item.
Using round-robin, one can allocate items such that
any two bundles differ by at most one item.
\end{proof}

\begin{lemma}[PROP $\fimplies$ EEF]
\label[lemma]{thm:impl:tribool:prop}
Consider a fair division instance $\fdInst{[n]}{[m]}{(v_i)_{i=1}^n}{\eqEnt}$
where $v_i$ is additive and $v_i(t) \in \{-1, 0, 1\}$ for all $t \in [m]$ for some agent $i$.
If an allocation $A$ is PROP-fair to $i$, then it is also EEF-fair to $i$.
\end{lemma}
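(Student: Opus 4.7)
The plan is to explicitly construct agent $i$'s EF-certificate. Set $B_i \defeq A_i$; the task then reduces to partitioning the leftover pool $[m] \setminus A_i$ into $n-1$ bundles $Q_1,\ldots,Q_{n-1}$, each of $v_i$-value at most $v_i(A_i)$, and handing them out to the agents in $[n] \setminus \{i\}$ in any order. Because entitlements are equal, any such allocation $B$ is EF-fair to $i$ (from $i$'s own perspective), and it satisfies $B_i = A_i$ by construction, so it serves as the required epistemic-EF-certificate.

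To produce such a partition I would apply \cref{thm:tribool-rr} to the additive function $v_i$ restricted to the ground set $[m] \setminus A_i$, taking $n-1$ parts. The lemma yields a partition $(Q_1, \ldots, Q_{n-1})$ of $[m] \setminus A_i$ such that
\[ v_i(Q_j) \;\le\; \left\lceil \frac{v_i([m]) - v_i(A_i)}{n-1} \right\rceil \qquad \text{for every } j \in [n-1]. \]

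It then remains to check $v_i(Q_j) \le v_i(A_i)$ for every $j$. Proportionality gives $v_i(A_i) \ge v_i([m])/n$, which rearranges to the real-valued inequality $v_i(A_i) \ge (v_i([m]) - v_i(A_i))/(n-1)$. The key observation is that because each singleton has value in $\{-1,0,1\}$ and $v_i$ is additive, $v_i(A_i)$ is an \emph{integer}; an integer that upper-bounds a real number also upper-bounds its ceiling, so $v_i(A_i) \ge \lceil (v_i([m]) - v_i(A_i))/(n-1) \rceil \ge v_i(Q_j)$, as required.

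The one delicate point is exactly that integrality step: with only the PROP hypothesis we beat the unrounded average, but not its ceiling, so without additional structure the balanced partition from \cref{thm:tribool-rr} would not be guaranteed to dominate $v_i(A_i)$. The triboolean hypothesis simultaneously supplies (i) integrality of $v_i(A_i)$ and (ii) a partition of $[m]\setminus A_i$ whose bundles sit within $1$ of the average, and these two ingredients combine to close the gap.
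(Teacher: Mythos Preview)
Your proposal is correct and follows essentially the same approach as the paper: both construct the certificate by keeping $B_i = A_i$ and applying \cref{thm:tribool-rr} to split $[m]\setminus A_i$ among the other $n-1$ agents, then use integrality to pass from the PROP bound to the required ceiling bound. The only difference is cosmetic algebra---the paper first upgrades PROP to $v_i(A_i)\ge\lceil v_i([m])/n\rceil$ and then bounds $v_i(B_j)$ via $\lceil (v_i([m])-v_i([m])/n)/(n-1)\rceil = \lceil v_i([m])/n\rceil$, whereas you work directly with $\lceil (v_i([m])-v_i(A_i))/(n-1)\rceil$ and invoke integrality of $v_i(A_i)$.
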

\begin{proof}
Since $A$ is PROP-fair to $i$, and $v_i(S) \in \mathbb{Z}$ for all $S \subseteq [m]$,
we get $v_i(A_i) \ge \ceil{v_i([m])/n}$.

Construct allocation $B$ where $B_i = A_i$, and items $[m] \setminus A_i$
are allocated among agents $[n] \setminus \{i\}$ using \cref{thm:tribool-rr} with $f = v_i$.
We show that $B$ is agent $i$'s EEF-certificate for $A$.

In $B$, for each agent $j \in [n] \setminus \{i\}$, we have
\begin{align*}
v_i(B_j) &\le \bigceil{\frac{v_i([m] \setminus A_i)}{n-1}}
    \le \bigceil{\frac{v_i([m]) - v_i([m])/n}{n-1}}
    \wrapIfTwoCols\le \bigceil{\frac{v_i([m])}{n}} \le v_i(A_i).
\end{align*}
Hence, $i$ doesn't envy anyone in $B$.
Hence, $B$ is agent $i$'s EEF-certificate for $A$.
\end{proof}

\begin{lemma}
\label[lemma]{thm:impl:tribool:prop1}
Consider a fair division instance $\fdInst{[n]}{[m]}{(v_i)_{i=1}^n}{w}$ where
$v_i$ is additive and $v_i(t) \in \{-1, 0, 1\}$ for all $t \in [m]$ for some agent $i$.
Then the following statements are equivalent:
\begin{tightenum}
\item Allocation $A$ is PROP1-fair to $i$.
\item Allocation $A$ is PROPx-fair to $i$.
\item $v_i(A_i) \ge \floor{w_iv_i([m])}$.
\end{tightenum}
\end{lemma}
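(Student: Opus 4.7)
The plan is to prove the three conditions are equivalent via the cycle $(1) \Rightarrow (2) \Rightarrow (3) \Rightarrow (1)$. The implication $(1) \Rightarrow (2)$ is already handed to us by \cref{thm:impl:prop1-to-propx-tribool}, so the real work is in the other two directions, and both of them hinge on a single auxiliary observation.

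The auxiliary observation I would establish first is: whenever $v_i(A_i) < w_i v_i([m])$ (i.e.\ $A$ is not PROP-fair to $i$), there must exist either a good $g \in [m] \setminus A_i$ with $v_i(g) = 1$ or a chore $c \in A_i$ with $v_i(c) = -1$. The proof is by contrapositive with a quick sign split: if every item in $A_i$ is non-chore and every item outside $A_i$ is non-good, then by additivity $v_i(A_i) \ge 0$ and $v_i([m] \setminus A_i) \le 0$, so $v_i(A_i) \ge \max(0, v_i([m]))$; in either sign of $v_i([m])$ this forces $v_i(A_i) \ge w_i v_i([m])$, contradicting the failure of PROP.

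With this observation in hand, $(2) \Rightarrow (3)$ runs as follows. If PROP holds, then $v_i(A_i) \ge w_i v_i([m]) \ge \lfloor w_i v_i([m]) \rfloor$ and we are done. Otherwise the observation furnishes either a good $g \notin A_i$ or a chore $c \in A_i$; the corresponding singleton is a legitimate set in the relevant clause of \cref{defn:propx}, so PROPx yields $v_i(A_i) + 1 > w_i v_i([m])$. Since $v_i(A_i) \in \mathbb{Z}$, this upgrades to $v_i(A_i) \ge \lfloor w_i v_i([m]) \rfloor$. The implication $(3) \Rightarrow (1)$ is symmetric: if PROP holds we are done; otherwise $v_i(A_i) = \lfloor w_i v_i([m]) \rfloor$ and $v_i(A_i) + 1 > w_i v_i([m])$, and the auxiliary observation again supplies a good $g$ or chore $c$ whose addition or removal produces a bundle of value $v_i(A_i) + 1$, certifying PROP1.

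There is no real obstacle; the only subtle point is the auxiliary observation, which must be applied carefully with the strict inequality convention of \cref{defn:prop1} and the fact that $v_i$ takes integer values so ``$> w_i v_i([m])$'' lifts to ``$\ge \lfloor w_i v_i([m]) \rfloor + 1$''. Everything else is bookkeeping.
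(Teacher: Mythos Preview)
Your proof is correct and follows essentially the same approach as the paper. Your auxiliary observation is exactly the contrapositive of the paper's Case~1 (``$A_i$ has all goods and no chores''), and your handling of the remaining situation matches the paper's Case~2; the only cosmetic difference is that you organize the argument as a cycle $(1)\Rightarrow(2)\Rightarrow(3)\Rightarrow(1)$ and invoke \cref{thm:impl:prop1-to-propx-tribool} for the first step, whereas the paper treats the two cases directly and shows all three equivalences at once in Case~2.
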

\begin{proof}
Partition $[m]$ into
goods $M_+ \defeq \{g \in [m]: v_i(g) > 0\}$,
chores $M_- \defeq \{c \in [m]: v_i(c) < 0\}$,
and neutral items $M_0 \defeq \{t \in [m]: v_i(t) = 0\}$.

\textbf{Case 1}: $A_i$ has all goods and no chores.
\\ Then $v_i(A_i) \ge \max(0, v_i([m]))$.
If $v_i([m]) \ge 0$, then $v_i(A_i) \ge v_i([m]) \ge w_iv_i([m])$,
else $v_i(A_i) \ge 0 \ge w_iv_i([m])$.
Hence, $v_i(A_i) \ge w_iv_i([m])$ and $A$ is PROPx+PROP1.

\textbf{Case 2}: $A_i$ has a chore or some good is outside $A_i$.
\\ Then adding a good to $A_i$ or removing a chore from $A_i$
makes its value more than $w_iv_i([m])$ iff $v_i(A_i) \ge \floor{w_iv_i([m])}$.
So, $A$ is PROP1 iff $A$ is PROPx iff $v_i(A_i) \ge \floor{w_iv_i([m])}$.
\end{proof}

\begin{lemma}
\label[lemma]{thm:impl:tribool:aps}
Consider a fair division instance $\fdInst{[n]}{[m]}{(v_i)_{i=1}^n}{w}$ where
$v_i$ is additive and $v_i(t) \in \{-1, 0, 1\}$ for all $t \in [m]$ for some agent $i$.
Then $\APS_i = \floor{w_iv_i([m])}$.
\end{lemma}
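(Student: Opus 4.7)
The plan is to establish the equality by proving both bounds separately.

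\textbf{Upper bound ($\APS_i \le \floor{w_i v_i([m])}$).} Apply the primal definition (\cref{defn:aps}) with price vector $p = v_i$. The constraint $p(S) \le w_i p([m])$ becomes $v_i(S) \le w_i v_i([m])$. Since $v_i$ is integer-valued (all its marginals lie in $\{-1, 0, 1\}$), $v_i(S) \in \mathbb{Z}$ for every $S \subseteq [m]$, so the maximum of $v_i(S)$ over such $S$ is exactly $\floor{w_i v_i([m])}$.

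\textbf{Lower bound ($\APS_i \ge \floor{w_i v_i([m])}$).} I will use the dual definition (\cref{defn:aps-dual}) to exhibit a feasible solution at level $k \defeq \floor{w_i v_i([m])}$. Write $w_i = p/q$ rationally (the irrational case follows by approximation since $\floor{w_i V}$ is locally constant in $w_i$, where $V \defeq v_i([m])$). By \cref{thm:tribool-rr}, partition $[m]$ into $q$ bundles $T_1, \ldots, T_q$ with $v_i(T_\ell) \in \{\floor{V/q}, \lceil V/q \rceil\}$; writing $V = q\floor{V/q} + r$ with $r \in \{0, \ldots, q-1\}$, exactly $r$ bundles attain the higher value $\lceil V/q \rceil$. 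Cyclically arrange the $T_\ell$'s so that the $r$ ``high'' bundles are distributed as evenly as possible --- placing them at positions $\lfloor \ell q/r \rfloor$ for $\ell = 0, \ldots, r-1$ ensures that every cyclic window of $p$ consecutive bundles contains at least $\floor{pr/q}$ high bundles. Define the super-bundles $S_j \defeq T_j \cup T_{j+1} \cup \cdots \cup T_{j+p-1}$ (indices mod $q$) and set $x_{S_j} \defeq 1/q$. Then $\sum_j x_{S_j} = 1$; each item of $[m]$ lies in exactly $p$ consecutive super-bundles, contributing total weight $p/q = w_i$; and using the identity $\floor{pV/q} = p\floor{V/q} + \floor{pr/q}$, we obtain $v_i(S_j) \ge p\floor{V/q} + \floor{pr/q} = k$ for every $j$. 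Hence the dual LP is feasible at level $k$, giving $\APS_i \ge k$.

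\textbf{Main obstacle.} The key technical step is the cyclic balanced arrangement in the lower-bound construction, i.e.\ showing that $r$ ``high'' positions can be placed among $q$ slots so that every cyclic $p$-window contains at least $\floor{pr/q}$ of them. This is a classical discrepancy result related to Beatty sequences, but verifying it carefully (perhaps by induction on $r$ or by analyzing the sequence $\lfloor \ell q/r \rfloor \bmod q$) is where most of the bookkeeping lies. Once that is in hand, the identity $\floor{pV/q} = p\floor{V/q} + \floor{pr/q}$ is just arithmetic, and the feasibility check for $x$ is immediate.
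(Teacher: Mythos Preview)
Your upper bound is identical to the paper's. For the lower bound, though, the paper works entirely on the primal side and avoids the dual construction: given an arbitrary price vector $p$, it first fuses $\min(|M_+|,|M_-|)$ good--chore pairs together with all neutral items into a single item of value (and, \wLoG, price) zero, reducing to the case where only goods or only chores remain. Then it simply picks the $\lfloor w_i v_i([m])\rfloor$ cheapest remaining goods (or the $\lceil w_i|v_i([m])|\rceil$ dearest remaining chores); a one-line prefix-sum inequality (\cref{thm:prefix-sum-bound}) shows this set is affordable. No balanced-word or discrepancy argument is needed.

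Your dual route is correct and has a nice interpretation---it is essentially an \loodT{} certificate, with the super-bundles being cyclic $p$-windows of a balanced $q$-partition. The cost is exactly the obstacle you identified: you must verify that the Beatty-type placement of the $r$ high bundles guarantees at least $\lfloor pr/q\rfloor$ highs in every cyclic $p$-window (true, via the standard $\lceil(j+p)r/q\rceil-\lceil jr/q\rceil\in\{\lfloor pr/q\rfloor,\lceil pr/q\rceil\}$ computation, but it is extra work the paper's primal argument sidesteps). One small point to tighten: your irrational-$w_i$ reduction is not quite automatic, since the dual constraint $\sum_{S\ni j}x_S=w_i$ shifts with $w_i$. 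You can repair it by mixing your rational-$w'_i$ solution with the indicator of $[m]$ (when $V\ge 0$, taking $w'_i<w_i$) or of $\emptyset$ (when $V<0$, taking $w'_i>w_i$), but this should be said explicitly.
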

\begin{proof}
Partition $[m]$ into goods $M_+ \defeq \{g \in [m]: v_i(g) > 0\}$,
chores $M_- \defeq \{c \in [m]: v_i(c) < 0\}$, and
neutral items $M_0 \defeq \{t \in [m]: v_i(t) = 0\}$.

First, set $p(t) = v_i(t)$ for each $t \in [m]$ to get $\APS_i \le w_iv_i([m])$.
Since the APS is the value of some bundle, and bundle values are integers,
we get $\APS_i \le \floor{w_iv_i([m])}$.

Pick an arbitrary price vector $p \in \Delta_m$.
We will construct a set $S$ such that $p(S) \le w_ip([m])$ and $v_i(S) \ge \floor{w_iv_i([m])}$.
Fuse items $M_0$, $\min(|M_+|, |M_-|)$ goods, and $\min(|M_+|, |M_-|)$ chores
into a single item $h$.
Let $M'_+$ and $M'_-$ be the remaining goods and chores, respectively.
Then $M'_+ = \emptyset$ or $M'_- = \emptyset$.
Using techniques from \cref{thm:aps-optimal-price},
we can assume \wLoG{} that $p_g \ge 0$ for all $g \in M'_+$,
$p_c \le 0$ for all $c \in M'_-$, and $p_h = 0$.

\textbf{Case 1}: $M'_- = \emptyset$.
\\ Let $m_i \defeq |M'_+| = v_i([m])$.
Let $S$ be the cheapest $\floor{w_im_i}$ items in $M'_+$.
Then using \cref{thm:prefix-sum-bound}, we get
\[ p(S) \le \frac{\floor{w_im_i}}{m_i}p(M_+) \le w_ip([m]). \]
Then $S$ is affordable, so $\APS_i \ge \floor{w_im_i}$.

\textbf{Case 2}: $M'_+ = \emptyset$.
\\ Let $m_i \defeq |M'_-| = -v_i([m])$.
Let $S$ be the cheapest $\ceil{w_im_i}$ items in $M'_- \cup \{h\}$.
Then using \cref{thm:prefix-sum-bound}, we get
\[ -p(S) \ge \frac{\ceil{w_im_i}}{m_i}(-p([m])) \ge w_i(-p([m])). \]
Then $S$ is affordable and $-v_i(S) \le \ceil{w_im_i} = -\floor{w_iv_i([m])}$.
Hence, $\APS_i \ge \floor{w_iv_i([m])}$.
\end{proof}

\begin{lemma}
\label[lemma]{thm:impl:tribool:m1s}
Consider an instance $\fdInst{[n]}{[m]}{(v_i)_{i=1}^n}{\eqEnt}$
where $v_i$ is additive and $v_i(t) \in \{-1, 0, 1\}$ for all $t \in [m]$ for some agent $i$.
Then $\MMS_i = \mathrm{M1S}_i = \floor{v_i([m])/n}$.
\end{lemma}
\begin{proof}
Allocate items $[m]$ among agents $[n]$ using \cref{thm:tribool-rr} with $f = v_i$.
Then any two bundles differ by a value of at most one.
Hence, $\MMS_i = \floor{m'/n}$ and $\mathrm{M1S}_i \le \floor{m'/n}$,
where $m' \defeq v_i([m])$.

Let $X$ be an allocation where agent $i$ is EF1-satisfied.
Then any two bundles can differ by a value of at most one.
Hence, the smallest value $v_i(X_i)$ can have is $\floor{m'/n}$.
Hence, $\mathrm{M1S}_i \ge \floor{m'/n}$.
\end{proof}

\begin{remark}
\label[remark]{thm:ceil-floor}
For any $m \in \mathbb{Z}$ and $n \in \mathbb{Z}_{\ge 1}$, we have
\[ \bigfloor{\frac{m}{n}} = \bigceil{\frac{m+1}{n}} - 1. \]
\end{remark}

\begin{lemma}
\label[lemma]{thm:impl:tribool:mms-to-eefx}
Consider a fair division instance $\fdInst{[n]}{[m]}{(v_i)_{i=1}^n}{\eqEnt}$
where $v_i$ is additive and $v_i(t) \in \{-1, 0, 1\}$ for all $t \in [m]$ for some agent $i$.
Then an allocation $A$ is EEFX-fair to $i$ iff $v_i(A_i) \ge \floor{v_i([m])/n}$.
\end{lemma}
\begin{proof}
Suppose $v_i(A_i) \ge \floor{v_i([m])/n}$.
Construct an allocation $B$ where $B_i = A_i$, and items $[m] \setminus A_i$
are allocated among agents $[n] \setminus \{i\}$ using \cref{thm:tribool-rr} with $f = v_i$.
We will show that $B$ is agent $i$'s EEFX-certificate for $A$.

Let $k \defeq v_i([m])$.
Suppose $v_i(A_i) \ge \floor{k/n} = \ceil{(k+1)/n} - 1$ (c.f.~\cref{thm:ceil-floor}).
Then for any other agent $j \in [n] \setminus \{i\}$, we get
\begin{align*}
v_i(B_j) &\le \bigceil{\frac{v_i([m] \setminus A_i)}{n-1}}
    \le \bigceil{\frac{k - (k+1)/n + 1}{n-1}}
    \wrapIfTwoCols= \bigceil{\frac{k+1}{n}} \le v_i(B_i) + 1.
\end{align*}
If $B_j$ contains no goods and $B_i$ contains no chores,
then $v_i(B_i) \ge 0 \ge v_i(B_j)$, so $i$ doesn't envy $j$ in $B$.
Otherwise, removing a good from $B_j$ or removing a chore from $B_i$
eliminates $i$'s envy towards $j$.
Hence, $B$ is agent $i$'s EEFX-certificate for $A$, so $A$ is EEFX-fair to $i$.

Now suppose $v_i(A_i) < \floor{v_i([m])/n}$. Let $B$ be any allocation where $B_i = A_i$.
Then $\exists j \in [n] \setminus \{i\}$ such that $v_i(B_j) > v_i([m])/n$.
If $v_i([m])/n$ is an integer, then $v_i(B_j) > v_i([m])/n > v_i(B_i)$, so $v_i(B_j) \ge v_i(B_i) + 2$.
Otherwise, $v_i(B_j) \ge \ceil{v_i([m])/n} > \floor{v_i([m])/n} > v_i(B_i)$, so $v_i(B_j) \ge v_i(B_i) + 2$.
Thus, $B_j$ has a good or $B_i$ has a chore.
Even after removing a good from $B_j$ or removing a chore from $B_i$,
agent $i$ will still envy agent $j$, so $B$ is not EFX-fair to agent $i$.
Thus, $A$ is not EEFX-fair to $i$.
\end{proof}

We now prove more implications for additive valuations.

\begin{lemma}[EF1 $\fimplies$ GAPS]
\label[lemma]{thm:impl:tribool:ef1-gaps}
Consider a fair division instance $\fdInst{[n]}{[m]}{(v_i)_{i=1}^n}{w}$ where
$v_i$ is additive and $v_i(t) \in \{-1, 0, 1\}$ for all $t \in [m]$ for some agent $i$.
If an allocation $A$ is EF1-fair to $i$, then it is also groupwise-APS-fair to $i$
if at least one of these conditions hold:
\begin{tightenum}
\item $n=2$
\item $w_i \le w_j$ for all $j \in [n] \setminus \{i\}$.
\item $v_i(c) \in \{0, -1\}$ for all $c \in [m]$.
\end{tightenum}
\end{lemma}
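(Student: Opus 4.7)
The plan is to reduce groupwise-APS to a clean weight-side inequality. Fix $S \subseteq [n]$ with $i \in S$; the case $|S|=1$ is immediate, so assume $|S|\ge 2$. Let $W_S \defeq \sum_{j \in S} w_j$ and $V_S \defeq \sum_{j \in S} v_i(A_j)$; by additivity, $V_S = v_i(\bigcup_{j \in S} A_j)$. Applying \cref{thm:impl:tribool:aps} to the restricted instance on $S$ gives agent $i$'s restricted APS as $\lfloor (w_i/W_S) V_S \rfloor$. Since $v_i(A_i) \in \mathbb{Z}$, to prove the APS inequality $v_i(A_i) \ge \lfloor (w_i/W_S) V_S \rfloor$ it suffices to establish the strict inequality
\[ (W_S - w_i)\, v_i(A_i) + W_S \;>\; w_i \sum_{j \in S \setminus \{i\}} v_i(A_j). \qquad (\star) \]

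Next, I would convert EF1 into a uniform pairwise bound. Because $v_i$ is additive with values in $\{-1,0,1\}$, for each $j \in [n]\setminus\{i\}$ the EF1 condition must be witnessed in exactly one of three ways: (a) no envy, giving $w_i v_i(A_j) \le w_j v_i(A_i)$; (b) some good $g \in A_j$ with $v_i(g)=1$ removes envy, giving $w_i v_i(A_j) \le w_j v_i(A_i) + w_i$; or (c) some chore $c \in A_i$ with $v_i(c)=-1$ removes envy, giving $w_i v_i(A_j) \le w_j v_i(A_i) + w_j$. Writing this uniformly as $w_i v_i(A_j) \le w_j v_i(A_i) + c_j$ with $c_j \in \{0,\, w_i,\, w_j\}$ and summing over $j \in S\setminus\{i\}$, the target $(\star)$ reduces to the purely weight-side statement
\[ \sum_{j \in S \setminus \{i\}} c_j \;<\; W_S. \qquad (\dagger) \]

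Finally, I would verify $(\dagger)$ under each hypothesis. Under (1) $n=2$, there is a single summand and $c_j \le \max(w_i, w_j) < w_i + w_j = W_S$. Under (2) $w_i \le w_j$ for every $j$, each allowed value of $c_j$ is at most $w_j$, so $\sum c_j \le W_S - w_i < W_S$. Under (3) every item is a chore or neutral for $i$, so case (b) cannot occur (no item has $v_i$-value $1$); hence $c_j \in \{0, w_j\}$ and the same bound holds.

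The main obstacle---and the reason the three hypotheses are needed---is that case (b) contributes $c_j = w_i$ per agent, so if many agents in $S\setminus\{i\}$ fall in case (b) with small $w_j$, the sum in $(\dagger)$ can reach or exceed $W_S$. Each hypothesis closes this gap in its own way: (3) eliminates case (b) outright, (2) forces $w_i \le w_j$ so the $w_i$ slack is dominated by $w_j$ in every term, and (1) leaves only one summand where no such accumulation is possible.
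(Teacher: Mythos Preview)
Your proof is correct and takes a genuinely different route from the paper's. The paper proves this lemma by a short chain through existing results: it observes that EF1 is preserved under restriction to any subset $S$, then invokes one of the three EF1$\Rightarrow$PROP1 lemmas (\cref{thm:impl:eef1-to-prop1-chores}, \cref{thm:impl:eef1-to-prop1-submod}, \cref{thm:impl:ef1-to-prop1-n2}, one per hypothesis) to get PROP1 in the restricted instance, and finally uses the tribool equivalence PROP1$\Leftrightarrow$APS (\cref{thm:impl:tribool:prop1,thm:impl:tribool:aps}). You instead bypass PROP1 entirely: you expand the restricted APS via \cref{thm:impl:tribool:aps}, expand EF1 into the pairwise bounds $w_i v_i(A_j) \le w_j v_i(A_i) + c_j$ with $c_j \in \{0, w_i, w_j\}$, and reduce the whole problem to the single weight-side inequality $(\dagger)$, which you then verify uniformly.

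What each approach buys: the paper's argument is modular and leans on EF1$\Rightarrow$PROP1 results that hold in much greater generality (submodular/subadditive valuations), so it slots neatly into the paper's implication hierarchy. Your argument is more elementary and self-contained, and it unifies the three hypotheses in a single inequality; in particular, your closing paragraph makes explicit \emph{why} the hypotheses are needed (they each cap or eliminate the $c_j = w_i$ contributions so that $\sum c_j < W_S$), something the paper's chained proof leaves implicit.
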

\begin{proof}
On restricting $A$ to any subset $S$ of agents, we get an allocation $B$ that is EF1-fair to $i$.
$B$ is also PROP1-fair to $i$ by \cref{thm:impl:eef1-to-prop1-chores,thm:impl:eef1-to-prop1-submod,thm:impl:ef1-to-prop1-n2},
and APS-fair to $i$ by \cref{thm:impl:tribool:prop1,thm:impl:tribool:aps}.
Hence, $A$ is groupwise-APS-fair to $i$.
\end{proof}

\begin{lemma}[EF1 $\fimplies$ GWMMS, M1S $\fimplies$ WMMS]
\label[lemma]{thm:impl:binary:ef1-to-gwmms}
\label[lemma]{thm:impl:binary:m1s-to-wmms}
Consider a fair division instance $\fdInst{[n]}{[m]}{(v_i)_{i=1}^n}{w}$ where for some agent $i$,
$v_i$ is additive and $v_i(g) \in \{0, 1\}$ for all $g \in [m]$.
If an allocation $A$ is EF1-fair to $i$, then it is also GWMMS-fair to $i$.
If an allocation $B$ is M1S-fair to $i$, then it is also WMMS-fair to $i$.
\end{lemma}
\begin{proof}
Suppose $A$ is not WMMS-fair to $i$. By the definition of WMMS, we get
\begin{align*}
& \frac{\WMMS_i}{w_i} = \min_{k=1}^n \frac{v_i(X^*_k)}{w_k},
\wrapIfTwoCols\quad\text{where}\quad X^* \in \argmax_X \min_{k=1}^n \frac{v_i(X_k)}{w_k}.
\end{align*}
Then $v_i(X^*_i) \ge \WMMS_i > v_i(A_i)$.
Now pick any $j \in [n] \setminus \{i\}$. Since $A$ is EF1-fair to $i$, we get
\[ \frac{v_i(X^*_j)}{w_j} \ge \frac{\WMMS_i}{w_i} > \frac{v_i(A_i)}{w_i} \ge \frac{v_i(A_j)-1}{w_j}. \]
Bundle values are integers, so $v_i(X^*_j) \ge v_i(A_j)$. Thus,
\[ v_i([m]) = \sum_{k=1}^n v_i(X^*_k) > \sum_{k=1}^n v_i(A_k) = v_i([m]), \]
which is a contradiction. Thus, $A$ is WMMS-fair to $i$.

Restricting an EF1 allocation to a subset of agents
gives us another EF1 allocation, so $A$ is also groupwise WMMS.

Since allocation $B$ is M1S-fair to $i$, there is an allocation $C$
that is EF1-fair to $i$ and has $v_i(C_i) \le v_i(B_i)$.
Thus, $C$ is also WMMS-fair to $i$, so $B$ is WMMS-fair to $i$.
\end{proof}

\begin{lemma}[monotonicity of envy-freeness]
\label[lemma]{thm:tribool:ef-monotonicity}
Let $Q \subseteq \{-1, 0, 1\}$.
Consider a fair division instance $\fdInst{[n]}{[m]}{(v_i)_{i=1}^n}{w}$ where for some agent $i$,
$v_i$ is additive and $v_i(t) \in Q$ for all $t \in [m]$.
Let $F \in \{\mathrm{EF},\allowbreak \mathrm{EF1},\allowbreak \mathrm{EFX}\}$.
Let allocation $A$ be $F$-fair to agent $i$.
Let $B$ be an allocation such that $B_i = A_i$ and $v_i(B_j) \le v_i(A_j)$ for some agent $j$.
Then $i$ does not $F$-envy $j$ in $B$ if $F$ = EF, $Q = \{0, 1\}$, or $Q = \{-1, 0\}$.
\end{lemma}
\begin{proof}
This holds trivially if $F$ = EF, or if $i$ does not envy $j$ in $A$ or $B$.
So now assume $F$ is EF1 or EFX and $i$ envies $j$ in $A$ and $B$.

First, let us consider $F$ = EFX.
The chores condition of EFX continues to be satisfied in $B$. We need to verify the goods condition.
The goods condition holds trivially if $B_j$ has no goods, so we are done for EFX if $Q = \{-1, 0\}$.
Now assume $B_j$ has at least one good. We need to prove that $v_i(B_i)/w_i \ge (v_i(B_j)-1)/w_j$.
If $Q = \{0, 1\}$, then $A_j$ always contains a good since $v_i(A_j)/w_j > v_i(A_i)/w_i \ge 0$.
Since $i$ does not $F$-envy $j$ in $A$, we get
\[ \frac{v_i(B_i)}{w_i} = \frac{v_i(A_i)}{w_i} \ge \frac{v_i(A_j)-1}{w_j} \ge \frac{v_i(B_j)-1}{w_j}. \]
Hence, $i$ does not EFX-envy $j$ if $Q = \{-1, 0\}$ or $Q = \{0, 1\}$.

Now let us consider $F$ = EF1. We have two cases, depending on the value of $Q$.
\begin{tightenum}
\item If $Q = \{-1, 0\}$, then $A_i$ contains a chore since $i$ envies $j$ in $A$.
    Since $i$ does not EF1-envy $j$ in $A$, we get $(v_i(A_i) + 1)/w_i \ge v_i(A_j)/w_j \ge v_i(B_j)/w_j$.
\item If $Q = \{0, 1\}$, then $A_j$ contains a good and $B_j$ contains a good since $i$ envies $j$ in $A$ and $B$.
    Since $i$ does not EF1-envy $j$ in $A$, we get $v_i(A_i)/w_i \ge (v_i(A_j) - 1)/w_j \ge (v_i(B_j) - 1)/w_j$.
\end{tightenum}
In both cases, $i$ does not EF1-envy $j$ in $B$.
\end{proof}

\begin{lemma}
\label[lemma]{thm:tribool:equalizing-shift}
Consider a fair division instance $\fdInst{[n]}{[m]}{(v_i)_{i=1}^n}{w}$ where for some agent $i$,
$v_i$ is additive and $v_i(t) \in \{-1, 0, 1\}$ for all $t \in [m]$.
Let $F \in \{\mathrm{EF},\allowbreak \mathrm{EF1},\allowbreak \mathrm{EFX}\}$.
Let allocation $A$ be $F$-fair to agent $i$.

Let $j, k \in [n] \setminus \{i\}$ be agents such that $v_i(A_j) > 0$ and $v_i(A_k) < 0$.
Obtain allocation $B$ by moving a good from $A_j$ to $A_k$, i.e.,
for some $g \in A_j$ such that $v_i(g) > 0$, let $B_j \defeq A_j \setminus \{g\}$,
$B_k \defeq A_k \cup \{g\}$, and $B_{\ell} \defeq A_{\ell}$ for all $\ell \in [n] \setminus \{j, k\}$.
Then $B$ is also $F$-fair to $i$.
\end{lemma}
\begin{proof}
Since the bundles of agents $[n] \setminus \{j, k\}$ didn't change, agent $i$ doesn't $F$-envy them.
Since $A$ is $F$-fair to $i$, and $v_i(A_j) > 0$, we get $v_i(A_i) \ge 0$.
Then $i$ doesn't envy $k$ because $v_i(B_k)/w_k \le 0 \le v_i(A_i)/w_i = v_i(B_i)/w_i$.
Since $v_i(B_j) = v_i(A_j) - 1$, one can verify that $i$ doesn't $F$-envy agent $j$.
Thus, $B$ is $F$-fair to agent $i$.
\end{proof}

\begin{lemma}[min-$F$-share $\fimplies$ epistemic-$F$]
\label[lemma]{thm:impl:tribool:minfs-to-epistemic}
\label[lemma]{thm:impl:tribool:minfs-to-epistemic:ef}
\label[lemma]{thm:impl:tribool:minfs-to-epistemic:ef1-goods}
\label[lemma]{thm:impl:tribool:minfs-to-epistemic:ef1-chores}
\label[lemma]{thm:impl:tribool:minfs-to-epistemic:efx-goods}
\label[lemma]{thm:impl:tribool:minfs-to-epistemic:efx-chores}
Let $Q \subseteq \{-1, 0, 1\}$.
Consider a fair division instance $\fdInst{[n]}{[m]}{(v_i)_{i=1}^n}{w}$ where for some agent $i$,
$v_i$ is additive and $v_i(t) \in Q$ for all $t \in [m]$.
Let $F \in \{\mathrm{EF},\allowbreak \mathrm{EF1},\allowbreak \mathrm{EFX}\}$.
If allocation $A$ is min-$F$-share-fair to agent $i$, then it is also epistemic-$F$-fair to $i$
if $F$ = EF, $Q = \{-1, 0\}$, or $Q = \{0, 1\}$.
\end{lemma}
\begin{proof}
Since $A$ is min-$F$-share-fair to $i$, there exists an allocation $B$ such that
$v_i(B_i) \le v_i(A_i)$ and $B$ is $F$-fair to $i$.
By repeatedly applying the transformation of \cref{thm:tribool:equalizing-shift},
we can assume \wLoG{} that either $v_i(B_j) \ge 0$ for all $j$ or $v_i(B_j) \le 0$ for all $j$.

Since $v_i([m] \setminus A_i) \le v_i([m] \setminus B_i)$, we can redistribute $[m] \setminus A_i$
to get an allocation $C$ such that $C_i = A_i$ and $v_i(C_i) \le v_i(B_i)$.
This is because we can fuse items in $[m] \setminus A_i$ such that we either get rid of all goods or all chores.
By \cref{thm:tribool:ef-monotonicity}, $C$ is $F$-fair to agent $i$.
Thus, $C$ is agent $i$'s epistemic-$F$-certificate for $A$.
\end{proof}

We now look at non-additive valuations.

\begin{lemma}[MMS = APS, \citet{kulkarni2024approximating}]
\label[lemma]{thm:impl:mms-to-aps-matroid}
For a fair division instance $\fdInst{[n]}{[m]}{(v_i)_{i=1}^n}{\eqEnt}$,
if $v_i$ is submodular for some agent $i$, and $v_i$ has binary marginals
(i.e., $v_i(g \mid X) \in \{0, 1\}$ for some $X \subseteq [m]$ and $g \in [m] \setminus X$),
then $\APS_i = \MMS_i$.
\end{lemma}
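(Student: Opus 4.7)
The plan is to prove the two inequalities $\APS_i \ge \MMS_i$ and $\APS_i \le \MMS_i$ separately. The first is immediate from earlier results: by \cref{thm:impl:aps-to-pess}, $\APS_i \ge \pessShare_i$, and by \cref{thm:pess-is-mms}, $\pessShare_i = \MMS_i$ under equal entitlements.

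For the reverse inequality $\APS_i \le \MMS_i$, the key observation is that a submodular $v_i$ with binary marginals is exactly the rank function of a matroid $\mathcal{M}$ on ground set $[m]$. I would invoke the dual definition of APS (\cref{defn:aps-dual}) to obtain sets $S_1, \ldots, S_K \subseteq [m]$ with $v_i(S_t) \ge \APS_i$ together with weights $\lambda_t \ge 0$ summing to $1$, such that every item $j \in [m]$ is covered with total weight exactly $1/n$. Setting $z \defeq \APS_i$, consider the truncation $\mathcal{M}_z$ of $\mathcal{M}$ at rank $z$, whose rank function is $r_z(\cdot) = \min(v_i(\cdot), z)$.

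The next step is to establish, for every $A \subseteq [m]$, the inequality $|[m] \setminus A| \ge n(z - r_z(A))$. Fix $A$ and let $q \defeq r_z(A)$. For each $t$, submodularity of $v_i$ gives $v_i(S_t \setminus A) \ge v_i(S_t) - v_i(S_t \cap A) \ge z - q$, and because marginals are binary we also get $|S_t \setminus A| \ge v_i(S_t \setminus A) \ge z - q$. Averaging with weights $\lambda_t$ and using the dual constraints yields
\[ \frac{|[m] \setminus A|}{n} = \sum_t \lambda_t |S_t \setminus A| \ge z - q. \]
By Edmonds' matroid union theorem applied to $n$ disjoint copies of $\mathcal{M}_z$, this rank condition is precisely what is needed to guarantee that $[m]$ contains $n$ pairwise-disjoint subsets each of $v_i$-value exactly $z$. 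Distributing the remaining items arbitrarily among these $n$ subsets produces a partition $(P_1, \ldots, P_n)$ of $[m]$ with $v_i(P_j) \ge z$ for every $j$, which witnesses $\MMS_i \ge z = \APS_i$.

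The main obstacle is the rounding step: converting LP feasibility of the APS dual into an integral $n$-partition of equal quality. This step relies crucially on both the submodularity of $v_i$ (to split $v_i(S_t)$ across $A$ and its complement) and the binary marginals (to convert a rank lower bound into a cardinality lower bound), and it is precisely here that matroid structure, via the matroid union theorem, does the real work; the other two inequalities are routine invocations of prior lemmas.
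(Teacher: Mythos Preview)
Your argument is correct. Note, however, that the paper does not actually supply its own proof of this lemma: it simply cites \cite{kulkarni2024approximating}. So there is no ``paper's proof'' to compare against here, and your write-up stands on its own.

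A couple of minor points worth making explicit in a final version. First, you should note that $z = \APS_i$ is a nonnegative integer (since $v_i$ is integer-valued and $\APS_i \ge \MMS_i \ge 0$), so the truncation $\mathcal{M}_z$ is well-defined. Second, in the step $v_i(S_t) - v_i(S_t \cap A) \ge z - q$, the bound $v_i(S_t \cap A) \le q$ uses monotonicity when $q = v_i(A)$, and the inequality is vacuous when $q = z$; this is fine but worth a sentence. Third, after Edmonds' theorem gives rank $\ge nz$ for the $n$-fold union of $\mathcal{M}_z$, you are implicitly using that each part of the resulting independent set has size at most $z$ (by truncation), forcing each to have size exactly $z$ and hence $v_i$-value exactly $z$; spelling this out would tighten the exposition.
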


\begin{lemma}
\label[lemma]{thm:binary-inc-subset}
Let $f: 2^{[m]} \to \mathbb{R}$ such that $f$ has tribool marginals, i.e.,
for all $X \subseteq [m]$ and $j \in [m] \setminus X$, we have $f(j \mid X) \in \{-1, 0, 1\}$.
For any disjoint sets $S, T \subseteq [m]$ such that $f(T \mid S) > 0$,
there exists $T' \subseteq T$ such that $f(T' \mid S) = 1$.
\end{lemma}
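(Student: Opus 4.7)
The plan is to prove this by a discrete intermediate-value argument on the prefix marginals as we add elements of $T$ one at a time.

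First, enumerate the elements of $T$ in some arbitrary order as $T = \{t_1, \ldots, t_k\}$, and for each $i \in \{0, 1, \ldots, k\}$, let $T_i \defeq \{t_1, \ldots, t_i\}$ (so $T_0 = \emptyset$ and $T_k = T$) and define
\[ g_i \defeq f(S \cup T_i) - f(S). \]
Then $g_0 = 0$ and $g_k = f(T \mid S) > 0$. For consecutive indices,
\[ g_i - g_{i-1} = f(t_i \mid S \cup T_{i-1}) \in \{-1, 0, 1\} \]
by the tribool-marginals hypothesis (applied with $X = S \cup T_{i-1}$ and $j = t_i$, using $S \cap T = \emptyset$ so $t_i \notin X$).

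Next, observe that each $g_i$ is an integer, since it is a sum of steps from $\{-1, 0, 1\}$ starting at $g_0 = 0$. In particular, $g_k$ is a positive integer, so $g_k \ge 1$. Since $g_0 = 0 < 1 \le g_k$ and consecutive values differ by at most $1$, the discrete intermediate-value theorem produces some index $i^* \in \{1, \ldots, k\}$ with $g_{i^*} = 1$. Concretely, one may take $i^*$ to be the smallest index with $g_{i^*} \ge 1$; by minimality $g_{i^*-1} \le 0$, and since $g_{i^*} - g_{i^*-1} \le 1$, the only option compatible with $g_{i^*} \ge 1$ is $g_{i^*-1} = 0$ and $g_{i^*} = 1$.

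Setting $T' \defeq T_{i^*} \subseteq T$ then gives $f(T' \mid S) = g_{i^*} = 1$, as required. There is no real obstacle here: the only thing to notice is that although marginals in $\{-1,0,1\}$ allow cancellations along the way, the integrality of the partial sums plus the unit step size forces the trajectory from $0$ to a positive integer to pass through the value $1$.
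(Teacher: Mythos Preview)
Your proof is correct and essentially identical to the paper's: both enumerate $T$, track the prefix marginals $f(T_i \mid S)$ as you add one element at a time, observe these are integers changing by at most $1$ per step, and take the first index where the value becomes positive (equivalently $\ge 1$) to conclude it equals exactly $1$. The only cosmetic difference is that the paper phrases the final step as $\alpha_t \le f(g_t \mid S \cup T_{t-1}) \le 1$ rather than via the discrete intermediate-value theorem, but the underlying argument is the same.
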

\begin{proof}
Let $T = \{g_1, \ldots, g_k\}$.
For any $0 \le j \le k$, define $T_j \defeq \{g_1, \ldots, g_j\}$
and $\alpha_j \defeq f(T_j \mid S)$.
Then $\alpha_0 = f(\emptyset \mid S) = 0$, and $\alpha_k = f(T \mid S) > 0$.
Let $t$ be the smallest value in $[k]$ such that $\alpha_t > 0$. Then
$\alpha_t = f(T_t \mid S) = \alpha_{t-1} + f(g_t \mid S \cup T_{t-1})$.
Since $\alpha_{t-1} \le 0$, we get $\alpha_t \le f(g_t \mid S \cup T_{t-1}) \le 1$.
Since marginals are in $\{-1, 0, 1\}$, we get $\alpha_t = 1$, so $f(T_t \mid S) = 1$.
\end{proof}

\begin{lemma}[PROPm vs PROPavg vs PROPx for binary subadd]
\label[lemma]{thm:impl:propm-to-propavg-propx-binary-subadd}
Let $\fdInst{[n]}{[m]}{(v_i)_{i=1}^n}{w}$ be a fair division instance
where $v_i$ has tribool marginals for some agent $i$, i.e.,
for all $X \subseteq [m]$ and $j \in [m] \setminus X$,
we have $v_i(j \mid X) \in \{-1, 0, 1\}$.
Suppose agent $i$ is PROPm-satisfied by allocation $A$.
Then $A$ is also PROPavg-fair to $i$.
Moreover, if $v_i$ is subadditive and $w_i = 1/n$,
then $A$ is also PROPx-fair to $i$.
\end{lemma}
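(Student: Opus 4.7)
Both parts of the lemma hinge on a single observation: by the tribool assumption every element of $K_j \defeq \{v_i(S \mid A_i): S \subseteq A_j, v_i(S \mid A_i) > 0\}$ is a positive integer, so $\tau_j \ge 1$ whenever it is positive; conversely, if $\tau_j > 0$ then $K_j$ is non-empty and applying \cref{thm:binary-inc-subset} to any witness gives $S' \subseteq A_j$ with $v_i(S' \mid A_i) = 1$, forcing $\tau_j \le 1$. Thus $\tau_j \in \{0, 1\}$ for every $j \ne i$.

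\textbf{Part 1 (PROPavg).} When $T \ne \emptyset$ every element of $T$ equals $1$ by the above, so $\max(T) = \frac{1}{|T|}\sum_{\tau_j \in T}\tau_j = 1$ and conditions C3 and C4 coincide; when $T = \emptyset$ both hold trivially. Since PROPm requires C1 or (C2 and C3), while PROPavg requires C1 or (C2 and C4), the two notions agree, so every PROPm-fair allocation is automatically PROPavg-fair.

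\textbf{Part 2 (PROPx, assuming $v_i$ subadditive and $w_i = 1/n$).} PROPx's chores clause coincides verbatim with C2 and so is inherited from PROPm. For the goods clause, fix any $S \subseteq [m] \setminus A_i$ with $v_i(S \mid A_i) > 0$; tribool forces $v_i(S \mid A_i) \ge 1$, hence $v_i(A_i \cup S) \ge v_i(A_i) + 1$. If C1 holds then $v_i(A_i \cup S) \ge v_i(A_i) + 1 > v_i(A_i) \ge w_i v_i([m])$ and we are done. Otherwise PROPm forces C2 and C3; if additionally $T \ne \emptyset$ then Part 1 gives $\max(T) = 1$, so C3 reads $v_i(A_i) + 1 > w_i v_i([m])$, which combined with $v_i(A_i \cup S) \ge v_i(A_i) + 1$ settles this subcase.

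\textbf{Main obstacle.} The one remaining subcase is C1 failing together with $T = \emptyset$, where C3 is only vacuously satisfied and PROPm supplies no direct lower bound on $v_i(A_i)$; this is exactly where both subadditivity and $w_i = 1/n$ must be exploited. Because $T = \emptyset$ implies $v_i(A_i \cup A_j) \le v_i(A_i)$ for each $j \ne i$, iterating subadditivity yields $v_i([m]) \le v_i(A_i) + \sum_{j \ne i,\, j \ne j_0} v_i(A_j)$ for every choice of $j_0 \ne i$. Turning this aggregate estimate into the strict inequality $v_i(A_i \cup S) > v_i([m])/n$ for every qualifying $S$ is the crux of the argument, and is where the uniform-entitlement hypothesis $w_i = 1/n$ must be used to convert a per-bundle bound into the required per-agent threshold.
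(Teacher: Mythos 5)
Your first part (PROPavg) is correct and is essentially the paper's argument: integrality of the marginals plus \cref{thm:binary-inc-subset} give $\tau_j \in \{0,1\}$ for all $j \neq i$, so $\max(T)$ equals the average of $T$ and conditions C3 and C4 of \cref{defn:propm} coincide (both holding vacuously when $T = \emptyset$); hence PROPm and PROPavg agree under tribool marginals. Your treatment of the PROPx goods clause when C1 holds, or when C1 fails and $T \neq \emptyset$ (using C3 in the form $v_i(A_i) + 1 > w_i v_i([m])$ together with $v_i(A_i \cup S) \ge v_i(A_i) + 1$), and your observation that the chores clause of PROPx is exactly C2, also match the paper.

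However, the second claim of the lemma is not actually proved: the subcase where C1 fails and $T = \emptyset$ is precisely where your write-up stops, naming the needed inequality ``the crux'' without establishing it. This is a genuine gap, and the aggregate estimate you record, $v_i([m]) \le v_i(A_i) + \sum_{j \neq i,\, j \neq j_0} v_i(A_j)$, cannot be closed by itself, since it leaves $n-2$ bundle values $v_i(A_j)$ uncontrolled. The paper closes this case by a per-bundle bound rather than a grouping: from $\tau_j = 0$ (no subset of $A_j$ has positive marginal over $A_i$) it deduces $v_i(A_j) \le v_i(A_i \cup A_j) \le v_i(A_i)$ for every $j \neq i$, and then subadditivity yields $v_i([m]) \le \sum_{j=1}^n v_i(A_j) \le n\, v_i(A_i)$, i.e., with $w_i = 1/n$ the allocation is in fact PROP-fair to $i$ — so this subcase cannot coexist with the failure of C1, and PROPx follows. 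What is missing from your proposal is exactly this step converting $\tau_j = 0$ into a bound on $v_i(A_j)$ in terms of $v_i(A_i)$ (or some other argument showing that $T = \emptyset$ together with subadditivity and equal entitlement forces PROP); without it the PROPx half of the lemma remains open.
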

\begin{proof}
If $A$ is PROP-fair to $i$, then we are done, so assume otherwise.
For $j \neq i$, recall the definition of $\tau_j$ from \cref{defn:propm}.
We have two cases:

\textbf{Case 1}: $\tau_j = 0$ for all $j \neq i$.
\\ Then $A$ is PROPavg by definition. Let $v_i$ be subadditive.
Then for all $j \in [n] \setminus \{i\}$, we have
\begin{align*}
\tau_j = 0 &\iff v_i(A_j \mid A_i) = 0
    \wrapIfTwoCols\implies v_i(A_j) \le v_i(A_i \cup A_j) = v_i(A_i).
\end{align*}
Hence,
\[ v_i([m]) \le \sum_{j=1}^n v_i(A_i) \le nv_i(A_i). \]
Therefore, if $w_i = 1/n$, then $A$ is PROP-fair to $i$.

\textbf{Case 2}: $\tau_j > 0$ for some $j \neq i$.
\\ By \cref{thm:binary-inc-subset}, we get $\tau_j \in \{0, 1\}$ for all $j$.
Since $A$ is PROPm-fair to $i$, we get $v_i(A_i) + 1 > w_iv_i([m])$.
The average of $\{\tau_j: j \neq i, \tau_j > 0\}$ is 1, so $A$ is also PROPavg-fair to $i$.
If $v_i([m] \setminus A_i \mid A_i) = 0$, then $v_i(A_i) = v_i([m]) > w_iv_i([m])$, so $A$ is PROP-fair to $i$.
Otherwise, for every $S \subseteq [m] \setminus A_i$ such that $v_i(S \mid A_i) > 0$,
we get $v_i(A_i \cup S) \ge v_i(A_i) + 1 > w_iv_i([m])$, so $A$ is PROPx-fair to $i$.
\end{proof}

\begin{lemma}[M1S $\fimplies$ PROPx for binary subadditive]
\label[lemma]{thm:impl:m1s-to-propx-binary-subadd}
Let $\fdInst{[n]}{[m]}{(v_i)_{i=1}^n}{w}$ be a fair division instance where
$v_i$ is subadditive and has $\{-1, 0, 1\}$ marginals,
i.e., $v_i(g \mid X) \in \{-1, 0, 1\}$ for all $X \subseteq [m]$ and $g \in [m] \setminus X$.
Let allocation $A$ be M1S-fair to agent $i$.
Then $A$ is also PROPx-fair to agent $i$ if at least one of the following hold:
\begin{tightenum}
\item $n=2$.
\item $w_i < 1/(n-1)$ and $v_i$ has $\{0, 1\}$ marginals.
\item $v_i$ has $\{-1, 0\}$ marginals.
\end{tightenum}
\end{lemma}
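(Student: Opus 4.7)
The plan is to pick up an M1S-certificate $B$ for agent $i$ (so $B$ is EF1-fair to $i$ and $v_i(A_i) \ge v_i(B_i)$) and prove the single integer bound
\[ v_i(B_i) \ge \bigfloor{w_iv_i([m])}, \]
from which PROPx for $A$ will follow cheaply.

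The first ingredient is a generic reduction: because the marginals lie in $\{-1,0,1\}$, $v_i$ is integer-valued on all of $2^{[m]}$. Hence, if $v_i(A_i) \ge \floor{w_iv_i([m])}$, then either $w_iv_i([m])$ is an integer (in which case PROP already holds), or any $S \subseteq [m] \setminus A_i$ with $v_i(S \mid A_i) > 0$ satisfies $v_i(S \mid A_i) \ge 1$, so $v_i(A_i \cup S) \ge \floor{w_iv_i([m])} + 1 > w_iv_i([m])$; the symmetric claim for $S \subseteq A_i$ with $v_i(S \mid A_i \setminus S) < 0$ is analogous. Together with $v_i(A_i) \ge v_i(B_i)$, this reduces the lemma to proving the integer bound for $v_i(B_i)$ in each of the three cases.

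I would handle the three cases separately. For $n = 2$, invoke \cref{thm:impl:ef1-to-prop1-n2} (using subadditivity of $v_i$) to obtain that $B$ is PROP1-fair to $i$; the integer bound then drops out by inspecting each of the three PROP1 branches under integer-valued $v_i$, since each branch loses at most one marginal of size at most $1$ relative to the PROP threshold. For $\{0,1\}$ marginals, $v_i$ is monotone, so for each $j \ne i$ the EF1 condition combined with $v_i(B_j \setminus \{g\}) \ge v_i(B_j) - 1$ yields $v_i(B_i)/w_i \ge (v_i(B_j)-1)/w_j$. Taking the $w_j$-weighted sum over $j \ne i$ and substituting the subadditive partition bound $\sum_{j \ne i} v_i(B_j) \ge v_i([m]) - v_i(B_i)$ gives $v_i(B_i) \ge w_iv_i([m]) - (n-1)w_i$, and the hypothesis $w_i < 1/(n-1)$ is exactly what keeps the slack strictly below $1$. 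For $\{-1,0\}$ marginals I would mirror the argument in the dual disutility $d_i := -v_i$, which is superadditive: here the ``remove a good from $j$'' branch of EF1 is subsumed by no-envy (as chores only raise values when removed), leaving a clean per-$j$ inequality $(d_i(B_i)-1)/w_i \le d_i(B_j)/w_j$ whose weighted sum combined with $\sum_{j \ne i} d_i(B_j) \le d_i([m]) - d_i(B_i)$ collapses to $d_i(B_i) \le w_i d_i([m]) + 1 - w_i$, i.e.\ $v_i(B_i) > w_iv_i([m]) - 1$.

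The main obstacle is the case-specific tightness of the slack in the weighted sum of EF1 inequalities: in each case one lands on a bound of the form $v_i(B_i) \ge w_iv_i([m]) - \delta$ with $\delta \in [0,1)$, and the three hypotheses ($n = 2$, or $w_i < 1/(n-1)$, or negative-binary marginals) are precisely what keeps $\delta < 1$ so that integrality promotes the bound to $\floor{w_iv_i([m])}$. I expect the $n=2$ case to be essentially free via \cref{thm:impl:ef1-to-prop1-n2}, while Cases 2 and 3 require the careful bookkeeping above to convert EF1 into a share-like lower bound.
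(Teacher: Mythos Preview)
Your strategy—establish $v_i(B_i) \ge \lfloor w_iv_i([m])\rfloor$ from the EF1 certificate $B$ and then read off PROPx from integrality—is exactly the paper's approach, and your arguments for the $\{0,1\}$- and $\{-1,0\}$-marginal cases via the $w_j$-weighted sum of per-agent EF1 bounds are the paper's Cases~1 and~2 verbatim (the paper just packages the branch bookkeeping through a partition of $[n]\setminus\{i\}$ according to which EF1 clause fires).

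The one soft spot is your $n=2$ case. \Cref{thm:impl:ef1-to-prop1-n2} as written carries side conditions beyond subadditivity (all marginals over $B_j$ nonpositive, or some positive marginal over $B_i$, or submodularity), and none of these follows from ``subadditive with $\{-1,0,1\}$ marginals''. The PROP1 conclusion you need is nevertheless true without them: in the ``remove $g$ from $B_j$'' branch, apply subadditivity to the split $(B_i\cup\{g\},\,B_j\setminus\{g\})$ and combine with the not-PROP assumption $v_i(B_i)<w_iv_i([m])$ to get $v_i(B_i\cup\{g\})>w_iv_i([m])$ directly; the ``remove $c$ from $B_i$'' branch is immediate. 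You should include this two-line patch rather than cite the lemma bare. The paper avoids the detour by absorbing $n=2$ into its same two-case framework: for the single other agent, either the chore-removal branch fires and its Case~2 applies, or its Case~1 applies with $w_i<1=1/(n-1)$ automatic.
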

\begin{proof}
Let $B$ be agent $i$'s M1S-certificate for $A$. Let $k \defeq v_i(B_i)$.
Then $v_i(A_i) \ge k$ and $B$ is EF1-fair to $i$.
If $v_i(A_i) \ge w_iv_i([m])$, then $A$ is PROP-fair to $i$, and we are done.
Now let $k \le v_i(A_i) < w_iv_i([m])$.
Define sets $J_0$, $J_G$, and $J_C$:
\ifColsTwo\providecommand{\tempWrap}{\\ &\qquad\qquad}\else\let\tempWrap\relax\fi
\begin{align*}
J_0 &\defeq \bigg\{j \in [n] \setminus \{i\}: \frac{v_i(B_i)}{w_i} \ge \frac{v_i(B_j)}{w_j} \bigg\},
\\ J_G &\defeq \bigg\{j \in [n] \setminus \{i\}: \frac{v_i(B_j \setminus \{g\})}{w_j}
    \tempWrap\le \frac{v_i(B_i)}{w_i} < \frac{v_i(B_j)}{w_j} \text{ for some } g \in B_j \bigg\},
\\ J_C &\defeq \bigg\{j \in [n] \setminus \{i\}: \frac{v_i(B_i)}{w_i} < \frac{v_i(B_j)}{w_j}
    \tempWrap\le \frac{v_i(B_i \setminus \{c\})}{w_i} \text{ for some } c \in B_i \bigg\}.
\end{align*}
Since $B$ is EF1-fair to $i$, we have $J_0 \cup J_G \cup J_C = [n] \setminus \{i\}$.
Also, $J_0$ and $J_G \cup J_C$ are disjoint.

If all items are goods, then $J_C = \emptyset$.
If all items are chores, then $J_G = \emptyset$.
If $n = 2$, then $J_G = \emptyset$ or $J_C = \emptyset$.
Hence, if at least one of the three conditions of the lemma hold,
then either $J_G = \emptyset$, or $J_C = \emptyset$ and $w_i < 1/(n-1)$.
We first show that, under these conditions, $w_iv_i([m]) < k + 1$.

\textbf{Case 1}: $J_C = \emptyset$ and $w_i < 1/(n-1)$.
\\ Pick $j \in [n] \setminus \{i\}$.
If $j \in J_0$, then $v_i(B_j) \le kw_j/w_i$.
Otherwise, $\exists g \in B_j$ such that $v_i(B_j \setminus \{g\}) \le kw_j/w_i$.
So, $v_i(B_j) = v_i(B_j \setminus \{g\}) + v_i(g \mid B_j \setminus \{g\}) \le k(w_j/w_i) + 1$.
Hence,
\[ v_i([m]) \le \sum_{j=1}^n v_i(B_j) \le (n-1) + k/w_i. \]
So, $w_iv_i([m]) \le k + w_i(n-1) < k + 1$ (since $w_i < 1/(n-1)$).

\textbf{Case 2}: $J_G = \emptyset$
\\ Pick $j \in [n] \setminus \{i\}$.
If $j \in J_0$, then $v_i(B_j) \le kw_j/w_i$.
Otherwise, $\exists c \in B_i$ such that $v_i(B_i \setminus \{c\})/w_i \ge v_i(B_j)/w_j$.
$v_i(B_i \setminus \{c\}) = v_i(B_i) - v_i(c \mid B_i \setminus \{c\}) \le k + 1$.
So, $v_i(B_j) \le (k+1)w_j/w_i$. Hence,
\[ v_i([m]) \le \sum_{j=1}^n v_i(B_j) \le \frac{k+1}{w_i} - 1. \]
Hence, $w_iv_i([m]) \le k+1 - w_i < k + 1$.

Now we will show that $A$ is PROPx-fair to $i$.
Note that $k \le v_i(A_i) < w_iv_i([m]) < k + 1$, so $v_i(A_i) = k$.
If $S \subseteq [m] \setminus A_i$ such that $v_i(S \mid A_i) > 0$,
then $v_i(A_i \cup S) \ge k + 1 > w_iv_i([m])$.
If $S \subseteq A_i$ such that $v_i(S \mid A_i \setminus S) < 0$,
then $v_i(A_i \setminus S) = v_i(A_i) - v_i(S \mid A_i \setminus S) \ge k + 1 > w_iv_i([m])$.
Hence, $A$ is PROPx-fair to agent $i$.
\end{proof}

\ifVerbose
\subsection{Unit-Demand Valuations}
\label{sec:impls-extra:unit-demand}

A function $v: 2^{[m]} \to \mathbb{R}_{\ge 0}$ is called \emph{unit-demand} if $v(\emptyset) = 0$,
and for any non-empty set $X \subseteq M$, we have $v(X) = \max_{g \in X} v(\{g\})$.
Such functions capture scenarios where agents want just one good, e.g., when the goods are houses.
Unit-demand functions are known to be submodular and cancelable.

\begin{table}[htb]
\centering
\caption[Unit-demand implications]{Implications among fairness notions when
valuations are unit-demand, marginals are non-negative, and entitlements are equal.}
\label{table:impls-unit-demand}
\small
\begin{tabular}{cHHHHcH}
\toprule implication & \footnotesize valuations & \footnotesize marginals & \footnotesize $n=2$ & \footnotesize entitlements & proof &
\\\midrule MEFS $\fimplies$ EF
    & unit-demand & $\ge 0$ & -- & equal
    & \cref{thm:ud:ef} & trivial
\\[\defaultaddspace] MMS $\fimplies$ APS
    & unit-demand & $\ge 0$ & -- & equal
    & \cref{thm:ud:mms,thm:ud:aps} & \textbf{new}
\\[\defaultaddspace] PMMS $\fimplies$ PAPS
    & unit-demand & $\ge 0$ & -- & equal
    & \cref{thm:ud:mms,thm:ud:aps} & \textbf{new}
\\[\defaultaddspace] GMMS $\fimplies$ GAPS
    & unit-demand & $\ge 0$ & -- & equal
    & \cref{thm:ud:mms,thm:ud:aps} & \textbf{new}
\\[\defaultaddspace] M1S $\fimplies$ APS
    & unit-demand & $\ge 0$ & -- & equal
    & \cref{thm:ud:m1s,thm:ud:aps} & \textbf{new}
\\[\defaultaddspace] EF1 $\fimplies$ GAPS
    & unit-demand & $\ge 0$ & -- & equal
    & \cref{thm:ud:m1s,thm:ud:aps} & \textbf{new}
\\ \bottomrule
\end{tabular}
\end{table}

\begin{remark}
\label[remark]{thm:ud:ef}
Consider a fair division instance $\fdInst{[n]}{[m]}{(v_i)_{i=1}^n}{\eqEnt}$ over goods
where $v_i$ is unit-demand for some agent $i$.
\WLoG, let $v_i(1) \ge \ldots \ge v_i(m)$.
Then agent $i$ is envy-free in allocation $A$ iff $v_i(A_i) = v_i(1)$.
\end{remark}

\begin{lemma}
\label[lemma]{thm:ud:m1s}
Consider a fair division instance $\fdInst{[n]}{[m]}{(v_i)_{i=1}^n}{\eqEnt}$ over goods
where $v_i$ is unit-demand for some agent $i$.
\WLoG, let $v_i(1) \ge \ldots \ge v_i(m)$.
Then allocation $A$ is M1S-fair to agent $i$ iff $v_i(A_i) \ge v_i(n)$.
\end{lemma}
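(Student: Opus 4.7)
The plan is to compute $\minFS(\Ical, \mathrm{EF1}, i)$ exactly by proving both inequalities. First I would exhibit an EF1-certificate giving agent $i$ value exactly $v_i(n)$ (so $\minFS \le v_i(n)$), and then I would show that every EF1-fair allocation for $i$ gives her value at least $v_i(n)$. Throughout I adopt the convention $v_i(n) \defeq 0$ when $m < n$, which handles the edge case cleanly.

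For the upper bound, I would construct the witness $B$ by distributing the top $\min(n-1,m)$ goods as singletons to the agents in $[n] \setminus \{i\}$ under an arbitrary injection, and assigning every remaining good (including good $n$ if it exists) to agent $i$. Unit-demand gives $v_i(B_i) = v_i(n)$. For every other agent $j$, the bundle $B_j$ is either empty or a single good, so removing its unique item yields the empty set; this certifies EF1 from $i$ towards $j$ regardless of whether $i$ envies $j$. Hence $B$ is EF1-fair to $i$ and attains $v_i(B_i) = v_i(n)$.

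For the lower bound, I would fix an EF1-fair allocation $A$ and analyze how the top $n$ goods $\{1,\ldots,n\}$ are distributed across the $n$ bundles (if $m<n$, unit-demand forces $v_i(A_i) \ge 0 = v_i(n)$ trivially). By pigeonhole, either (A) every agent receives exactly one of these top goods, in which case $A_i$ contains some $k \in [n]$ and unit-demand gives $v_i(A_i) \ge v_i(k) \ge v_i(n)$; or (B) some agent $j$ receives at least two of them. In sub-case (B), if $j=i$ we are again done. Otherwise $j \ne i$, and for every $g \in A_j$ the set $A_j \setminus \{g\}$ still contains a top-$n$ good, yielding $v_i(A_j \setminus \{g\}) \ge v_i(n)$ and $v_i(A_j) \ge v_i(n)$. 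Each of the three EF1 clauses then forces $v_i(A_i) \ge v_i(n)$: the no-envy clause gives $v_i(A_i) \ge v_i(A_j) \ge v_i(n)$; the good-side swap gives $v_i(A_i) \ge v_i(A_j \setminus \{g\}) \ge v_i(n)$; and the chore-side swap gives $v_i(n) \le v_i(A_j) \le v_i(A_i \setminus \{c\}) \le v_i(A_i)$, using that removing an item from a unit-demand bundle can only weakly decrease its value.

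There is no real obstacle here; the only delicate step is the pigeonhole case split in the lower bound, whose analysis relies crucially on the fact that, under unit-demand, the value of a bundle equals its single best good, so every EF1 inequality reduces to a comparison of at most two individual item values.
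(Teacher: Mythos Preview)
Your proof is correct and follows essentially the same approach as the paper. Both directions match: the witness $B$ (singletons from $[n-1]$ to the other agents, the rest to $i$) is identical, and your pigeonhole argument on the top $n$ goods is equivalent to the paper's argument via the set $G=\{g: v_i(g)>v_i(B_i)\}$, just with the case split made more explicit and all three EF1 clauses spelled out.
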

\begin{proof}
Suppose $A$ is M1S-fair to agent $i$. Let $B$ be agent $i$'s M1S-certificate for $A$.
Let $G \defeq \{g \in [m]: v_i(g) > v_i(B_i)\}$.
Then $B$ is EF1-fair to agent $i$, so every agent $j \in [n] \setminus \{i\}$
has at most one good from $G$, so $|G| \le n-1$.
If $v_i(B_i) < v_i(n)$, then $G \supseteq [n]$, which is a contradiction.
Hence, $v_i(A_i) \ge v_i(B_i) \ge v_i(n)$.

Suppose $v_i(A_i) \ge v_i(n)$. Let $B$ be an allocation where
each agent in $[n] \setminus \{i\}$ gets exactly one good from $[n-1]$,
and agent $i$ gets $[m] \setminus [n-1]$.
Then $B$ is EF1-fair to $i$ and $v_i(A_i) \ge v_i(n) = v_i(B_i)$.
Hence, $B$ is an M1S-certificate for agent $i$ for allocation $A$.
Hence, $A$ is M1S-fair to agent $i$.
\end{proof}

\begin{lemma}
\label[lemma]{thm:ud:mms}
Consider a fair division instance $\fdInst{[n]}{[m]}{(v_i)_{i=1}^n}{\eqEnt}$ over goods
where $v_i$ is unit-demand for some agent $i$.
\WLoG, let $v_i(1) \ge \ldots \ge v_i(m)$.
Then agent $i$'s maximin share is $v_i(n)$.
\end{lemma}
\begin{proof}
In every $n$-partition $P$ of $[m]$, some bundle $P_k$ does not contain a good from $[n-1]$.
Then $v_i(P_k) \le v_i(n)$. Hence, $\MMS_i \le v_i(n)$.
Let $P$ be a partition where $P_j = \{j\}$ for $j \in [n-1]$ and $P_n = [m] \setminus [n-1]$.
Then $\min_{j=1}^n v_i(P_j) = v_i(n)$. Hence, $\MMS_i = v_i(n)$.
\end{proof}

\begin{lemma}
\label[lemma]{thm:ud:aps}
Consider a fair division instance $\fdInst{[n]}{[m]}{(v_i)_{i=1}^n}{\eqEnt}$ over goods
where $v_i$ is unit-demand for some agent $i$.
\WLoG, let $v_i(1) \ge \ldots \ge v_i(m)$.
Then agent $i$'s anyprice share is $v_i(n)$.
\end{lemma}
\begin{proof}
Set the price of the first $n-1$ goods to $1/(n-1)$ each
and price the remaining goods at 0.
Then $\APS_i \le v_i([m] \setminus [n-1]) = v_i(n)$.
For any non-negative price vector, some item from $[n]$ is affordable,
so $\APS_i \ge v_i(n)$.
\end{proof}

\begin{lemma}
\label[lemma]{thm:ud:prop1}
Consider a fair division instance $\fdInst{[n]}{[m]}{(v_i)_{i=1}^n}{\eqEnt}$ over goods
where $v_i$ is unit-demand for some agent $i$.
\WLoG, let $v_i(1) \ge \ldots \ge v_i(m)$.
Then every allocation is PROP1-fair to agent $i$.
\end{lemma}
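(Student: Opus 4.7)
The plan is to verify one of the three disjuncts in \cref{defn:prop1} for agent $i$ in an arbitrary allocation $A$, using the key observation that $v_i([m]) = \max_{g \in [m]} v_i(g) = v_i(1)$ since $v_i$ is unit-demand. Since entitlements are equal, agent $i$'s proportional share is $v_i([m])/n = v_i(1)/n$.

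First I would dispose of a few degenerate cases. If $m = 0$, then $v_i(A_i) = 0 = v_i([m])/n$, and PROP holds trivially. Likewise, if $v_i(1) = 0$, then $v_i([m]) = 0$ and $v_i(A_i) \ge 0 = v_i([m])/n$, giving PROP. If $n = 1$, then $A_i = [m]$ and $v_i(A_i) = v_i([m])$, again giving PROP. So we may assume $n \ge 2$, $m \ge 1$, and $v_i(1) > 0$.

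Next I split on whether good $1$ lies in $A_i$. If $1 \in A_i$, then $v_i(A_i) = \max_{g \in A_i} v_i(g) = v_i(1) = v_i([m]) \ge v_i([m])/n$, so PROP (and hence PROP1) holds. If $1 \notin A_i$, consider adding good $1$ to $A_i$: by unit-demand,
\[
v_i(A_i \cup \{1\}) = \max(v_i(A_i),\, v_i(1)) = v_i(1) = v_i([m]).
\]
Since $n \ge 2$ and $v_i(1) > 0$, we get $v_i(A_i \cup \{1\}) = v_i([m]) > v_i([m])/n$, which is exactly the PROP1 condition for taking a good from outside $A_i$.

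There is no real obstacle here; the only subtlety is the strict inequality in our definition of PROP1 (\cref{sec:notions:prop1}), which is handled by carving out the edge cases where $v_i([m]) = 0$ or $n = 1$ and verifying PROP directly in those cases.
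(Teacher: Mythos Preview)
Your proof is correct and follows essentially the same approach as the paper: split on whether good $1$ lies in $A_i$, and use $v_i([m]) = v_i(1)$. You are more careful than the paper about the degenerate cases $n=1$, $m=0$, and $v_i(1)=0$, which is appropriate given the strict inequality in \cref{defn:prop1}.
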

\begin{proof}
$v_i(A_i \cup \{1\}) = v_i([m]) > v_i([m])/n$.
\end{proof}
\fi

\section{Non-Implications (Additive Valuations)}
\label{sec:cex-add-extra}

\subsection{Trivial Examples}
\label{sec:cex-extra:trivial}

\begin{example}[single item]
\label[example]{cex:single-item}
\label[example]{cex:single-item:goods}
\label[example]{cex:single-item:chores}
Consider a fair division instance with $n$ agents and one item
(which is either a good to everyone or a chore to everyone).
Then every allocation is EFX, EF1, APS, MMS, PROPx, GAPS, GMMS, PAPS, PMMS,
EEFX, EEF1, MXS, M1S, PROPm, and PROP1,
but not EF or PROP or GPROP or PPROP or EEF or MEFS.
\end{example}

\begin{lemma}
\label[example]{cex:share-vs-envy-goods}
Consider a fair division instance $\fdInst{[n]}{[m]}{(v_i)_{i=1}^n}{\eqEnt}$
with $n \ge 3$, $m = 2n-1$, and identical additive valuations,
where each item has value 1 to each agent.
Let $A$ be an allocation where agent $n$ gets $n$ goods,
and all other agents get 1 good each.
Then this allocation is APS+MMS+EEFX+PROPx, but not EF1.
\end{lemma}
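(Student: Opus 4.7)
The plan is to first compute the key shares and then verify each fairness notion separately, with the EEFX check being the most delicate.

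First, I would note that since all goods are identical and have value $1$, we have $v_i([m]) = 2n{-}1$ for every agent, and valuations are symmetric. Agent $n$ receives $n$ goods (so $v_n(A_n) = n$) and each of agents $1, \ldots, n{-}1$ receives $1$ good. I would compute the MMS: partitioning $2n{-}1$ identical goods into $n$ bundles, the optimal partition has $n{-}1$ bundles of size $2$ and one bundle of size $1$, so $\MMS_i = 1$ for every $i$. For the APS, setting $p(g) = 1/(2n{-}1)$ for every good gives $\APS_i \le \lfloor(2n{-}1)/n\rfloor = 1$, and \cref{thm:impl:aps-to-pess} gives $\APS_i \ge \MMS_i = 1$, so $\APS_i = 1$. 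Since every agent's bundle has value $\ge 1$, the allocation is both MMS-fair and APS-fair.

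Next, for PROPx, the proportional share is $(2n{-}1)/n \in (1, 2)$. Agent $n$ has value $n \ge (2n{-}1)/n$, so she is PROP-fair, hence PROPx-fair. For each agent $i \in [n{-}1]$, there are no chores, so the only relevant PROPx condition is that adding any good to $A_i$ exceeds $(2n{-}1)/n$; since $v_i(A_i \cup \{g\}) = 2 > (2n{-}1)/n$ for $n \ge 2$, PROPx holds.

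For EEFX, I would construct the certificate separately for each agent. For agent $n$, the allocation $A$ itself is EFX-fair to her since $v_n(A_n) = n$ while every other agent receives only $1$ good, so she does not envy at all. For agent $i \in [n{-}1]$, the certificate $B$ sets $B_i = A_i$ and redistributes the remaining $2n{-}2$ goods by giving $2$ goods to each of the other $n{-}1$ agents; then agent $i$ has value $1$ while every other bundle has value $2$, so she EF-envies them but the envy vanishes after removing any single good, giving EFX-fairness for $i$.

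Finally, I would show $A$ is not EF1: for $n \ge 3$, agent $1$ has $v_1(A_1) = 1$ while $v_1(A_n \setminus \{g\}) = n{-}1 \ge 2$ for any $g \in A_n$, and $A_1$ has only one good whose removal leaves value $0 < n$. Hence agent $1$ EF1-envies agent $n$. The only potential subtlety is ensuring the distinction between PROPx (which holds by a large margin) and EF1 (which fails by a large margin) is correctly tracked for the single-good bundles, but since everything reduces to counting identical goods, no obstacle arises.
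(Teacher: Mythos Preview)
Your proposal is correct and follows essentially the same approach as the paper's proof: both compute $\MMS_i = 1$, sandwich $\APS_i$ between $\MMS_i$ and an upper bound to conclude $\APS_i = 1$, construct the same EEFX certificates (the allocation itself for agent $n$, and a redistribution giving two goods to every other agent for $i \in [n-1]$), and observe that agents $1, \ldots, n-1$ EF1-envy agent $n$. The only minor difference is that the paper upper-bounds $\APS_i$ via \cref{thm:impl:prop-to-aps} ($\APS_i \le v([m])/n = 2 - 1/n$) combined with the integrality of bundle values, whereas you exhibit an explicit price vector; these are equivalent in spirit, and your PROPx verification is spelled out in slightly more detail than the paper's.
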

\begin{proof}
By \cref{thm:impl:aps-to-pess,thm:impl:prop-to-aps},
$1 = \MMS_i \le \APS_i \le v([m])/n = 2 - 1/n$.
Hence, $APS_i = 1$, since the APS is the value of some bundle.
Hence, $A$ is APS+MMS+PROPx.
Agent $n$'s EEFX-certificate for $A$ is $A$ itself.
For $i \neq n$, agent $i$'s EEFX-certificate is $B$,
where $|B_i| = 1$ and $|B_j| = 2$ for $j \neq i$.
$A$ is not EF1: agents $[n-1]$ EF1-envy agent $n$.
\end{proof}

\begin{lemma}
\label[lemma]{cex:share-vs-envy-chores}
Consider a fair division instance $\fdInst{[n]}{[m]}{(v_i)_{i=1}^n}{\eqEnt}$
with $n \ge 3$, identical additive disutilities,
and $m = n+1$ chores, each of disutility 1.
Let $A$ be an allocation where agents 1 and 2 get 2 chores each,
agent $n$ gets 0 chores, and the remaining agents get 1 chore each.
Then this allocation is APS+MMS+EEFX+PROPx, but not EF1.
\end{lemma}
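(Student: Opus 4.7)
The plan is to mirror the structure of the goods-side counterexample (\cref{cex:share-vs-envy-goods}), replacing ``give 1 extra good to one agent'' with ``give 0 chores to one lucky agent, at the cost of doubling up two others.'' Let $v$ denote the common disutility function, so $v([m]) = -(n+1)$ and $w_i = 1/n$. I would first compute the share values $\MMS_i$ and $\APS_i$, then verify the four positive properties bundle-by-bundle, and finally exhibit an EF1-violation.

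\emph{Shares.} An $n$-partition of $n+1$ identical chores must place at least two chores in some bundle, so every partition has a worst bundle of disutility $-2$; conversely the partition $(\{c\},\{c\},\ldots,\{c,c\})$ attains $-2$. Hence $\MMS_i = -2$. Since all marginals are additive, \cref{thm:impl:prop-to-aps} gives $\APS_i \le w_i v([m]) = -(n+1)/n$, and \cref{thm:impl:aps-to-pess} gives $\APS_i \ge \pessShare_i = \MMS_i = -2$. Because APS is attained by the value of some bundle (an integer in $\{0,-1,\ldots,-(n+1)\}$) and $-(n+1)/n < -1$ for $n \ge 2$, the only integer in the feasible window is $-2$, so $\APS_i = -2$.

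\emph{MMS, APS, and PROPx for $A$.} In $A$, every agent's bundle has value in $\{0,-1,-2\}$, all $\ge -2$, so $A$ is both MMS- and APS-fair. For PROPx with threshold $-(n+1)/n$: agents holding at most one chore already satisfy $v(A_i) \ge -1 > -(n+1)/n$; for agents $1,2$ who each hold two chores, removing either chore leaves disutility $-1 > -(n+1)/n$, so the second clause of \cref{defn:propx} holds.

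\emph{EEFX via re-distribution.} For agent $n$ (empty bundle) and for the ``middle'' agents $3,\ldots,n-1$ (one chore each), $A$ itself is already EFX-fair to them: they envy only agent $n$, and removing their single chore brings their disutility down to $0 = v(A_n)$, giving equality in the EFX inequality. For agents $1$ and $2$ the certificate $B$ keeps $B_1 = A_1$ (resp.\ $B_2 = A_2$) and redistributes the remaining $n-1$ chores so that every other agent receives exactly one chore. Then the agent in question has disutility $-2$, everyone else $-1$, and removing one chore from her own bundle equalises disutilities, witnessing EFX-fairness.

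\emph{Failure of EF1.} Agents $1$ and $2$ envy agent $n$ because $v(A_1)/w_1 = -2 < 0 = v(A_n)/w_n$. Since $A_n = \emptyset$, the ``remove a chore from $j$'s bundle'' clause is vacuous, and removing either chore from $A_1$ leaves value $-1 < 0$, so the ``remove a chore from $i$'s bundle'' clause also fails. Thus $A$ is not EF1. The only step requiring any care is pinning down $\APS_i$ exactly; once the $\MMS$ lower bound and the PROP upper bound are combined with integrality of bundle values, the remaining verifications are routine bundle-value comparisons.
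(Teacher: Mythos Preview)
Your proposal is correct and follows essentially the same approach as the paper's proof: both pin down $\APS_i = \MMS_i = -2$ via the sandwich $\MMS_i \le \APS_i \le v([m])/n$ together with integrality of bundle values, verify MMS/APS/PROPx directly from these share values, construct the EEFX certificate for agents $1$ and $2$ by redistributing the other $n-1$ chores one-per-agent (the paper phrases this as ``transfer a chore from agent $3-i$ to agent $n$''), and exhibit the EF1 failure via agent $1$'s envy towards agent $n$'s empty bundle. Your write-up is somewhat more explicit on the PROPx check and the integrality step, but there is no substantive difference.
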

\begin{proof}
$-2 = \MMS_i \le \APS_i \le v([m])/n = - 1 - 1/n$.
Hence, $APS_i = -2$, since the APS is the value of some bundle.
Hence, $A$ is APS+MMS+PROPx.

Agents $[n] \setminus [2]$ do not EFX-envy anyone in $A$.
For $i \in \{1, 2\}$, agent $i$'s epistemic-EFX-certificate can be obtained by
transferring a chore from agent $3-i$ to agent $n$.
Hence, $A$ is epistemic EFX.
$A$ is not EF1 because agents 1 and 2 EF1-envy agent $n$.
\end{proof}

\subsection{From EEF, MEFS, PROP}
\label{sec:cex-extra:from-eef-mefs-prop}

\begin{lemma}[EEF $\nfimplies$ EF1]
\label[lemma]{cex:eef-not-ef1}
\label[lemma]{cex:eef-not-ef1:positive-bival}
\label[lemma]{cex:eef-not-ef1:binary}
\label[lemma]{cex:eef-not-ef1:negative-bival}
\label[lemma]{cex:eef-not-ef1:neg-binary}
Let $0 \le 2a < b$. Let $f_1, f_2, f_3: 2^{[12]} \to \mathbb{R}_{>0}$ be additive sets functions:

\noindent
{\renewcommand{\tabcolsep}{4pt}
\begin{tabular}{c|cccc|cccc|cccc}
& 1 & 2 & 3 & 4 & 5 & 6 & 7 & 8 & 9 & 10 & 11 & 12
\\ \hline $f_1$ & $a$ & $a$ & $b$ & $b$ & $b$ & $b$ & $b$ & $b$ & $a$ & $a$ & $a$ & $a$
\\ $f_2$ & $a$ & $a$ & $a$ & $a$ & $a$ & $a$ & $b$ & $b$ & $b$ & $b$ & $b$ & $b$
\\ $f_3$ & $b$ & $b$ & $b$ & $b$ & $a$ & $a$ & $a$ & $a$ & $a$ & $a$ & $b$ & $b$
\end{tabular}
}

Let $t \in \{-1, 1\}$ and let $\Ical \defeq \fdInst{[3]}{[12]}{(v_i)_{i=1}^3}{\eqEnt}$
be an instance where $v_i \defeq tf_i$ for all $i \in [3]$.
Then allocation
$A \defeq ([4], [8] \setminus [4], [12] \setminus [8])$ is EEF+PROP but not EF1.
\end{lemma}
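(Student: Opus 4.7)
The plan is to verify PROP, EEF, and the failure of EF1 in turn, exploiting the fact that each row of the value table is a permutation of six $a$'s and six $b$'s, and that each bundle $A_i$ contains exactly two $a$-valued and two $b$-valued items under $f_i$. This immediately gives $f_i(A_i) = 2a+2b = f_i([12])/3$, so $A$ satisfies PROP with equality for both goods ($t=1$) and chores ($t=-1$), regardless of the sign $t$.

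For EEF, I would exhibit an explicit epistemic-EF certificate for each agent by rebalancing the eight items outside $A_i$, which again break into four $a$-valued and four $b$-valued items under $f_i$, into two bundles each containing two $a$'s and two $b$'s. Explicitly: $B^{(1)} \defeq (A_1,\ \{5,6,9,10\},\ \{7,8,11,12\})$, $B^{(2)} \defeq (\{1,2,9,10\},\ A_2,\ \{3,4,11,12\})$, and $B^{(3)} \defeq (\{1,2,5,6\},\ \{3,4,7,8\},\ A_3)$. Under each $f_i$, all three bundles of $B^{(i)}$ have value $2a+2b$, and since $v_i = t f_i$ scales all three equally, agent $i$ is envy-free in $B^{(i)}$. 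By construction $B^{(i)}_i = A_i$, so $B^{(i)}$ is the required epistemic-EF certificate.

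The main step, and the only one that is not purely bookkeeping, is ruling out EF1; the slight subtlety is that the envying pair flips between the goods and chores cases, while the hypothesis $2a < b$ obstructs every EF1 repair in both. For $t=1$, agent $1$ envies agent $2$ by $f_1(A_2)-f_1(A_1) = 4b-(2a+2b) = 2(b-a) > 0$; in the goods setting only option~1 of EF1 (removing an item from $A_2$) can possibly help, and removing any of the $b$-valued items of $A_2$ leaves residual value $3b$, so EF1 would require $3b \le 2a+2b$, i.e.\ $b \le 2a$, contradicting the hypothesis. For $t=-1$, agent $1$ instead envies agent $3$ by the same gap $2(b-a) > 0$; in the chores setting only option~2 (dropping an item from $A_1$) can help, and removing the top-disutility item (worth $b$) leaves residual disutility $2a+b$, so EF1 would require $2a+b \le 4a$, i.e.\ $b \le 2a$, again contradicting the hypothesis. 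Ruling out the ``wrong'' EF1 option in each setting is a one-line monotonicity remark: removing an item from $A_i$ cannot close envy for goods, and removing an item from $A_j$ cannot close envy for chores. Hence EF1 fails in both cases, completing the proof.
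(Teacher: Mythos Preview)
Your proof is correct and follows essentially the same approach as the paper's proof: the paper gives agent~1's certificate $B = ([4],\{5,6,9,10\},\{7,8,11,12\})$ and says ``a similar argument holds for agents~2 and~3,'' while you spell out all three certificates explicitly and carry out the EF1 computation in both the $t=1$ and $t=-1$ cases with the same envying pairs (agent~1 vs.\ agent~2 for goods, agent~1 vs.\ agent~3 for chores). Your argument is just a more detailed expansion of what the paper sketches.
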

\begin{proof}
For $t = 1$, agent 1 EF1-envies agent 2 in $A$,
and for $t = -1$, agent 1 EF1-envies agent 3 in $A$.
$B = ([4], \{5, 6, 9, 10\}, \{7, 8, 11, 12\})$ is agent 1's EEF-certificate.
A similar argument holds for agents 2 and 3 too.
\end{proof}

\begin{example}[PROP $\nfimplies$ MEFS]
\label[example]{cex:prop-not-mefs-goods}
Consider a fair division instance with 3 equally-entitled agents
having additive valuations over 3 goods:

\begin{tabular}{c|ccc}
& 1 & 2 & 3
\\ \hline $v_1$ & 10 & 20 & 30
\\ $v_2$ & 20 & 10 & 30
\\ $v_3$ & 10 & 20 & 30
\end{tabular}

Then the allocation $(\{2\}, \{1\}, \{3\})$ is PROP, but no allocation is MEFS
(every agent's min EF share is 30).
\end{example}

\begin{example}[PROP $\nfimplies$ MEFS]
\label[example]{cex:prop-not-mefs-chores}
Consider a fair division instance with 3 equally-entitled agents
having additive disutilities over 3 chores:

\begin{tabular}{c|ccc}
& 1 & 2 & 3
\\ \hline $-v_1$ & 30 & 20 & 10
\\ $-v_2$ & 20 & 30 & 10
\\ $-v_3$ & 30 & 20 & 10
\end{tabular}

Then the allocation $(\{2\}, \{1\}, \{3\})$ is PROP, but no allocation is MEFS
(every agent's min EF share is $-10$).
\end{example}

\begin{lemma}[MEFS $\nfimplies$ EEF]
\label[lemma]{cex:mefs-not-eef-goods}
Consider a fair division instance with 3 equally-entitled agents
having additive valuations over 6 goods:

\begin{tabular}{c|cccccc}
& 1 & 2 & 3 & 4 & 5 & 6
\\ \hline $v_1$ & 20 & 20 & 20 & 10 & 10 & 10
\\ $v_2$, $v_3$ & 20 & 10 & 10 &  1 &  1 &  1
\end{tabular}

Then the allocation $A \defeq (\{4, 5, 6\}, \{1\}, \{2, 3\})$ is MEFS, but no allocation is epistemic EF.
\end{lemma}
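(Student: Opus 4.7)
The plan is to verify the two claims separately. For the MEFS part, I will exhibit, for each agent $i$, an allocation $B^{(i)}$ that is envy-free from $i$'s viewpoint and satisfies $v_i(B^{(i)}_i) \le v_i(A_i)$. For agent 1, I will take $B^{(1)} \defeq (\{1, 5\}, \{2, 4\}, \{3, 6\})$: every bundle has $v_1$-value $20 + 10 = 30$, so agent 1 does not envy anyone, and $v_1(B^{(1)}_1) = 30 = v_1(A_1)$. For agents 2 and 3, the allocation $A$ itself works: both $v_2$ and $v_3$ assign bundle values $(3, 20, 20)$ to $A$, so neither agent envies anyone in $A$, and the certificate trivially matches the agent's own bundle.

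For the non-EEF part, I will argue by contradiction: suppose some allocation $A$ is epistemic-EF. Since every EEF-certificate is EF to its agent and EF implies PROP for additive valuations, we must have $v_i(A_i) \ge v_i([6])/3$ for each $i$. For $j \in \{2, 3\}$, $v_j([6]) = 43$, and the key observation is that a bundle has $v_j$-value at least $43/3$ if and only if it contains item 1 or contains both items 2 and 3 (any other subset has $v_j$-value at most $10 + 3 = 13 < 43/3$).

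Items 2 and 3 can each lie in only one bundle, so they cannot simultaneously be in both $A_2$ and $A_3$; hence item 1 cannot lie in $A_1$. By the symmetry between agents 2 and 3, assume without loss of generality that item 1 is in $A_2$; then $\{2, 3\} \subseteq A_3$, forcing $A_1 \subseteq \{4, 5, 6\}$. Consequently $v_1(A_1) \le 30$, and combined with the necessary inequality $v_1(A_1) \ge v_1([6])/3 = 30$, we obtain $A_1 = \{4, 5, 6\}$.

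The final step delivers the contradiction, and this is the only slightly delicate piece of the argument. Agent 1's EEF-certificate would need to partition $\{1, 2, 3\} = [6] \setminus A_1$ into two bundles of $v_1$-value at most $v_1(A_1) = 30$ each; since their $v_1$-values sum to 60, each bundle must have $v_1$-value exactly 30. But every item in $\{1, 2, 3\}$ has $v_1$-value 20, so every subset of $\{1, 2, 3\}$ has $v_1$-value in $\{0, 20, 40, 60\}$, never 30. This contradiction completes the proof. The main (mild) obstacle is spotting the right structural constraint: agents 2 and 3 jointly force $A_1 \subseteq \{4, 5, 6\}$, which then clashes with agent 1's PROP requirement in a rigidly discrete fashion.
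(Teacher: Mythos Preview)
Your proof is correct and follows essentially the same approach as the paper. The paper argues directly from the EF certificates for agents 2 and 3 (rather than routing through PROP) to obtain the same constraint that $X_2$ must contain $\{1\}$ or $\{2,3\}$, and similarly for $X_3$; it then concludes $X_1 \subseteq \{4,5,6\}$ and asserts that no EEF-certificate for agent 1 exists, whereas you additionally pin down $X_1 = \{4,5,6\}$ via PROP and spell out the final parity obstruction in $\{0,20,40,60\}$---but the skeleton is the same (your MEFS certificate $(\{1,5\},\{2,4\},\{3,6\})$ differs trivially from the paper's $(\{1,4\},\{2,5\},\{3,6\})$). One cosmetic suggestion: avoid reusing the name $A$ for the hypothetical EEF allocation in the second part, since $A$ already denotes the specific allocation in the statement.
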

\begin{proof}
Agents 2 and 3 are envy-free in $A$.
Agent 1 has $B \defeq (\{1, 4\}, \{2, 5\}, \{3, 6\})$ as her MEFS-certificate for $A$.
Hence, $A$ is MEFS.

Suppose an epistemic EF allocation $X$ exists.
Let $Y^{(i)}$ be each agent $i$'s epistemic-EF-certificate.
For agent 2 to be envy-free in $Y^{(2)}$,
we require $Y^{(2)}_2 \supseteq \{1\}$ or $Y^{(2)}_2 \supseteq \{2, 3\}$.
Similarly, $Y^{(3)}_3 \supseteq \{1\}$ or $Y^{(3)}_3 \supseteq \{2, 3\}$.
Since $Y^{(i)}_i = X_i$ for all $i$, we get $X_2 \cup X_3 \supseteq \{1, 2, 3\}$.
Hence, $X_1 \subseteq \{4, 5, 6\}$.
But then no epistemic-EF-certificate exists for agent 1 for $X$,
contradicting our assumption that $X$ is epistemic EF.
Hence, no epistemic EF allocation exists.
\end{proof}

\begin{example}[MEFS $\nfimplies$ EEF]
\label[example]{cex:mefs-not-eef-chores}
Consider a fair division instance with 3 equally-entitled agents
having additive disutilities over 6 chores:

\begin{tabular}{c|cccccc}
& 1 & 2 & 3 & 4 & 5 & 6
\\ \hline $-v_1$ & 20 & 20 & 20 & 10 & 10 & 10
\\ $-v_2$, $-v_3$ & 20 & 10 & 10 & 10 & 10 & 10
\end{tabular}

Then the allocation $A \defeq (\{4, 5, 6\}, \{1\}, \{2, 3\})$ is MEFS
(agents 2 and 3 are EF, agent 1's MEFS-certificate is $(\{1, 4\}, \{2, 5\}, \{3, 6\})$).
Agent 1 is not EEF-satisfied by $A$.
\end{example}

\begin{lemma}[MEFS $\nfimplies$ EEF1]
\label[lemma]{cex:mefs-not-eef1-chores}
Consider an instance with 3 equally-entitled agents
having additive disutilities over 12 chores.
$v_1(1) = v_1(2) = v_1(3) = 70$ and $v_1(c) = 10$ for all $c \in [12] \setminus [3]$.
Agents 2 and 3 have disutility 10 for each chore.
Then $A \defeq ([12] \setminus [3], [2], \{3\})$ is a MEFS+PROP allocation
where agent 1 is not EEF1-satisfied.
\end{lemma}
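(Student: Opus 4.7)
The plan is a straightforward verification broken into three pieces, all phrased using disutilities $d_i \defeq -v_i$. First I would compute $d_1([12]) = 3 \cdot 70 + 9 \cdot 10 = 300$ and $d_2([12]) = d_3([12]) = 120$, giving proportional disutility shares $100, 40, 40$. Since $(d_1(A_1), d_2(A_2), d_3(A_3)) = (90, 20, 10)$, allocation $A$ is immediately PROP.

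To verify MEFS, I would produce an EF-certificate $B^{(i)}$ for each agent. Agent $3$ has uniform disutilities and holds the smallest bundle in $A$, so $A$ itself is her certificate. For agent $2$, any allocation giving her exactly $2$ chores with at least $2$ chores in every other bundle works; e.g., $B^{(2)} \defeq (\{1,\ldots,5\}, \{6,7\}, \{8,\ldots,12\})$, which has $d_2$-profile $(50, 20, 50)$. The interesting case is agent $1$: I would distribute the three heavy chores $\{1, 2, 3\}$ one per bundle and pad the remaining two bundles with light chores until each reaches disutility at least $90 = d_1(A_1)$ from agent $1$'s viewpoint. A concrete choice is $B^{(1)} \defeq (\{1, 4, 5\}, \{2, 6, 7\}, \{3, 8, 9, 10, 11, 12\})$, which has $d_1$-profile $(90, 90, 120)$, making agent $1$ EF in $B^{(1)}$ with $d_1(B^{(1)}_1) = d_1(A_1)$.

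The only real content is showing that agent $1$ is not EEF1-satisfied by $A$. Any epistemic-EF1-certificate $C$ is forced to have $C_1 = A_1 = \{4, \ldots, 12\}$, so $C_2 \cup C_3 = \{1, 2, 3\}$. By pigeonhole, one of $C_2, C_3$ (call it $C_j$) contains at most one heavy chore, so $d_1(C_j) \le 70$. For chores, EF1-fairness to agent $1$ with respect to $j$ reduces to either $d_1(C_1) \le d_1(C_j)$ or $d_1(C_1 \setminus \{c\}) \le d_1(C_j)$ for some $c \in C_1$; the remaining EF1 clause, which removes an item from $C_j$, only makes $C_j$ less disutility-heavy and is therefore vacuous here. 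Since every $c \in C_1$ has $d_1(c) = 10$, the best removal achieves only $d_1(C_1 \setminus \{c\}) = 80 > 70 \ge d_1(C_j)$. Hence agent $1$ EF1-envies $j$ in every candidate $C$, ruling out EEF1. The main subtlety throughout is the sign handling when translating EF and EF1 from the abstract $v_i$ formulation to the chores setting; once that is pinned down, the rigidity of $C_1 = A_1$ combined with the pigeonhole on three indivisible heavy chores kills EF1.
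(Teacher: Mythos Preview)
Your proposal is correct and follows essentially the same approach as the paper's proof: verify PROP directly, exhibit an EF-certificate for each agent to establish MEFS, and argue EEF1-failure for agent~1 via pigeonhole on the three heavy chores. Your MEFS-certificates differ from the paper's (you achieve $d_i(B^{(i)}_i) = d_i(A_i)$ exactly, while the paper uses a balanced partition with $d_1$-profile $(100,100,100)$ for agent~1 and a single certificate $([4], [8]\setminus[4], [12]\setminus[8])$ for both agents~2 and~3), but both choices are valid and the EEF1 argument is identical.
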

\begin{proof}
$\PROP_1 = -100$ and $\PROP_2 = \PROP_3 = -40$.
$\MEFS_1 \le -100$ because of the allocation $(\{1, 4, 5, 6\},\allowbreak \{2, 7, 8, 9\},\allowbreak \{3, 10, 11, 12\})$.
$\MEFS_i \le -40$ for $i \in \{2, 3\}$ because of the allocation
$([4], [8] \setminus [4], [12] \setminus [8])$.
Agent 1 has disutility $90$ in $A$, so $A$ is MEFS-fair and PROP-fair to agent 1.
Agents 2 and 3 have disutility at most $20$ in $A$, so $A$ is MEFS-fair and PROP-fair to them.

Agent 1 is not EEF1-satisfied by $A$, since in any EEF1-certificate $B$,
some agent $j \in \{2, 3\}$ receives at most one chore of value $70$,
and agent 1 would EF1-envy $j$.
\end{proof}

\subsection{Two Equally-Entitled Agents}
\label{sec:cex-extra:2-eqEnt}

\begin{example}[EFX $\nfimplies$ MMS]
\label[example]{cex:efx-not-mms}
\label[example]{cex:efx-not-mms:goods}
\label[example]{cex:efx-not-mms:chores}
Let $t \in \{-1, 1\}$.
Consider a fair division instance with 2 equally-entitled agents having
an identical additive valuation function $v$ over 5 items.
$v(1) = v(2) = 3t$ and $v(3) = v(4) = v(5) = 2t$.
Then allocation $A \defeq (\{1, 3\}, \{2, 4, 5\})$ is EFX.
The MMS is $6t$, since $P = (\{3t, 3t\}, \{2t, 2t, 2t\})$ is an MMS partition.
But in $A$, some agent doesn't get her MMS.
\end{example}

\begin{example}[EF1 $\nfimplies$ PROPX or MXS]
\label[example]{cex:ef1-not-propx-mxs}
\label[example]{cex:ef1-not-propx-mxs:goods}
\label[example]{cex:ef1-not-propx-mxs:chores}
Let $t \in \{-1, 1\}$.
Consider an instance with 2 equally-entitled agents
having an identical additive valuation function $v$ over 5 items:
$v(1) = v(2) = 4t$ and $v(3) = v(4) = v(5) = t$.
Allocation $(\{1\}, [5] \setminus \{1\})$ is EF1 but not PROPx and not MXS.
\end{example}

\begin{lemma}[PROPx $\nfimplies$ M1S]
\label[lemma]{cex:propx-not-m1s}
\label[lemma]{cex:propx-not-m1s:goods}
\label[lemma]{cex:propx-not-m1s:chores}
Let $t \in \{-1, 1\}$ and $0 < \eps < 1/2$.
Consider a fair division instance with 2 equally-entitled agents having
an identical additive valuation function $v$ over 4 items.
Let $v(4) = (1+2\eps)t$ and $v(j) = t$ for $j \in [3]$.
Then allocation $A \defeq (\{4\}, [3])$ is PROPx but not M1S.
\end{lemma}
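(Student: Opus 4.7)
The plan has two stages: first verify PROPx, then exhibit a failure of M1S for one of the agents.

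For PROPx, I would compute the proportional share $w_iv([4]) = (2+\eps)t$ and check the two agents separately. The agent holding $\{4\}$ has value $(1+2\eps)t$: in the goods case ($t=1$) this is below the share, but adding any $g \in [3]$ yields value $2+2\eps > 2+\eps$; in the chores case ($t=-1$) this agent is trivially PROP. The agent holding $[3]$ is symmetric: PROP for $t=1$, and for $t=-1$ removing any chore brings the value to $-2 > -(2+\eps)$. So $A$ is PROPx for both agents in both cases.

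For M1S, I would rely on the observation that with $n = 2$ and identical valuations, an allocation $(X, Y)$ with $v(X) \le v(Y)$ is EF1 iff the envious side can close the gap with one item: explicitly, $v(Y) - v(X) \le \max_{g \in Y} v(g)$ when $t = 1$ and $v(Y) - v(X) \le \max_{c \in X} |v(c)|$ when $t = -1$. The argument then becomes a short enumeration. For $t = 1$, the worse-off agent in $A$ is agent $1$ with $v_1(A_1) = 1+2\eps$, so I would list the five bundles of value at most $1+2\eps$, namely $\emptyset, \{1\}, \{2\}, \{3\}, \{4\}$, and verify that each corresponding allocation violates the EF1 gap condition; the tightest case is $B_1 = \{4\}$, where the gap is $2 - 2\eps$ while $\max_{g \in B_2} v(g) = 1$, and $\eps < 1/2$ is exactly what forces $2 - 2\eps > 1$. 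For $t = -1$, the symmetric enumeration is for agent $2$ (whose value in $A$ is $-3$): the five bundles of value at most $-3$ each yield an EF1 gap exceeding the largest single-chore disutility in the envying agent's bundle, again with the tightest case hinging on $\eps < 1/2$.

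The main obstacle is not conceptual but bookkeeping: the casework is light but must be exhaustive. The role of $\eps < 1/2$ is tight in two ways — it lets item $4$ carry enough extra value that any augmentation of the bundle $\{4\}$ jumps above the PROPx threshold, yet keeps the EF1 gap $2 - 2\eps$ strictly larger than the single-item buffer $1$, so no EF1 allocation can witness M1S for the worse-off agent.
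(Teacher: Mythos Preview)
Your proposal is correct and follows essentially the same approach as the paper. The paper's proof is more concise, phrasing the M1S failure as a counting argument (for $t=1$: any EF1-fair allocation for agent~1 must give her at least two goods, hence value at least $2 > 1+2\eps$; for $t=-1$: any EF1-fair allocation for agent~2 must give her at most two chores, hence value at least $-2 > -3$), whereas you unroll this into the explicit enumeration of the five candidate bundles --- but the underlying case analysis is identical.
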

\begin{proof}
$v([m])/2 = (2+\eps)t$, so $A$ is PROPx.
For $t = 1$, in any allocation $B$ where agent 1 doesn't EF1-envy agent 2, she must have at least 2 goods.
But $v(A_1) = 1+2\eps$, so agent 1 doesn't have an M1S-certificate for $A$. Hence, $A$ is not M1S.
For $t = -1$, in any allocation $B$ where agent 2 doesn't EF1-envy agent 1, she must have at most 2 chores.
But $v(A_2) = -3$, so agent 1 doesn't have an M1S-certificate for $A$. Hence, $A$ is not M1S.
\end{proof}

\begin{example}[MXS $\nfimplies$ PROPx for $n=2$, \citet{caragiannis2022existence}]
\label[example]{cex:mxs-not-propx-n2}
\label[example]{cex:mxs-not-propx-n2:goods}
\label[example]{cex:mxs-not-propx-n2:chores}
Let $t \in \{-1, 1\}$.
Consider a fair division instance with 2 equally-entitled agents
having identical additive valuations over 7 items:
the first 2 items of value $4t$ and the last 5 items of value $t$.
Then the allocation $A = (\{1, 3\}, \{2, 4, 5, 6, 7\})$ is not PROPx or EFX,
but it is MXS because the agents have $([7] \setminus [2], [2])$
and $([2], [7] \setminus [2])$ as their MXS-certificates for $A$.
\end{example}

\begin{lemma}[M1S $\nfimplies$ PROP1]
\label[lemma]{cex:m1s-not-prop1}
\label[lemma]{cex:m1s-not-prop1:goods}
\label[lemma]{cex:m1s-not-prop1:chores}
Consider a fair division instance with 2 equally-entitled agents
having an identical additive valuation function $v$ over 9 items.
Let $t \in \{-1, 1\}$ and $v(9) = 4t$ and $v(j) = t$ for $j \in [8]$.
Then allocation $A \defeq (\{9\}, [8])$ is M1S but not PROP1.
\end{lemma}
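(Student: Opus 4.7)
\textbf{Proof proposal for \cref{cex:m1s-not-prop1}.}

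The plan is to verify two things about the allocation $A = (\{9\}, [8])$ under both sign regimes: (i) $A$ fails PROP1 for some agent, and (ii) every agent has an M1S-certificate for $A$. First I would compute the totals: $v([9]) = 8t + 4t = 12t$, so each agent's proportional share is $6t$ and $v(A_1) = 4t$, $v(A_2) = 8t$.

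For the PROP1 failure, consider the agent whose bundle is below her proportional share. When $t = 1$, agent~1 has $v(A_1) = 4 < 6$; since every item outside $A_1$ has value $1$, we get $v(A_1 \cup \{g\}) = 5 \le 6$ for every $g \in [8]$, so the strict inequality required by \cref{defn:prop1} cannot be met (and $A_1$ has no chore to remove). When $t = -1$, symmetrically, agent~2 has $v(A_2) = -8 < -6$, and removing any single chore yields $v(A_2 \setminus \{c\}) \ge -7$, still failing $> -6$; and $A_2$ contains no good that could be added. Hence $A$ is not PROP1.

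For M1S, I need to exhibit, for each agent $i$, an EF1 allocation $B^{(i)}$ with $v(B^{(i)}_i) \le v(A_i)$ (the certificate need only be EF1-fair to that one agent). A single candidate works for almost all cases: let $B = (\{5,6,7,8\},\, \{1,2,3,4,9\})$, which gives $v(B_1) = 4t$ and $v(B_2) = 8t$. Agent~1 envies agent~2 when $t=1$ and is envied when $t=-1$; in either case removing item $9$ from the richer bundle equalises the two, so $B$ is EF1 for both agents. Thus $B$ certifies M1S for agent~1 when $t = 1$ (with $v(B_1) = 4 = v(A_1)$) and for agent~2 when $t = -1$ (with $v(B_2) = -8 = v(A_2)$). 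The remaining two cases—agent~2 when $t=1$ and agent~1 when $t=-1$—are covered by the mirror allocation $B' = (\{1,2,3,4,9\},\, \{5,6,7,8\})$ by the same EF1 argument, and here $v(B'_2) = 4 \le 8 = v(A_2)$ when $t=1$ and $v(B'_1) = -8 \le -4 = v(A_1)$ when $t=-1$.

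No step is really hard; the only subtlety is remembering that the M1S-certificate must be EF1 \emph{for the specific agent}, not EF1 simpliciter, and that PROP1 here uses the strict inequality convention of \cref{defn:prop1}, which is what forces the near-miss $v(A_1 \cup \{g\}) = 5$ or $v(A_2 \setminus \{c\}) = -7$ to count as a failure rather than a borderline success.
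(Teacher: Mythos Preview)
Your proof is correct and follows essentially the same approach as the paper, which uses the single certificate $B = ([4],\,[9]\setminus[4])$ (equivalent to your $B$ up to relabeling identical items) and only treats the agent that actually fails PROP1, since the other agent is envy-free in $A$ and hence trivially M1S. One tiny slip in your exposition: for $t=-1$ item $9$ sits in the \emph{poorer} bundle $B_2$, not the richer one, so the EF1 check is ``remove a chore from the envious agent's own bundle'' rather than ``remove from the richer bundle''---but the conclusion that $B$ is EF1 (and hence an M1S-certificate) is unaffected.
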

\begin{proof}
$v([9])/2 = 6t$. Let $B \defeq ([4], [9] \setminus [4])$.
For $t = 1$, $B$ is agent 1's M1S-certificate for $A$,
but $A$ is not PROP1-fair to agent 1.
For $t = -1$, $B$ is agent 2's M1S-certificate for $A$,
but $A$ is not PROP1-fair to agent 2.
\end{proof}

\subsection{Three Equally-Entitled Agents}
\label{sec:cex-extra:3-eqEnt}

\begin{example}[GAPS $\nfimplies$ PROPx]
\label[example]{cex:gaps-not-propx}
\label[example]{cex:gaps-not-propx:additive}
\label[example]{cex:gaps-not-propx:unit-demand}
Consider a fair division instance with 3 equally-entitled agents
having identical additive or unit-demand valuations. There are 2 goods of values 50 and 10.
In every allocation, some agent doesn't get any good, and that agent is not PROPx-satisfied.
The allocation where the first agent gets the good of value 50
and the second agent gets the good of value 10 is a groupwise APS allocation
(set the price of the goods to $1.1$ and $0.9$).
\end{example}

\begin{lemma}[APS $>$ MMS]
\label[lemma]{thm:aps-gt-mms}
\label[lemma]{thm:aps-gt-mms:goods}
\label[lemma]{thm:aps-gt-mms:chores}
Let $t \in \{-1, 1\}$.
Consider a fair division instance with 3 equally-entitled agents
having identical additive valuations over 15 items. The items' values are
$65t$, $31t$, $31t$, $31t$, $23t$, $23t$, $23t$, $17t$, $11t$, $7t$, $7t$, $7t$, $5t$, $5t$, $5t$.
Then the AnyPrice share is at least $97t$, the proportional share is $97t$,
and the maximin share is less than $97t$.
\end{lemma}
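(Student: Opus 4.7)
The proof naturally splits into three claims, one for each quantity, and the sign $t \in \{-1,1\}$ can be carried through uniformly since every step scales homogeneously. The proportional share is immediate from the sum: $v_i([15]) = (65 + 3\cdot 31 + 3\cdot 23 + 17 + 11 + 3\cdot 7 + 3\cdot 5)t = 291t$, hence $\PROP_i = w_iv_i([15]) = 97t$.

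For the MMS claim, my plan is to show that no 3-partition attains bundle value $97t$ in every bundle. Because the total $291t = 3\cdot 97t$, any 3-partition with every bundle of value $\ge 97t$ in magnitude (with the appropriate sign) must in fact have every bundle of value \emph{exactly} $97t$, so I only need to rule out equal-value 3-partitions. I would reduce this to a finite case analysis driven by the item of value $65t$: its bundle must be completed to $97t$ using items of total value $32t$, and by enumeration the only completions are $\{17t,5t,5t,5t\}$ and $\{11t,7t,7t,7t\}$. For each of the two resulting 10-item remainders, I would verify by further enumeration that no 2-partition of it into bundles of value $97t$ exists. From this, the MMS is strictly less than $97t$ in both sign cases.

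For the APS claim, Lemma~\ref{thm:impl:prop-to-aps} already gives $\APS_i \le w_iv_i([15]) = 97t$, so I only need the matching lower bound. The plan is to use the dual characterization (Lemma~\ref{thm:aps-primal-dual-equiv}): it suffices to exhibit non-negative weights $x_S$ supported on bundles $S$ with $v_i(S) \ge 97t$ such that $\sum_S x_S = 1$ and $\sum_{S \ni j} x_S = \tfrac{1}{3}$ for every item $j$. I would construct such a fractional 3-partition explicitly, combining value-$97t$ bundles like $\{65,17,5,5,5\}$ and $\{65,11,7,7,7\}$ with bundles that avoid the item of value $65t$ (exploiting identities such as $65 = 31 + 23 + 11$ and $65 = 31 + 17 + 7 + 5 + 5$), chosen with weights so that every item is covered with total weight exactly $\tfrac{1}{3}$. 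The chores case is entirely symmetric, since the dual feasibility condition with $z = -97$ becomes: exhibit a fractional 3-partition using bundles of disutility at most $97$, and the same combinatorial families work up to sign.

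The main obstacle is step 3: locating the right fractional partition. This is a feasibility LP over an exponentially large family of candidate bundles, and although the specific arithmetic of the instance is rich enough to support many value-$97t$ bundles, finding a small clean witness that covers each item with weight exactly $\tfrac{1}{3}$ will require exploiting the instance's structural symmetries carefully. The MMS case analysis, by contrast, is mechanical once the two candidate completions of the $65t$-bundle are isolated.
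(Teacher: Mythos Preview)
Your outline is correct, and in fact more self-contained than the paper's own argument. The paper simply cites Lemma~C.1 of \cite{babaioff2023fair} for the $t=1$ case (proportional share, APS lower bound, and MMS upper bound all at once), and then handles $t=-1$ by a reduction: if the chores MMS were at least $-97$, the witnessing partition would have every bundle of disutility exactly $97$, which is the same combinatorial object as a goods partition with every bundle of value $97$, contradicting the $t=1$ MMS bound. So the paper never re-does the enumeration or the dual construction; it inherits both from the cited reference.

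Your direct route reproduces what \cite{babaioff2023fair} actually prove. Your MMS enumeration is right: the only completions of $65$ to $97$ are $\{17,5,5,5\}$ and $\{11,7,7,7\}$, and in each residual $10$-item multiset no $2$-partition into two value-$97$ bundles exists (the three $31$'s force a parity obstruction modulo the available small items). For the APS step, the certificate in \cite{babaioff2023fair} is precisely of the form you anticipate: six value-$97$ bundles, each with weight $1/6$, so that every item lies in exactly two of them. Two of the six contain $65$ (your $\{65,17,5,5,5\}$ and $\{65,11,7,7,7\}$), and the remaining four are drawn from $\{31,31,31,23,23,23,17,11,7,7,7,5,5,5\}$ with each $31$ and $23$ used twice and each of $17,11,7,7,7,5,5,5$ used once; bundles like $31{+}31{+}23{+}7{+}5$ and $23{+}23{+}23{+}17{+}11$ are the building blocks. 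Once this witness is in hand, your observation that the same six bundles certify $\APS_i \ge -97$ in the chores case (since they have disutility exactly $97$) is exactly right.

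In short: the paper's proof is a two-line appeal to prior work plus a sign-flip reduction; yours reconstructs that prior work. Both are valid, and your identification of the dual-LP witness as the crux is accurate.
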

\begin{proof}
For $t = 1$, this follows from Lemma C.1 of \citet{babaioff2023fair}.
For $t = -1$, a similar argument tells us that the AnyPrice share is at least $-97$.
If the maximin share is at least $-97$, then there must exist a partition $P$ of the chores
where each bundle has disutility 97. But then $P$ would prove that the maximin share
in the corresponding goods instance is at least 97, which is a contradiction.
Hence, for $t = -1$, the maximin share is less than $-97$.
\end{proof}

\begin{example}[GMMS $\nfimplies$ APS, \citet{babaioff2023fair}]
\label[example]{cex:gmms-not-aps}
\label[example]{cex:gmms-not-aps:goods}
\label[example]{cex:gmms-not-aps:chores}
For the fair division instance in \cref{thm:aps-gt-mms},
the leximin allocation is GMMS (since on restricting to any subset of agents,
the resulting allocation is still leximin, and is therefore MMS).
However, no APS allocation exists, because APS $>$ MMS,
and the minimum value across all bundles is at most the MMS.
\end{example}

\begin{example}[PMMS $\nfimplies$ MMS, Example 4.4 of \citet{caragiannis2019unreasonable}]
\label[example]{cex:pmms-not-mms}
\label[example]{cex:pmms-not-mms:goods}
\label[example]{cex:pmms-not-mms:chores}
Let $t \in \{-1, 1\}$.
Consider an instance with 3 equally-entitled agents
having an identical valuation function $v$ over 7 items where
$v(1) = 6t$, $v(2) = 4t$, $v(3) = v(4) = 3t$, $v(5) = v(6) = 2t$, $v(7) = t$.
Each agent's maximin share is $7t$ ($(\{1, 7\}, \{2, 3\}, \{4, 5, 6\})$ is a maximin partition).
Allocation $(\{1\}, \{3, 4, 5\}, \{2, 6, 7\})$ is PMMS but not MMS.
\end{example}

\begin{example}[APS $\nfimplies$ PROPm]
\label[example]{cex:aps-not-propm}
\label[example]{cex:aps-not-propm:goods}
\label[example]{cex:aps-not-propm:chores}
Consider a fair division instance with 3 equally-entitled agents
having an identical additive valuation function $v$ over 6 goods:
$v(1) = 60$, $v(2) = 30$, and $v(3) = v(4) = v(5) = v(6) = 10$.
The allocation $A \defeq (\{2\}, \{3, 4, 5\}, \{1, 6\})$ is APS+MMS, since the MMS is 30,
and the APS is at most 30 because of the price vector $(4, 3, 1, 1, 1, 1)$.
However, $A$ is not PROPm-fair to agent 1, because the proportional share is $130/3 > 40$.
\end{example}

\begin{example}[APS $\nfimplies$ PROP1]
\label[example]{cex:aps-not-prop1-chores}
Consider an instance with 3 equally-entitled agents
having an identical additive valuation function $v$ over 6 chores:
the first chore has disutility 18 (large chore) and the remaining chores have disutility 3 each (small chores).
Then $X \defeq ([6] \setminus \{1\}, \{1\}, \emptyset)$ is MMS+APS,
since the MMS is $-18$, and the APS is at most $-18$ due to the price vector $(1, 0, 0, 0, 0, 0)$.
$X$ is not PROP1-fair to agent 1, since the proportional share is $-11$,
and agent 1's disutility in $X$ after removing any chore is $12$.
\ifVerbose
$X$ is not EEF1-fair to agent 1 because even after redistributing chores among the remaining agents,
someone will always have no chores.
\fi
\end{example}

\begin{lemma}[PROPm doesn't exist for mixed manna]
\label[lemma]{cex:propm-mixed-manna}
Consider an instance $\fdInst{[3]}{[6]}{(v_i)_{i=1}^3}{\eqEnt}$
where agents have identical additive valuations, and items have values
$(-3,\allowbreak -3,\allowbreak -3,\allowbreak -3,\allowbreak -3,\allowbreak 3\eps)$, where $0 < \eps < 1/2$.
Then there exists an EFX+GMMS+GAPS allocation but no PROPm allocation.
\end{lemma}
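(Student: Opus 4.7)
The plan is to exhibit a single allocation that is simultaneously EFX, GMMS, and GAPS, and then rule out PROPm by a pigeonhole argument on chore counts. Letting items $1, \ldots, 5$ be the chores of value $-3$ and item $6$ be the good of value $3\eps$, my candidate is $A \defeq (\{1\}, \{2, 3, 6\}, \{4, 5\})$.

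For EFX, the only nontrivial envies to check are agent~$2$ toward agent~$1$ (handled by condition~2 of \cref{defn:efx}, since removing any single chore from $\{2, 3, 6\}$ leaves value $-3 + 3\eps \ge -3 = v_2(A_1)$), agent~$3$ toward agent~$1$ (also via condition~2, since removing one chore from $\{4, 5\}$ gives $-3$), and agent~$3$ toward agent~$2$ (via condition~1 with $S = \{6\}$ as the unique positive-marginal subset of $A_2$, giving $v_3(A_2 \setminus S) = -6 = v_3(A_3)$). For MMS, the partition $(\{c, c\}, \{c, c\}, \{c, g\})$ witnesses $\MMS_i = -6$, which each bundle meets. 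Since $n = 3$, GMMS reduces to verifying the restricted MMS on each pair; direct enumeration gives restricted MMS equal to $-6 + 3\eps$ for the pair $\{1, 2\}$ (attained by $(\{1\}, \{2, 3, 6\})$) and $-6$ for $\{1, 3\}$ and $\{2, 3\}$, all met by $A$. Matching APS upper bounds then follow from short computations with price vectors that assign $-1$ to each chore and either $0$ or $+1$ to the good, combined with $\APS \ge \MMS$ from \cref{thm:impl:aps-to-pess}.

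For the non-existence of PROPm, the proportional share is $-5 + \eps$. An agent with $\ge 3$ chores violates C2 of \cref{defn:propm}: removing a single chore from her bundle leaves value at most $-6 + 3\eps$, which is $\le -5 + \eps$ whenever $\eps \le 1/2$. So every agent in a PROPm allocation holds at most $2$ chores, forcing the chore partition to be $(2, 2, 1)$. Whichever agent receives the good, there is always a $2$-chore agent $i^*$ without the good, whose bundle has value $-6$ and who thus needs C3 of \cref{defn:propm}. The key observation is that for every other agent $j$, the only subset of $A_j$ with strictly positive marginal value for $i^*$ is the singleton $\{6\}$: a subset pairing the good with at least one chore has value $-3 + 3\eps < 0$, and a subset of chores alone has nonpositive value. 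Hence $\max(T) \le 3\eps$, giving $v_{i^*}(A_{i^*}) + \max(T) \le -6 + 3\eps$, which fails to exceed $-5 + \eps$ when $\eps < 1/2$. So C3 fails and no PROPm allocation exists.

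The main obstacles are the bookkeeping for \cref{defn:efx}'s min/max formulation of EFX (identifying the extremal subset to remove in each envy check) and for C3 of PROPm (enumerating which subsets of other agents' bundles have positive marginal value). Once these are in place, the APS bounds via price vectors and the pigeonhole reduction to a $(2, 2, 1)$ chore partition become routine.
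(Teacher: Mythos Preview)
Your proposal is correct and follows essentially the same approach as the paper: exhibit a $(2,2,1)$-chore allocation with the good going to a two-chore agent as the EFX+GMMS+GAPS witness, then argue that any allocation either gives someone $\ge 3$ chores (failing C2) or has the $(2,2,1)$ pattern, in which case a two-chore agent without the good fails C3 since $\max(T)=3\eps$. Your witness differs from the paper's only by a relabeling of agents, and your PROPm argument is in fact slightly more careful than the paper's in explicitly noting that a two-chore agent without the good exists regardless of who receives item~$6$. One small bookkeeping point: in your EFX check for agent~$2$ toward agent~$1$, condition~2 also requires checking negative-value subsets like $\{2,6\}$ and $\{3,6\}$ (removing them leaves value $-3$, not $-3+3\eps$), but the bound $-3 \ge v(A_1)$ still holds, so the conclusion is unaffected.
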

\begin{proof}
The proportional share is $v([m])/3 = -5 + \eps$.
\WLoG, assume agent 1 receives the most number of chores,
and agent 3 receives the least number of chores.
Then agent 1 has at least 2 chores, and agent 3 has at most 1.

\textbf{Case 1}: agent 1 receives at least 3 chores.
\\ Then even after removing one of her chores, and even if she receives the good,
her value for her bundle is at most $-6 + 3\eps < -5 + \eps$.
Hence, she is not PROPm-satisfied.

\textbf{Case 2}: agent 1 receives 2 chores.
\\ Then agent 2 also receives 2 chores, and agent 3 receives 1 chore.
Without loss of generality, assume agent 1 does not receive the good.
Then this allocation is EFX and groupwise MMS.
On pricing each chore at $-3$ and pricing the good at 3,
we get that the allocation is groupwise APS.
However, if agent 1 adds the good to her bundle, her value becomes $-6 + 3\eps < -5 + \eps$,
so she is not PROPm-satisfied.
\end{proof}

\begin{example}[PROPm $\nfimplies$ PROPavg]
\label[example]{cex:propm-not-propavg}
Consider an instance with three equally-entitled agents
having an identical additive valuation function $v$ over three goods.
Let $v(1) = v(2) = 60$ and $v(3) = 30$.
Then allocation $A \defeq (\{3\}, \{1, 2\}, \emptyset)$ is PROPm,
but agent 3 is not PROPavg-satisfied by $A$.
\end{example}

\subsection{Unequal Entitlements}
\label{sec:cex-extra:uneqEnt}

\begin{example}[PROP1 $\nfimplies$ M1S]
\label[example]{cex:prop1-not-m1s-n2}
\label[example]{cex:prop1-not-m1s-n2:goods}
\label[example]{cex:prop1-not-m1s-n2:chores}
Consider an instance with 2 agents having identical additive valuations.
Let $t \in \{-1, 1\}$. Let there be 2 items, each of value $t$.
Let the entitlement vector be $(2/3, 1/3)$.
The allocation where the first agent gets both items is PROP1 but not M1S.
\end{example}

\begin{lemma}[GAPS $\nfimplies$ PROPx for $n=2$]
\label[lemma]{cex:gaps-not-propx-n2}
\label[lemma]{cex:gaps-not-propx-n2:goods}
\label[lemma]{cex:gaps-not-propx-n2:chores}
Let $t \in \{-1, 1\}$. Consider a fair division instance $\fdInst{[2]}{[12]}{(v, v)}{(0.4, 0.6)}$,
where $v(1) = v(2) = 10t$ and $v(j) = t$ for all $j \in [12] \setminus [2]$.
Then allocation $A = (\{1\}, [12] \setminus \{1\})$ is APS but not PROPx.
\end{lemma}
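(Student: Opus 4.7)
The plan is to verify both claims, exploiting the natural duality between the $t=1$ (goods) and $t=-1$ (chores) cases. Since $v([12]) = 30t$, the proportional shares are $w_1 v([12]) = 12t$ and $w_2 v([12]) = 18t$, whereas $v(A_1) = 10t$ and $v(A_2) = 20t$. The ``underserved'' agent is thus agent~1 when $t = 1$ and agent~2 when $t = -1$; this is precisely the agent for whom both the PROPx violation and a tight APS bound need to be established, while the other agent is handled uniformly by \cref{thm:impl:prop-to-aps} via $\APS_i \le w_i v_i([12]) \le v(A_i)$.

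For the non-PROPx direction: when $t=1$, adding any small good $g \in \{3,\ldots,12\}$ to $A_1$ yields $v(A_1 \cup \{g\}) = 11 \not> 12$, so PROPx fails at agent 1 (the ``remove a chore'' condition is vacuous). When $t=-1$, removing any single small chore $c \in A_2$---each of which satisfies $v(c \mid A_2 \setminus \{c\}) = -1 < 0$---gives $v(A_2 \setminus \{c\}) = -19 \not> -18$, so PROPx fails at agent 2 (the ``add a good'' condition is vacuous).

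For the APS bound on the underserved agent, the plan is to construct an explicit price vector. When $t=1$, set $p_1 = p_2 = 0.4$ and $p_g = 0.02$ for each of the ten small goods, so that $p([12]) = 1$ and agent 1's budget is $0.4$; the only affordable sets are then the singletons $\{1\}$ or $\{2\}$ (value $10$) or any subset of the small goods (value at most $10$), giving $\APS_1 \le 10$. When $t=-1$, the key is to use the ``corner'' prices $p_1 = p_2 = -1/2$ and $p_g = 0$ for each small chore, so that $p([12]) = -1$ and agent 2's budget is $-0.6$; every affordable $S$ must then satisfy $\{1,2\} \subseteq S$, whence $v(S) \le v(\{1,2\}) = -20$, giving $\APS_2 \le -20$. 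The main obstacle is this last price vector: the natural choice $p = v$ only yields $\APS_2 \le -18$, which is too weak; the insight is to zero-price the small chores so that the adversary is forced to fully purchase both expensive chores---a step dual, under the goods/chores symmetry, to over-pricing the two large goods in the $t=1$ case to prevent the adversary from mixing a large good with any small good.
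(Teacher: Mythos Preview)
Your proof is correct and follows essentially the same approach as the paper: verify the PROPx failure for the underserved agent directly, handle the overserved agent via $p = v$ (which is exactly what \cref{thm:impl:prop-to-aps} does), and exhibit an explicit price vector forcing the underserved agent's APS down to $10t$. The only cosmetic difference is the choice of price vector: the paper uses $(0.45, 0.45, 0.01, \ldots, 0.01)$ for $t=1$ and $(-0.45, -0.45, -0.01, \ldots, -0.01)$ for $t=-1$, whereas you use $(0.4, 0.4, 0.02, \ldots, 0.02)$ and $(-0.5, -0.5, 0, \ldots, 0)$; both achieve the same bound, and your added remark on why $p = v$ is too weak for agent~2 in the chores case is a nice piece of intuition that the paper omits.
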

\begin{proof}
Agents 1 and 2 have proportional shares $12t$ and $18t$, respectively.

For $t = 1$, agent 1's APS is at most $10$
(price items 1 and 2 at $0.45$ each and the remaining items at $0.01$ each),
and agent 2's APS is at most $18$ (price each item at its value).
Hence, $A$ is APS but not PROPx-fair to agent 1.

For $t = -1$, agent 2's APS is at most $-20$
(price items 1 and 2 at $-0.45$ each and the remaining items at $-0.01$ each),
and agent 1's APS is at most $-12$ (price each item at its value).
Hence, $A$ is APS but not PROPx-fair to agent 2.
\end{proof}

\begin{example}[EF1 $\nfimplies$ EFX for binary mixed manna]
\label[example]{cex:ef1-not-efx-mixed-ue}
Let $([2], [5], (v, v), (4/5, 1/5))$ be a fair division instance
where $v(g) = 1$ for all $g \in [4]$ and $v(5) = -1$.
Let $A = ([5] \setminus \{1\}, \{1\})$. Then $A$ is EF1 but not EFX.
\end{example}

\begin{example}[WMMS and M1S are incompatible for chores]
\label[example]{cex:wmms-plus-m1s-chores}
\label[example]{cex:wmms-plus-m1s-chores:mms-not-m1s}
\label[example]{cex:wmms-plus-m1s-chores:efx-not-mms}
Let $([2], [2], (v, v), (0.9, 0.1))$ be a fair division instance, where $v(1) = v(2) = -1$.
Then for all $i \in [2]$, we have
\[ \frac{-\WMMS_i}{w_i} = \min_X \max\left(\frac{10|X_1|}{9}, 10|X_2|\right), \]
so $-\WMMS_1 = 2$ and $-\WMMS_2 = 2/9$.
Hence, allocation $X$ is WMMS iff $|X_2| = 0$.
On the other hand, $X$ is M1S iff $|X_1| = |X_2| = 1$ iff $X$ is EFX.
\end{example}

\begin{lemma}[PROP1+M1S is infeasible for goods]
\label[lemma]{cex:prop1-plus-m1s-ue}
Consider a fair division instance $\Ical \defeq \fdInst{[3]}{[7]}{(v_i)_{i=1}^3}{w}$,
where the entitlement vector is $w \defeq (7/12, 5/24, 5/24)$,
the agents have identical additive valuations, and each good has value 1. Then
\begin{tightenum}
\item $\APS_1 = 4$ and $\APS_2 = \APS_3 = 1$.
\item $X$ is APS $\iff$ $X$ is groupwise-APS (GAPS) $\iff$ $X$ is PROP1.
\item $\WMMS_1 = 3$ and $\WMMS_2 = \WMMS_3 = 15/14$.
\item $X$ is WMMS $\iff$ $X$ is groupwise-WMMS (GWMMS) $\iff$ $X$ is EFX $\iff$ $X$ is M1S.
\end{tightenum}
Therefore,
\begin{tightenum}
\item M1S+PROP1 is infeasible for this instance.
\item GWMMS+EFX doesn't imply PROP1.
\item GAPS doesn't imply M1S.
\end{tightenum}
\end{lemma}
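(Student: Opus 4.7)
The plan is to exploit the symmetry of the instance: since all seven goods are identical, every fairness condition depends only on the cardinality triple $(|X_1|,|X_2|,|X_3|)$, so each claim reduces to a small integer case analysis over partitions of $7$ into three non-negative parts. I will first compute the share values ($\APS$ and $\WMMS$) for each agent, then translate each fairness notion into a cardinality condition, and finally read off the claimed (in)feasibility results.

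\emph{Computing the shares.} For the $\APS$ upper bounds, use the primal definition (\cref{defn:aps}) with the uniform price vector $p=(1,\ldots,1)$: agent $i$ can afford at most $\lfloor 7w_i\rfloor$ goods, giving $\APS_1\le 4$ and $\APS_2,\APS_3\le 1$. For the matching lower bounds, I will exhibit feasible solutions to the dual LP (\cref{defn:aps-dual}). For agent $1$, put weight $1/36$ on each of the $\binom{7}{4}=35$ four-subsets and weight $1/36$ on $[7]$; each good lies in $\binom{6}{3}=20$ four-subsets, so its load is $20/36+1/36=21/36=7/12=w_1$, and the total weight is $1$, certifying $\APS_1\ge 4$. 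For agent $2$, put weight $19/144$ on each singleton and weight $11/144$ on $[7]$; this places load $19/144+11/144=30/144=5/24=w_2$ on each good and sums to $7\cdot 19/144+11/144=1$, certifying $\APS_2\ge 1$, and similarly for agent $3$. For the $\WMMS$ values, enumerate the (few) integer triples summing to $7$ and compute $\min_j |X_j|/w_j$; the maximiser for all three agents is $(3,2,2)$ giving $\min=36/7$, so $\WMMS_1=(7/12)(36/7)=3$ and $\WMMS_2=\WMMS_3=(5/24)(36/7)=15/14$.

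\emph{Translating notions to cardinality conditions.} From $w_1 v_1([7])=49/12$ and the strict version of PROP1 (\cref{defn:prop1}), $X$ is PROP1 iff $|X_1|+1>49/12$ and $|X_2|+1,|X_3|+1>35/24$, i.e.\ $|X_1|\ge 4$ and $|X_2|,|X_3|\ge 1$; by the previous paragraph this is also exactly the APS condition, proving part (1)--(2). For part (4), the WMMS condition becomes $|X_1|\ge 3,\ |X_2|,|X_3|\ge 2$, which combined with $|X_1|+|X_2|+|X_3|=7$ forces the unique (up to relabelling of goods) triple $(3,2,2)$. For this triple I check EFX directly from identical-goods envy comparisons ($|X_i|/w_i$ vs.\ $(|X_j|-1)/w_j$), and for M1S I minimise $|B_i|$ over EF1-fair allocations $B$: a quick case check shows the minima are $3,2,2$, so EFX, WMMS and M1S coincide on the triple $(3,2,2)$.

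\emph{Groupwise upgrades and the final contradictions.} To upgrade APS to GAPS, I restrict the instance to each two-agent subset and recompute: the pair $\{2,3\}$ has normalised weights $(1/2,1/2)$ so APS equals MMS $=\lfloor(|X_2|+|X_3|)/2\rfloor$, and the PROP1/APS cardinality condition $|X_2|,|X_3|\ge 1$ is enough; the pairs $\{1,2\}$ and $\{1,3\}$ are handled by the same symmetric price-vector / dual-LP template on a smaller ground set. An analogous enumeration on sub-instances upgrades WMMS to GWMMS on $(3,2,2)$. Putting the pieces together, PROP1 forces $|X_1|\ge 4$ while M1S forces $|X_1|=3$, yielding infeasibility; the GWMMS+EFX allocation $(3,2,2)$ violates $|X_1|\ge 4$, so it is not PROP1; and any APS allocation such as $(5,1,1)$ violates $|X_2|\ge 2$, so it is not M1S.

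The main obstacle I expect is the $\APS$ lower-bound certificates: the constants $1/36$ and $19/144,\,11/144$ have to be found and verified, and more generally one must convince the reader that the symmetric dual solutions are tight. The WMMS enumeration and the groupwise checks are routine once the cardinality reformulation is in place.
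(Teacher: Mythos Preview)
Your plan is correct and follows the same skeleton as the paper: reduce every notion to a condition on the cardinality triple $(|X_1|,|X_2|,|X_3|)$, compute the share values, and verify the groupwise upgrades by restricting to two-agent sub-instances and re-enumerating. The one substantive variation is how you certify the $\APS$ lower bounds: you exhibit explicit dual-LP solutions (weight $1/36$ on each $4$-subset and on $[7]$ for agent~$1$; weights $19/144$ on singletons and $11/144$ on $[7]$ for agents~$2,3$), and these check out, whereas the paper simply invokes the closed form $\APS_i=\lfloor w_i v_i([m])\rfloor$ from \cref{thm:impl:tribool:aps,thm:impl:tribool:prop1}. Your route is more self-contained; the paper's is shorter because it reuses a general lemma.

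One point to make explicit when you write it up: in part~(4) you only verify directly that the triple $(3,2,2)$ \emph{is} EFX, not that no other triple is. To close the equivalence chain you should either enumerate the remaining triples, or invoke EFX $\Rightarrow$ EF1 $\Rightarrow$ M1S (the first implication by \cref{thm:impl:efx-to-ef1} for additive valuations, the second trivially), after which your M1S $\Leftrightarrow (3,2,2)$ forces EFX $\Rightarrow (3,2,2)$. The paper takes the latter route, also citing \cref{thm:impl:mms-to-efx-n2} for GWMMS $\Rightarrow$ EFX rather than checking $(3,2,2)$ by hand.
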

\begin{proof}
By \cref{thm:impl:tribool:aps}, $\APS_1 = \floor{\frac{7 \times 7}{12}} = 4$
and $\APS_2 = \APS_3 = \floor{\frac{5 \times 7}{24}} = 1$.
By \cref{thm:impl:tribool:aps,thm:impl:tribool:prop1},
an allocation is APS iff it is PROP1.

Any GAPS allocation is also APS by definition.
We will now show that any APS allocation is also GAPS.
Formally, let $A$ be an APS allocation for $\Ical$.
The cardinality vector of $A$, i.e., $c \defeq (|A_1|, |A_2|, |A_3|)$,
can have three possible values: $(5, 1, 1)$, $(4, 2, 1)$, $(4, 1, 2)$.
For every possible value of $c$ and $S \subseteq [3]$,
we show that $(\Icalhat, \Ahat) \defeq \restrict(\Ical, A, S)$ is APS
(c.f.~\cref{defn:restricting}).
\begin{tightenum}
\item $c = (5, 1, 1)$ and $S = \{1, 2\}$:
    $\Icalhat$ has 6 goods and entitlement vector $(14/19, 5/19)$.
    $\APS_1 = \floor{\frac{14 \times 6}{19}} = 4$ and $\APS_2 = \floor{\frac{5 \times 6}{19}} = 1$.
    Hence, $\Ahat$ is APS for $\Icalhat$.
\item $c = (5, 1, 1)$, $S = \{1, 3\}$:
    Same as the $S = \{1, 2\}$ case.
\item $c = (5, 1, 1)$ and $S = \{2, 3\}$:
    $\Icalhat$ has 2 goods and entitlement vector $(1/2, 1/2)$.
    $\APS_2 = \APS_3 = 1$, so $\Ahat$ is APS for $\Icalhat$.
\item $c = (4, 2, 1)$ and $S = \{1, 2\}$:
    $\Icalhat$ has 6 goods and entitlement vector $(14/19, 5/19)$.
    $\APS_1 = \floor{\frac{14 \times 6}{19}} = 4$ and $\APS_2 = \floor{\frac{5 \times 6}{19}} = 1$.
    Hence, $\Ahat$ is APS for $\Icalhat$.
\item $c = (4, 2, 1)$ and $S = \{1, 3\}$:
    $\Icalhat$ has 5 goods and entitlement vector $(14/19, 5/19)$.
    $\APS_1 = \floor{\frac{14 \times 5}{19}} = 3$ and $\APS_2 = \floor{\frac{5 \times 5}{19}} = 1$.
    Hence, $\Ahat$ is APS for $\Icalhat$.
\item $c = (4, 2, 1)$ and $S = \{2, 3\}$:
    $\Icalhat$ has 3 goods and entitlement vector $(1/2, 1/2)$.
    $\APS_2 = \APS_3 = \floor{\frac{1 \times 3}{2}} = 1$.
    Hence, $\Ahat$ is APS for $\Icalhat$.
\item $c = (4, 1, 2)$:
    Similar to the $c = (4, 2, 1)$ case.
\end{tightenum}
Hence, any APS allocation for $\Ical$ is also GAPS.

For any allocation $X$, define
\[ f(X) \defeq \min_{j=1}^3 \frac{|X_j|}{w_j}. \]
Then $\WMMS_i = w_i\max_X f(X)$ for all $i \in [3]$.
If $|X_1| \le 2$, then $f(X) \le |X_1|/w_1 \le 24/7 = 3 + 3/7$.
If $|X_2| \le 1$ or $|X_3| \le 1$, $f(X) \le 24/5 = 4 + 4/5$.
Otherwise, $|X_1| = 3$ and $|X_2| = |X_3| = 2$,
so $f(X) = \min(3 \times 12/7, 2 \times 24/5) = 36/7 = 5 + 1/6$.
Hence, $\max_X f(X) = 36/7$, so $\WMMS_1 = 3$ and $\WMMS_2 = \WMMS_3 = 15/14$.
So, an allocation is WMMS iff it has cardinality vector $(3, 2, 2)$.

Any GWMMS allocation is also WMMS by definition. We now prove the converse.
For every possible value of $S \subseteq [3]$,
we show that $(\Icalhat, \Ahat) \defeq \restrict(\Ical, A, S)$ is WMMS
(c.f.~\cref{defn:restricting}).
\begin{tightenum}
\item $S = \{1, 2\}$:
    $\Icalhat$ has 5 goods and entitlement vector $(14/19, 5/19)$.
    If $|X_1| \le 2$, then $f(X) \le |X_1|/w_1 \le 38/14 = 2 + 10/14$.
    If $|X_2| \le 1$, then $f(X) \le |X_2|/w_2 \le 19/5 = 3 + 10/14$.
    Otherwise, $|X_1| = 3$ and $|X_2| = 2$, so $f(X) = \min(3/w_1, 2/w_2) = \min(57/14, 38/5) = 57/14 = 4 + 1/14$.
    Hence, $\WMMS_1 = 3$ and $\WMMS_2 = \WMMS_3 = 57/14 \times 5/19 = 1 + 19/266$.
    Hence, $\Ahat$ is WMMS for $\Icalhat$.
\item $S = \{1, 3\}$:
    Similar to the $S = \{1, 2\}$ case.
\item $S = \{2, 3\}$:
    $\Icalhat$ has 4 goods and entitlement vector $(1/2, 1/2)$.
    Then $\WMMS_1 = \WMMS_2 = 2$.
    Hence, $\Ahat$ is WMMS for $\Icalhat$.
\end{tightenum}
Hence, any WMMS allocation for $\Ical$ is also GWMMS.

Any GWMMS allocation is EFX by \cref{thm:impl:mms-to-efx-n2},
and any EFX allocation is M1S by \cref{thm:impl:efx-to-ef1}.
We will now show that any M1S allocation is WMMS.

Let $X$ be an M1S allocation.
Let $A$ be agent 1's M1S certificate.
If $|A_1| \le 2$, then $|A_j| \ge 3$ for some $j \in \{2, 3\}$.
Since agent 1 has higher entitlement, she would EF1-envy agent $j$,
which is a contradiction. Hence, $|X_1| \ge |A_1| \ge 3$.

Let $B$ be agent 2's M1S certificate.
Suppose $|B_2| \le 1$. Since $2$ doesn't EF1-envy $3$, we get $|B_3| \le 2$.
Since $2$ doesn't EF1-envy $1$, we get
\[ \frac{|B_1|-1}{w_1} \le \frac{|B_2|}{w_2} \iff |B_1| \le 1 + \frac{w_1}{w_2} = 3 + \frac{4}{5}. \]
Hence, $|B_1| + |B_2| + |B_3| \le 3 + 1 + 2 = 6$, which is a contradiction.
Hence, $|X_2| \ge |B_2| \ge 2$.
Similarly, we can prove that $|X_3| \ge 2$.

Hence, $|X_i| \ge \WMMS_i$ for all $i$, so $X$ is WMMS.
This proves that any M1S allocation is WMMS.
\end{proof}

\citet{chakraborty2021weighted} also prove that EF1+PROP1 allocations may not exist for unequal entitlements.
We use a different counterexample in \cref{cex:prop1-plus-m1s-ue},
which allows us to also prove other non-implications.

\begin{lemma}
\label[lemma]{cex:gwmms-nimpl-prop1-m1s-ue-chores}
Consider a fair division instance $\Ical \defeq \fdInst{[3]}{[7]}{(v_i)_{i=1}^3}{w}$,
where the entitlement vector is $w \defeq (9/16,\allowbreak 7/32,\allowbreak 7/32)$,
the agents have identical additive valuations, and each chore has disutility 1.
Let $X$ be an allocation. Then
\begin{tightenum}
\item $\APS_1 = -4$ and $\APS_2 = \APS_3 = -2$.
\item $X$ is APS $\iff$ $X$ is PROP1.
\item $\WMMS_1 = -5$ and $\WMMS_2 = \WMMS_3 = -35/18$.
\item $X$ is WMMS $\iff$ $X$ is groupwise-WMMS $\iff$ $(|X_1| = 5$ and $|X_2| = |X_3| = 1)$.
\item $X$ is EFX $\iff$ $X$ is M1S $\iff$ $(|X_1| = 3$ and $|X_2| = |X_3| = 2)$.
\end{tightenum}
Therefore,
\begin{tightenum}
\item PROP1+WMMS is infeasible for this instance.
\item M1S+WMMS is infeasible for this instance.
\item GWMMS doesn't imply PROP1 or M1S.
\end{tightenum}
\end{lemma}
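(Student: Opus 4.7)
The plan is to verify the five structural claims in sequence and then derive the three corollaries. Since every item is a chore of value $-1$, the tribool-marginal machinery applies throughout, which is what makes the computation tractable. First, I will invoke \cref{thm:impl:tribool:aps} to get $\APS_1 = \floor{(9/16)\cdot(-7)} = -4$ and $\APS_2 = \APS_3 = \floor{(7/32)\cdot(-7)} = -2$; combining with \cref{thm:impl:tribool:prop1} immediately shows $X$ is APS-fair to agent $i$ iff $v_i(X_i) \ge \APS_i$ iff $X$ is PROP1-fair to $i$, which settles the first two claims.

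Next, I will compute $\WMMS_i = -w_i \min_\pi \max_j |\pi_j|/w_j$ over integer triples $\pi = (a, b, c)$ summing to $7$. Since $1/w_1 = 16/9$ is much smaller than $1/w_2 = 1/w_3 = 32/7$, chores balance toward agent $1$; a short enumeration shows $\min_\pi \max(16a/9, 32b/7, 32c/7) = 80/9$, uniquely attained at $(5,1,1)$, giving $\WMMS_1 = -5$ and $\WMMS_2 = \WMMS_3 = -35/18$. The WMMS characterization follows because $|X_1|\le 5$ and $|X_j|\le 35/18<2$ (so $|X_j|\le 1$) for $j\in\{2,3\}$, combined with $\sum_j|X_j|=7$, forces $(5,1,1)$. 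For GWMMS, I will restrict $(5,1,1)$ to each two-agent subset $S$ and redo the optimization with normalized entitlements: for $S=\{1,2\}$ (entitlements $(18/25,7/25)$, six chores) the optimum is again $125/18$ at $(5,1)$, so the restricted WMMS thresholds are once more $-5$ and $-35/18$ and the restricted allocation is WMMS; the case $S=\{2,3\}$ (two chores, equal entitlements) is immediate.

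For the EFX and M1S claim, \cref{thm:impl:tribool:ef1-to-efx} (with $\{-1,0\}$ marginals) identifies EFX with EF1 and MXS with M1S here. The EF1 inequalities $(|A_i|-1)/w_i \le |A_j|/w_j$ become $|A_1|\le 1+(18/7)\min(|A_2|,|A_3|)$, $|A_2|\le\min(1+(7/18)|A_1|,\,1+|A_3|)$, and symmetric for agent $3$; enumerating integer triples summing to $7$ leaves only $(3,2,2)$. For M1S, for each agent $i$ I will maximize $|B_i|$ over allocations $B$ that are EF1-fair to $i$ alone: agent $1$ needs $\min(|B_2|,|B_3|)\ge (|B_1|-1)\cdot 7/18$, and ruling out $|B_1|=4$ (both $|B_2|,|B_3|$ would then need to be $\ge 2$, infeasible with sum $3$) gives $|B_1|\le 3$; agent $2$'s two constraints, summed with $|B_1|+|B_3|=7-|B_2|$, yield $|B_2|\le 37/16<3$, so $|B_2|\le 2$, and symmetrically $|B_3|\le 2$. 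Hence $\mathrm{M1S}_1=-3$ and $\mathrm{M1S}_2=\mathrm{M1S}_3=-2$, and $\sum_i|A_i|=7$ again forces the unique M1S allocation to be $(3,2,2)$. The three corollaries then follow by comparing cardinality vectors: PROP1 requires $|A_1|\le 4$, so PROP1 $+$ WMMS is infeasible; $(3,2,2)\neq (5,1,1)$ rules out simultaneous M1S and WMMS; and the unique GWMMS allocation $(5,1,1)$ is neither PROP1 nor M1S.

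The main obstacle is the combinatorial bookkeeping: I will need to rule out several near-miss vectors (e.g., $(4,2,1)$ for EF1, $(5,2,0)$ for WMMS) by carefully computing ratios against the specific weights $18/7$ and $7/18$, and the GWMMS check effectively re-runs the WMMS optimization under two different normalized entitlement vectors.
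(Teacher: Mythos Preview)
Your proposal is correct and follows essentially the same approach as the paper: both use \cref{thm:impl:tribool:aps,thm:impl:tribool:prop1} for the APS/PROP1 computation, both enumerate cardinality vectors to pin down the WMMS optimum at $(5,1,1)$ and verify GWMMS by restricting to two-agent subsets, and both use M1S-certificate arguments to bound $|B_1|\le 3$ and $|B_2|,|B_3|\le 2$. The one cosmetic difference is that you invoke \cref{thm:impl:tribool:ef1-to-efx} up front to identify EFX with EF1 and then characterize EF1 and M1S separately, whereas the paper verifies the EFX inequalities for $(3,2,2)$ directly and closes the cycle $(3,2,2)\Rightarrow\text{EFX}\Rightarrow\text{M1S}\Rightarrow(3,2,2)$; these are equivalent and involve the same work.
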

\begin{proof}
Note that $\floor{-x} = -\ceil{x}$ for all $x \in \mathbb{R}$.

By \cref{thm:impl:tribool:aps}, $-\APS_1 = \ceil{\frac{9 \times 7}{16}} = 4$,
and $-\APS_2 = -\APS_3 = \ceil{\frac{7 \times 7}{32}} = 2$.
By \cref{thm:impl:tribool:aps,thm:impl:tribool:prop1},
an allocation is APS iff it is PROP1.

For any allocation $X$, define
\[ f(X) \defeq \max_{j=1}^3 \frac{|X_j|}{w_j}. \]
Then $-\WMMS_i = w_i\min_X f(X)$ for all $i \in [3]$.
If $|X_1| \ge 6$, then $f(X) \ge 6/w_1 = 10 + 2/3$.
If $|X_2| \ge 2$ or $|X_3| \ge 2$, then $f(X) \ge 2/w_2 = 9 + 1/7$.
Otherwise, $|X_1| = 5$ and $|X_2| = |X_3| = 1$,
so $f(X) = \max(5 \times 16/9, 1 \times 32/7) = 80/9 = 9 - 1/9$.
Hence, $\min_X f(X) = 80/9$, so $-\WMMS_1 = 5$ and $-\WMMS_2 = -\WMMS_3 = 35/18 < 2$.
Hence, an allocation $X$ is WMMS iff $|X_1| = 5$ and $|X_2| = |X_3| = 1$.

Any GWMMS allocation is also WMMS by definition. We now prove the converse.
Let $A$ be a WMMS allocation. For every possible value of $S \subseteq [3]$,
we show that $(\Icalhat, \Ahat) \defeq \restrict(\Ical, A, S)$ is WMMS
(c.f.~\cref{defn:restricting}).
\begin{tightenum}
\item $S = \{1, 2\}$:
    $\Icalhat$ has 6 chores and entitlement vector $\what \defeq (18/25, 7/25)$.
    Let $X$ be an allocation for $\Icalhat$.
    Let $\fhat(X) \defeq \max_{j=1}^2 |X_j|/\what_j$.
    If $|X_1| \ge 6$, then $\fhat(X) \ge 6/\what_1 = 25/3 = 8 + 1/3$.
    If $|X_2| \ge 2$, then $\fhat(X) \ge 2/\what_2 = 50/7 = 7 + 1/7$.
    Otherwise, $|X_1| = 5$ and $|X_2| = 1$, so
    $\fhat(X) = \max(5/\what_1, 1/\what_2) = 125/18 = 7 - 1/18$.
    Hence, $-\WMMS_1 = 5$ and $-\WMMS_2 = 35/18 < 2$.
    Hence, $\Ahat$ is WMMS for $\Icalhat$.
\item $S = \{1, 3\}$:
    Similar to the $S = \{1, 2\}$ case.
\item $S = \{2, 3\}$:
    $\Icalhat$ has 2 chores and entitlement vector $(1/2, 1/2)$.
    Then $-\WMMS_1 = -\WMMS_2 = 1$.
    Hence, $\Ahat$ is WMMS for $\Icalhat$.
\end{tightenum}
Hence, $A$ is GWMMS.

Suppose $X$ is M1S. Let $A$ be agent 1's M1S certificate.
Then $|X_1| \le |A_1|$ and $A$ is EF1-fair to agent 1.
Suppose $|A_1| \ge 4$. Then $|A_2| \le 1$ or $|A_3| \le 1$. \WLoG{}, let $|A_2| \le 1$. Then
\[ \frac{|A_1|-1}{w_1} \ge \frac{3}{w_1} = 5 + \frac{1}{3} > 5 - \frac{3}{7} = \frac{1}{w_2} \ge \frac{|A_2|}{w_2}. \]
Hence, agent 1 EF1-envies agent 2, which is a contradiction.
Hence, $|X_1| \le |A_1| \le 3$.

Let $B$ be agent 3's M1S certificate.
Then $|X_3| \le |B_3|$ and $B$ is EF1-fair to agent 3.
Suppose $|B_3| \ge 3$. If $|B_2| \le 1$, then agent 3 EF1-envies agent 2, so $|B_2| \ge 2$.
Hence, $|B_1| \le 2$. Then
\[ \frac{|B_3|-1}{w_3} \ge \frac{2}{w_3} = 9 + \frac{1}{7} > 4 - \frac{4}{9} = \frac{2}{w_1} \ge \frac{|B_1|}{w_1}. \]
Hence, agent 3 EF1-envies agent 2, which is a contradiction.
Hence, $|X_3| \le |B_3| \le 2$.
We can similarly prove that $|X_2| \le 2$.
Hence, if $X$ is M1S, then $|X_1| = 3$ and $|X_2| = |X_3| = 2$. Moreover,
\ifColsTwo\providecommand{\tempWrap}{\\}\else\providecommand{\tempWrap}{&}\fi
\begin{align*}
\frac{|X_1|-1}{w_1} &= 4 - \frac{4}{9} < 9 + \frac{1}{7} = \frac{|X_2|}{w_2} = \frac{|X_3|}{w_3},
\tempWrap \frac{|X_3|-1}{w_3} &= \frac{|X_2|-1}{w_2} = 5 - \frac{3}{7} < 5 + \frac{1}{3} = \frac{|X_1|}{w_1}.
\end{align*}
Hence, $X$ is EFX. Also, for additive valuations, EFX implies M1S by \cref{thm:impl:efx-to-ef1}.
Hence, $X$ is EFX iff $X$ is M1S.
\end{proof}

\begin{lemma}[PROP $\nfimplies$ M1S for chores]
\label[lemma]{cex:prop-not-m1s-chores}
\label[lemma]{cex:prop-not-m1s-chores:neg-binary}
\label[lemma]{cex:prop-not-m1s-chores:negative-bival}
Let $0 \le \eps \le 2/7$.
Let $\fdInst{[3]}{[7]}{(v_i)_{i=1}^3}{w}$ be a fair division instance,
where $w = (4/7, 3/14, 3/14)$ and $v$ is given by the following table:

\begin{tabular}{cccccccc}
\toprule $c$ & 1 & 2 & 3 & 4 & 5 & 6 & 7
\\\midrule $-v_1(c)$ & 1 & 1 & 1 & 1 & 1 & 1 & 1
\\ $-v_2(c) = -v_3(c)$ & 1 & 1 & 1 & 1 & $\eps$ & $\eps$ & $\eps$
\\\bottomrule
\end{tabular}

Let $A$ be an allocation where $A_1 = [4]$. Then $A$ is PROP but not M1S-fair to agent 1.
\end{lemma}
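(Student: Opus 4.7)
The plan is to verify PROP-fairness and the failure of M1S-fairness separately; both reduce to short calculations once the definitions are unpacked.

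For PROP, observe that $v_1(A_1) = -4$ coincides exactly with $w_1 v_1([7]) = \tfrac{4}{7}(-7) = -4$, so agent 1 attains her proportional share. For agents 2 and 3, the bundle $A_j$ is a (possibly empty) subset of $\{5,6,7\}$, so its disutility is at most $3\eps$, while the proportional share is $w_j v_j([7]) = -\tfrac{3}{14}(4 + 3\eps)$. PROP thus reduces to $3\eps \le \tfrac{3}{14}(4 + 3\eps)$, i.e., $33\eps \le 12$, which holds whenever $\eps \le \tfrac{2}{7}$. So every allocation with $A_1 = [4]$ is PROP.

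For the failure of M1S, I aim to prove $\minFS(\Ical, \mathrm{EF1}, 1) \ge -3 > -4 = v_1(A_1)$, i.e., no EF1-fair-to-agent-1 allocation $X$ can have $|X_1| \ge 4$. Write $k := |X_1|$. Since $v_1$ is uniformly $-1$ on chores, $v_1(X_1)/w_1 = -7k/4$ and $v_1(X_j)/w_j = -14|X_j|/3$ for $j \in \{2, 3\}$. The key observation is that the ``remove-a-chore-from-$X_j$'' clause of \cref{defn:ef1} is vacuous here: deleting a chore from $X_j$ makes $v_1(X_j)/w_j$ less negative, only \emph{widening} agent 1's envy. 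Hence EF1-to-agent-1 with respect to $j$ reduces to the disjunction of no envy and the ``remove-from-$X_1$'' clause, both of which are implied by the single inequality $3(k-1) \le 8|X_j|$.

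Setting $k = 4$ forces $|X_j| \ge \lceil 9/8 \rceil = 2$ for \emph{both} $j = 2$ and $j = 3$, so $|X_2| + |X_3| \ge 4$. But $|X_2| + |X_3| = 7 - k = 3$, a contradiction. Therefore every EF1-fair-to-agent-1 allocation satisfies $|X_1| \le 3$, giving $\minFS(\Ical, \mathrm{EF1}, 1) \ge -3 > v_1(A_1)$, so $A$ is not M1S-fair to agent 1. The only subtle step is recognizing that the ``remove-from-$X_j$'' clause of EF1 does nothing for chores with identical disutilities; after that, a one-line pigeonhole on $|X_2|+|X_3|$ closes the argument.
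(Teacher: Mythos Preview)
Your proof is correct and follows essentially the same approach as the paper's: verify PROP by direct computation, then show that any EF1 certificate for agent~1 with $|X_1|\ge 4$ would force $|X_j|\ge 2$ for both $j\in\{2,3\}$ (the paper phrases this contrapositively as ``some $|X_j|\le 1$ breaks EF1''), contradicting $|X_2|+|X_3|=7-|X_1|\le 3$. The only cosmetic point is that you explicitly set $k=4$ rather than $k\ge 4$, but the monotonicity in $k$ is obvious and your conclusion ``$|X_1|\le 3$'' makes the intent clear.
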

\begin{proof}
$-v_2(A_2) \le 3\eps \le 4w_2$, so $A$ is PROP-fair to agents 2 and 3.
$-v_1(A_1) = 4 = 7w_1$, so $A$ is PROP-fair to agent 1.

Suppose $A$ is M1S-fair to agent 1.
Let $X$ be agent 1's M1S certificate for $A$.
Then $|X_1| \ge 4$, so $|X_2| \le 1$ or $|X_3| \le 1$. Hence,
\begin{align*}
\frac{|X_1|-1}{w_1} &\ge \frac{3}{w_1} = 5 + \frac{1}{4} > 5 - \frac{1}{3} = \frac{1}{w_2}
    \wrapIfTwoCols\ge \min\left(\frac{|X_2|}{w_2}, \frac{|X_3|}{w_3}\right).
\end{align*}
Hence, $X$ is not EF1-fair to agent 1, which is a contradiction.
Hence, $A$ is not M1S-fair to agent 1.
\end{proof}

\begin{example}[PROP $\nfimplies$ MEFS for binary valuations]
\label[example]{cex:prop-not-mefs:binary-goods}
Let $\fdInst{[3]}{[6]}{(v_i)_{i=1}^3}{w}$ be a fair division instance,
where $w = (1/2, 1/4, 1/4)$ and $v$ is given by the following table:

\begin{tabular}{cccccccc}
\toprule $g$ & 1 & 2 & 3 & 4 & 5 & 6
\\\midrule
   $v_1(g)$ & 1 & 1 & 1 & 1 & 1 & 1
\\ $v_2(g)$ & 0 & 0 & 0 & 1 & 1 & 0
\\ $v_3(g)$ & 0 & 0 & 0 & 0 & 0 & 1
\\\bottomrule
\end{tabular}

Let $A \defeq (\{1, 2, 3\}, \{4, 5\}, \{6\})$. Then $A$ is PROP.
$A$ is not MEFS for agent 1, because in any allocation $B$ such that $v_1(B_1) \le v_1(A_1) = 3$,
some agent $j \in \{2, 3\}$ must receive at least 2 goods, and then agent 1 would envy $j$ in $B$.
\end{example}

\section{Non-Implications (Non-Additive Valuations)}
\label{sec:cex-nonadd-extra}

\subsection{Supermodular Valuations}
\label{sec:cex-extra:supmod}

\begin{lemma}[EF+GAPS $\nfimplies$ PROP1]
\label[lemma]{cex:ef-not-prop-supmod}
\label[lemma]{cex:ef-not-prop-supmod:n2-positive-bival}
\label[lemma]{cex:ef-not-prop-supmod:n2-binary}
\label[lemma]{cex:ef-not-prop-supmod:n3-positive-bival}
\label[lemma]{cex:ef-not-prop-supmod:n3-binary}
Let $0 \le \eps < 1/27$ and $k \in \mathbb{N}$. For any $x \ge 0$, let
$f(x) \defeq x\eps + \max(0, x-k)$.
Consider an instance with $n$ equally-entitled agents
having an identical valuation function $v$ over $4n$ goods,
where $v(X) = f(|X|)$ for all $X \subseteq [4n]$. Then $v$ is supermodular.

Let $A$ be the allocation where each agent gets 4 goods. Then $A$ is EF+GAPS+GMMS.
For $n = 2$ and $k = 5$, no PROP1 or PROPm allocation exists.
If $n \ge 3$ and $k = 8$, then $A$ is PPROP but no PROP1 or PROPx allocation exists.
Additionally if $\eps = 0$, then $A$ is PROPavg.
\end{lemma}
\begin{proof}
For any $X \subseteq [4n]$ and $g \in [4n] \setminus X$, we have
$v(g \mid X) = \eps$ if $|X| < k$ and $v(g \mid X) = 1+\eps$ otherwise.
Since $v(g \mid X)$ is non-decreasing in $|X|$, $v$ is supermodular.

Since all agents have the same bundle, $A$ is EF and GMMS.
Set the price of each good to 1 to get that $A$ is GAPS.

For $n = 2$ and $k = 5$, the proportional share is $4-k/n+4\eps$.
For $k = 5$, the proportional share is at least $3/2$,
but $f(5) = 5\eps$ and $f(6) = 1+6\eps < 3/2$.
In any allocation, some agent will have at most 4 goods,
and that agent would not be PROP1 or PROPm satisfied.

If $n \ge 3$ and $k = 8$, the proportional share is at least $4/3$.
Since $f(8) = 8\eps$ and $f(4) = 4\eps$, $A$ is PPROP.
In any allocation, some agent gets at most 4 goods,
and $f(5) = 5\eps < 4/3$, so that agent is not PROP1-satisfied,
and $f(9) = 1 + 9\eps < 4/3$, so that agent is not PROPx-satisfied.
Additionally, if $\eps = 0$, for any two agents $i$ and $j$,
we have $v(A_j \mid A_i) = f(8) - f(4) = 0$. Hence, $A$ is PROPavg.
\end{proof}

\subsection{Unit-Demand Valuations}
\label{sec:cex-extra:unit-demand}

It is known that unit-demand functions are submodular and cancelable.
We first show how to perturb a unit-demand function such that
submodularity and cancelability remain preserved and all marginals become positive.

\begin{lemma}
\label[lemma]{thm:ud-perturb}
Let $u: 2^{[m]} \to \mathbb{R}_{\ge 0}$ be a function and $\eps \in \mathbb{R}_{\ge 0}$.
For any $X \subseteq [m]$, define $\uhat(X) \defeq u(X) + |X|\eps$.
If $u$ is submodular, then $\uhat$ is also submodular.
If $m \le 3$ and $u$ is cancelable, then $\uhat$ is also cancelable.
\end{lemma}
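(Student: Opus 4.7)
The plan is to handle the two claims separately.

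The first claim (submodularity of $\uhat$) is essentially a one-line observation: the map $X \mapsto |X|\eps$ is modular (both sub- and supermodular) because $|S \cup T| + |S \cap T| = |S| + |T|$. So when one writes out the submodularity inequality for $\uhat$, the $\eps$-shift cancels on both sides, reducing to submodularity of $u$.

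For the second claim (cancelability of $\uhat$ when $m \le 3$), I would first dispose of the trivial case $\eps = 0$ (where $\uhat = u$) and then establish a key auxiliary fact: any cancelable $u: 2^{[m]} \to \mathbb{R}_{\ge 0}$ with $u(\emptyset) = 0$ is automatically monotonic. This is a direct consequence of cancelability: if $u(A) > u(A \cup B)$ for some disjoint $A, B$, then taking $T = A$, $S_1 = \emptyset$, $S_2 = B$ in the cancelability hypothesis would give $u(\emptyset) > u(B)$, contradicting $u \ge 0$.

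With monotonicity in hand, I would case-split on the sizes of $S_1, S_2$ in the hypothesis $\uhat(S_1 \cup T) > \uhat(S_2 \cup T)$. The case $T = \emptyset$ is trivial. When $T \ne \emptyset$, the crucial consequence of $m \le 3$ is that $|[m] \setminus T| \le 2$, so any two subsets $S_1, S_2 \subseteq [m] \setminus T$ of different cardinalities must be nested. This yields three subcases: (i) $|S_1| = |S_2|$, where the $\eps$-terms cancel in the hypothesis and cancelability of $u$ applies directly; (ii) $|S_1| > |S_2|$, where $S_2 \subseteq S_1$ gives $u(S_1) \ge u(S_2)$ by monotonicity, and the strictly positive $(|S_1| - |S_2|)\eps$ delivers the strict inequality; (iii) $|S_1| < |S_2|$, where $S_1 \subseteq S_2$ forces $u(S_1 \cup T) \le u(S_2 \cup T)$ and the $\eps$-shift widens this gap in the wrong direction, so the hypothesis is actually impossible.

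The main obstacle is identifying why the bound $m \le 3$ is genuinely needed: it is exactly the bound that forces $S_1$ and $S_2$ to be nested whenever their sizes differ, preventing the $\eps$-shift from reversing a cancelability conclusion. For $m \ge 4$, counterexamples exist---for instance, a unit-demand $u$ with item values $3, 2, 1, 0$, taking $T = \{a\}$, $S_1 = \{c, d\}$, $S_2 = \{b\}$ and small $\eps > 0$, gives $\uhat(S_1 \cup T) > \uhat(S_2 \cup T)$ while $\uhat(S_1) < \uhat(S_2)$---so any argument hoping to avoid the size restriction is doomed.
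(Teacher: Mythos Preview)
Your proof is correct for both parts. For submodularity, your argument coincides with the paper's.

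For cancelability, your route is actually more thorough than the paper's. The paper handles the cases $T = \emptyset$, $\eps = 0$, $|S_1| = |S_2|$, $S_1 = \emptyset$, and $S_2 = \emptyset$, and then asserts that ``when $m \le 3$, we have considered all cases.'' But this overlooks the configuration $m = 3$, $|T| = 1$, with one of $S_1, S_2$ a singleton and the other the full complement of $T$ (e.g., $T = \{3\}$, $S_1 = \{1,2\}$, $S_2 = \{1\}$). Your nesting observation---that when $|[m] \setminus T| \le 2$, any two subsets of different cardinalities are comparable by inclusion---cleanly disposes of exactly this missing case via your monotonicity lemma. That lemma (cancelable plus $u(\emptyset) = 0$ plus $u \ge 0$ implies monotone) is not stated in the paper but is the extra ingredient that closes the gap. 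In fact the paper's treatment of the $S_1 = \emptyset$ case also implicitly needs it: the claim that ``$u(S_2 \mid T) < -|S_2|\eps$ contradicts $u$ non-negative'' only goes through after one invokes cancelability to deduce $u(\emptyset) > u(S_2)$, which is your monotonicity argument in disguise.

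One minor note: you assume $u(\emptyset) = 0$, which is not explicit in the lemma statement, but the paper's own proof tacitly assumes the same (it uses $\uhat(\emptyset) = 0$ in the $S_2 = \emptyset$ case), and every application in the paper is to valuation functions where this holds by convention.
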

\begin{proof}
If $u$ is submodular, then for any $S, T \subseteq [m]$, we get
$\uhat(S) + \uhat(T) - \uhat(S \cup T) - \uhat(S \cap T)
= u(S) + u(T) - u(S \cup T) - u(S \cap T) \ge 0$.
Hence, $\uhat$ is also submodular.

Let $u$ be cancelable.
For every $T \subseteq [m]$ and $S_1, S_2 \subseteq [m] \setminus T$,
if $\uhat(S_1 \cup T) > \uhat(S_2 \cup T)$,
we need to show that $\uhat(S_1) > \uhat(S_2)$
to prove that $\uhat$ is cancelable.
This is trivial when $T = \emptyset$ or $\eps = 0$.
If $|S_1| = |S_2|$, then $u(S_1 \cup T) > u(S_2 \cup T)$.
Since $u$ is cancelable, we get $u(S_1) > u(S_2)$, which implies $\uhat(S_1) > \uhat(S_2)$.
Let $\eps > 0$. If $S_1 = \emptyset$, we get
$\uhat(S_1 \cup T) > \uhat(S_2 \cup T) \implies u(S_2 \mid T) < - |S_2|\eps$,
which is a contradition since $u$ is non-negative. Hence, $S_1 \neq \emptyset$.
If $S_2 = \emptyset$, we get $\uhat(S_1) \ge |S_1|\eps > 0 = \uhat(S_2)$.
When $m \le 3$, we have considered all cases.
Hence, if $m \le 3$ and $u$ is cancelable, then $\uhat$ is also cancelable.
\end{proof}

\begin{lemma}[PROP $\nfimplies$ M1S]
\label[lemma]{cex:prop-not-m1s-unit-demand}
\label[lemma]{cex:prop-not-m1s-unit-demand:nonneg}
\label[lemma]{cex:prop-not-m1s-unit-demand:positive}
Consider a unit-demand function $v$ over 3 goods such that $v(1) = v(2) = 4$ and $v(3) = 3$.
Let $0 \le \eps < 1/6$ and $\vhat(X) \defeq v(X) + 2|X|\eps$ for all $X \subseteq [3]$.
Consider a fair division instance with 3 goods and
2 equally-entitled agents having valuation function $\vhat$.
Then the allocation $A = (\{1, 2\}, \{3\})$ is PROP but not M1S.
\end{lemma}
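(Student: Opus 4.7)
The plan is to verify PROP directly by a two-line computation, and then to rule out the existence of an M1S-certificate for agent $2$ by a short case analysis over the (at most three) candidate bundles $B_2$ whose $\hat v$-value does not exceed $\hat v(A_2)$.

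First, I would compute $\hat v([3]) = v([3]) + 6\eps = 4 + 6\eps$, so the proportional share is $2 + 3\eps$. Then $\hat v(A_1) = v(\{1,2\}) + 4\eps = 4 + 4\eps \ge 2 + 3\eps$ since $\eps \ge 0$, and $\hat v(A_2) = v(\{3\}) + 2\eps = 3 + 2\eps \ge 2 + 3\eps$ since $\eps < 1/6 < 1$. This establishes PROP.

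Next, to show $A$ is not M1S, it suffices to show agent~$2$ has no M1S-certificate. Any certificate $B$ would satisfy $\hat v(B_2) \le \hat v(A_2) = 3 + 2\eps$. Since $\hat v(\{j\}) = 4 + 2\eps$ for $j \in \{1,2\}$, any $B_2$ containing good $1$ or good $2$ already has $\hat v(B_2) \ge 4 + 2\eps > 3 + 2\eps$. Hence the only candidates are $B_2 = \emptyset$ (giving $B_1 = [3]$) and $B_2 = \{3\}$ (giving $B_1 = \{1,2\}$). In the first case, $\hat v(B_1 \setminus \{g\}) = v(\{1,2\}) + 4\eps = 4 + 4\eps > 0 = \hat v(B_2)$ for every $g \in B_1$, so agent~$2$ EF1-envies agent~$1$. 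In the second case, $\hat v(B_1 \setminus \{g\}) = 4 + 2\eps > 3 + 2\eps = \hat v(B_2)$ for both $g \in \{1,2\}$, again violating EF1 for agent~$2$. Hence no M1S-certificate exists and $A$ is not M1S-fair to agent~$2$.

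Finally, to justify the footnote claim (\ref{foot:unit-dem}), I would invoke Lemma~\ref{thm:ud-perturb}: for $\eps > 0$ and $m = 3$, $\hat v$ is submodular, cancelable, and has strictly positive marginals, so the construction also serves as a counterexample in that class. The main (minor) obstacle is being careful that $\eps < 1/6$ is used only in the PROP check for agent~$2$, while the non-M1S argument needs only $\eps \ge 0$; this way the bound matches the statement exactly.
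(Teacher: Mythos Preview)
Your proof is correct and follows essentially the same approach as the paper: verify PROP by direct computation, then rule out any M1S-certificate for agent~2 by observing that $\hat v(B_2)\le 3+2\eps$ forces $B_2\in\{\emptyset,\{3\}\}$ and checking EF1 fails in both cases. The paper handles the $B_2=\emptyset$ case in one line (``$B_2\neq\emptyset$ since $B$ is EF1-fair to agent~2'') rather than explicitly, and does not spell out the footnote remark, but otherwise the arguments coincide.
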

\begin{proof}
$A$ is PROP since $\vhat(A_1) = 4+4\eps$, $\vhat(A_2) = 3+2\eps$, and the PROP share is $2+3\eps$.
Suppose $A$ is M1S and agent 2's M1S certificate for $A$ is $B$.
$B_2 \neq \emptyset$, since $B$ is EF1-fair to agent 2.
Moreover, $\vhat(B_2) \le \vhat(A_2) = 3 + 2\eps$, so $B_2 = \{3\}$. Hence, $B_1 = \{1, 2\}$.
However, $\min_{g \in B_1} \vhat(B_1 \setminus \{g\}) = 4 + 4\eps > 3 + 2\eps = \vhat(B_2)$.
Hence, agent 2 is not EF1-satisfied by $B$, which contradicts the fact that $B$
is agent 2's M1S certificate for $A$. Hence, $A$ is not M1S.
\end{proof}

\begin{example}[PROP1 $\nfimplies$ PROPm]
\label[example]{cex:ud:prop1-not-propm}
\label[example]{cex:ud:prop1-not-propm:nonneg}
\label[example]{cex:ud:prop1-not-propm:positive}
Consider a unit-demand function $v$ over 2 goods such that $v(1) = 30$ and $v(2) = 10$.
Let $0 \le \eps < 1$ and $\vhat(X) \defeq v(X) + |X|\eps$ for all $X \subseteq [2]$.
Consider a fair division instance with 2 goods and
2 equally-entitled agents having valuation function $\vhat$.
Then the allocation $A \defeq (\{1, 2\}, \emptyset)$ is PROP1 but not PROPm.
\end{example}

\begin{example}[PROP $\nfimplies$ PPROP]
\label[example]{cex:ud:prop-not-pprop}
\label[example]{cex:ud:prop-not-pprop:positive}
\label[example]{cex:ud:prop-not-pprop:nonneg}
Consider a unit-demand function $v$ over 3 goods such that $v(1) = 30$ and $v(2) = v(3) = 11$.
Let $0 \le \eps < 1$ and $\vhat(X) \defeq v(X) + |X|\eps$ for all $X \subseteq [3]$.
Consider a fair division instance with 3 goods and
3 equally-entitled agents having valuation function $\vhat$.
Then the allocation $A \defeq (\{1\}, \{2\}, \{3\})$ is PROP but no allocation is PPROP.
\end{example}

\begin{example}[MMS $\nfimplies$ EF1, PROPm]
\label[example]{cex:ud:mms-not-ef1-propm}
\label[example]{cex:ud:mms-not-ef1-propm:nonneg}
\label[example]{cex:ud:mms-not-ef1-propm:positive}
Consider a unit-demand function $v$ over 5 goods such that
$v(1) = 200$ and $v(2) = 30$ and $v(3) = v(4) = v(5) = 10$.
Let $0 \le \eps < 1$ and $\vhat(X) \defeq v(X) + |X|\eps$ for all $X \subseteq [5]$.
Consider a fair division instance with 5 goods and
4 equally-entitled agents having valuation function $\vhat$.
Then the allocation $A \defeq (\{1, 2\}, \{3\}, \{4\}, \{5\})$
is MMS but not EF1 and not PROPm.
(The maximin share is $10 + \eps$. The proportional share is $50 + (5/4)\eps$.)
\end{example}

\begin{example}[PROPm $\nfimplies$ PROPavg]
\label[example]{cex:ud:propm-not-propavg}
\label[example]{cex:ud:propm-not-propavg:positive}
\label[example]{cex:ud:propm-not-propavg:nonneg}
Consider a unit-demand function $v$ over 3 goods such that
$v(1) = 75$, $v(2) = 30$, and $v(3) = 10$.
Let $0 \le \eps < 1$ and $\vhat(X) \defeq v(X) + |X|\eps$ for all $X \subseteq [3]$.
Consider a fair division instance with 3 goods and
3 equally-entitled agents having valuation function $\vhat$.
Then the allocation $A \defeq (\{1, 3\}, \{2\}, \emptyset)$ is PROPm but not PROPavg.
(The proportional share is $25+\eps$.)
\end{example}

\subsection{Submodular Valuations}
\label{sec:cex-extra:submod}

\begin{definition}[PMRF]
\label[definition]{defn:pmrf}
Let $P = (P_1, \ldots, P_k)$ be a partition of a finite set $M$.
Define $\PMRF(P)$ as the function $f: 2^M \to [|M|]$ where
$f(X) \defeq \sum_{i=1}^k \boolone(X \cap P_i \neq \emptyset)$.
(Here $\boolone(C)$ is 1 if condition $C$ is true and 0 otherwise.)
For any $\eps \in \mathbb{R}_{\ge 0}$, define $\PMRF_{+\eps}(P)$
as the function $f$ where $f(X) = \PMRF(P)(X) + |X|\eps$.
\end{definition}

\begin{lemma}
\label[lemma]{thm:pmrf-submod}
For any partition $P$ over a finite set $M$,
$f \defeq \PMRF_{+\eps}(P)$ is submodular, and $f(g \mid X) \in \{\eps, 1+\eps\}$
for any $X \subseteq M$ and $g \in M \setminus X$.
\end{lemma}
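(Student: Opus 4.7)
The plan is to verify both claims directly from the definition, exploiting the fact that $\PMRF_{+\eps}(P)$ splits as the sum of $\PMRF(P)$ and the modular function $X \mapsto |X|\eps$.

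First I would compute the marginal $f(g \mid X) = f(X \cup \{g\}) - f(X)$ for $g \in M \setminus X$. The $|X|\eps$ piece contributes exactly $\eps$. For the $\PMRF(P)$ piece, note that $g$ lies in exactly one part, say $P_{\ell}$, so adding $g$ to $X$ flips the indicator $\boolone((X \cup \{g\}) \cap P_{\ell} \neq \emptyset)$ from 0 to 1 precisely when $X \cap P_{\ell} = \emptyset$, and leaves all other indicators unchanged. Hence the $\PMRF$ marginal is either $0$ or $1$, giving $f(g \mid X) \in \{\eps, 1+\eps\}$ as claimed.

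For submodularity, I would use the diminishing-marginal-returns characterization: it suffices to show $f(g \mid X) \ge f(g \mid Y)$ whenever $X \subseteq Y \subseteq M$ and $g \in M \setminus Y$. Since the modular term contributes $\eps$ to both marginals, it cancels, so the question reduces to the $\PMRF$ term. With $g \in P_{\ell}$, the marginal equals $\boolone(X \cap P_{\ell} = \emptyset)$ on the $X$ side and $\boolone(Y \cap P_{\ell} = \emptyset)$ on the $Y$ side. Since $X \subseteq Y$, we have $Y \cap P_{\ell} = \emptyset \implies X \cap P_{\ell} = \emptyset$, so the $X$-side indicator is at least the $Y$-side indicator. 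This yields the required inequality.

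There is no real obstacle here: the only mild subtlety is making the case split on whether $g$'s part is already hit cleanly, but that is immediate once one observes that each element belongs to exactly one block of the partition $P$. I would conclude by remarking that $\PMRF_{+\eps}(P)$ is therefore a submodular function with marginals in $\{\eps, 1+\eps\}$, which in particular makes it a natural one-parameter perturbation of a matroid-rank-like function with strictly positive marginals whenever $\eps > 0$.
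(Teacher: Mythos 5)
Your proof is correct, but it takes a more self-contained route than the paper. The paper's proof sketch simply observes that $\PMRF(P)$ is the rank function of a partition matroid (hence submodular) and then invokes \cref{thm:ud-perturb} to conclude that adding the modular term $|X|\eps$ preserves submodularity; the claim about marginals lying in $\{\eps, 1+\eps\}$ is left implicit. You instead verify everything by hand: you compute the marginal directly, noting that the $\PMRF$ part contributes $\boolone(X \cap P_\ell = \emptyset) \in \{0,1\}$ where $P_\ell$ is the unique block containing $g$, and you establish submodularity via the diminishing-marginal-returns characterization, using monotonicity of the indicator under $X \subseteq Y$. This buys you an elementary argument that needs neither matroid theory nor the perturbation lemma, and it has the side benefit of explicitly proving the marginal claim rather than leaving it to the reader. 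The only cosmetic point worth flagging is that the paper's stated definition of submodularity is the inequality $u(S \cup T) + u(S \cap T) \le u(S) + u(T)$, so your argument tacitly relies on the standard equivalence between that inequality and the diminishing-returns property for set functions on a finite ground set; this equivalence is folklore and does not constitute a gap, but a one-line remark citing it would make the proof airtight relative to the paper's conventions.
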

\begin{proof}[Proof sketch]
$\PMRF(P)$ is the rank function of a partition matroid, and hence, is submodular.
The submodularity of $f$ follows from \cref{thm:ud-perturb}.
\end{proof}

\begin{example}[EF $\nfimplies$ MMS]
\label[example]{cex:ef-not-mms-pmrf}
\label[example]{cex:ef-not-mms-pmrf:binary}
\label[example]{cex:ef-not-mms-pmrf:positive-bival}
Let $P \defeq (\{r_1, r_2\},\allowbreak \{g_1, g_2\})$, $\eps \in \mathbb{R}_{\ge 0}$,
and $v \defeq \PMRF_{+\eps}(P)$.
Consider a fair division instance with 2 equally-entitled agents
having identical valuation function $v$.
Then the maximin share is at least $2 + 2\eps$,
as exemplified by the partition $(\{r_1, g_1\}, \{r_2, g_2\})$.
The allocation $P$ is EF but not MMS,
since each agent's value for her own bundle is $1 + 2\eps$.
\end{example}

\begin{example}[EF1 $\nfimplies$ MXS]
\label[example]{cex:ef1-not-mxs-pmrf}
Let $P \defeq (\{r_1, r_2\},\allowbreak \{g_1, g_2\},\allowbreak \{b\})$ and $v \defeq \PMRF(P)$.
Consider a fair division instance with 2 equally-entitled agents
having identical valuation function $v$.
Then the allocation $A \defeq (\{r_1, r_2\},\allowbreak \{g_1, g_2, b\})$ is EF1
but agent 1 is not MXS-satisfied by $A$.
\end{example}

\begin{example}[MEFS $\nfimplies$ EF1]
\label[example]{cex:mefs-not-ef1-pmrf}
\label[example]{cex:mefs-not-ef1-pmrf:binary}
\label[example]{cex:mefs-not-ef1-pmrf:positive-bival}
Let $P \defeq (\{a\},\allowbreak \{b\},\allowbreak \{c\},\allowbreak \{d\},\allowbreak
\{e_1, e_2\},\allowbreak \{f_1, f_2\},\allowbreak \{g_1, g_2\})$.
Let $v \defeq \PMRF_{+\eps}(P)$ for $0 \le \eps \le 1/2$.
Consider a fair division instance with 2 equally-entitled agents
having identical valuation function $v$.
Let $A \defeq (\{a, e_1, f_1, g_1\},\allowbreak \{b, c, d, e_2, f_2, g_2\})$.
Since $v(A_2) = 6 + 6\eps$ and $v(A_1) = 4 + 4\eps$,
agent 2 is envy-free in $A$ but agent 1 is not EF1-satisfied by $A$.
However, agent 1 is MEFS-satisfied by $A$, and her MEFS-certificate for $A$ is
$B \defeq (\{a, b, c, d\},\allowbreak \{e_1, e_2, f_1, f_2, g_1, g_2\})$,
since $v(B_1) = v(A_1) = 4 + 4\eps$ and $v(B_2) = 3 + 6\eps$.
\end{example}

\begin{example}
\label[example]{cex:eef-not-ef1-pprop-pmrf}
\label[example]{cex:eef-not-ef1-pprop-pmrf:binary}
\label[example]{cex:eef-not-ef1-pprop-pmrf:positive-bival}
Let $M \defeq \{a_j\}_{j=1}^6 \cup \{b_j\}_{j=1}^6 \cup \{c_j\}_{j=1}^4$.
Let $P$ be a partition of $M$ where $X \in P$ iff $X = \{a_j, b_j\}$ for some $j \in [6]$
or $X = \{c_j\}$ for some $j \in [4]$.
Let $v \defeq \PMRF_{+\eps}(P)$ for $\eps \in [0, 1/6)$.
Consider an instance with 3 equally-entitled agents
having identical valuation function $v$.

Let $A \defeq (\{a_j\}_{j=1}^6,\allowbreak \{b_j\}_{j=1}^6,\allowbreak \{c_j\}_{j=1}^4)$.
Then $v(A_1) = v(A_2) = 6(1+\eps)$ and $v(A_3) = 4(1+\eps)$.
Then $A$ is not EF1 and not PPROP.
But $A$ is EEF since agents 1 and 2 are envy-free,
and agent 3's EEF-certificate is $B$ where
$B_1 \defeq \{a_1,\allowbreak a_2,\allowbreak a_3,\allowbreak b_1,\allowbreak b_2,\allowbreak b_3\}$,
$B_2 \defeq \{a_4,\allowbreak a_5,\allowbreak a_6,\allowbreak b_4,\allowbreak b_5,\allowbreak b_6\}$, and $B_3 = A_3$.
Then $v(B_3) = 4(1+\eps)$ and $v(B_1) = v(B_2) = 3 + 6\eps < 4$.
\end{example}

\begin{example}[PROP $\nfimplies$ M1S]
\label[example]{cex:prop-not-m1s-uniform-matroid}
\label[example]{cex:prop-not-m1s-uniform-matroid:binary}
\label[example]{cex:prop-not-m1s-uniform-matroid:positive-bival}
Let $v: 2^{[10]} \to \mathbb{R}_{\ge 0}$ be a function defined as $v(X) \defeq \min(6, |X|)$.
Then $v$ is submodular (since $v$ is the rank function of a uniform matroid)
and $v$ has binary marginals (i.e., $v(g \mid X) \in \{0, 1\}$
for all $X \subseteq [10]$ and $g \in [m] \setminus X$).
Let $0 \le \eps < 1$ and $\vhat(X) \defeq v(X) + |X|\eps$.

Consider a fair division instance with 2 equally-entitled agents
having identical valuation function $\vhat$.
Then $A \defeq ([4], [10] \setminus [4])$ is PROP but agent 1 is not M1S-satisfied.
\end{example}

\begin{lemma}[MMS $\nfimplies$ APS for $n=2$, Remark 2 of \citet{babaioff2023fair}]
\label[lemma]{cex:mms-not-aps-n2-submod}
\label[lemma]{cex:mms-not-aps-n2-submod:nonneg}
\label[lemma]{cex:mms-not-aps-n2-submod:positive}
Let $C \defeq \{\{1, 2, 3\},\allowbreak \{1, 5, 6\},\allowbreak \{2,\allowbreak 4,\allowbreak 6\},\allowbreak \{3, 4, 5\}\}$.
For $\eps \in [0, 1/3)$, define $v: 2^{[6]} \to \mathbb{R}_{\ge 0}$ as
\[ v(X) \defeq |X|\eps + \begin{cases}
2|X| & \text{ if } |X| \le 2
\\ 5 & \text{ if } |X| = 3 \text{ and } X \not\in C
\\ 6 & \text{ if } X \in C \text{ or } |X| \ge 4
\end{cases}. \]
Then $v$ is submodular.
Consider an instance with 2 equally-entitled agents
having identical valuation function $v$.
The maximin share is $5 + 3\eps$, but the APS is $6 + 3\eps$.
Moreover, there exists an MMS allocation but no allocation is APS.
\end{lemma}
\ifVerbose
\begin{proof}
For any $X \subseteq [6]$, and $g \in [6] \setminus X$, we have
\begin{tightenum}
\item $|X| \le 1 \implies v(g \mid X) = 2 + \eps$.
\item $|X| = 2 \implies v(g \mid X) \in \{1 + \eps, 2 + \eps\}$.
\item $|X| = 3 \implies v(g \mid X) \in \{\eps, 1 + \eps\}$.
\item $|X| = 4 \implies v(g \mid X) = \eps$.
\end{tightenum}
Hence, $v(g \mid X)$ decreases with $|X|$, so $v$ is submodular.

The MMS is less than 6 because for every set in $C$,
its complement is not in $C$.
The APS is $6 + 3\eps$ by the dual definition of APS
(\cref{defn:aps-dual}, set $x_S = 1/4$ for $S \in C$).

For identical valuations, an MMS allocation always exists.
For identical valuations, in any allocation, some agent gets a bundle of value at most the MMS.
Since the APS is strictly greater than the MMS, that agent is not APS-satisfied.
Hence, no allocation is APS.
\end{proof}
\fi

\subsection{Subadditive Valuations}
\label{sec:cex-extra:subadd}

\begin{lemma}
\label[lemma]{thm:floor-ceil-value}
For any $k \in \mathbb{N}$, define the set functions $f_1, f_2: 2^{[m]} \to \mathbb{R}$ as
$f_1(S) = \floor{|S|/k}$ and $f_2(S) = \ceil{|S|/k}$, respectively.
Then $f_1$ is superadditive, $f_2$ is subadditive,
and for any $S \subseteq [m]$, $g \in [m] \setminus S$, and $i \in [2]$,
we have $f_i(g \mid S) \in \{0, 1\}$.
\end{lemma}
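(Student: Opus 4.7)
The plan is to prove all three claims directly from elementary divisibility properties of the floor and ceiling functions, using the fact that $f_1$ and $f_2$ depend on $S$ only through $|S|$.

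For the superadditivity of $f_1$, I would take disjoint sets $S, T \subseteq [m]$ and write $|S| = q_S k + r_S$ and $|T| = q_T k + r_T$ with $0 \le r_S, r_T < k$. Then $f_1(S) + f_1(T) = q_S + q_T$, while $f_1(S \cup T) = \floor{(q_S + q_T) + (r_S + r_T)/k} = q_S + q_T + \floor{(r_S + r_T)/k}$, which is at least $q_S + q_T$ since $r_S + r_T \ge 0$. The symmetric argument handles $f_2$: writing $|S| = q_S k - r_S'$ and $|T| = q_T k - r_T'$ with $0 \le r_S', r_T' < k$, I would get $\ceil{(|S|+|T|)/k} \le q_S + q_T = f_2(S) + f_2(T)$.

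For the marginal-values claim, fix $S \subseteq [m]$ and $g \in [m] \setminus S$, and set $n = |S|$. Then $f_1(g \mid S) = \floor{(n+1)/k} - \floor{n/k}$ and $f_2(g \mid S) = \ceil{(n+1)/k} - \ceil{n/k}$. Since incrementing the argument of $\floor{\cdot/k}$ or $\ceil{\cdot/k}$ by $1$ can increase the value by at most $1$ (and never decreases it), both differences lie in $\{0, 1\}$.

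There is no real obstacle here; the entire lemma reduces to two well-known arithmetic identities about how $\floor{\cdot/k}$ and $\ceil{\cdot/k}$ interact with addition and with unit increments. The only thing to be careful about is the edge case $k = 1$, where $f_1 = f_2 = |\cdot|$ becomes additive and all inequalities hold with equality, which is automatically covered by the general argument.
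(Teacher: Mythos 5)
Your proof is correct, and the paper in fact states this lemma without any proof (treating it as elementary), so your elementary division-with-remainder argument for super/subadditivity and the unit-increment argument for the $\{0,1\}$ marginals is exactly the intended routine verification. No gaps; the $k=1$ remark is fine but unnecessary.
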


\begin{example}
\label[example]{cex:gaps-not-ef1-prop1-subadd}
Let $\fdInst{[2]}{[9]}{(v_i)_{i=1}^2}{\eqEnt}$ be an instance where
$v_i(S) \defeq \ceil{|S|/4}$ for all $i \in [2]$ and $S \subseteq [9]$.
Then $v_i$ is subadditive for $i \in [2]$ by \cref{thm:floor-ceil-value}.
Let $A$ be an allocation where agent 1 has 3 goods and agent 2 has 6 goods.
Then $A$ is GAPS+EFX+PROPx but not EF1 or PROP1.

Each agent's APS is $\le 1$, because if we price each good at $1/9$,
then a bundle is affordable iff its value is at most 1.
Each agent's proportional share is $3/2$.
Note that $A$ is not \EFXZero{} (c.f.~\cref{defn:efx0-goods});
the difference between EFX and \EFXZero{} is crucial here.
\end{example}

\begin{example}
\label[example]{cex:ef1-not-prop1-subadd}
Let $\fdInst{[3]}{[7]}{(v)_{i=1}^3}{\eqEnt}$ be a fair division instance where
$v(S) \defeq \ceil{|S|/2}$ for all $S \subseteq [7]$.
Then $v$ is subadditive by \cref{thm:floor-ceil-value}.
Let $A \defeq (\{1\}, \{2, 3, 4\}, \{5, 6, 7\})$.
Then $A$ is EF1 but is not PROP1 to agent 1.
\end{example}

\begin{lemma}[optimal bin packing is subadditive]
\label[lemma]{thm:bin-packing}
Let $(s_j)_{j=1}^m$ be a sequence where $0 \le s_j \le 1$ for all $j \in [m]$
($s_j$ is called the \emph{size} of item $j$).
For any $X \subseteq [m]$, define $\optBP_s(X)$ as the smallest integer $k$
such that a partition $(P_1, \ldots, P_k)$ of $X$ exists
such that $\sum_{j \in P_i} s_j \le 1$ for all $i \in [k]$.
Then $\optBP_s$ is subadditive and has binary marginals,
i.e., $\optBP_s(X \cup \{g\}) - \optBP_s(X) \in \{0, 1\}$.
\end{lemma}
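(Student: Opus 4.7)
The plan is to verify the two claims directly from the definition. Both follow from elementary packing arguments: we build feasible packings of the target set by combining or extending optimal packings of smaller sets.

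First I would prove subadditivity. Fix disjoint sets $X, Y \subseteq [m]$, and let $a \defeq \optBP_s(X)$ and $b \defeq \optBP_s(Y)$. By definition, there exist partitions $(P_1, \ldots, P_a)$ of $X$ and $(Q_1, \ldots, Q_b)$ of $Y$ such that $\sum_{j \in P_i} s_j \le 1$ for all $i \in [a]$ and $\sum_{j \in Q_i} s_j \le 1$ for all $i \in [b]$. Concatenating these gives a partition $(P_1, \ldots, P_a, Q_1, \ldots, Q_b)$ of $X \cup Y$ into $a+b$ bins, each of total size at most 1. Hence $\optBP_s(X \cup Y) \le a + b = \optBP_s(X) + \optBP_s(Y)$, as required.

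Next I would prove that marginals lie in $\{0, 1\}$. Fix $X \subseteq [m]$ and $g \in [m] \setminus X$. For the lower bound $\optBP_s(X \cup \{g\}) \ge \optBP_s(X)$: given any valid packing $(R_1, \ldots, R_k)$ of $X \cup \{g\}$ into $k$ bins, removing $g$ from whichever bin contains it yields a valid packing of $X$ into at most $k$ bins, so $\optBP_s(X) \le k$. Taking $k = \optBP_s(X \cup \{g\})$ gives the inequality. For the upper bound $\optBP_s(X \cup \{g\}) \le \optBP_s(X) + 1$: take an optimal packing of $X$ into $\optBP_s(X)$ bins and add one fresh bin containing only $g$; this is feasible because $s_g \le 1$.

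Combining these two bounds gives $\optBP_s(X \cup \{g\}) - \optBP_s(X) \in \{0, 1\}$. There is no real obstacle here — the only subtle point to state clearly is the role of the hypothesis $s_j \le 1$, which is what ensures that placing a single new item in its own bin is always a feasible extension. Everything else is a direct combination or restriction of partitions.
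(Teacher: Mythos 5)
Your proof is correct and the subadditivity argument (concatenating optimal packings of the two disjoint sets) is exactly the paper's argument. You additionally spell out the binary-marginals claim (monotonicity by deleting $g$ from a packing, and the $+1$ upper bound by opening a fresh bin, using $s_g \le 1$), which the paper states but leaves unproved, so your write-up is if anything more complete.
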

\begin{proof}
Pick any two disjoint sets $X, Y \subseteq [m]$.
Let $k_X \defeq \optBP_s(X)$ and $k_Y \defeq \optBP_s(Y)$.
Let $(P_1, \ldots, P_{k_X})$ and $(Q_1, \ldots, Q_{k_Y})$ be the corresponding partitions.
Then $(P_1, \ldots,\allowbreak P_{k_X},\allowbreak Q_1,\allowbreak \ldots, Q_{k_Y})$ is a feasible partition of $X \cup Y$,
so $\optBP_s(X \cup Y) \le k_X + k_Y$.
\end{proof}

\begin{example}
\label[example]{cex:gaps-not-prop1-subadd}
\label[example]{cex:gaps-not-prop1-subadd:binary}
\label[example]{cex:gaps-not-prop1-subadd:positive-bival}
Let $s = (1/5, 1/5, 3/5, 3/5, 3/5, 3/5)$ and $\eps \in [0, 1/3)$.
Consider an instance with 3 equally-entitled agents
having an identical valuation function $v$ over 6 goods.
$v(X) \defeq \optBP_s(X) + |X|\eps$
($\optBP_s$ is defined in \cref{thm:bin-packing}).
Then by \cref{thm:bin-packing}, $v$ is subadditive,
and for every $g \in [6]$, we have $v(g \mid \cdot) \in \{\eps, 1+\eps\}$.

Then allocation $A = (\{1, 2\},\allowbreak \{3, 4\},\allowbreak \{5, 6\})$
is EFX, pairwise PROP, and groupwise APS
(set the price of the first two goods to $1/2$ and the last four goods to 1).
However, agent 1 is not PROP1-satisfied in $A$, since $v([6]) = 4 + 6\eps$,
and $v(A_1 \cup \{g\}) = 1 + 3\eps$ for all $g \in [6] \setminus A_1$
\end{example}

\begin{lemma}[PAPS+PPROP $\nfimplies$ PROPm or M1S for binary subadd]
\label[lemma]{cex:paps-pprop-not-propm-binary-subadd}
Let $A_1$, $A_2$, and $A_3$ be pairwise-disjoint sets where
$A_1$ contains 7 goods of size $0.6$, $A_2$ contains 21 goods of size $0.4$,
and $A_3$ contains 21 goods of size $0.4$. Let $M \defeq A_1 \cup A_2 \cup A_3$.
Let $v(X) \defeq \optBP(X)$ for all $X \subseteq M$.
Consider a fair division instance with 3 equally-entitled agents
having the same valuation function $v$ over $M$.
Then allocation $A \defeq (A_1, A_2, A_3)$ is EFX+PPROP+PAPS,
but $A$ is not PROPm-fair or M1S-fair to agent 1.
\end{lemma}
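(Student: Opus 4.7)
The plan is to establish the key combinatorial facts about $v = \optBP$, use them to verify the positive fairness claims quickly, and then extract the two non-implications from the marginal structure. Throughout, write $k_1, k_2$ for the number of $0.6$-goods and $0.4$-goods in a bundle.

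First I would pin down basic values. Each $0.6$-good must occupy its own bin but can absorb exactly one $0.4$-good, giving $v(A_1) = 7$, $v(A_2) = v(A_3) = 11$, $v(A_1 \cup A_j) = 14$ for $j \in \{2,3\}$, $v(A_2 \cup A_3) = 21$, and $v(M) = 7 + \lceil 35/2\rceil = 25$. The crucial structural observation is that for $S \subseteq A_j$ with $j \in \{2,3\}$, the marginal $v(S \mid A_1) = \lceil\max(0,\,|S|-7)/2\rceil$: the first seven $0.4$-goods slot into the existing big-bins for free, and after that each new bin holds two small goods. So this marginal jumps from $0$ to $1$ exactly at $|S| = 8$.

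For the positive claims, EFX for agent $1$ follows because whenever $S \subseteq A_2$ has $v(S \mid A_1) > 0$ we must have $|S| \ge 8$, hence $v(A_2 \setminus S) = \lceil(21-|S|)/2\rceil \le 7 = v(A_1)$; agents $2$ and $3$ envy no one. PPROP is immediate: the proportional share is $7$ on each pair $(1,j)$ and $10.5$ on $(2,3)$. For PAPS I would use uniform unit prices on each pair: on $(1,j)$ every bundle $S$ with $|S| \le 14$ satisfies $v(S) = k_1 + \lceil\max(0,\,|S|-2k_1)/2\rceil \le 7$ (checked for each $k_1 \in \{0,\ldots,7\}$), so $\APS_1 \le 7$; on $(2,3)$ every affordable bundle has $|S| \le 21$ and hence $v(S) \le 11$.

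The two non-implications then fall out quickly. For PROPm, the marginal formula gives $\tau_2 = \tau_3 = 1$, so $v(A_1) + \max(T) = 8 < 25/3 = w_1 v(M)$, violating condition C3 of \cref{defn:propm}; C1 fails and C2 is vacuous since $A_1$ contains only goods. For M1S, let $B$ be any EF1 allocation for agent $1$. Since $\optBP$ has $\{0,1\}$ marginals, EF1 yields $v(B_j) \le v(B_1) + 1$ for $j \in \{2,3\}$, and subadditivity of $v$ gives $25 = v(M) \le v(B_1) + v(B_2) + v(B_3) \le 3 v(B_1) + 2$, forcing $v(B_1) \ge 23/3$; integrality of $\optBP$ then pushes this to $v(B_1) \ge 8 > 7 = v(A_1)$, so no M1S-certificate exists.

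The only step I expect to require real care is the PAPS verification, since $v$ is non-additive and the right price vector is not obvious a priori. The trick is that uniform unit prices convert the APS optimization into a cardinality constraint, and among bundles of cardinality at most $14$ in $A_1 \cup A_2$ the bin-packing value is forced to be at most $7$ regardless of how the count splits between big and small goods; this is what makes the choice of uniform prices tight.
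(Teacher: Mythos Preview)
Your proof is correct and matches the paper on EFX, PPROP, PAPS, and PROPm essentially line by line: the same marginal formula $v(S\mid A_1)=\lceil\max(0,|S|-7)/2\rceil$ drives EFX and PROPm, and uniform unit prices give PAPS via the same cardinality-to-value bound.

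Your M1S argument, however, is genuinely different and cleaner. The paper converts $v(B_1)\le 7$ into the cardinality bound $|B_1|\le 14$ using the explicit formula $v(S)=\max(|S\cap A_1|,\lceil|S|/2\rceil)$, counts to find some $j\in\{2,3\}$ with $|B_j|\ge 18$ (hence $v(B_j)\ge 9$), and concludes that removing one good still leaves value at least $8>v(B_1)$. You bypass the cardinality detour entirely: binary marginals plus EF1 give $v(B_j)\le v(B_1)+1$ directly, subadditivity gives $25\le 3v(B_1)+2$, and integrality forces $v(B_1)\ge 8$. Your route uses only the abstract properties of $\optBP$ (subadditive, integer-valued, $\{0,1\}$ marginals, $v(M)=25$), so it would transfer verbatim to any other valuation with those features, whereas the paper's argument leans on the explicit bin-packing structure.
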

\begin{proof}
First, we show that $v(S) = \max(|S \cap A_1|, \ceil{|S|/2})$ for any $S \subseteq M$.
To see this, suppose $S$ contains $a$ items of size $0.6$ and $b$ items of size $0.4$.
If $b \le a$, we require $a$ to pack the items and $\ceil{(a+b)/2} \le a$.
If $b \ge a$, then we require $\ceil{(a+b)/2}$ bins to pack the items and $a \le (a+b)/2$.
As a corollary, we get that for all $S \subseteq M \setminus A_1$,
we have $v(S \mid A_1) = \max(0, \ceil{(|S|-|A_1|)/2})$.

$v(A_2) = v(A_3) = 11$ and $v(A_1) = 7$, so agents 2 and 3 are envy-free.
For any $S \subseteq A_2$, $v(S \mid A_1) > 0 \iff |S| \ge 8$,
and $|S| \ge 8 \implies |A_2 \setminus S| \le 13 \implies v(A_2 \setminus S) \le 7 = v(A_1)$.
Hence, agent 1 doesn't EFX-envy agent 2. Similarly, she doesn't EFX-envy agent 3.
Hence, $A$ is EFX.

$v(A_2 \cup A_3) = 21$, and $v(A_1 \cup A_2) = v(A_1 \cup A_3) = 14$.
Hence, $A$ is pairwise PROP. We will now show that $A$ is pairwise APS.
Set the price of each item to 1.
It's easy to see that the allocation restricted to agents 2 and 3 is APS.
Now suppose we restrict the allocation to agents 1 and 2.
Then $S \subseteq A_1 \cup A_2$ is affordable iff $|S| \le |A_1 \cup A_2|/2 = 14$.
For $|S| = 14$, we get $v(S) = 7$, so the APS is at most 7.
Hence, $A$ is pairwise APS satisfied.

$v(M)/3 = 25/3 = 8 + 1/3$.
For $S \subseteq A_2$ such that $|S| = 8$, we get that $v(S \mid A_1) = 1$,
but $v(A_1 \cup S) = 8 < v(M)/3$. Hence, $A$ is not PROPm-fair to agent 1.
Also, $v(A_1 \cup \{g\}) = v(A_1) = 7$ for all $g \in M \setminus A_1$,
so $A$ is also not PROP1-fair to agent 1.

Suppose $A$ is M1S-fair to agent 1. Let $B$ be her M1S-certificate for $A$.
Then $v(B_1) \le v(A_1) = 7$ and $B$ is EF1-fair to agent 1.
$v(B_1) \le 7$ implies $|B_1| \le 14$. Since $|M| = 49$, we get that
$\max(|B_2|, |B_3|) \ge \ceil{|M - B_1|/2} \ge 18$.
Hence, some agent $i \in \{2, 3\}$ has a bundle of value at least $9$.
If one good is removed from their bundle, its value can drop by at most $8$,
so $B$ is not EF1-fair to agent 1, which is a contradiction.
Hence, $A$ is not M1S-fair to agent 1.
\end{proof}

\begin{lemma}[GMMS $\nfimplies$ APS]
\label[lemma]{cex:gmms-not-aps-binary-subadd}
Let there be 15 items of sizes
$65/96$, $31/96$, $31/96$, $31/96$, $23/96$, $23/96$, $23/96$, $17/96$,
$11/96$, $7/96$, $7/96$, $7/96$, $5/96$, $5/96$, $5/96$.
Consider a fair division instance with 3 equally-entitled agents having
the same valuation function $v \defeq \optBP_s$ over the 15 items (c.f.~\cref{thm:bin-packing}).
Then $v$ is subadditive and has binary marginals.
Also, the AnyPrice share is at least 2, but the maximin share is 1.
Hence, no APS allocation exists, but the leximin allocation is GMMS.
\end{lemma}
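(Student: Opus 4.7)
The plan is to reduce everything to the additive instance from \cref{thm:aps-gt-mms} via a single structural observation. Write $s_j = a_j/96$ where $(a_j)_{j=1}^{15} = (65, 31, 31, 31, 23, 23, 23, 17, 11, 7, 7, 7, 5, 5, 5)$, and let $u: 2^{[15]} \to \mathbb{Z}_{\ge 0}$ be the additive function with $u(\{j\}) = a_j$. Note $u([15]) = 291 = 3 \cdot 97$. The key observation is that for any $S \subseteq [15]$,
\[ v(S) = \optBP_s(S) \ge 2 \iff \sum_{j \in S} s_j > 1 \iff u(S) \ge 97, \]
where the last equivalence holds because $u(S)$ is an integer. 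Subadditivity and binary marginals of $v$ are immediate from \cref{thm:bin-packing}.

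Next I would establish $\MMS = 1$. The bound $\MMS \ge 1$ is immediate since any 3-partition into non-empty bundles gives $v(P_i) \ge 1$ for each $i$. For $\MMS \le 1$, suppose some 3-partition $(P_1, P_2, P_3)$ had $v(P_i) \ge 2$ for all $i$. By the key observation, $u(P_i) \ge 97$ for each $i$, so $(P_1, P_2, P_3)$ would be an additive MMS-partition witnessing additive-MMS $\ge 97$, contradicting \cref{thm:aps-gt-mms} (with $t = 1$).

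For $\APS \ge 2$, I would invoke the dual definition (\cref{defn:aps-dual}). By \cref{thm:aps-gt-mms}, the additive AnyPrice share of the $u$-instance is at least $97$, so there exist $x_S \ge 0$ over $\Scal \defeq \{S \subseteq [15] : u(S) \ge 97\}$ satisfying $\sum_{S \in \Scal} x_S = 1$ and $\sum_{S \in \Scal: j \in S} x_S = 1/3$ for every item $j$. By the key observation, $\Scal = \{S : v(S) \ge 2\}$, so the same $(x_S)$ witnesses $\APS \ge 2$ in the bin-packing instance.

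Finally, since valuations are identical, in every allocation $A$ some agent $i$ satisfies $v(A_i) \le \MMS = 1 < 2 \le \APS$, so no APS allocation exists. On the other hand, the leximin allocation for identical valuations is MMS-fair (its minimum coincides with $\MMS$); moreover, restricting a leximin allocation to any subset of agents yields a leximin allocation for that sub-instance, hence MMS-fair there, proving that it is GMMS. The only delicate step is the $\MMS$ upper bound, where one must be careful that the strict inequality ``total size $> 1$'' upgrades to $u(S) \ge 97$ via integrality; everything else is essentially bookkeeping on top of \cref{thm:aps-gt-mms}.
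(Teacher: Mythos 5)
Your proof is correct and takes essentially the same route as the paper: both arguments reduce the bin-packing instance to the additive $97$-instance of Babaioff et al.\ (Lemma C.1, packaged in the paper as \cref{thm:aps-gt-mms}), using the integrality observation that ``$\optBP_s(S)\ge 2$'' is equivalent to ``additive value $\ge 97$'', bounding the MMS by the impossibility of a $3$-partition in which every part has total size exceeding $1$, and certifying $\APS \ge 2$ via the dual definition (\cref{defn:aps-dual}). The only cosmetic difference is that the paper exhibits the $6$-set fractional cover from Lemma C.1 explicitly, whereas you transfer the dual certificate abstractly from the additive bound $\APS \ge 97$; both are valid, and I see no gaps.
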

\begin{proof}
The total size of the items is $3 \times 97/96$.
By Lemma C.1 of \citet{babaioff2023fair}, no $3$-partition of the items exists
such that the total size of each partition is at least $97/96$.
Hence, in every 3-partition $M$, some bundle has total size at most 1, so that bundle has value at most 1.
So, the maximin share is 1.

By Lemma C.1 of \citet{babaioff2023fair}, there exist 6 sets $(S_j)_{j=1}^6$ of items
such that each set has total size $97/96$ (and hence value 2) and each item appears in exactly two sets.
Hence, the APS is at least 2 (c.f.~\cref{defn:aps-dual}).
\end{proof}

\section{Feasibility of Fairness Notions}
\label{sec:feas}
\label{sec:feas-extra}

We list results regarding the feasibility and infeasibility of fairness notions
in \cref{table:feas,table:infeas}, respectively.

\begin{table*}[htb]
\centering
\caption{Feasibility of fairness notions}
\label{table:feas}
\bigTableSize
\setcounter{tabSerial}{0}
\begin{tabular}{clcccccr}
\toprule & \scriptsize notion & \scriptsize valuation & \scriptsize marginals & \scriptsize identical & \scriptsize $n$ & \scriptsize entitlements &
\\ \midrule \tabSn & EF1 & -- & dbl-mono\textsuperscript{\ref{foot:dbl-mono-2}} & -- & -- & equal
    & Theorem 4 of \shortcite{bhaskar2021approximate}
\\[\defaultaddspace] \tabSn & EF1 & additive & goods & -- & -- & --
    & Theorem 3.3 of \shortcite{chakraborty2021weighted}
\\[\defaultaddspace] \tabSn & EF1 & additive & chores & -- & -- & --
    & Theorem 19 of \shortcite{springer2024almost}
\\[\defaultaddspace] \tabSn & EF1+PO & additive & goods & -- & -- & --
    & \citet{caragiannis2019unreasonable}
\\[\defaultaddspace] \tabSn & EF1+PO & additive & chores & -- & -- & --
    & \citet{mahara2025existence}
\\[\defaultaddspace] \tabSn & EF1+PO & additive & -- & -- & $n=2$ & --
    & Theorem 14 of \shortcite{garg2024ef1}
\\[\defaultaddspace] \tabSn & MMS & additive & -- & -- & $n=2$ & equal
    & Cut-and-choose
\\[\defaultaddspace] \tabSn & WMMS & -- & -- & yes & -- & --
    & Trivial
\\[\defaultaddspace] \tabSn & APS & additive & -- & -- & $n=2$ & --
    & Proposition~6 of \shortcite{babaioff2023fair}
\\[\defaultaddspace] \tabSn & PROPx & additive & chores & -- & -- & --
    & Theorem 4.1 of \shortcite{li2022almost}\textsuperscript{\ref{foot:propx-li}}
\\[\defaultaddspace] \tabSn & PROPm & additive & goods & -- & -- & equal
    & \shortcite{baklanov2021propm}
\\[\defaultaddspace] \tabSn & PROPavg & additive & goods & -- & -- & equal
    & \shortcite{kobayashi2025proportional}\textsuperscript{\ref{foot:propavg}}
\\[\defaultaddspace] \tabSn & PROP1 & additive & -- & -- & -- & --
    & \shortcite{aziz2020polynomial}
\\[\defaultaddspace] \tabSn & EFX & additive & $\ge 0$, $\le 0$ & yes & -- & --
    & Theorem 5 of \shortcite{springer2024almost}
\\[\defaultaddspace] \tabSn & EEFX & cancelable & $\ge 0$, $\le 0$ & -- & -- & equal
    & \shortcite{caragiannis2022existence}
\\[\defaultaddspace] \tabSn & EEFX & -- & $\ge 0$, $\le 0$ & -- & -- & equal
    & \shortcite{akrami2025epistemic}
\\[\defaultaddspace] \tabSn & EFX & additive & bival goods & -- & -- & equal
    & Theorem 4.1 of \shortcite{amanatidis2021maximum}
\\[\defaultaddspace] \tabSn & MMS & additive & bival goods & -- & -- & equal
    & \shortcite{feige2022maximin}
\\[\defaultaddspace] \tabSn & MMS & additive & bival chores & -- & -- & equal
    & \shortcite{feige2022maximin}
\\[\defaultaddspace] \tabSn & MMS & submodular & $\{0, 1\}$ & -- & -- & equal
    & Theorem 3.4 of \shortcite{barman2021existence}
\\[\defaultaddspace] \tabSn & EFX+PO & submodular & $\{0, 1\}$ & -- & -- & equal
    & Theorem 1 of \shortcite{babaioff2021fairTruthful}
\\[\defaultaddspace] \tabSn & MMS & submodular & $\{0, -1\}$ & -- & -- & equal
    & Theorem 9 of \shortcite{barman2023fair}
\\[\defaultaddspace] \tabSn & GMMS & -- & -- & yes & -- & equal
    & Leximin is GMMS
\\ \bottomrule
\end{tabular}

\footnotesize
\begin{tightenum}
\item[*] \label{foot:dbl-mono-2}A function $v: 2^M \to \mathbb{R}$ is \emph{doubly monotone}
    if there is a partition $(G, C)$ of $M$ such that
    $v(g \mid \cdot) \ge 0$ $\forall g \in G$ and $v(c \mid \cdot) \le 0$ $\forall c \in C$.
\item[\textdagger] \label{foot:propx-li}Algorithm 2 of \citet{li2022almost} must be slightly modified
    to meet our slightly stricter definition of PROPx.
    Change line 5 from `if $|v_i(X_i)| > w_i$' to `if $|v_i(X_i)| \ge w_i$'.
\item[\textdaggerdbl] \label{foot:propavg}Algorithm 4.1 of \citet{kobayashi2025proportional}
    must be slightly modified to meet our slightly stricter definition of PROPavg.
    Change the definition of the PROPavg-graph to use a strict inequality instead.
\end{tightenum}
\end{table*}

\begin{table*}[htb]
\centering
\caption{Infeasibility of fairness notions}
\label{table:infeas}
\bigTableSize
\setcounter{tabSerial}{0}
\begin{tabular}{clcccccr}
\toprule & \scriptsize notion & \scriptsize valuation & \scriptsize marginals & \scriptsize identical & \scriptsize $n$ & \scriptsize entitlements &
\\ \midrule \tabSn & PROP & $m=1$ & $1$, $-1$ & yes & any & equal
    & \cref{cex:single-item}
\\[\defaultaddspace] \tabSn & APS & submod & $> 0$ & yes & $n=2$ & equal
    & \cref{cex:mms-not-aps-n2-submod}
\\[\defaultaddspace] \tabSn & APS & additive & $> 0$, $< 0$ & yes & $n=3$ & equal
    & Lemma C.1 of \shortcite{babaioff2023fair}
\\[\defaultaddspace] \tabSn & APS & subadd & $\{0, 1\}$ & yes & $n=3$ & equal
    & \cref{cex:gmms-not-aps-binary-subadd}
\\[\defaultaddspace] \tabSn & MMS & additive & $> 0$, $< 0$ & no & $n=3$ & equal
    & \shortcite{feige2022tight}
\\[\defaultaddspace] \tabSn & MMS & XOS & $\{0, 1\}$ & no & $n=2$ & equal
    & Theorem 3.5 of \shortcite{barman2021existence}
\\[\defaultaddspace] \tabSn & PROPx & additive & $> 0$ bival & yes & $n=3$ & equal
    & \cref{cex:propx}
\\[\defaultaddspace] \tabSn & PROPm & additive & mixed bival & yes & $n=3$ & equal
    & \cref{cex:propm-mixed-manna}
\\[\defaultaddspace] \tabSn & MXS & additive & $> 0$ & no & $n=2$ & unequal
    & \cref{cex:wmxs-goods}
\\[\defaultaddspace] \tabSn & MXS & additive & $< 0$ & no & $n=2$ & unequal
    & \cref{cex:wmxs-chores}
\\[\defaultaddspace] \tabSn & PROP1+M1S & additive & $1$ & yes & $n=3$ & unequal
    & \cref{cex:prop1-plus-m1s-ue}
\\[\defaultaddspace] \tabSn & PROP1 & supermod & $\ge 0$ bival & yes & $n=2$ & equal
    & \cref{cex:ef-not-prop-supmod}
\\[\defaultaddspace] \tabSn & PROPm & supermod & $\ge 0$ bival & yes & $n=2$ & equal
    & \cref{cex:ef-not-prop-supmod}
\\[\defaultaddspace] \tabSn & MMS & supermod & $\ge 0$ bival & no & $n=2$ & equal
    & \cref{cex:mms-supmod-goods}
\\[\defaultaddspace] \tabSn & MMS & supermod & $\le 0$ bival & no & $n=2$ & equal
    & \cref{cex:mms-supmod-chores}
\\[\defaultaddspace] \tabSn & MMS & submod & $> 0$ & no & $n=2$ & equal
    & \cref{cex:mms-submod-goods}
\\[\defaultaddspace] \tabSn & PMMS & -- & $\ge 0$ & no & $n=3$ & equal
    & Theorem 1 of \shortcite{byrka2025probing}
\\ \bottomrule
\end{tabular}
\end{table*}

\begin{example}[PROPx is infeasible]
\label[example]{cex:propx}
An instance with 3 equally-entitled agents having identical additive valuations over 2 goods
with values 10 and 1, respectively.
\end{example}

\begin{lemma}[WMXS is infeasible for goods, Theorem 8 of \citet{springer2024almost}]
\label[lemma]{cex:wmxs-goods}
Let $0 < \eps \le 1/4$, and $\phi \defeq (\sqrt{5}+1)/2$.
No MXS allocation exists for the fair division instance $\fdInst{[2]}{[4]}{(v_i)_{i=1}^2}{w}$,
where $w_1 = 1/(\sqrt{\phi}+1)$, $w_2 = \sqrt{\phi}/(\sqrt{\phi}+1)$,
and agents have additive valuations given by the following table:

\centering
\begin{tabular}{c|cccc}
$g$ & 1 & 2 & 3 & 4
\\ \hline $v_1(g)$ & $\eps$ & $1$ & $\phi$ & $\phi$
\\ $v_2(g)$ & $\eps$ & $\eps$ & $1$ & $1$
\end{tabular}
\end{lemma}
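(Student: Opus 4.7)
The plan is to exploit the fact that for $n = 2$, an MXS allocation $A$ must satisfy $v_i(A_i) \ge \MXS_i$ for each agent \emph{independently}, where $\MXS_i$ is the smallest value of $v_i(B_i)$ over all allocations $B$ in which agent $i$ is EFX-satisfied. I will first compute $\MXS_1$ and $\MXS_2$ exactly by enumeration, and then use a short case analysis on the split of items $\{3,4\}$ to show that the two shares cannot be simultaneously attained. The identities $\phi^2 = \phi + 1$ and $w_2/w_1 = \sqrt{\phi}$ will be used throughout.

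For agent~1, since all $v_1$-values are positive, the EFX condition for $(B_1, B_2)$ reduces to $v_1(B_1)/w_1 \ge (v_1(B_2) - \min_{g \in B_2} v_1(g))/w_2$. Enumerating the sixteen candidate bundles in order of $v_1$-value ($0, \eps, 1, 1+\eps, \phi, \phi+\eps, \dots$), all bundles of value strictly below $\phi + \eps$ fail EFX: for $B_1 = \{3\}$ the binding inequality is $\sqrt{\phi} \ge \phi$, and for $B_1 = \{1,2\}$ it becomes $(1+\eps)\sqrt{\phi} \ge \phi$, both false. The bundle $B_1 = \{1,3\}$ passes since its condition $\eps + \phi \ge \sqrt{\phi}$ is trivial. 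Hence $\MXS_1 = \phi + \eps$. A symmetric enumeration for agent~2 singles out $B_2 = \{1,3\}$ (value $1+\eps$) as the borderline case, whose EFX inequality reads $1 + \eps \ge \sqrt{\phi}$; this is precisely where the hypothesis $\eps \le 1/4 < \sqrt{\phi} - 1$ is used to force failure. All smaller bundles fail for similar reasons, while $B_2 = \{1,2,3\}$ is trivially EFX because $B_1 = \{4\}$ is a singleton. Hence $\MXS_2 = 1 + 2\eps$.

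Finally, in any putative MXS allocation $A$, the bound $v_1(A_1) \ge \phi + \eps$ forces $A_1$ to contain at least one of $\{3, 4\}$ and to have $|A_1| \ge 2$ (since $v_1(\{3\}) = v_1(\{4\}) = \phi < \phi + \eps$). The bound $v_2(A_2) \ge 1 + 2\eps$ forces $|A_2 \cap \{3, 4\}| \ge 1$ and, if exactly $1$, also $\{1, 2\} \subseteq A_2$. Splitting into cases on $|A_1 \cap \{3, 4\}|$: if $A_1 \supseteq \{3, 4\}$ then $v_2(A_2) \le 2\eps$; if $|A_2 \cap \{3, 4\}| = 2$ then $v_1(A_1) \le 1 + \eps$; and if each agent gets exactly one of $\{3, 4\}$, then $A_2 \supseteq \{1, 2\}$ forces $|A_1| = 1$, giving $v_1(A_1) = \phi$. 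Each case contradicts one of the two lower bounds, so no MXS allocation exists.

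The main obstacle is cleanly carrying out the sixteen-case EFX enumeration in the first two steps without arithmetic slips; the key conceptual point is that choosing the weights in the golden ratio together with the upper bound $\eps \le 1/4 < \sqrt{\phi} - 1$ makes both $\MXS_1$ and $\MXS_2$ require control over the single pair $\{3, 4\}$ in incompatible ways.
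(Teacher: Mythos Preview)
Your proposal is correct and follows essentially the same approach as the paper's own proof sketch: compute $\MXS_1 = \phi + \eps$ and $\MXS_2 = 1 + 2\eps$ by enumerating the EFX-satisfying bundles, then verify no allocation attains both thresholds. The paper simply lists the minimal EFX-satisfying bundles for each agent and concludes with ``one can check that no allocation is MXS,'' whereas you spell out the borderline EFX inequalities (correctly identifying $1+\eps < \sqrt{\phi}$ as the crucial point where $\eps \le 1/4$ enters) and supply the explicit three-way case split on $|A_1 \cap \{3,4\}|$ that the paper omits.
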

\ifVerbose
\begin{proof}[Proof sketch.]
Note that $1 + \eps < \sqrt{\phi} = w_2/w_1$.
The only bundles agent $2$ is EFX-satisfied with are
$\{1, 2, 3\}$, $\{1, 2, 4\}$, $\{3, 4\}$, and their supersets.
Hence, $\MXS_2 = 1 + 2\eps$.
The only bundles agent $1$ is EFX-satisfied with are
$\{1, 3\}$, $\{1, 4\}$, $\{2, 3\}$, $\{2, 4\}$, $\{3, 4\}$, and their supersets.
Hence, $\MXS_1 = \phi + \eps$.
One can check that no allocation is MXS.
\end{proof}
\fi

\begin{lemma}[WMXS is infeasible for chores, Theorem 17 of \citet{springer2024almost}]
\label[lemma]{cex:wmxs-chores}
Let $0 < \eps \le 1/4$, and $\phi \defeq (\sqrt{5}+1)/2$.
No MXS allocation exists for the fair division instance $\fdInst{[2]}{[4]}{(v_i)_{i=1}^2}{w}$,
where $w_1 = \sqrt{\phi}/(\sqrt{\phi}+1)$, $w_2 = 1/(\sqrt{\phi}+1)$,
and agents have additive disutilities given by the following table:

\centering
\begin{tabular}{c|cccc}
$c$ & 1 & 2 & 3 & 4
\\ \hline $-v_1(c)$ & $\eps$ & $1$ & $\phi$ & $\phi$
\\ $-v_2(c)$ & $\eps$ & $\eps$ & $1$ & $1$
\end{tabular}
\end{lemma}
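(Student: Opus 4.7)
\textbf{Proof plan for Lemma \ref{cex:wmxs-chores}.} My plan is to mirror the argument sketched for the goods case (Lemma \ref{cex:wmxs-goods}), computing each agent's MXS by a finite enumeration and then checking that no complete allocation satisfies both agents simultaneously. The key arithmetic facts I would set up first are $\phi^2 = \phi + 1$, $w_1/w_2 = \sqrt{\phi}$, and the pivotal inequality $\sqrt{\phi} > 1 + \eps$ (since $\sqrt{\phi} \approx 1.272$ and $\eps \le 1/4$); this inequality is what makes the instance work.

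Since the items have strictly negative marginals for both agents, the EFX condition for $i$ toward $j$ simplifies to
\[ \frac{|v_i(B_i)| - \min_{c \in B_i}|v_i(c)|}{w_i} \le \frac{|v_i(B_j)|}{w_j}. \]
Using this, I would enumerate the EFX-feasible bundles for each agent.
For agent $2$ (disutilities $\eps,\eps,1,1$): the inequality $\sqrt{\phi} > 1+\eps$ rules out every bundle containing either a value-$1$ chore paired with another chore or both value-$1$ chores, as well as all bundles of size $\ge 3$; what survives is $\emptyset$, the four singletons, and $\{1,2\}$, so agent $2$'s maximum tolerated disutility is $\MXS_2 = -1$. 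For agent $1$ (disutilities $\eps,1,\phi,\phi$): a parallel case check shows $\{3,4\}$ fails (needs $\sqrt{\phi} \le 2\eps$) and every size-$3$ or size-$4$ bundle fails (each requires $\sqrt{\phi} \le 1$ or similar), while all two-element bundles other than $\{3,4\}$ are EFX-feasible; the largest disutility among those is $1+\phi = \phi^2$, so $\MXS_1 = -\phi^2$.

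Finally, I would verify infeasibility by enumerating the six candidate bundles $B_2$ with $|v_2(B_2)| \le 1$ and computing the complementary disutility $|v_1([4]\setminus B_2)|$ in each case: for $B_2 = \{1,2\}$ we get $|v_1(B_1)| = 2\phi > 1+\phi$ (since $\phi > 1$); for $B_2 \in \{\{3\},\{4\}\}$ we get $\phi^2 + \eps > \phi^2$; for $B_2 \in \{\{1\},\{2\},\emptyset\}$ the disutility $|v_1(B_1)|$ exceeds $\phi^2$ by at least $\phi$. In every case agent $1$'s disutility strictly exceeds her MXS threshold, so no MXS allocation exists. The main obstacle is not conceptual but bookkeeping: keeping the case analysis disciplined. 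The conceptual point is simply that the entitlement ratio $\sqrt{\phi}$ is tuned to be simultaneously too small for agent $2$ to absorb two "large" chores and too large for agent $1$ to EFX-tolerate three chores, so the leftover always overshoots.
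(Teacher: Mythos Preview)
Your proof is correct and follows exactly the same approach as the paper's sketch: compute each agent's MXS by enumerating the EFX-feasible bundles, then verify that no complementary pair of bundles satisfies both thresholds. One small slip: for $B_1 = \{3,4\}$, agent~1's disutility for the complement $\{1,2\}$ is $1+\eps$ (not $2\eps$), so the inequality that fails is $\sqrt{\phi} \le 1+\eps$---but this is precisely your pivotal inequality, and the conclusion is unaffected.
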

\ifVerbose
\begin{proof}[Proof sketch.]
Note that $1 + \eps < \sqrt{\phi} = w_1/w_2$.
The only bundles agent $2$ is EFX-satisfied with are
$\{1, 2\}$, $\{3\}$, $\{4\}$, and their subsets.
Hence, $\MXS_2 = -1$.
The only bundles agent $1$ is EFX-satisfied with are
$\{2, 3\}$, $\{2, 4\}$, $\{1, 3\}$, $\{1, 4\}$, $\{1, 2\}$, and their subsets.
Hence, $\MXS_1 = -\phi^2$.
One can check that no allocation is MXS.
\end{proof}
\fi

\begin{example}[MMS is infeasible]
\label[lemma]{cex:mms-supmod-goods}
Let $0 \le a < b$. Let $\fdInst{[2]}{[4]}{(v_i)_{i=1}^2}{\eqEnt}$ be a fair division instance
where $v_1(S) = |S|a + (b-a)(\boolone(S \supseteq \{1, 2\}) + \boolone(S \supseteq \{3, 4\}))$
and $v_2(S) = |S|a + (b-a)(\boolone(S \supseteq \{1, 3\}) + \boolone(S \supseteq \{2, 4\}))$.
Then $v_1$ and $v_2$ are supermodular, each agent's MMS is $a+b$, and no MMS allocation exists.
In fact, the best multiplicative approximation to the MMS one can achieve is $2a/(a+b)$.
\end{example}

\begin{example}[MMS is infeasible]
\label[lemma]{cex:mms-supmod-chores}
Let $0 \le a < b$. Let $\fdInst{[2]}{[4]}{(v_i)_{i=1}^2}{\eqEnt}$ be a fair division instance
where $-v_1(S) = |S|b - (b-a)(\boolone(S \supseteq \{1, 2\}) + \boolone(S \supseteq \{3, 4\}))$
and $-v_2(S) = |S|b - (b-a)(\boolone(S \supseteq \{1, 3\}) + \boolone(S \supseteq \{2, 4\}))$.
Then $v_1$ and $v_2$ are supermodular, each agent has MMS $-(a+b)$, and no MMS allocation exists.
In fact, the best multiplicative approximation to the MMS one can achieve is $2b/(a+b)$.
\end{example}

\begin{lemma}[MMS is infeasible, Theorem 4.1 of \citet{ghodsi2018fair}]
\label[lemma]{cex:mms-submod-goods}
Let $\fdInst{[2]}{[4]}{(v_i)_{i=1}^2}{\eqEnt}$ be a fair division instance,
where $v_1$ and $v_2$ are as follows ($\eps \in \mathbb{R}_{\ge 0}$):
\[ \begin{array}{c|c|c}
S & v_1(S) - |S|\eps & v_2(S) - |S|\eps
\\\hline |S| \le 1 & |S| & |S|
\\ \{1, 2\}, \{3, 4\} & 2 & 3/2
\\ \{2, 3\}, \{1, 4\} & 3/2 & 2
\\ \{1, 3\}, \{2, 4\} & 3/2 & 3/2
\\ |S| \ge 3 & 2 & 2
\end{array} \]
Then $v_1$ and $v_2$ are submodular, each agent's MMS is $2+2\eps$,
and no MMS allocation exists.
\end{lemma}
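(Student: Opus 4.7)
The plan is to establish three claims: submodularity of $v_1$ and $v_2$, computing the MMS of each agent, and showing no MMS allocation exists. Following the convention elsewhere in the paper (see \cref{thm:ud-perturb}), I read the statement as concerning the perturbed valuations $\vhat_i(S) \defeq v_i(S) + |S|\eps$; submodularity transfers from $v_i$ to $\vhat_i$ by (the first half of) \cref{thm:ud-perturb}, and $\vhat_i(\{g\} \mid S) = v_i(\{g\} \mid S) + \eps$, so the MMS value shifts by exactly $2\eps$ on every bundle of size $2$. Hence it suffices to analyze the base functions $v_1, v_2$.

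First I would verify submodularity of $v_1$ by the diminishing-returns characterization: for each $g \in [4]$, list the marginal $v_1(\{g\} \mid S)$ for all $S \subseteq [4]\setminus\{g\}$ and check it is non-increasing in $S$. For instance, $v_1(\{1\} \mid \emptyset) = 1$, $v_1(\{1\} \mid \{2\}) = 1$, $v_1(\{1\} \mid \{3\}) = 1/2$, $v_1(\{1\} \mid \{3,4\}) = 0$, etc.; all the chains work out. The function $v_1$ is symmetric under the swaps $1 \leftrightarrow 2$ and $3 \leftrightarrow 4$, which collapses the four elements into two equivalence classes and cuts the case work roughly in half. Submodularity of $v_2$ follows from that of $v_1$ by the relabeling $(1,2,3,4) \mapsto (1,3,2,4)$, which maps the preferred pairs $\{\{1,2\},\{3,4\}\}$ of $v_1$ onto the preferred pairs $\{\{1,3\},\{2,4\}\}$ of $v_2$; in particular one only needs to check $v_1$ directly.

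Next I would compute $\MMS_1$. In any $2$-partition $(P_1,P_2)$ of $[4]$, if $\min(|P_1|,|P_2|) \le 1$ then the smaller bundle has $\vhat_1$-value at most $1+\eps$, so the minimum over the two bundles is at most $1+\eps < 2+2\eps$. The only balanced partitions are $(\{1,2\},\{3,4\})$, $(\{1,3\},\{2,4\})$, $(\{1,4\},\{2,3\})$; on the first, both bundles have value $2+2\eps$, while on the others both bundles have value $3/2+2\eps$. Hence $\MMS_1 = 2+2\eps$, witnessed by $(\{1,2\},\{3,4\})$. By the symmetry noted above, $\MMS_2 = 2+2\eps$, witnessed by $(\{1,3\},\{2,4\})$.

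Finally, I would show no MMS allocation exists. In an allocation $(A_1,A_2)$, the previous paragraph's counting argument forces $|A_1|=|A_2|=2$. For agent $1$ to be MMS-satisfied, we need $\vhat_1(A_1)\ge 2+2\eps$, i.e.\ $v_1(A_1)=2$, which restricts $A_1 \in \{\{1,2\},\{3,4\}\}$. But then $A_2 \in \{\{3,4\},\{1,2\}\}$, so $v_2(A_2)=3/2$ and $\vhat_2(A_2)=3/2+2\eps<2+2\eps$, contradicting MMS-fairness to agent $2$. The main obstacle is purely bookkeeping, namely carrying out the submodularity verification cleanly; using the symmetry to reduce to a single representative $g$ per orbit, and presenting the marginals as a short table, should keep it manageable.
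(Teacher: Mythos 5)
Your strategy is essentially the paper's (check diminishing marginals for submodularity, identify the MMS partitions, then note that agent 1 being MMS-satisfied forces $A_1\in\{\{1,2\},\{3,4\}\}$ and leaves agent 2 with a bundle of value $3/2+2\eps$), and your reading of the $\eps$ as a $|S|\eps$ perturbation is the intended one. However, you have misread the table for $v_2$: its value-$2$ pairs are $\{2,3\}$ and $\{1,4\}$, while $\{1,3\}$ and $\{2,4\}$ have $v_2$-value $3/2$. This slip breaks two of your intermediate claims as written. First, the relabeling $(1,2,3,4)\mapsto(1,3,2,4)$ does \emph{not} carry $v_1$ to $v_2$; it carries $v_1$ to the function whose preferred pairs are $\{1,3\},\{2,4\}$, which is not the $v_2$ of the statement. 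The relabeling idea is sound, but the correct permutation is the transposition $1\leftrightarrow 3$ (equivalently $2\leftrightarrow 4$), which maps $\{1,2\},\{3,4\}$ to $\{2,3\},\{1,4\}$. Second, your claimed witness for $\MMS_2$, the partition $(\{1,3\},\{2,4\})$, gives both bundles value only $3/2+2\eps$ and hence does not certify $\MMS_2=2+2\eps$; the correct witness is $(\{2,3\},\{1,4\})$.

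These errors are local and easily repaired, and they do not touch the crux: your infeasibility argument only uses $v_2(\{1,2\})=v_2(\{3,4\})=3/2$, which is correct, so the contradiction for agent 2 stands once the MMS value is certified by the right partition. Your submodularity verification for $v_1$ (element-by-element diminishing marginals, exploiting the $1\leftrightarrow 2$, $3\leftrightarrow 4$ symmetry) is equivalent to the paper's slightly slicker check that the marginals at level $|S|=k+1$ never exceed those at level $|S|=k$; either is fine. Fix the permutation and the witness partition and the proof goes through.
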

\ifVerbose
\begin{proof}
For any $i \in [2]$, $S \subseteq [4]$, and $g \in [4] \setminus S$, we have
\begin{tightenum}
\item $S = \emptyset \implies v_i(g \mid S) = 1 + \eps$.
\item $|S| = 1 \implies v_i(g \mid S) \in \{1/2 + \eps, 1 + \eps\}$.
\item $|S| = 2 \implies v_i(g \mid S) \in \{\eps, 1/2 + \eps\}$.
\item $|S| = 3 \implies v_i(g \mid S) = \eps$.
\end{tightenum}
Hence, $v_i(g \mid S)$ decreases with $|S|$, so $v_i$ is submodular.
It is easy to verify that each agent's MMS is $2+2\eps$,
and that no allocation is MMS.
\end{proof}
\fi

\section{Implicit Representation of Set Families}
\label{sec:fd-set-family}

\Cref{sec:cpig} introduced the concept of \emph{conditional predicate implications},
and presented an algorithm for inferring additional implications and counterexamples.
In fair division, we defined $\Omega$ to be the set of all pairs $(\Ical, A)$,
where $\Ical$ is a fair division instance and $A$ is an allocation for $\Ical$.
We defined $\Fcal$ to be a set family over $\Omega$, and in fair division,
each set in $\Fcal$ represents a setting.
But $\Omega$ and the sets in $\Fcal$ are uncountable, so how do we compute with them?
This is the question we answer formally in this section.

\subsection{Mappings from Partial Orders}

\begin{definition}
\label{defn:set-family-repr}
A set family $\Fcal \subseteq 2^{\Omega}$ is \emph{represented by} a partial order $(P, \preceq)$
if there exists an order-preserving surjective mapping $f: P \to \Fcal$,
i.e., for all $S \in \Fcal$, there exists $x \in P$ such that $f(x) = S$,
and for all $x, y \in P$, we have $x \preceq y \implies f(x) \subseteq f(y)$.
\end{definition}

Note that the converse is not required to be true, i.e.,
$f(x) \subseteq f(y)$ need not imply $x \preceq y$.
Hence, if $P$ is an antichain, then $P$ trivially represents $\Fcal$.
However, such a representation is useless.
The more a representation captures the subset relations in $\Fcal$,
the better that representation is.

\begin{example}
Let $E$ and $O$ be the sets of even and odd integers, respectively, i.e.,
$E \defeq \{2i: i \in \mathbb{Z}\}$ and $O \defeq \{2i+1: i \in \mathbb{Z}\}$.
Then the set family $\Fcal \defeq \{E, O, \mathbb{Z}\}$ can be represented by
the partial order $(\{e, o, a\}, \{e \preceq e, a \preceq a, o \preceq o, e \preceq a, o \preceq a\})$
over three elements, where the corresponding mapping $f$ is given by
$f(e) = E$, $f(o) = O$, and $f(a) = \mathbb{Z}$.
\end{example}

Hence, for the conditional predicate implication problem,
if we can represent a set family $\Fcal \subseteq 2^{\Omega}$
by a finite partial order $(P, \preceq)$,
then we can indirectly specify the sets that implications and counterexamples
are conditioned on by elements in $P$.
In fact, for computation, we do not even need to know the set $\Fcal$ and the mapping $f$;
we can just work with elements in $P$ instead.
In the algorithm for inferring additional implications and counterexamples,
we perform several checks of the form $S \subseteq T$, where $S, T \in \Fcal$.
We replace them with checks of the form $x \preceq y$, where $f(x) = S$ and $f(y) = T$.

\subsection{Partial Order for Fair Division Settings}

We represent a fair division setting as a 5-tuple, as specified in \cref{sec:settings}.
We now explain how to define a partial order on these 5-tuples,
and how to map each 5-tuple to a subset of $\Omega$.

\begin{definition}[Product order]
Let $((P_i, \preceq_i))_{i=1}^k$ be a sequence of partial orders.
Their \emph{product} is another partial order $(P, \preceq)$, where
$P \defeq \prod_{i=1}^k P_i \defeq \{(p_i)_{i=1}^k: p_j \in P_j \forall j \in [k]\}$
and $(p_1, \ldots, p_k) \preceq (q_1, \ldots, q_k)$ iff $p_i \preceq_i q_i$ for all $i \in [k]$.
\end{definition}

\ifVerbose
\begin{example}
The product of $(\mathbb{N}, \le)$ with itself is $(\mathbb{N}^2, \preceq)$,
where $(a_1, a_2) \preceq (b_1, b_2)$ iff $a_1 \le b_1$ and $a_2 \le b_2$.
\end{example}
\fi

Recall the 5 features of fair division from \cref{sec:settings}:
whether entitlements are equal,
whether there are only two agents,
whether agents have identical valuations,
valuation function type,
and marginal values.
We define a partial order for each of these 5 features.
The first three features are represented by the \emph{boolean} partial order:
$(\{\mathrm{true},\allowbreak \mathrm{unknown}\},
\{\mathrm{true} \preceq \mathrm{unknown},\allowbreak
\mathrm{true} \preceq \mathrm{true},\allowbreak
\mathrm{unknown} \preceq \mathrm{unknown}\})$.
The partial orders for the last two features are represented as DAGs in
\cref{fig:dag-posets:valuation,fig:dag-posets:marginals}, respectively, in \cref{sec:settings-extra}.
(Formally, for a DAG $G = (V, E)$, the corresponding partial order is $(V, \preceq)$,
where $u \preceq v$ iff there is a path from $u$ to $v$ in $G$.)
Let $(P, \preceq)$ be the product of these 5 partial orders.
For a fair division setting $s \in P$, let
$f(s) \defeq \{(\Ical, A): \Ical$ is an instance consistent with $s$,
$A$ is an allocation for $\Ical\}$, and $\Fcal \defeq \{f(s): s \in P\}$.
It is easy to check that $f$ is order-preserving and surjective.
This completes our description of how to map fair division settings to subsets of $\Omega$.

Note that $f$ is not injective. The settings
$s_1 \defeq (\mathrm{unknown},\allowbreak \mathrm{unknown},\allowbreak \mathrm{true},\allowbreak \mathrm{additive},\allowbreak \{1\})$
and $s_2 \defeq (\mathrm{unknown},\allowbreak \mathrm{unknown},\allowbreak \mathrm{unknown},\allowbreak \mathrm{general},\allowbreak \{1\})$
map to the same set in $\Fcal$, because if
each item's marginal value is 1, then valuations are identical and additive.
Querying the inference engine with $s_2$
may fail to infer implications that rely on additivity or identical valuations.
Note that $s_1 \preceq s_2$.
Among equivalent settings, querying the inference engine with a minimal setting
gives the most informative results, provided that counterexamples fed to the engine
are also conditioned on minimal settings.


\begin{thebibliography}{ABFR{\etalchar{+}}21}

\bibitem[AAB{\etalchar{+}}23]{amanatidis2023fair}
Georgios Amanatidis, Haris Aziz, Georgios Birmpas, Aris Filos-Ratsikas, Bo~Li,
  Herv{\'e} Moulin, Alexandros~A Voudouris, and Xiaowei Wu.
\newblock Fair division of indivisible goods: Recent progress and open
  questions.
\newblock {\em Artificial Intelligence}, 322:103965, 2023.
\newblock \href {https://doi.org/10.1016/j.artint.2023.103965}
  {\path{doi:10.1016/j.artint.2023.103965}}.

\bibitem[ABC{\etalchar{+}}18]{aziz2018knowledge}
Haris Aziz, Sylvain Bouveret, Ioannis Caragiannis, Ira Giagkousi, and Jérôme
  Lang.
\newblock Knowledge, fairness, and social constraints.
\newblock In {\em AAAI Conference on Artificial Intelligence}, 2018.
\newblock \href {https://doi.org/10.1609/aaai.v32i1.11590}
  {\path{doi:10.1609/aaai.v32i1.11590}}.

\bibitem[ABFR{\etalchar{+}}21]{amanatidis2021maximum}
Georgios Amanatidis, Georgios Birmpas, Aris Filos-Ratsikas, Alexandros
  Hollender, and Alexandros~A. Voudouris.
\newblock Maximum {Nash} welfare and other stories about {EFX}.
\newblock {\em Theoretical Computer Science}, 863:69--85, 2021.
\newblock \href {https://doi.org/10.1016/j.tcs.2021.02.020}
  {\path{doi:10.1016/j.tcs.2021.02.020}}.

\bibitem[ABM18]{amanatidis2018comparing}
Georgios Amanatidis, Georgios Birmpas, and Vangelis Markakis.
\newblock Comparing approximate relaxations of envy-freeness.
\newblock In {\em International Joint Conference on Artificial Intelligence
  (IJCAI)}, pages 42--48, 2018.
\newblock \href {https://doi.org/10.24963/ijcai.2018/6}
  {\path{doi:10.24963/ijcai.2018/6}}.

\bibitem[ACIS26]{andersen2026computing}
Martin~Jupakkal Andersen, Ioannis Caragiannis, Anders~Bo Ipsen, and Alexander
  Søltoft.
\newblock Computing approximately proportional allocations of indivisible
  goods: Beyond additive and monotone valuations.
\newblock In {\em AAAI Conference on Artificial Intelligence}, volume~40, pages
  16631--16638, 2026.
\newblock \href {https://doi.org/10.1609/aaai.v40i20.38704}
  {\path{doi:10.1609/aaai.v40i20.38704}}.

\bibitem[ACIW21]{aziz2021fair}
Haris Aziz, Ioannis Caragiannis, Ayumi Igarashi, and Toby Walsh.
\newblock Fair allocation of indivisible goods and chores.
\newblock {\em Autonomous Agents and Multi-Agent Systems (AAMAS)}, 36(1):3,
  2021.
\newblock \href {https://doi.org/10.1007/s10458-021-09532-8}
  {\path{doi:10.1007/s10458-021-09532-8}}.

\bibitem[AG24]{akrami2024breaking}
Hannaneh Akrami and Jugal Garg.
\newblock Breaking the {$3/4$} barrier for approximate maximin share.
\newblock In {\em {ACM}-{SIAM} Symposium on Discrete Algorithms ({SODA})},
  pages 74--91, 2024.
\newblock \href {https://doi.org/10.1137/1.9781611977912.4}
  {\path{doi:10.1137/1.9781611977912.4}}.

\bibitem[ALM{\etalchar{+}}24]{aziz2024almost}
Haris Aziz, Bo~Li, Hervé Moulin, Xiaowei Wu, and Xinran Zhu.
\newblock Almost proportional allocations of indivisible chores: Computation,
  approximation and efficiency.
\newblock {\em Artificial Intelligence}, 331:104118, 2024.
\newblock \href {https://doi.org/10.1016/j.artint.2024.104118}
  {\path{doi:10.1016/j.artint.2024.104118}}.

\bibitem[AMN20]{amanatidis2020multiple}
Georgios Amanatidis, Evangelos Markakis, and Apostolos Ntokos.
\newblock Multiple birds with one stone: Beating 1/2 for {EFX} and {GMMS} via
  envy cycle elimination.
\newblock {\em Theoretical Computer Science}, 841:94--109, 2020.
\newblock \href {https://doi.org/10.1016/j.tcs.2020.07.006}
  {\path{doi:10.1016/j.tcs.2020.07.006}}.

\bibitem[AMS20]{aziz2020polynomial}
Haris Aziz, Hervé Moulin, and Fedor Sandomirskiy.
\newblock A polynomial-time algorithm for computing a {Pareto} optimal and
  almost proportional allocation.
\newblock {\em Operations Research Letters}, 48(5):573--578, 2020.
\newblock \href {https://doi.org/10.1016/j.orl.2020.07.005}
  {\path{doi:10.1016/j.orl.2020.07.005}}.

\bibitem[AMSS23]{akrami2023randomized}
Hannaneh Akrami, Kurt Mehlhorn, Masoud Seddighin, and Golnoosh Shahkarami.
\newblock Randomized and deterministic maximin-share approximations for
  fractionally subadditive valuations.
\newblock {\em Advances in Neural Information Processing Systems (NeurIPS)},
  36:58821--58832, 2023.

\bibitem[AR25]{akrami2025epistemic}
Hannaneh Akrami and Nidhi Rathi.
\newblock Epistemic {EFX} allocations exist for monotone valuations.
\newblock In {\em {AAAI} Conference on Artificial Intelligence}, volume~39,
  pages 13520--13528, 2025.
\newblock \href {https://doi.org/10.1609/aaai.v39i13.33476}
  {\path{doi:10.1609/aaai.v39i13.33476}}.

\bibitem[BB18]{biswas2018fair}
Arpita Biswas and Siddharth Barman.
\newblock Fair division under cardinality constraints.
\newblock In {\em International Joint Conference on Artificial Intelligence
  (IJCAI)}, pages 91--97, 2018.
\newblock \href {https://doi.org/10.24963/ijcai.2018/13}
  {\path{doi:10.24963/ijcai.2018/13}}.

\bibitem[BBKN18]{barman2018groupwise}
Siddharth Barman, Arpita Biswas, Sanath Krishnamurthy, and Yadati Narahari.
\newblock Groupwise maximin fair allocation of indivisible goods.
\newblock In {\em AAAI Conference on Artificial Intelligence}, 2018.
\newblock \href {https://doi.org/10.1609/aaai.v32i1.11463}
  {\path{doi:10.1609/aaai.v32i1.11463}}.

\bibitem[BCE{\etalchar{+}}17]{bouveret2017fair}
Sylvain Bouveret, Katarína Cechlárová, Edith Elkind, Ayumi Igarashi, and
  Dominik Peters.
\newblock Fair division of a graph.
\newblock In {\em International Joint Conference on Artificial Intelligence
  (IJCAI)}, pages 135--141, 2017.
\newblock \href {https://doi.org/10.24963/ijcai.2017/20}
  {\path{doi:10.24963/ijcai.2017/20}}.

\bibitem[BCFF22]{berger2021almost}
Ben Berger, Avi Cohen, Michal Feldman, and Amos Fiat.
\newblock Almost full {EFX} exists for four agents.
\newblock In {\em AAAI Conference on Artificial Intelligence}, volume 36(5),
  pages 4826--4833, 2022.
\newblock \href {https://doi.org/10.1609/aaai.v36i5.20410}
  {\path{doi:10.1609/aaai.v36i5.20410}}.

\bibitem[BEF21]{babaioff2021fairTruthful}
Moshe Babaioff, Tomer Ezra, and Uriel Feige.
\newblock Fair and truthful mechanisms for dichotomous valuations.
\newblock {\em AAAI Conference on Artificial Intelligence}, 35(6):5119--5126,
  2021.
\newblock \href {https://doi.org/10.1609/aaai.v35i6.16647}
  {\path{doi:10.1609/aaai.v35i6.16647}}.

\bibitem[BEF23]{babaioff2023fair}
Moshe Babaioff, Tomer Ezra, and Uriel Feige.
\newblock Fair-share allocations for agents with arbitrary entitlements.
\newblock {\em Mathematics of Operations Research}, 2023.
\newblock \href {https://doi.org/10.1287/moor.2021.0199}
  {\path{doi:10.1287/moor.2021.0199}}.

\bibitem[BGGS21a]{baklanov2021achieving}
Artem Baklanov, Pranav Garimidi, Vasilis Gkatzelis, and Daniel Schoepflin.
\newblock Achieving proportionality up to the maximin item with indivisible
  goods.
\newblock {\em AAAI Conference on Artificial Intelligence}, 35(6):5143--5150,
  2021.
\newblock \href {https://doi.org/10.1609/aaai.v35i6.16650}
  {\path{doi:10.1609/aaai.v35i6.16650}}.

\bibitem[BGGS21b]{baklanov2021propm}
Artem Baklanov, Pranav Garimidi, Vasilis Gkatzelis, and Daniel Schoepflin.
\newblock {PROPm} allocations of indivisible goods to multiple agents.
\newblock In {\em Interntional Joint Conference on Artificial Intelligence
  ({IJCAI})}, volume~1, pages 24--30, 2021.
\newblock \href {https://doi.org/10.24963/ijcai.2021/4}
  {\path{doi:10.24963/ijcai.2021/4}}.

\bibitem[BKV18]{barman2018finding}
Siddharth Barman, Sanath~Kumar Krishnamurthy, and Rohit Vaish.
\newblock Finding fair and efficient allocations.
\newblock In {\em ACM Conference on Economics and Computation (EC)}, pages
  557--574, 2018.
\newblock \href {https://doi.org/10.1145/3219166.3219176}
  {\path{doi:10.1145/3219166.3219176}}.

\bibitem[BL16]{bouveret2016characterizing}
Sylvain Bouveret and Michel Lema{\^\i}tre.
\newblock Characterizing conflicts in fair division of indivisible goods using
  a scale of criteria.
\newblock {\em Autonomous Agents and Multi-Agent Systems (AAMAS)},
  30(2):259--290, 2016.
\newblock \href {https://doi.org/10.1007/s10458-015-9287-3}
  {\path{doi:10.1007/s10458-015-9287-3}}.

\bibitem[BMP26]{byrka2025probing}
Jarosław Byrka, Franciszek Malinka, and Tomasz Ponitka.
\newblock Probing efx via pmms: (non-)existence results in discrete fair
  division.
\newblock In {\em AAAI Conference on Artificial Intelligence}, volume~40, pages
  16735--16742, 2026.
\newblock \href {https://doi.org/10.1609/aaai.v40i20.38716}
  {\path{doi:10.1609/aaai.v40i20.38716}}.

\bibitem[BNV23]{barman2023fair}
Siddharth Barman, Vishnu Narayan, and Paritosh Verma.
\newblock Fair chore division under binary supermodular costs.
\newblock In {\em Autonomous Agents and Multiagent Systems (AAMAS)}, pages
  2863--2865, 2023.
\newblock URL: \url{https://dl.acm.org/doi/10.5555/3545946.3599104}, \href
  {https://arxiv.org/abs/2302.11530v1} {\path{arXiv:2302.11530v1}}.

\bibitem[BSV21]{bhaskar2021approximate}
Umang Bhaskar, A.~R. Sricharan, and Rohit Vaish.
\newblock On approximate envy-freeness for indivisible chores and mixed
  resources.
\newblock In {\em APPROX 2021}, 2021.
\newblock \href {https://doi.org/10.4230/LIPIcs.APPROX/RANDOM.2021.1}
  {\path{doi:10.4230/LIPIcs.APPROX/RANDOM.2021.1}}.

\bibitem[BT96]{brams1996fair}
S.J. Brams and A.D. Taylor.
\newblock {\em Fair Division: From Cake-cutting to Dispute Resolution}.
\newblock Cambridge University Press, 1996.

\bibitem[Bud11]{budish2011combinatorial}
Eric Budish.
\newblock The combinatorial assignment problem: Approximate competitive
  equilibrium from equal incomes.
\newblock {\em Journal of Political Economy}, 119(6):1061--1103, 2011.
\newblock \href {https://doi.org/10.1086/664613} {\path{doi:10.1086/664613}}.

\bibitem[BV21]{barman2021existence}
Siddharth Barman and Paritosh Verma.
\newblock Existence and computation of maximin fair allocations under
  matroid-rank valuations.
\newblock In {\em Autonomous Agents and MultiAgent Systems (AAMAS)}, pages
  169--177, 2021.
\newblock URL:
  \url{https://www.ifaamas.org/Proceedings/aamas2021/pdfs/p169.pdf}.

\bibitem[CFS17]{conitzer2017fair}
Vincent Conitzer, Rupert Freeman, and Nisarg Shah.
\newblock Fair public decision making.
\newblock In {\em ACM Conference on Economics and Computation (EC)}, pages
  629--646, 2017.
\newblock \href {https://doi.org/10.1145/3033274.3085125}
  {\path{doi:10.1145/3033274.3085125}}.

\bibitem[CGM24]{chaudhury2024efx}
Bhaskar~Ray Chaudhury, Jugal Garg, and Kurt Mehlhorn.
\newblock {EFX} exists for three agents.
\newblock {\em Journal of the ACM}, 71(1):4:1--4:27, 2024.
\newblock \href {https://doi.org/10.1145/3616009} {\path{doi:10.1145/3616009}}.

\bibitem[CGR{\etalchar{+}}22]{caragiannis2022existence}
Ioannis Caragiannis, Jugal Garg, Nidhi Rathi, Eklavya Sharma, and Giovanna
  Varricchio.
\newblock New fairness concepts for allocating indivisible items, 2022.
\newblock \href {https://arxiv.org/abs/2206.01710} {\path{arXiv:2206.01710}}.

\bibitem[CGR{\etalchar{+}}23]{caragiannis2023new}
Ioannis Caragiannis, Jugal Garg, Nidhi Rathi, Eklavya Sharma, and Giovanna
  Varricchio.
\newblock New fairness concepts for allocating indivisible items.
\newblock In {\em International Joint Conference on Artificial Intelligence
  (IJCAI)}, volume~3, pages 2554--2562, 2023.
\newblock \href {https://doi.org/10.24963/ijcai.2023/284}
  {\path{doi:10.24963/ijcai.2023/284}}.

\bibitem[CISZ21]{chakraborty2021weighted}
Mithun Chakraborty, Ayumi Igarashi, Warut Suksompong, and Yair Zick.
\newblock Weighted envy-freeness in indivisible item allocation.
\newblock {\em ACM Transactions on Economics and Computation (TEAC)},
  9(3):18:1--18:39, 2021.
\newblock \href {https://doi.org/10.1145/3457166} {\path{doi:10.1145/3457166}}.

\bibitem[CKM{\etalchar{+}}19]{caragiannis2019unreasonable}
Ioannis Caragiannis, David Kurokawa, Hervé Moulin, Ariel~D. Procaccia, Nisarg
  Shah, and Junxing Wang.
\newblock The unreasonable fairness of maximum nash welfare.
\newblock {\em ACM Transactions on Economics and Computation (TEAC)},
  7(3):12:1--12:32, 2019.
\newblock \href {https://doi.org/10.1145/3355902} {\path{doi:10.1145/3355902}}.

\bibitem[CKMS21]{chaudhury2021little}
Bhaskar~Ray Chaudhury, Telikepalli Kavitha, Kurt Mehlhorn, and Alkmini
  Sgouritsa.
\newblock A little charity guarantees almost envy-freeness.
\newblock {\em SIAM Journal on Computing}, 50(4):1336--1358, 2021.
\newblock \href {https://doi.org/10.1137/20M1359134}
  {\path{doi:10.1137/20M1359134}}.

\bibitem[CSHS24]{chakraborty2024weighted}
Mithun Chakraborty, Erel Segal-Halevi, and Warut Suksompong.
\newblock Weighted fairness notions for indivisible items revisited.
\newblock {\em ACM Transactions on Economics and Computation (TEAC)},
  12(3):9:1--9:45, 2024.
\newblock \href {https://doi.org/10.1145/3665799} {\path{doi:10.1145/3665799}}.

\bibitem[EGK{\etalchar{+}}24]{equbal2024fair}
Sarfaraz Equbal, Rohit Gurjar, Yatharth Kumar, Swaprava Nath, and Rohit Vaish.
\newblock Fair interval scheduling of indivisible chores, 2024.
\newblock \href {https://arxiv.org/abs/2402.04353} {\path{arXiv:2402.04353}}.

\bibitem[Fei09]{feige2009maximizing}
Uriel Feige.
\newblock On maximizing welfare when utility functions are subadditive.
\newblock {\em SIAM Journal on Computing}, 39(1):122--142, 2009.
\newblock \href {https://doi.org/10.1137/070680977}
  {\path{doi:10.1137/070680977}}.

\bibitem[Fei22]{feige2022maximin}
Uriel Feige.
\newblock Maximin fair allocations with two item values, 2022.
\newblock URL:
  \url{https://www.wisdom.weizmann.ac.il/~feige/mypapers/MMSab.pdf}.

\bibitem[Fei25]{feige2025low}
Uriel Feige.
\newblock Low communication protocols for fair allocation of indivisible goods.
\newblock In {\em ACM Conference on Economics and Computation (EC)}, pages
  358--382, 2025.
\newblock \href {https://doi.org/10.1145/3736252.3742553}
  {\path{doi:10.1145/3736252.3742553}}.

\bibitem[FGH{\etalchar{+}}19]{farhadi2019fair}
Alireza Farhadi, Mohammad Ghodsi, Mohammad~Taghi Hajiaghayi, Sebastien Lahaie,
  David Pennock, Masoud Seddighin, Saeed Seddighin, and Hadi Yami.
\newblock Fair allocation of indivisible goods to asymmetric agents.
\newblock {\em Journal of Artificial Intelligence Research}, 64:1--20, 2019.
\newblock \href {https://doi.org/10.1613/jair.1.11291}
  {\path{doi:10.1613/jair.1.11291}}.

\bibitem[FST21]{feige2022tight}
Uriel Feige, Ariel Sapir, and Laliv Tauber.
\newblock A tight negative example for {MMS} fair allocations.
\newblock In {\em Web and Internet Economics (WINE)}, pages 355--372. Springer,
  2021.
\newblock \href {https://doi.org/10.1007/978-3-030-94676-0_20}
  {\path{doi:10.1007/978-3-030-94676-0_20}}.

\bibitem[GHS{\etalchar{+}}18]{ghodsi2018fair}
Mohammad Ghodsi, Mohammadtaghi Hajiaghayi, Masoud Seddighin, Saeed Seddighin,
  and Hadi Yami.
\newblock Fair allocation of indivisible goods: Improvements and
  generalizations.
\newblock In {\em ACM Conference on Economics and Computation (EC)}, pages
  539--556, 2018.
\newblock \href {https://doi.org/10.1145/3219166.3219238}
  {\path{doi:10.1145/3219166.3219238}}.

\bibitem[GMT14]{gourves2014near}
Laurent Gourv{\`e}s, J{\'e}r{\^o}me Monnot, and Lydia Tlilane.
\newblock Near fairness in matroids.
\newblock In {\em European Conference on Artificial Intelligence (ECAI)}, pages
  393--398, 2014.
\newblock \href {https://doi.org/10.3233/978-1-61499-419-0-393}
  {\path{doi:10.3233/978-1-61499-419-0-393}}.

\bibitem[GS24]{garg2024ef1}
Jugal Garg and Eklavya Sharma.
\newblock {EF1} for mixed manna with unequal entitlements, 2024.
\newblock \href {https://arxiv.org/abs/2410.12966v1}
  {\path{arXiv:2410.12966v1}}.

\bibitem[KKM24]{kulkarni2024approximating}
Pooja Kulkarni, Rucha Kulkarni, and Ruta Mehta.
\newblock Approximating {APS} under submodular and {XOS} valuations with binary
  marginals.
\newblock In {\em Autonomous Agents and Multiagent Systems ({AAMAS})}, pages
  1057--1065, 2024.
\newblock URL:
  \url{https://ifaamas.csc.liv.ac.uk/Proceedings/aamas2024/pdfs/p1057.pdf}.

\bibitem[KM25]{kobayashi2025proportional}
Yusuke Kobayashi and Ryoga Mahara.
\newblock Proportional allocation of indivisible goods up to the least valued
  good on average.
\newblock {\em SIAM Journal on Discrete Mathematics}, 39(1):533--549, 2025.
\newblock \href {https://doi.org/10.1137/24M1631407}
  {\path{doi:10.1137/24M1631407}}.

\bibitem[KPW18]{kurokawa2018fair}
David Kurokawa, Ariel~D. Procaccia, and Junxing Wang.
\newblock Fair enough: Guaranteeing approximate maximin shares.
\newblock {\em Journal of the ACM}, 65(2):1--27, 2018.
\newblock \href {https://doi.org/10.1145/3140756} {\path{doi:10.1145/3140756}}.

\bibitem[LLSW24]{liu2024mixed}
Shengxin Liu, Xinhang Lu, Mashbat Suzuki, and Toby Walsh.
\newblock Mixed fair division: A survey.
\newblock {\em Journal of Artificial Intelligence Research}, 80:1373--1406,
  2024.
\newblock \href {https://doi.org/10.1613/jair.1.15800}
  {\path{doi:10.1613/jair.1.15800}}.

\bibitem[LLW22]{li2022almost}
Bo~Li, Yingkai Li, and Xiaowei Wu.
\newblock Almost (weighted) proportional allocations for indivisible chores.
\newblock In {\em ACM Web Conference (WWW)}, pages 122--131, 2022.
\newblock \href {https://doi.org/10.1145/3485447.3512057}
  {\path{doi:10.1145/3485447.3512057}}.

\bibitem[LMM22]{livanos2022almost}
Vasilis Livanos, Ruta Mehta, and Aniket Murhekar.
\newblock (almost) envy-free, proportional and efficient allocations of an
  indivisible mixed manna.
\newblock In {\em Autonomous Agents and Multiagent Systems ({AAMAS})}, pages
  1678--1680, 2022.
\newblock URL:
  \url{https://www.ifaamas.org/Proceedings/aamas2022/pdfs/p1678.pdf}.

\bibitem[LMMS04]{lipton2004approximately}
R.~J. Lipton, E.~Markakis, E.~Mossel, and A.~Saberi.
\newblock On approximately fair allocations of indivisible goods.
\newblock In {\em {ACM} conference on Electronic Commerce (EC)}, pages
  125--131, 2004.
\newblock \href {https://doi.org/10.1145/988772.988792}
  {\path{doi:10.1145/988772.988792}}.

\bibitem[Mah26]{mahara2025existence}
Ryoga Mahara.
\newblock Existence of fair and efficient allocation of indivisible chores.
\newblock In {\em ACM-SIAM Symposium on Discrete Algorithms (SODA)}, pages
  6742--6766, 2026.
\newblock \href {https://doi.org/10.1137/1.9781611978971.242}
  {\path{doi:10.1137/1.9781611978971.242}}.

\bibitem[Nis00]{nisan2000bidding}
Noam Nisan.
\newblock Bidding and allocation in combinatorial auctions.
\newblock In {\em {ACM} conference on Electronic Commerce (EC)}, pages 1--12,
  2000.
\newblock \href {https://doi.org/10.1145/352871.352872}
  {\path{doi:10.1145/352871.352872}}.

\bibitem[PR20]{plaut2020almost}
Benjamin Plaut and Tim Roughgarden.
\newblock Almost envy-freeness with general valuations.
\newblock {\em SIAM Journal on Discrete Mathematics}, 34(2):1039--1068, 2020.
\newblock \href {https://doi.org/10.1137/19M124397X}
  {\path{doi:10.1137/19M124397X}}.

\bibitem[Sch03]{schrijver2003submodular}
Alexander Schrijver.
\newblock Submodular functions and polymatroids.
\newblock In {\em Combinatorial Optimization: Polyhedra and Efficiency},
  chapter~44, pages 766--786. Springer, 2003.

\bibitem[Sha24]{cpigjsEku}
Eklavya Sharma.
\newblock cpigjs: Conditional predicate implication graph in {JavaScript}, and
  its application to fair division, 2024.
\newblock {Web app: \url{https://sharmaeklavya2.github.io/cpigjs/fairDiv/},
  latest source code: \url{https://github.com/sharmaeklavya2/cpigjs/}, archived
  source code: \url{https://doi.org/10.17632/fywj52kbr8}}.

\bibitem[Sha26]{fd-cex-checker}
Eklavya Sharma.
\newblock fd-cex-checker: Fair division counterexample checker in {Python},
  2026.
\newblock URL: \url{https://github.com/sharmaeklavya2/fd-cex-checker}, \href
  {https://doi.org/10.17632/bthppm5n7g} {\path{doi:10.17632/bthppm5n7g}}.

\bibitem[SHY24]{springer2024almost}
Max Springer, MohammadTaghi Hajiaghayi, and Hadi Yami.
\newblock Almost envy-free allocations of indivisible goods or chores with
  entitlements.
\newblock {\em AAAI Conference on Artificial Intelligence}, 38(9):9901--9908,
  2024.
\newblock \href {https://doi.org/10.1609/aaai.v38i9.28851}
  {\path{doi:10.1609/aaai.v38i9.28851}}.

\bibitem[Ste49]{steinhaus1940sur}
H.~Steinhaus.
\newblock Sur la division pragmatique.
\newblock {\em Econometrica}, 17:315--319, 1949.
\newblock \href {https://doi.org/10.2307/1907319} {\path{doi:10.2307/1907319}}.

\bibitem[Str80]{stromquist1980how}
Walter Stromquist.
\newblock How to cut a cake fairly.
\newblock {\em The American Mathematical Monthly}, 87(8):640--644, 1980.
\newblock \href {https://doi.org/10.1080/00029890.1980.11995109}
  {\path{doi:10.1080/00029890.1980.11995109}}.

\bibitem[Suk25]{suksompong2025weighted}
Warut Suksompong.
\newblock Weighted fair division of indivisible items: A review.
\newblock {\em Information Processing Letters}, 187:106519, 2025.
\newblock \href {https://doi.org/10.1016/j.ipl.2024.106519}
  {\path{doi:10.1016/j.ipl.2024.106519}}.

\bibitem[Var74]{varian1974equity}
Hal~R Varian.
\newblock Equity, envy, and efficiency.
\newblock {\em Journal of Economic Theory}, 9(1):63--91, 1974.
\newblock \href {https://doi.org/10.1016/0022-0531(74)90075-1}
  {\path{doi:10.1016/0022-0531(74)90075-1}}.

\end{thebibliography}
\end{document}